\providecommand{\norm}[1]{\|#1\|}
\newcommand{\sign}[1]{\text{sign}(#1)}
\tikzset{
	declare function={
		normcdf(\x,\m,\s)=1/(1 + exp(-0.07056*((\x-\m)/\s)^3 - 1.5976*(\x-\m)/\s));
		zeroone(\x)= (\x<=-2) * (\x*\x + 6*\x + 8)   +
		and(\x>-2, \x<=1) * (2 - \x - \x*\x)     +
		and(\x>1,  \x<=2) * (6 - 8*\x + 2*\x*\x) +
		(\x>2) * (-10 + 6*\x - \x*\x);
	}
}
	\newcommand{\mypm}{\mathbin{\mathpalette\@mypm\relax}}
	\newcommand{\@mypm}[2]{\ooalign{%
			\raisebox{.1\height}{$#1+$}\cr
			\smash{\raisebox{-.6\height}{$#1-$}}\cr}}
\def\beq{\begin{equation}\begin{aligned}[b]}
\def\eeq{\end{aligned}\end{equation}}
\def\bq{\begin{equation*}\begin{aligned}[b]}
\def\eq{\end{aligned}\end{equation*}}
\def\wh{\widehat}
\def\wt{\widetilde}
\def\note#1{({\color{red}#1})}
\def\betadiff{\wh{\bbeta} - \bbeta^*}
\def\vdiffone{\wh{\bv}^{(1)} - \bv^*}
\def\boxit#1{\vbox{\hrule\hbox{\vrule\kern6pt\vbox{\kern6pt#1\kern6pt}\kern6pt\vrule}\hrule}}
\begin{document}

\title{Test of Significance for High-dimensional Thresholds with Application to Individualized Minimal Clinically Important Difference}

%\author{Siyi Deng and Yang Ning and Jiwei Zhao and Heping Zhang\footnote{Siyi Deng is PhD student and Yang Ning is Assistant Professor, Department of Statistics and Data Science, Cornell University, Ithaca, NY 14850 (E-mail: sd847@cornell.edu, yn265@cornell.edu). Jiwei Zhao is Assistant Professor, Department of Biostatistics and Medical Informatics, University of Wisconsin-Madison, Madison, WI 53726 (E-mail: jiwei.zhao@wisc.edu). Heping Zhang is Susan Dwight Bliss Professor of Biostatistics, Professor in the Child Study Center and Professor of Statistics and Data Science, Yale University, New Haven, CT 06511 (Email: heping.zhang@yale.edu).}}

\author{
Huijie Feng\thanks{Department of Statistics and Data Science, Cornell University, Ithaca, NY 14850, USA; e-mail: \texttt{hf279@cornell.edu}.}~~~~~
Jingyi Duan\thanks{Department of Statistics and Data Science, Cornell University, Ithaca, NY 14850, USA; e-mail: \texttt{jd2222@cornell.edu}. The first two authors contribute equally to the paper.}~~~~~
Yang Ning\thanks{Department of Statistics and Data Science, Cornell University, Ithaca, NY 14850, USA; e-mail: \texttt{yn265@cornell.edu}.}~~~~~
Jiwei Zhao\thanks{Department of Biostatistics and Medical Informatics, University of Wisconsin-Madison, Madison, WI 53726, USA; e-mail: \texttt{jiwei.zhao@wisc.edu}.}
}

\maketitle
\thispagestyle{empty}

\newpage
\setcounter{page}{1} %%% Set page number to 1
\thispagestyle{empty}

\begin{center}
{\Large Test of Significance for High-dimensional Thresholds with Application to Individualized Minimal Clinically Important Difference}
\end{center}

\begin{abstract}
This work is motivated by learning the individualized minimal clinically important difference, a vital concept to assess clinical importance in various biomedical studies. We formulate the scientific question into a high-dimensional statistical problem where the parameter of interest lies in an individualized linear threshold. The goal is to develop a hypothesis testing procedure for the significance of a single element in this parameter as well as of a linear combination of this parameter.
The difficulty dues to the high-dimensional nuisance in developing such a testing procedure, and also stems from the fact that this high-dimensional threshold model is nonregular and the limiting distribution of the corresponding estimator is nonstandard.
To deal with these challenges, we construct a test statistic via a new bias-corrected smoothed decorrelated score approach, and establish its asymptotic distributions under both null and local alternative hypotheses.
We propose a double-smoothing approach to select the optimal bandwidth in our test statistic and provide theoretical guarantees for the selected bandwidth.
We conduct simulation studies to demonstrate how our proposed procedure can be applied in empirical studies. We apply the proposed method to a clinical trial where the scientific goal is to assess the clinical importance of a surgery procedure.
\end{abstract}

{\bf Key Words:}
bandwidth selection;
high-dimensional statistical inference;
kernel method;
nonstandard asymptotics.

\newpage

\setcounter{equation}{0}

\section{Introduction}

\subsection{Motivation: Individualized Minimal Clinically Important Difference (iMCID) under High-dimensionality}

{
In clinical studies, instead of statistical significance, the effect of a treatment or intervention is widely assessed through clinical significance.
}
By leveraging patient-reported outcomes (PRO) that are directly collected from the patients without a third party's interpretation, the aim of assessing clinical significance is to provide clinicians and policy makers the clinical effectiveness of the treatment or intervention.
For example, in our motivating study, the ChAMP randomized controlled trial \citep{bisson2015design}, the interest is to identify \emph{the smallest WOMAC pain score change} such that the corresponding improvement and beyond can be claimed as clinically significant.
In \cite{jaeschke1989measurement}, this concept was firstly and formally introduced
as the minimal clinically important difference (MCID), ``the smallest difference in score in the domain of interest which patients perceive as beneficial and which would mandate a change in the patient's management''.

There are roughly three approaches to determine the magnitude of MCID \citep{lassere2001foundations, erdogan2016minimal, angst2017minimal, jayadevappa2017minimal}: distribution-based, opinion-based, and anchor-based. Although adopted in various studies \citep{wyrwich1999linking, wyrwich1999further, samsa1999determining, norman2003interpretation, bellamy2001towards}, the first two approaches are usually criticized \citep{mcglothlin2014minimal} due to, for example,  ``distribution-based methods are not derived from individual patients'' and ``expert opinion may not be a valid and reliable way to determine what is important to patients''.
The third approach, anchor-based, conceptually determines the MCID by incorporating both certainty of effective treatment encoded as a continuous variable and the patient's satisfaction collected from the anchor question.
%In the statistical literature, there has already exists some development along this approach, such as \cite{hedayat2015minimum, zhou2019estimation}.
It is clinically evident \citep{wells2001minimal} that the magnitude of MCID would depend on various factors such as the demographic variables and the patients' baseline status.
For example, in a shoulder pain reduction study \citep{heald1997shoulder}, because of the higher expectation for complete recovery, the healthier patients with mild pain at baseline often deemed greater pain reduction as ``meaningful'' than the ones who suffered from chronic disease.
Therefore, it is of scientific interest to generally estimate the individualized MCID (iMCID) based on each individual patient's clinical profile as well as to quantify the uncertainties of those estimates.

Nowadays, there is an increasing use and advancing development of EHR-based (electronic health records) studies in clinical research.
The EHR data are complex, diverse and high-dimensional \citep{abdullah2020visual}.
The rich information contained in the EHR data could facilitate the determination and quantification of iMCID.
Therefore, there is a pressing need to develop statistical methods that incorporate the high-dimensional data into both magnitude determination and uncertainty quantification of iMCID.

\subsection{Problem Formulation}

To facilitate the presentation, we first introduce some notation.
Let $X\in \RR$ be a continuous variable representing the score change collected from the PRO, e.g., the WOMAC pain score change from baseline to one year after surgery in the ChAMP trial.
Let $Y=\pm 1$ be a binary variable derived from the patient's response to the anchor question, where $Y=1$ represents an improved health condition and $Y=-1$ otherwise.
We use a $d$-dimensional vector $\bZ$ to denote the patient's clinical profile including demographic variables, clinical biomarkers, disease histories, among many others.
Suppose the data we observe are $n$ i.i.d. samples $\{(x_i,y_i,\bz_i)\}_{i=1}^n$ of $(X,Y,\bZ)$.
We focus on the high-dimensional setting, i.e., $d\gg n$.

Firstly, if there were no covariate $\bZ$, the MCID can be estimated by $\argmax_\tau \{\PP(X\geq \tau\mid Y=1) + \PP(X<\tau\mid Y=-1)\}$,
which is equivalent to
\beq\label{eq:pMCID}
\argmin_\tau ~ \EE[w(Y)L_{01}\{Y(X-\tau)\}],
\eeq
where $L_{01}(u)=\frac12\{1-\sign u\}$ is the 0-1 loss, $\sign u=1$ if $u\geq0$ and $-1$ otherwise, $w(1)=1/\pi$, $w(-1)=1/(1-\pi)$ and $\pi=\PP(Y=1)$.
When the high-dimensional covariate $\bZ$ is available, as the focus of this paper, the natural idea is to consider the iMCID with a functional form of $\bZ$, say $\tau(\bZ)$. In clinical practice, a simple structure, such as linear, is preferred due to its transparency and convenience for interpretation, especially for high-dimensional data.
Therefore, we focus on the linear structure $\tau(\bZ)=\bbeta^T\bZ$ in this paper.
{
The objective thus becomes
\beq\label{eq_risk_ori}
\bbeta^* = \argmin_{\bbeta}R(\bbeta), ~~\textrm{where}~~R(\bbeta)=\EE\big[
w(Y)L_{01}\{Y(X - \bbeta^{T}\bZ)\}
\big],
\eeq
and the expectation is with respect to the joint distribution of $(X, Y, \bZ)$.
}
Throughout this paper we assume that $\bbeta^*$ exists and is unique---the existence and uniqueness can be verified under specific models; see Section S3.1 in the Supplement for details.
Denote $\bbeta^* = (\theta^*,\bgamma^{*T})^T$, where $\theta^*$ is an arbitrary one-dimensional component of $\bbeta^*$ and $\bgamma^*$ represents the rest of the parameter which is high-dimensional.
In this paper, we start from considering the hypothesis testing procedure for the parameter $\theta^*$. With a simple reparametrization, the same procedure can be applied to infer the iMCID $\bc_0^T\bbeta^*$ for some  fixed and known vector $\bc_0\in\RR^d$.

%It turns out that we need to first study the case  for parameter $\theta^*$ alone, although the ultimate goal of the paper is to develop statistical inferential tools for the iMCID $\bc_0^T\bbeta^*$ with theoretical guarantee, where $\bc_0\in\RR^d$ is a fixed and known vector.

It is worthwhile to mention that, although the motivation of this paper is to study iMCID, our formulation of this problem can be similarly applied to other scenarios as well, such as the covariate-adjusted Youden index \citep{xu2014model}, one-bit compressed sensing \citep{boufounos20081}, linear binary response model \citep{manski1975maximum,manski1985semiparametric}, and personalized medicine \citep{wang2018quantile}. Interested readers could refer to \cite{feng2022nonregular} for those examples.

\subsection{From Estimation to Inference}

Incorporating high-dimensional data in the objective, i.e., moving forward from (\ref{eq:pMCID}) to (\ref{eq_risk_ori}), is not trivial, even for the purpose of estimation only.
Recently, \cite{ban2019} established the rate of convergence of the (penalized) maximum score estimator for (\ref{eq_risk_ori}) in growing dimension {that $d$ is allowed to grow with $n$.}
In a related work, \cite{feng2022nonregular} proposed a regularized empirical risk minimization framework with a smoothed surrogate loss for {estimating the high-dimensional parameter $\bbeta^*$, and showed the estimation problem is nonregular in that there do not exist estimators of $\bbeta^*$ with root-$n$ convergence rate uniformly over a proper parameter space.}
%(ignoring the factors on the dimension and the number of nonzero entries of $\bbeta^*$).

Under (\ref{eq_risk_ori}), developing a valid statistical inference procedure is challenging, even for fixed dimensional setting.
\cite{manski1975maximum,manski1985semiparametric} considered the binary response model $Y = \sign{X-\bZ^T\bbeta + \epsilon}$, where $\epsilon$ may depend on $(X,\bZ)$ but with $\text{Median}(\epsilon|X,\bZ) = 0$. It can be shown that the true coefficient $\bbeta^*$ can be equivalently defined via (\ref{eq_risk_ori}) with $w(-1)=w(1)=1/2$. The maximum score estimator is proposed  to estimate $\bbeta$, and is later shown to have  a non-Gaussian limiting distribution \citep{kim1990cube}.
To tackle the challenge of nonstandard limiting distribution of the maximum score estimator, \cite{horowitz1992smoothed} proposed the smoothed maximum score estimator which is asymptotically normal in fixed dimension.

On top of the nonregularity of the problem (\ref{eq_risk_ori}), the high-dimensionality of the parameter adds an
additional layer of complexity for inference.
%Moving forward, if one would like to develop a hypothesis testing framework for the iMCID with high-dimensional data, the problem becomes even more challenging.
The reason is that the estimator that minimizes the penalized loss function does not have a tractable standard limiting distribution under high dimensionality, due to the bias induced by the penalty term.
For regular models (e.g., generalized linear models), there is a growing literature on correcting the bias from the penalty for valid inference, such as \cite{javanmard2014confidence,zhang2014confidence,van2014asymptotically,belloni2015uniform,ning2017general,cai2017confidence,fang2017testing,neykov2018unified,feng2019high,fang2020test},
among others.
Their main idea is to firstly construct a consistent estimator of the high dimensional parameter via proper regularization, and then remove the bias (via debiasing or decorrelation) in order to develop valid inferential statistics.
While these methods enjoy great success under regular models, it remains unclear whether they can be applied to conduct valid inference in nonregular models
such as the problem we consider in this paper.
To the best of our knowledge, our work is the first that provides valid  inferential tools for nonregular models in high dimension.

\subsection{Our Contributions}

In this paper, we propose a unified hypothesis testing framework for the one-dimensional parameter $\theta^*$ as well as for the iMCID encoded as a linear combination of $\bbeta^*$. We start from considering the hypothesis testing problem $H_0: \theta^* = 0$ versus $H_1: \theta^* \neq 0$, where we treat $\bgamma^*$ as
a high-dimensional nuisance parameter. Built on the smoothed surrogate estimation framework \citep{feng2022nonregular}, we propose a bias corrected smoothed decorrelated score to form the score test statistic.

There are several new ingredients in the construction of our score statistic. First, the score function is derived based on a smoothed surrogate loss to overcome the nonregularity due to the nonsmoothness of the 0-1 loss. Second, unlike the existing works on high-dimensional inference, the score function from the smoothed surrogate loss is asymptotically biased. By explicitly estimating the bias term, we derive a new bias corrected score. Third, the decorrelation step, developed by \cite{ning2017general} for regular models, is applied to reduce the uncertainty of estimating high-dimensional nuisance parameters. Compared to \cite{ning2017general},  the adoption of the smoothed loss and the corresponding bias correction step are new, which also make our  inference much more challenging than the existing works.  Theoretically, we show that under some conditions, the proposed score test statistic converges in distribution to a standard Gaussian distribution under the null hypothesis. We further establish the local asymptotic power of the test statistic when $\theta^*$ deviates from $0$ in a local neighborhood. In particular, we give the conditions under which the test statistic has asymptotic power one.

When constructing  the bias corrected smoothed decorrelated score, we need to specify a bandwidth parameter,  whose optimal choice depends on the unknown smoothness of the data distribution.  We further propose a  double-smoothing approach to select the optimal bandwidth by minimizing the mean squared error (MSE) of the score function. To our knowledge, such bandwidth selection procedures have not been studied for high-dimensional models.
We show that under some extra smoothness assumptions, the ratio of the data-driven bandwidth to the theoretically optimal bandwidth converges to one in probability. Moreover, the proposed score test statistic with the data-driven bandwidth still converges in distribution to a standard Gaussian distribution under the null hypothesis.

%From a practical perspective, with the selected bandwidth, the proposed score statistic is fully data-driven and can be easily applied for empirical research.

\begin{comment}
This smoothed surrogate framework is closely related to the kernel estimation framework in nonparametric statistics, where a bandwidth parameter $\delta$ is used to control the closeness between the surrogate and the original loss function. Similar to the theoretical property of bandwidth in nonparametric approaches such as kernel density estimation, in our framework the optimal rate of the bandwidth parameter in the smoothed surrogate framework depends on the smoothness of the conditional distribution of $X|Y,\bZ$. At the same time, the smoothing operation introduces extra bias due that needs to be handled for valid inference. When the smoothness $\ell$ is treated as known, we show how we can handle the extra bias with undersmoothing and explicit bias correction. We also discuss how the bandwidth parameter can be selected in an adaptive manner based on the double-smoothing technique, when extra smoothness of the conditional density is assumed.
\end{comment}
\subsection{Paper Structure and Notation}

The organization of this paper is as follows.
In Section \ref{mainmethod}, we first provide some background on the estimation of iMCID then introduce the bias corrected smoothed deccorelated score and the associated test statistic.
In Section \ref{theory}, we discuss the theoretical properties of the score test. The data-driven bandwidth selection is addressed in Section \ref{adaptivity}.
The corresponding results for $\bc_0^T\bbeta^*$, the linear combination of parameters, are briefly summarized in Section~\ref{sec:imcid}.
%Section \ref{section_practical} discusses practical considerations of the proposed methods.
Sections \ref{simulation} and \ref{sec_data} contain simulation studies and a real data example, respectively.
%The paper is concluded with a discussion in Section~\ref{sec_disc}.
All the technical details and proofs are contained in the Supplement.

Throughout the paper, we adopt the following notation.
	For any set $\cS$, we write $|\cS|$ for its cardinality.
	For any vector $\bv\in \RR^d$, we use $\bv_{\cS}$ to denote the subvector of $\bv$ with entries indexed by the set $\cS$, and define its $\ell_q$ norm as $\|\bv\|_q = (\sum_{j=1}^d |\bv_j|^q)^{1/q}$ for some real number $q\ge 0$. For any matrix $\bM \in \RR^{d_1 \times d_2}$, we denote $\norm{\bM}_{\max} = \max_{i,j}|M_{ij}|$. For any two sequences $a_n$ and $b_n$, we write $a_n \lesssim b_n$  if there exists some positive constant $C$ such that $a_n \le Cb_n$ for any $n$. We let $a_n \asymp b_n$ stand for $a_n\lesssim b_n$ and $b_n \lesssim a_n$. Denote $a\vee b=\max (a,b)$ and $a\wedge b=\min(a,b)$. %Unless otherwise mentioned, we use $f$ to denote the probability density function of some random variable.
{
	For function $F(\theta,\bgamma)$, we denote  $\nabla_{\theta}F(\theta,\bgamma)$ and $\nabla_{\bgamma}F(\theta,\bgamma)$ as the first order derivatives, and  $\nabla^2_{\theta,\theta}F(\theta,\bgamma)$ the second order derivative.
}

%a non-Gaussian limiting distribution in fixed dimension, which imposes challenges for constructing statistical inferential tools.
\section{Methodology}\label{mainmethod}
\subsection{Review of Penalized Smoothed Surrogate Estimation}\label{background}

Under high dimensionality that $d\gg n$, estimating $\bbeta^*$ via the empirical risk minimization (\ref{eq_risk_ori}) induces challenges from both statistical and computational perspectives.
The non-smoothness of $L_{01}(u)$ would cause the estimator to have a nonstandard convergence rate, which  happens even in the fixed low dimensional case \citep{kim1990cube}.
%\cite{kim1990cube} showed that the non-smoothness of $L_{01}(u)$ causes the estimator to have a nonstandard cubic root rate in fixed dimension.
Moreover, minimizing the empirical risk function based on the 0-1 loss is computationally NP-hard and is often very difficult to implement.
%Moreover, computationally minimizing the empirical risk function based on the 0-1 loss is NP-hard and often very difficult to implement in practice.
To tackle these challenges, \cite{feng2022nonregular} considered the following smoothed surrogate risk
\beq \label{eq_risk_smooth}
R_{\delta}(\bbeta) = \EE\bigg[
w(Y)L_{\delta,K}\big\{
Y(X-\bbeta^T\bZ)
\big\}
\bigg],
\eeq
where $L_{\delta,K}(u) = \int_{u/\delta}^{\infty}K(t)dt$ is a smoothed approximation of $L_{01}(u)$, $K$ is a kernel function defined in Section \ref{theory} and $\delta >0 $ is a bandwidth parameter. As the bandwidth $\delta$ shrinks to 0, $L_{\delta,K}(u)$ converges pointwisely to $L_{01}(u)$ (for any $u\neq 0$), from which it can be shown that $\bbeta^*$ also minimizes the smoothed risk $R_{\delta}(\bbeta)$ up to a small approximation error. They further proposed the following penalized smoothed surrogate estimator
\beq\label{eq_hatbeta}
\wh{\bbeta} := \argmin_{\bbeta} R_\delta^n(\bbeta)+ P_{\lambda}(\bbeta),
\eeq
where $P_{\lambda}(\bbeta)$ is some sparsity inducing penalty (e.g., Lasso) with a tuning parameter $\lambda$, and
$R_\delta^n(\bbeta)$ is the corresponding empirical risk
\beq\label{eq_erisk_smooth}
R_{\delta}^n(\bbeta) = \frac 1n \sum_{i = 1}^{n}\bar{R}^i_{\delta}(\bbeta) = \frac 1n\sum_{i = 1}^{n}w(y_i)L_{\delta,K} \Big(y_i(x_i - \bbeta^T\bz_i)\Big).
\eeq
Computationally, the empirical surrogate risk $R_{\delta}^n(\bbeta)$ is a smooth function of $\bbeta$, which renders the optimization more tractable. Statistically, under some conditions, the estimator $\wh{\bbeta}$ is shown to be rate-optimal, i.e., the convergence rate of $\wh{\bbeta}$ matches the minimax lower bound up to a logarithmic factor. We refer to \cite{feng2022nonregular} for the detailed results.

\subsection{Bias Corrected Smoothed Decorrelated Score}\label{method}

While \cite{feng2022nonregular} showed that the penalized smoothed surrogate estimator $\wh\bbeta$ is consistent, it does not automatically equip with a practical inferential procedure for $\bbeta^*$, mainly because of the sparsity inducing penalty.
In practice, how to draw valid statistical inference is often the ultimate goal. In our motivating example, it is of critical importance to quantify the uncertainty of $\bc_0^T\bbeta^*$ where $\bc_0$ represents the realized value of a new patient's clinical profile. In other words, we would like to develop a testing procedure for
\beq\label{eq:test MCID}
H_{0L}: \bc_0^T\bbeta^* = 0 \mbox{ versus } H_{1L}: \bc_0^T\bbeta^* \neq 0.
\eeq
In this section, we focus on a special case of (\ref{eq:test MCID}), the hypothesis test for $\theta^*$,
\beq\label{eq:test theta}
H_0: \theta^* = 0 \mbox{ versus } H_1: \theta^* \neq 0,
\eeq
where we treat $\bgamma$ as the nuisance parameter.
Once the results for (\ref{eq:test theta}) are clear, we can extend them to (\ref{eq:test MCID}), to be presented in Section~\ref{sec:imcid}.

For (\ref{eq:test theta}), we propose a new bias corrected smoothed decorrelated score test.
It is well known that the classical score test is constructed based on the magnitude of the gradient of the loglikelihood, or more generally, the empirical risk function associated with $R(\bbeta)$ in (\ref{eq_risk_ori}).  However, this construction breaks down in our problem due to the following two reasons.

%In fixed-dimensional setting, under the Manski's model \cite{manski1975maximum,manski1985semiparametric} that can be treated as a special case of (\ref{eq_risk_ori}), it is shown that the minimizer of the empirical smoothed surrogate risk (\ref{eq_erisk_smooth}) is asymptotic normal under regularity conditions \citep{horowitz1992smoothed}. However, under the high-dimensional setting it is still challenging to construct valid inferential tools.

First, to construct the score statistic, one needs to plug in some estimate of the nuisance parameter $\bgamma$ such as $\wh{\bgamma}$ obtained by partitioning $\wh{\bbeta} = (\wh{\theta},\wh{\bgamma}^T)^T$ in (\ref{eq_hatbeta}). However, since $\bgamma$ is a high-dimensional parameter, the estimation error from $\wh{\bgamma}$ may become the leading term in the asymptotic analysis of the score function.
To deal with the high-dimensional nuisance parameter, we use the decorrelated score, where the key idea is to project the score of the parameter of interest to a high-dimensional nuisance space \citep{ning2017general}.
On the population level, it takes the form
\beq\label{eq_pop_decor}
\nabla_\theta R(\theta,\bgamma)-\bomega^{*T}\nabla_{\bgamma} R(\theta,\bgamma),
\eeq
where the decorrelation vector is $\bomega^* = \big(\nabla^2_{\bgamma,\bgamma}R(\bbeta^*)\big)^{-1}\nabla^2_{\bgamma,\theta}R(\bbeta^*)$.
%The extra term $\bomega^{*T}\nabla_{\bgamma} R(\theta,\bgamma)$ is the key to reduce the error from estimating $\bgamma$.
When $R(\theta,\bgamma)$ corresponds to the expected loglikelihood function of the data, the definition of $\bomega^*$ coincides with that in \cite{ning2017general}.
In general, however, $R(\theta,\bgamma)$ is not always the loglikelihood function, so we define $\bomega^*$ as $\big(\nabla^2_{\bgamma,\bgamma}R(\bbeta^*)\big)^{-1}\nabla^2_{\bgamma,\theta}R(\bbeta^*)$ in order to mitigate the bias from estimating $\bgamma$.
We refer to the review paper \citep{neykov2018unified} for further discussions.

Second, even if the above decorrelated score approach can successfully remove the effect of the  high-dimensional nuisance parameter, one cannot construct the sample based decorrelated score from (\ref{eq_pop_decor}), as the sample version of $R(\theta,\bgamma)$ is non-differentiable, leading to the so called non-standard inference. To circumvent this issue, we approximate $R(\theta,\bgamma)$ in (\ref{eq_pop_decor}) by the smoothed surrogate risk $R_\delta(\theta,\bgamma)$ in (\ref{eq_risk_smooth}), that is
\beq\label{eq_pop_decor2}
\nabla_\theta R(\theta,\bgamma)-\bomega^{*T}\nabla_{\bgamma} R(\theta,\bgamma)=\big\{\nabla_\theta R_\delta(\theta,\bgamma)-\bomega^{*T}\nabla_{\bgamma} R_\delta(\theta,\bgamma)\big\}-\textrm{approximation bias}.
\eeq
Since the empirical version of $R_\delta(\theta,\bgamma)$ is smooth, we define the (empirical) smoothed decorrelated score function as $S_\delta(\theta,\bgamma) = \nabla_{\theta} R^n_\delta(\theta,\bgamma) - \bomega^{*T}\nabla_{\bgamma} R^n_\delta(\theta,\bgamma)$.
%\bq\label{score_ori}
%S_\delta(\theta,\bgamma) = \nabla_{\theta} R^n_\delta(\theta,\bgamma) - \bomega^{*T}\nabla_{\bgamma} R^n_\delta(\theta,\bgamma).
%\eq
With $\bgamma$ estimated by $\wh\bgamma$, the estimated score function is then naturally defined as
\beq\label{eq_estimated_score}
\wh{S}_\delta (\theta,\wh{\bgamma}) =  \nabla_{\theta} R^n_\delta(\theta,\wh{\bgamma}) - \wh{\bomega}^T\nabla_{\bgamma} R^n_\delta(\theta,\wh{\bgamma}),
\eeq
where $\wh{\bomega}$, to be defined more precisely in Section \ref{sec_details}, is an estimator of $\bomega^*$.

In view of (\ref{eq_pop_decor2}) and (\ref{eq_estimated_score}), the sample version of $\nabla_\theta R_\delta(\theta,\bgamma)-\bomega^{*T}\nabla_{\bgamma} R_\delta(\theta,\bgamma)$ is given by $\wh{S}_\delta (\theta,\wh{\bgamma})$ and therefore, to construct a valid score function, it remains to estimate the approximation bias in (\ref{eq_pop_decor2}). To proceed,
{
we first analyze the population version of this approximation bias,
}
which is simply $\bv^{*T}\nabla R_\delta(\bbeta^*)$ at $\bbeta=\bbeta^*$, where $\bv^* = (1,-\bomega^{*T})^T$. After some analysis, we can show that the magnitude of the approximation bias depends on the smoothness of $f(x|y,\bz)$, the conditional density of $X$ given $Y$ and $\bZ$. To obtain an explicit form of the approximation bias, we assume that $f(x|y,\bz)$ is $\ell$th order differentiable for some $\ell\geq 2$, to be defined more precisely in Section \ref{theory}. Under this assumption, we can show that as the bandwidth parameter $\delta\rightarrow 0$,
$\bv^{*T}\nabla R_\delta(\bbeta^*) = \delta^{\ell}\mu^*(1 + o(1))$,
%\beq\label{eq_bias1}
%\bv^{*T}\nabla R_\delta(\bbeta^*) = \delta^{\ell}\mu^*(1 + o(1)),
%\eeq
where
\begin{eqnarray}
\mu^*:=\bv^{*T}\bb^*&=&\bv^{*T}
\Big(\int K(u)\frac{u^\ell}{\ell!}du\Big) \sum_{y \in \{-1,1\}}w(y)\int y\bz f^{(\ell)}(\bbeta^{*T}\bz|y,\bz)f(y,\bz) d\bz,\nonumber\\
&=&
\underbrace{\Big(\int K(u)\frac{u^\ell}{\ell!}du\Big)}_{\gamma_{K,\ell}} \bv^{*T}\underbrace{\EE \Big[w(Y)Y\bZ f^{(\ell)}(\bbeta^{*T}\bZ|Y,\bZ)\Big]}_{T^{(\ell)}(\bbeta^*)},\label{eq:definemustar}
\end{eqnarray}
and $f^{(\ell)}(x|y,\bz)$ denotes the $\ell$th order derivative of $f(x|y,\bz)$ with respect to $x$.

To estimate the approximation bias $\bv^{*T}\nabla R_\delta(\bbeta^*)$, it suffices to estimate $\mu^*$. From (\ref{eq:definemustar}), once $f^{(\ell)}(x|y,\bz)$ at $x=\bbeta^{*T}\bz$ is estimated, we can construct a plug-in estimator for $\mu^*$. To be specific, assume that a pilot kernel estimator with some kernel function $U$  and bandwidth $h$ is available to estimate $f^{(\ell)}(\bbeta^{*T}\bz|y,\bz)$. Then we can estimate $\mu^*$ by
\beq\label{pilot_bias}
\wh{\mu}= \gamma_{K,\ell}\wh{\bv}^{T}\wh{T}_{h,U}^{(\ell),n}(\wh{\bbeta}),
\eeq
where $\wh{T}_{h,U}^{(\ell),n}(\wh{\bbeta}):= \frac 1n \sum_{i=1}^n w(y_i)y_i \frac{ \bz_{i}}{h^{1+\ell}}U^{(\ell)}\big(\frac{ \wh{\bbeta}^{T}\bz_i - x_i}{h}\big)$
and $\wh\bv = (1,-\wh\bomega^{T})^T$.
%Thus, the asymptotic mean of the standardized decorrelated score $(n\delta)^{1/2}S_{\delta}(\bbeta^*)$ is estimated by $\sqrt{n\delta^{2\ell+1}}\wh{\mu}$.

%\jiwei{I hide a remark here. I feel that is not quite essential.}
\begin{comment}
\begin{remark}
Notice that we may alternatively use the following estimator $\tilde{\mu} = \wh{\bv}^T\wh{\bb}$,
where $\wh{\bb} = (\delta')^{-\ell}\nabla R^n_{\delta'}(\wh{\bbeta})$ and $\delta' =  cn^{-\eta/(2\ell+1)}$ for some $0<\eta<1$ and constant $c>0$. This is similar to the bias estimator applied in \citet{horowitz1992smoothed} for the smoothed maximum score estimator under the Manski's model. Empirically, we find that this type of estimator $\tilde{\mu}$ is sensitive to the choice of $\eta$, and often seems to be far from $\mu^*$ in simulations. In addition, theoretically, if the conditional density $f(x|y,\bz)$ has a higher order smoothness (greater than $\ell$) while the kernel $K$ of order $\ell$ is utilized, then it can be shown that the convergence rate of the proposed estimator $\wh{\mu}$ can be faster than $\tilde{\mu}$. For these reasons, we recommend using the estimator $\wh{\mu}$ in (\ref{pilot_bias}).
\end{remark}
\end{comment}

The last step to construct a valid score test is to find the asymptotic variance of the smoothed decorrelated score $S_\delta(\bbeta^*)$. Lemma \ref{lemma_ori_normality_new} in the next section shows that the asymptotic variance of the standardized decorrelated score  $(n\delta)^{1/2}S_{\delta}(\bbeta^*)$ is $\sigma^{*2} = \bv^{*T}\bSigma^*\bv^*$, where
\begin{eqnarray}
\bSigma^* &:=& \sum_{y \in \{-1,1\}}w(y)^2\int\bz\bz^T\int K(u)^2duf(\bbeta^{*T}\bz|y,\bz)f(y,\bz)d\bz,\nonumber\\
&=& \underbrace{\Big(\int K(u)^2du\Big)}_{\tilde{\mu}_K} \underbrace{\EE\Big[w(Y)^2\bZ\bZ^T f(\bbeta^{*T}\bZ|Y,\bZ)\Big]}_{H(\bbeta^*)},\label{eq:defineSigmastar}
\end{eqnarray}
and thus $\sigma^*$ can be estimated by
\beq\label{eq_est_var}
\wh{\sigma} = \sqrt{\tilde{\mu}_K \wh{\bv}^T\wh{H}^{n}_{g,L}(\wh{\bbeta})\wh{\bv}},
\eeq
where
$\wh{H}_{g,L}^{n}(\wh{\bbeta}) =
\frac{1}{n}\sum_{i=1}^n w^2(y_i)\bz_i\bz_i^T\frac 1 g L(\frac{x_i - \wh{\bbeta}^T\bz_i}{g})$
with some kernel function $L$ and bandwidth $g$. In Section S4 in the Supplement, we propose an alternative kernel-free estimator of $\sigma^*$, which does not require any additional kernel function or bandwidth. We show that the estimator is still consistent for $\sigma^*$ but may have a slower convergence rate than $\wh{\sigma}$ here.

Equipped with the smoothed decorrelated score $\wh{S}_\delta (\theta,\wh{\bgamma})$ in (\ref{eq_estimated_score}), the estimate of the approximation bias $\delta^{\ell}\wh{\mu}$ in (\ref{pilot_bias}) and the estimate of the asymptotic variance $\wh{\sigma}^2$ in (\ref{eq_est_var}),  we define the bias corrected smoothed decorrelated score statistic as
\beq\label{eq_test_statistic_1}
\wh{U}_n = \sqrt{n\delta}\Big(\frac{\wh{S}_\delta(0,\wh{\bgamma}) - \delta^{\ell}\wh{\mu}}{\wh{\sigma}}\Big).
\eeq
\begin{remark}
Compared to the existing decorrelated score approach \citep{ning2017general}, our methodological innovation  is to develop an explicit bias correction step to remove the approximation bias in (\ref{eq_pop_decor2}) induced by the smoothed surrogate risk. From the theoretical aspect, our test statistic $\wh{U}_n$ is rescaled by $(n\delta)^{1/2}$ rather than the classical $n^{1/2}$ factor, which leads to the non-standard rate of the decorrelated score not only under the null but also under local alternatives; see Section \ref{theory}.
\end{remark}

%\begin{remark}
%  In the construction of $\wh U_n$, we use a pilot kernel function $U$ with bandwidth $h$ for estimating $\wh\mu$ in (\ref{pilot_bias}) as well as a pilot kernel function $L$ with bandwidth $g$ for estimating $\wh\sigma$ in (\ref{eq_est_var}). One should not be confused with our major kernel function $K$ with bandwidth $\delta$ used in the smoothed surrogate risk (\ref{eq_risk_smooth}). Their detailed conditions are presented in Section~\ref{theory}.
%\end{remark}

%For the purpose of hypothesis testing, we will need to eliminate this bias. There are two approaches to deal with this bias. The first approach, called undersmoothing, is to choose a very small bandwidth $\delta$ such that $\sqrt{n\delta^{2\ell+1}}\mu^* = o(1)$. However, by doing so, the variance of the decorrelated score function is inflated, leading to diminished  power of the decorrelated score test. The second class of approach aims to explicitly remove the bias by directly estimating it. In this paper, we focus on the latter one.

\subsection{Detailed Implementation}\label{sec_details}

For numerical implementation, we follow the path-following algorithm presented in \cite{feng2022nonregular} to compute the initial estimator $\wh\bbeta$. For the estimator $\wh\bomega$, recall that $ \bomega^*$ satisfies  $\nabla^2_{\bgamma,\bgamma}R(\bbeta^*)\bomega^*=\nabla^2_{\bgamma,\theta}R(\bbeta^*)$. Since $\nabla^2 R(\bbeta^*)$ can be approximated by the Hessian of the smoothed surrogate loss $\nabla^2 R^n_{\delta}(\bbeta^*)$, we consider the following Dantzig type estimator $\wh{\bomega}$, where
\beq\label{eq_def_dantzig}
\wh{\bomega} = \argmin_{\bomega} \norm{\bomega}_1 ~~~~~~~s.t.~~ \norm{\nabla^2_{\bgamma,\theta} R^n_{\delta}(\wh{\bbeta}) - \nabla^2_{\bgamma,\bgamma} R^n_{\delta}(\wh{\bbeta})\bomega }_\infty \leq \lambda',
\eeq
for some tuning parameter $\lambda'>0$. %To analyze the convergence of $\wh{\bomega}$, we impose the following condition.

For implementing $\wh{U}_n$, we note that the analysis of the asymptotic distribution of $\wh{U}_n$ is complicated by the dependence between the estimator $\wh{\bbeta}$ and $S_\delta(\theta,\bgamma)$. To decouple the dependence and ease theoretical development, we apply the cross-fitting technique to construct the bias corrected smoothed decorrelated score. Specifically, instead of utilizing the same set of samples for estimating $\wh{\bbeta},\wh{\bomega}$ and constructing the score function $S_\delta(\theta,\bgamma)$, we will firstly estimate $\wh{\bbeta}$ using one set of samples, and then use the rest of samples for estimating $\wh{\bomega}$ and constructing $S_\delta(\theta,\bgamma)$.
%By doing so, $\wh{\bbeta}$ is independent of $\wh{\bomega}$ and $S_\delta(\theta,\bgamma)$, so that it is more convenient to control the concentration of $S_\delta(\wh{\bbeta})$ given $\wh{\bbeta}$.
We can further switch the samples and aggregate the decorrelated score.
Without loss of generality, assume the sample size $n$ is even and we divide the samples into two halves with equal size for this purpose. Formally, denote $\wh{\bbeta}^{(i)},\wh{\bomega}^{(i)},i=1,2$ as the estimator based on the $i$th fold of the samples, $\cN_i$, and similarly $\nabla R_\delta^{n_{(i)}}(\bbeta), \nabla^2 R_\delta^{n_{(i)}}(\bbeta)$ as the corresponding gradient and Hessian. Define
$$
\wh{S}^{(1)}_\delta(\theta,\wh{\bgamma}^{(2)})=\nabla_{\theta} R^{n_{(1)}}_\delta(\theta,\wh{\bgamma}^{(2)}) - \wh{\bomega}^{(1)T}\nabla_{\bgamma} R^{n_{(1)}}_\delta(\theta,\wh{\bgamma}^{(2)}),
$$
and $\wh{S}^{(2)}_\delta(\theta,\wh{\bgamma}^{(1)})$ in a similar way. The estimated decorrelated score via cross-fitting is
\beq\label{eq_cross_dscore}
\wh{S}_\delta(\theta,\wh{\bgamma}) =\frac12\big(\wh{S}^{(1)}_\delta(\theta,\wh{\bgamma}^{(2)}) + \wh{S}^{(2)}_\delta(\theta,\wh{\bgamma}^{(1)})\big ).
\eeq
Similarly, we define the cross-fitted estimators $\wh{\mu}$ and $\wh{\sigma}$ as
%$\wh{\mu}= \frac 12 \gamma_{K,\ell} (\wh{\bv}^{(1)T}\wh{T}_{h,U}^{(\ell),n_{(1)}}(\wh{\bbeta}^{(2)})+ \wh{\bv}^{(2)T}\wh{T}^{(\ell),n_{(2)}}_{h,U}(\wh{\bbeta}^{(1)}))$,
%$\wh{\sigma}^2 = \frac {\tilde{\mu}_K}{2} \bigg[\wh{\bv}^{(1)T}\wh{H}^{n_{(1)}}_{g,K}(\wh{\bbeta}^{(2)})\wh{\bv}^{(1)}
%+ \wh{\bv}^{(2)T}\wh{H}^{n_{(2)}}_{g,K}(\wh{\bbeta}^{(1)})\wh{\bv}^{(2)}
%\bigg]$,
\beq
%\wh{\mu}=& \frac12 \big(
%\wh{\bv}^{(1)T}\wh{\bb}^{(1)} +  \wh{\bv}^{(2)T}\wh{\bb}^{(2)}
%\big),\\
\wh{\mu}=& \frac 12 \gamma_{K,\ell} (\wh{\bv}^{(1)T}\wh{T}_{h,U}^{(\ell),n_{(1)}}(\wh{\bbeta}^{(2)})+ \wh{\bv}^{(2)T}\wh{T}^{(\ell),n_{(2)}}_{h,U}(\wh{\bbeta}^{(1)})),\\
\wh{\sigma}^2 =& \frac {\tilde{\mu}_K}{2} \bigg[\wh{\bv}^{(1)T}\wh{H}^{n_{(1)}}_{g,K}(\wh{\bbeta}^{(2)})\wh{\bv}^{(1)}
+ \wh{\bv}^{(2)T}\wh{H}^{n_{(2)}}_{g,K}(\wh{\bbeta}^{(1)})\wh{\bv}^{(2)}
\bigg],\label{eq_cross_mu}
\eeq
where
%$\wh{T}_{h,U}^{(\ell),n_{(1)}}(\wh{\bbeta}^{(2)})= \frac{1}{|\cN_1|}\sum_{i\in \cN_1} w(y_i)y_i \frac{ \bz_{i}}{h^{1+\ell}}U^{(\ell)}\bigg(\frac{\wh{\bbeta}^{(2)T}\bz_i - x_i}{h}\bigg)$,
%$\wh{H}_{g,L}^{n_{(1)}}(\wh{\bbeta}^{(2)}) =
%\frac{1}{|\cN_1|}\sum_{i\in \cN_1} w^2(y_i)\bz_i\bz_i^T\frac 1 g L(\frac{x_i - \wh{\bbeta}^{(2)T}\bz_i}{g})$,
\beq
\wh{T}_{h,U}^{(\ell),n_{(1)}}(\wh{\bbeta}^{(2)})=& \frac{1}{|\cN_1|}\sum_{i\in \cN_1} w(y_i)y_i \frac{ \bz_{i}}{h^{1+\ell}}U^{(\ell)}\bigg(\frac{\wh{\bbeta}^{(2)T}\bz_i - x_i}{h}\bigg),\\
%\wh{\bb}^{(1)} =& \frac{\delta'^{-\ell}}{\cN_1}\sum_{i\in \cN_1} w(y_i)\bz_i\frac{1}{
%\delta'} K(\frac{x_i - \wh{\bbeta}^{(2)T}\bz_i}{\delta'}),\\
\wh{H}_{g,L}^{n_{(1)}}(\wh{\bbeta}^{(2)}) =&
\frac{1}{|\cN_1|}\sum_{i\in \cN_1} w^2(y_i)\bz_i\bz_i^T\frac 1 g L(\frac{x_i - \wh{\bbeta}^{(2)T}\bz_i}{g}),
\eeq
and similarly for
%$\wh{\bb}^{(2)},
$\wh{T}_{h,U}^{(\ell),n_{(2)}}(\wh{\bbeta}^{(1)}), \wh{H}_{g,L}^{n_{(2)}}(\wh{\bbeta}^{(1)})$. Given $\wh{S}_\delta(\theta,\wh{\bgamma})$ in (\ref{eq_cross_dscore}) and the above estimators $\wh{\mu}$ and $\wh{\sigma}$, we can form the score test statistic $\wh{U}_n$ in the same way as in (\ref{eq_test_statistic_1}).

\section{Theory}\label{theory}
\subsection{Assumptions}
%In this section, we analyze the theoretical properties of the proposed procedure.

In this paper, we consider the following definition of function smoothness.

\begin{definition}\label{def_smooth}
	We say the conditional density $f(x|y,\bz)$ of $X$ given $Y,\bZ$ is $\ell$th order smooth, if for any $\bz$ and $y\in \{-1,1\}$, the conditional density $f(x|y,\bz)$ is $\ell$-times continuously differentiable in $x$ with derivatives $f^{(i)}(x|y,\bz)$ bounded by a constant $C$, $|f^{(i)}(x|y,\bz)|\leq C$ for $i = 1,\dotso,\ell$, and  $f^{(\ell)}(x|y,\bz)$ is H\"older continuous with some exponent $0 < \zeta \leq 1$, that is, for any $\bz, \triangle$ and $y\in \{-1,1\}$, $|f^{(\ell)}(x+\triangle|y,\bz) - f^{(\ell)}(x|y,\bz)| \leq L \triangle^\zeta$,
	%\beq\label{eq_def_smooth}
	%|f^{(\ell)}(x+\triangle|y,\bz) - f^{(\ell)}(x|y,\bz)| \leq L \triangle^\zeta,
	%\eeq
	where $L>0$ is some constant.
\end{definition}

\begin{assumption}\label{ass_smooth}
	We assume $f(x|y,\bz)$ is $\ell$th order smooth with some integer $\ell \geq 2$.
\end{assumption}

Assumption \ref{ass_smooth} concerns the smoothness of $f(x|y,\bz)$. To see why the smoothness condition is important, notice that the gradient functions of (\ref{eq_risk_smooth}) and (\ref{eq_risk_ori}) are
\beq
\nabla R_{\delta}(\bbeta) =& \sum_{y \in \{-1,1\}} w(y) \int yz \Big[\int \frac{1}{\delta}K(\frac{y(x - \bbeta^T\bz)}{\delta})f(x|y,\bz)dx\Big] f(y,\bz) d\bz\\
\nabla R(\bbeta) =& \sum_{y \in \{-1,1\}} w(y) \int yzf(\bbeta^T\bz|y,\bz)f(y,\bz)d\bz,
\eeq
from which we can see that $f(\bbeta^T\bz|y,\bz)$ in $\nabla R(\bbeta)$ is substituted by its kernel approximation $\int \frac{1}{\delta}K(\frac{y(x - \bbeta^T\bz)}{\delta})f(x|y,\bz)dx$, and thus the difference between $\nabla R_{\delta}(\bbeta)$ and $\nabla R(\bbeta)$ naturally depends on the smoothness of $f(x|y,\bz)$.

Notice that our smoothness condition in Definition \ref{def_smooth} is slightly stronger than the standard H\"older smoothness condition in the nonparametric literature \citep{tsybakovintroduction}. In particular, we require that $f^{(\ell)}(x|y,\bz)$ is H\"older continuous with some exponent $0 < \zeta \leq 1$. This additional assumption is essential to show the rate of the bias estimator $\hat\mu$ in (\ref{pilot_bias}). The H\"older class condition in Assumption \ref{ass_smooth} can be relaxed to a variation of Nikol'ski class condition \citep{tsybakovintroduction}; see Section S3.2 in the Supplement for details.

%Throughout this paper, we require the following standard conditions on $K$.
\begin{assumption}\label{ass_kernel}
	We assume $K(t)$ is a kernel function with bounded support that satisfies: $K(t) = K(-t)$, $|K(t)| \leq K_{\max} < \infty \; \forall\; t\in \RR$, $\int K(t)dt = 1$, $\int K^2(t)dt < \infty$, and $|K'|<\infty$. We also assume that $K$  degenerates at the boundaries. A kernel is said to be of order $\ell \geq 1$ if it satisfies
	$\int t^jK(t)dt = 0, \;\forall\;j = 1,\dotso,\ell-1$, $\int t^\ell K(t)dt \neq 0$, and $\int |t|^q |K(t)|dt $ are bounded by a constant for any $q\in [\ell,\ell+1]$.
\end{assumption}

{
Assumption~\ref{ass_kernel} above is about the kernel function $K(t)$ that we first introduced in the surrogate risk $R_\delta(\bbeta)$ in (\ref{eq_risk_smooth}).
We provide a list of commonly-seen second-order, fourth-order and sixth-order kernel functions in Section S3.3 of the Supplement.
}

We now impose regularity conditions on $(X,Y,\bZ)$.

\begin{assumption}\label{ass_proportion}
	There exists a constant $c > 0$ such that $c \leq \PP(Y = 1)\leq 1 -c$ and the weight function $w(\cdot)$ is positive and upper bounded by a constant.
\end{assumption}

\begin{assumption}\label{ass_moment}
	We assume $\max_{1\leq j\leq d} |Z_j| \leq M_n$ for some $M_n$ that possibly depends on $n$, where $M_n^2 \leq C\sqrt{n\delta/\log(d)}$ for some constant $C > 0$.
	We also assume that $\EE[|Z_j|^4|Y = y]$ is bounded by a constant for $y\in \{1,-1 \}$.
\end{assumption}

%\begin{remark}
%	Note that here sub-Gaussian $\bZ$ will satisfy the above assumptions with high probability providing $\log(d)^3 \leq n\delta$, \note{the condition on $M_n$ is for using Bernstein inequality. The moment condition will ensure the $\sup$ and $\max$ norms appearing in the bounds for the bias and variance terms have expected order. In the next section we will need another mind condition on $M_n$ w.r.t another bandwidth for adaptivity.}
%\end{remark}

\begin{assumption}\label{ass_projection_norm}
	We assume $\sigma^* = \sqrt{\bv^{*T}\bSigma^*\bv^*}$ is bounded away from 0 and infinity by some constants, and $|\mu^*|=|\bv^{*T}\bb^*|$ is also upper bounded by a constant.
\end{assumption}
Assumption \ref{ass_moment} requires the boundedness of $\bZ$ and the fourth order moment.
{
Notice that if each component of $\bZ$ is sub-Gaussian with bounded sub-Gaussian norm, Assumption \ref{ass_moment} is satisfied with high probability with $M_n \asymp \sqrt{\log d}$ providing $(\log d)^3/(n\delta) = O( 1)$ which is a mild assumption.
For binary covariates $Z_j\in \{0,1\}$, it holds that $M_n=1$.
}
Assumption \ref{ass_projection_norm} ensures that the asymptotic variance of the smoothed decorrelated score $\sigma^*$ does not degenerate and the approximation bias $\mu^*$ is bounded. In Section S3.4 in the Supplement, we verify that under mild conditions, Assumptions \ref{ass_smooth}-\ref{ass_projection_norm} hold under the binary response model.
%Assumption \ref{ass_min_eigen} assumes the minimum eigenvalue of $\bSigma^*$ is bounded away from 0. As we will see soon, this term will appear in the asymptotic variance of $S_{\delta}(\bbeta^*)$. This condition is needed to show the asymptotic normality of the score function.
%The condition $\delta^{\ell+1/2}\mu^* = o(1)$ in Assumption \ref{ass_projection_norm} ensures that the squared first-order asymptotic bias of the decorrelated score function is negligible compared to  second moment of $\nabla \bar{R}^i_{\delta}(\bbeta^*)$, which is $\cO(\sigma^{*2}/\delta)$. This condition automatically holds in fixed dimension as $\mu^*=\cO( 1)$, and is also mild in high-dimensional setting.
%This is the old version of asymptotic normality
%\begin{comment}
%	\begin{lemma}\label{lemma_ori_normality}
%	If $n\delta^{2\ell+1}$ has a finite limit $\zeta$, then $(n\delta)^{1/2}\bv^{*T}\nabla R_\delta^n(\bbeta^*) \stackrel{d}{\rightarrow}N(\zeta^{1/2} \bv^{*T}\bB^*,\sigma^{*2})$, where $\sigma^* = \sqrt{\bv^{*T}\bSigma\bv^*}$.
%	\end{lemma}
%\end{comment}
%Lemma \ref{lemma_ori_normality_new} shows that the scaled score function $\sqrt{n\delta}S_\delta(\bbeta^*)$ is asymptotic normal with mean $\sqrt{n\delta}\bv^{*T}\nabla R_\delta(\bbeta^*)$ and variance $\bv^{*T}\bSigma^*\bv^*$, where
%$
%\sqrt{n\delta}\bv^{*T}\nabla R_\delta(\bbeta^*) = \sqrt{n\delta^{2\ell+1}}\bv^{*T}\bb^*(1 + o(1)).
%$
%This indicates the necessity to control the bias for valid inference.
Finally, we impose the following assumption on  the estimators of $\bbeta^*$ and $\bomega^*$.

\begin{assumption}\label{ass_estimators}
	Assume there are estimators $\wh{\bbeta}$ and $\wh{\bv} = (1,-\wh{\bomega}^T)^T$ with
	\[
	\norm{\betadiff}_1 \lesssim \eta_1(n) ~~~~ \text{and} ~~~~ \norm{\wh{\bv} - \bv^*}_1/\norm{\bv^*}_1 \lesssim \eta_2(n),
	\]
	for some non-random sequences $\eta_1(n),\eta_2(n)$ converging to $0$ as $n\rightarrow \infty$.
%where $\wh{\bbeta}$ refers to $\wh{\bbeta}^{(1)}$ and  $\wh{\bbeta}^{(2)}$ in the cross-fitting framework, and similarly for $\wh{\bv}$.
	
\end{assumption}

It is shown by \citet{feng2022nonregular} that, under some conditions, the estimator $\wh{\bbeta}$ in (\ref{eq_hatbeta}) achieves the (near) minimax-optimal rate   $\eta_1(n)=\sqrt{s}(\frac{s\log(d)}{n})^{\ell/(2\ell+1)}$
%by choosing $\lambda\asymp \sqrt{\log d/ (n\delta)}$ and $\delta\asymp (s\log d/n)^{1/(2\ell+1)}$,
where $s=\|\bbeta^*\|_0$. For $\wh{\bv} = (1,-\wh{\bomega}^T)^T$, we assume $\norm{\wh{\bv} - \bv^*}_1\lesssim \norm{\bv^*}_1\eta_2(n)$. Notice that the term $\norm{\bv^*}_1$ is not absorbed into $\eta_2(n)$ only for notational simplicity.
In Lemma S7 in the Supplement, we show that a Dantzig type estimator $\wh{\bv}$ could attain the fast rate $\eta_2(n)$.
	
\subsection{Theoretical Results}

We start from the following lemma which characterizes the asymptotic distribution of the decorrelated score function evaluated at the true parameter $\bbeta^*$.
\begin{lemma}\label{lemma_ori_normality_new}
	Under Assumptions \ref{ass_smooth} - \ref{ass_projection_norm}, if $(\norm{\bv^*}_1M_n)^3/(n\delta)^{1/2} = o(1)$ and $\delta = o(1)$, then
	\beq
	\sqrt{n\delta}\frac{\bv^{*T}(\nabla R_\delta^n(\bbeta^*) - \nabla R_\delta(\bbeta^*))}{\sqrt{\bv^{*T}\bSigma^*\bv^*}} \stackrel{d}{\rightarrow} N(0,1),
	\eeq
	where
	\beq\label{eq_bias}
	\bv^{*T}\nabla R_\delta(\bbeta^*) = \delta^{\ell}\bv^{*T}\bb^*(1 + o(1)).
	\eeq
\end{lemma}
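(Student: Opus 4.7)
The plan is to prove the bias expansion~(\ref{eq_bias}) first and then feed its order of magnitude into the asymptotic normality argument. Both assertions hinge on the same device: a change of variables $u=(x-\bbeta^{*T}\bz)/\delta$ in the inner integral, which turns the peaked factor $\delta^{-1}K(\cdot)$ into a localized average of $f(\cdot\,|\,y,\bz)$ near $x=\bbeta^{*T}\bz$. Starting from the closed form for $\nabla R_\delta(\bbeta^*)$ displayed right before the lemma, this substitution together with the evenness of $K$ reduces the inner integral to $\int K(u)f(\bbeta^{*T}\bz+\delta u\,|\,y,\bz)\,du$. I would Taylor expand $f(\bbeta^{*T}\bz+\delta u\,|\,y,\bz)$ in $\delta u$ up to order $\ell$; the order-$\ell$ property of $K$ (Assumption~\ref{ass_kernel}) kills the intermediate moments, leaving $f(\bbeta^{*T}\bz\,|\,y,\bz) + \delta^\ell(\ell!)^{-1}f^{(\ell)}(\bbeta^{*T}\bz\,|\,y,\bz)\int u^\ell K(u)\,du + R_\delta(\bz,y)$, where H\"older continuity of $f^{(\ell)}$ (Definition~\ref{def_smooth}) yields $|R_\delta|\lesssim \delta^{\ell+\zeta}$ uniformly in $(y,\bz)$. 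Integrating this expansion against $w(y)y\bz f(y,\bz)$ gives $\nabla R_\delta(\bbeta^*) = \nabla R(\bbeta^*) + \delta^\ell\bb^* + O(\delta^{\ell+\zeta})$. The leading term $\nabla R(\bbeta^*)$ vanishes by the first-order condition at the unique minimizer $\bbeta^*$ (applied to the smoothed-out form of $\nabla R(\bbeta)$ shown in the excerpt), and projecting onto $\bv^*$ and invoking the lower bound on $|\bv^{*T}\bb^*|$ from Assumption~\ref{ass_projection_norm} absorbs the remainder into $\delta^\ell\bv^{*T}\bb^*\cdot o(1)$.

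For the normality statement I would write $\nabla R_\delta^n(\bbeta^*)-\nabla R_\delta(\bbeta^*) = n^{-1}\sum_i(\xi_i-\EE\xi_i)$ with $\xi_i = \delta^{-1}w(y_i)y_i\bz_i K(y_i(x_i-\bbeta^{*T}\bz_i)/\delta)$, and apply the Lyapunov CLT to the centered i.i.d.\ scalars $W_i=\bv^{*T}(\xi_i-\EE\xi_i)$. A second-moment computation mirroring the bias step (using $\EE[K^2(Y(X-\bbeta^{*T}\bZ)/\delta)\mid Y,\bZ] = \delta f(\bbeta^{*T}\bZ\,|\,Y,\bZ)\int K^2 + o(\delta)$) shows $\EE[\xi_1\xi_1^T]=\delta^{-1}\bSigma^*(1+o(1))$, while $\EE\xi_1(\EE\xi_1)^T = O(\delta^{2\ell})$ is negligible; hence $\mathrm{Var}(W_1) = \sigma^{*2}\delta^{-1}(1+o(1))$. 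For the Lyapunov condition I would use $|K|^3\le K_{\max}K^2$ to recycle the second-moment integral, together with $|\bv^{*T}\bZ|^3\le\norm{\bv^*}_1 M_n(\bv^{*T}\bZ)^2$ from Assumption~\ref{ass_moment}, yielding $\EE|W_1|^3\lesssim\norm{\bv^*}_1 M_n\sigma^{*2}/\delta^2$. Thus $\EE|W_1|^3/(\sqrt n\,\mathrm{Var}(W_1)^{3/2})\lesssim \norm{\bv^*}_1 M_n/\sqrt{n\delta}$, which is $o(1)$ under the hypothesis $(\norm{\bv^*}_1 M_n)^3/(n\delta)^{1/2}=o(1)$.

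The main obstacle is controlling exactly this Lyapunov ratio. Because the kernel has height $O(\delta^{-1})$ on a support of width $\delta$, $\xi_i$ has pointwise magnitude $O(\norm{\bv^*}_1 M_n/\delta)$ while $\mathrm{Var}(W_1)$ is only $O(\delta^{-1})$; the peak-to-standard-deviation ratio $\norm{\bv^*}_1 M_n/\sqrt\delta$ must be defeated by $\sqrt n$, and the moment hypothesis is calibrated precisely to this ratio cubed. A secondary technical point is justifying the interchanges of the limit $\delta\to 0$ with the integrals defining $\bSigma^*$ and the third moment; these follow from the boundedness of the derivatives $f^{(i)}$ in Definition~\ref{def_smooth} together with the fourth-moment condition on $Z_j$ in Assumption~\ref{ass_moment}, via dominated convergence.
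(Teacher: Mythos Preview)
Your proposal is correct and follows essentially the same route as the paper: change of variable plus $\ell$th-order Taylor expansion for both the bias term and the second moment, then the Lyapunov CLT for the triangular array $\bv^{*T}(\nabla\bar R_\delta^i(\bbeta^*)-\nabla R_\delta(\bbeta^*))$. Your third-moment bound via $|K|^3\le K_{\max}K^2$ and $|\bv^{*T}\bZ|^3\le\norm{\bv^*}_1M_n(\bv^{*T}\bZ)^2$ is in fact sharper than the paper's cruder estimate $\EE|\bv^{*T}\nabla\bar R_\delta^i(\bbeta^*)|^3\lesssim\norm{\bv^*}_1^3M_n^3/\delta^2$, giving a Lyapunov ratio of order $\norm{\bv^*}_1M_n/\sqrt{n\delta}$ rather than $(\norm{\bv^*}_1M_n)^3/\sqrt{n\delta}$; both vanish under the stated hypothesis once $n\delta\to\infty$, so the refinement is harmless.
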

%\begin{proof}
%	See Section \ref{pf_lemma_ori_normality_new} for a detailed proof.
%\end{proof}
%This lemma illustrates the asymptotic bias and variance of the smoothed decorrelated score.
Asymptotically, the bias and standard deviation of $\bv^{*T}\nabla R_\delta^n(\bbeta^*)$ can be seen from this lemma.
%The former is essentially the approximation bias term in (\ref{eq_pop_decor2}).
Since $\mu^*=\bv^{*T}\bb^*$ and $\sigma^*=\sqrt{\bv^{*T}\bSigma^*\bv^*}$ are both bounded by constants, the asymptotic bias and standard deviation are of order $\delta^\ell$ and $(n\delta)^{-1/2}$, respectively. Thus, choosing $\delta= c n^{-1/(2\ell + 1)}$ for any constant $c>0$ attains the optimal bias and variance trade-off. %Indeed, we show in Section \ref{adaptivity} that the bandwidth $\delta$ that minimizes an estimate of the MSE of the decorrelated score not only attains the optimal order $n^{-1/(2\ell + 1)}$ but also finds the optimal constant $c$.
Note that in this lemma we require $(\norm{\bv^*}_1M_n)^3/(n\delta)^{1/2} = o(1)$  to verify the Lindeberg condition in the central limit theorem, which holds as long as $\delta$ does not shrink to zero too fast.

%\jiwei{I removed one paragraph here since I think it is not quite essential.}
%We note that another popular perhaps more convenient approach to deal with the asymptotic bias is undersmoothing. That is we choose a very small bandwidth $\delta$ such that $\sqrt{n\delta^{2\ell+1}}\mu^* = o(1)$. However, by doing so, the variance of the decorrelated score (of order $(n\delta)^{-1/2}$) is inflated, leading to diminished  power of the decorrelated score test. In contrast, our bias corrected decorrelated score explicitly removes the bias term and as mentioned above can attain the optimal bias and variance trade-off.

%The second class of approach aims to explicitly remove the bias by directly estimating it. In this paper, we focus on the latter one.

%For the purpose of hypothesis testing, we will need to eliminate this bias. There are two approaches to deal with this bias. The first approach, called undersmoothing, is to choose a very small bandwidth $\delta$ such that $\sqrt{n\delta^{2\ell+1}}\mu^* = o(1)$. However, by doing so, the variance of the decorrelated score function is inflated, leading to diminished  power of the decorrelated score test. The second class of approach aims to explicitly remove the bias by directly estimating it. In this paper, we focus on the latter one.

Our first main theorem characterizes the asymptotic normality of the decorrelated score under the null hypothesis with nuisance parameters $\bgamma^*$ and $\bomega^*$ estimated by those in Assumption \ref{ass_estimators}.
%Recall that the estimated decorrelated score  $\wh{S}_\delta(0,\wh{\bgamma})$ is the one in (\ref{eq_cross_dscore}) via cross-fitting.

%~~~~~~~~~~~~~ estimated score function consistency ~~~~~~~~~~~
%$n\delta^{2\ell+1}$ has a finite limit $\zeta$
\begin{theorem}\label{theorem_score}
	Under Assumptions \ref{ass_smooth} - \ref{ass_estimators}, if $(\norm{\bv^*}_1M_n)^3/(n\delta)^{1/2} = o(1)$, $\frac{\log(d)}{n\delta^3} = o(1)$, $n\delta^{2\ell+1}= O(1)$, and
	\beq &(n\delta)^{1/2}\norm{\bv^*}_1\bigg(\frac{\eta_1(n)}{\delta}\vee \eta_2(n)\bigg)\bigg(\sqrt{\frac{\log(d)}{n\delta}} \vee \delta^{\ell} \vee M_n^2\eta_1(n)\bigg)  = o(1),\label{eq_theorem_score_condition}
	\eeq
	then under $H_0:\theta^* = 0$, it holds that
$\frac{\sqrt{n\delta}\wh{S}_\delta(0,\wh{\bgamma}) - \sqrt{n\delta^{2\ell+1}}\mu^*}{\sigma^*} \stackrel{d}{\rightarrow}  N(0,1)$.
%	\beq\label{eq_theorem_score_2}
%	\frac{\sqrt{n\delta}\wh{S}_\delta(0,\wh{\bgamma}) - \sqrt{n\delta^{2\ell+1}}\mu^*}{\sigma^*} \stackrel{d}{\rightarrow}  N(0,1).
%	\eeq
\end{theorem}
%\begin{proof}
%	See Section \ref{pf_theorem_score} for a detailed proof.
%\end{proof}
%\jiwei{I did not modify the statement of this Theorem.}

Theorem \ref{theorem_score} implies that the decorrelated score with some high-dimensional plug-in estimators $\wh\bgamma$ and $\wh\bomega$ has the same asymptotic distribution as in Lemma \ref{lemma_ori_normality_new}. Several conditions are needed to show this result. The first condition $(\norm{\bv^*}_1M_n)^3/(n\delta)^{1/2} = o(1)$ is from Lemma \ref{lemma_ori_normality_new}, and the second condition $\frac{\log(d)}{n\delta^3} = o(1)$ is also mild as long as $\delta$ does not go to zero too fast. The third condition $n\delta^{2\ell+1}= O(1)$ guarantees that the higher order bias of the decorrelated score can be ignored and therefore it suffices to only correct for the leading bias term in (\ref{eq_bias}).
%In particular, if the rate optimal bandwidth $\delta =c n^{-1/(2\ell + 1)}$ is chosen, from (\ref{eq_theorem_score_2}), we can see that the asymptotic bias of $\sqrt{n\delta}\wh{S}_\delta(0,\wh{\bgamma})$ is a constant $\sqrt{n\delta^{2\ell+1}}\mu^*=c^{\ell+1/2}\mu^*$, which is further removed to attain  the mean zero Gaussian limiting distribution.
%There are two different settings under which  $n\delta^{2\ell+1}= O(1)$ holds. In the first setting that the rate optimal bandwidth $\delta =c n^{-1/(2\ell + 1)}$ is chosen, from (\ref{eq_theorem_score_2}), we can see that the asymptotic bias of $\sqrt{n\delta}\wh{S}_\delta(0,\wh{\bgamma})$ is a constant $\sqrt{n\delta^{2\ell+1}}\mu^*=c^{\ell+1/2}\mu^*$, which is further removed to attain  the mean zero Gaussian limiting distribution. In the second setting, if we undersmooth the score function, i.e., $n\delta^{2\ell+1}= o(1)$, this bias term diminishes to $0$ asymptotically.

We now elaborate the condition (\ref{eq_theorem_score_condition}). Roughly speaking, the term $\sqrt{\frac{\log(d)}{n\delta}} \vee \delta^{\ell} \vee M_n^2\eta_1(n)$ comes from the bound for $\|\nabla R^{n_{(1)}}_\delta(\theta,\wh{\bgamma}^{(2)})-\nabla R(\theta,{\bgamma})\|_\infty$. Indeed, the cross-fitting technique guarantees the independence between $\wh{\bgamma}^{(2)}$ and $\nabla R^{n_{(1)}}_\delta(\theta,{\bgamma})$, which plays a key role in the analysis.  Condition (\ref{eq_theorem_score_condition}) simply means that this bound interacting with the estimation error of $\wh\bgamma$ and $\wh\bomega$ is sufficiently small.
We can further simplify the condition (\ref{eq_theorem_score_condition}) by plugging the order of $\eta_1(n)$ derived in \cite{feng2022nonregular} and $\eta_2(n)=s'(\log(d)/n)^{(\ell-1)/(2\ell + 1)}$ derived from Lemma S7 in the Supplement where $s'=\|\bomega^*\|_0$.
%Assuming $\norm{\bv^*}_1,{M_n = O(1)}$ for simplicity, condition (\ref{eq_theorem_score_condition}) becomes
%\beq \label{eq_simple_rate}
%&s^{(4\ell+1)/(4\ell+2)}(s^{(4\ell+1)/(4\ell+2)}\vee s') n^{-(\ell -1)/(2\ell + 1)}(\log d)^{(4\ell - 1)/(4\ell + 2)} = o(1)
%\eeq
%when taking $\delta \asymp n^{-1/(2\ell + 1)}$. If we consider the extreme case with $\ell\rightarrow\infty$, (\ref{eq_simple_rate}) reduces to $s(s\vee s')\log d= o(n^{1/2})$.
%While this condition appears stronger than the counterpart for inference on high-dimensional linear regression  \citep{javanmard2014confidence,zhang2014confidence,van2014asymptotically}, we expect that it is essential for the theoretical development as the smoothed surrogate loss is very different from a quadratic function (and non-convex) and the corresponding score and Hessian matrix require more sophisticated analysis due to the nonlinearity and the kernel approximation.

%There are two consequences from Theorem \ref{theorem_score}. Firstly, under an under-smoothing regime where $\delta$ is chosen small such that $\sqrt{n\delta^{2\ell+1}}\bv^{*T}\bb^* = o(1)$, the bias term is asymptotically negligible. Under this condition, a natural test statistic is given by
%\beq
%\wh{U}_n = (n\delta)^{1/2}\frac{\wh{S}_\delta(0,\wh{\bgamma})}{\wh{\sigma}},
%\eeq
%where $\wh{\sigma}$ is a consistent estimator for $\sigma^* = \sqrt{\bv^{*T}\bSigma^*\bv^*}$.
%On the other hand, following (\ref{eq_test_statistic_1}), we can explicitly remove the bias by directly estimating it.
Recall that in our score statistic $\hat U_n$ in (\ref{eq_test_statistic_1}), we plug in the estimators $\hat\mu$ and $\hat\sigma$ for $\mu^*$ and $\sigma^*$. In Lemmas S8 and S9 in the Supplement, we establish the rate of convergence of  $\hat\mu$ and $\hat\sigma$. Under the assumption that $|\hat\mu-\mu^*|=o_p(1)$ and $|\hat\sigma-\sigma^*|=o_p(1)$, the Slutsky's theorem implies that the bias corrected decorrelated score statistic $\wh{U}_n\stackrel{d}{\rightarrow}  N(0,1)$ under the null hypothesis.

Accordingly, given the desired significance level $\alpha$, we define the test function as
\bq
T_{DS}=I(|\wh{U}_n|>\Phi^{-1}(1-\alpha/2)),
\eq
%$T_{DS}=I(|\wh{U}_n|>\Phi^{-1}(1-\alpha/2))$,
where $\Phi^{-1}(\cdot)$ is the inverse function of the cdf of the standard normal distribution. Thus, our result shows that the Type I error of the test $T_{DS}$ converges to $\alpha$ asymptotically, i.e., $\PP(T_{DS}=1| H_0)\rightarrow \alpha$.

Now denote
$\nabla^2_{\theta|\bgamma}R(\bbeta^*) = \nabla^2_{\theta\theta}R(\bbeta^*) - \nabla^2_{\theta\bgamma}R(\bbeta^*)(\nabla^2_{\bgamma\bgamma}R(\bbeta^*))^{-1}\nabla^2_{\bgamma\btheta}R(\bbeta^*)$.
Our second main theorem characterizes the limiting behavior of $\wh{U}_n$ under the local alternative hypothesis $H_1: \theta^* = \tilde{C}n^{-\phi}$ for some constants $\tilde C\neq 0$ and $\phi>0$.

%Before we present this theorem, we impose an additional assumption on the minimum eigenvalue of $\nabla^2 R(\bbeta^*)$.
%\begin{assumption}\label{ass_min_eigen2}
%	We assume $\lambda_{\min}(\nabla^2 R(\bbeta^*)) \geq \kappa^2 > 0$.
%\end{assumption}

\begin{theorem}\label{theorem_power}
Assume the conditions in Theorem \ref{theorem_score} and in Lemmas S8 and S9 of the Supplement, and further
	\beq\label{eq_power_condition}
	&\norm{\bv^*}_1^2M_n^4n^{1-4\phi}/\delta = o(1),\quad (n\delta)^{1/2}
	\norm{\bv^*}_1(\eta_1(n)\vee \eta_2(n))M_n n^{-\phi} = o(1),
	\eeq
and that $\wh{\mu}, \wh{\sigma}$ are consistent estimators of $\mu^*, \sigma^*$. Then, by choosing the optimal bandwidth $\delta \asymp n^{-1/(2\ell + 1)}$, the following results hold under the local alternative hypothesis $H_1: \theta^* = \tilde{C}n^{-\phi}$.
	\begin{enumerate}
		\item[1.] When $\phi=\frac{\ell}{2\ell+1}$, it holds that
$\wh{U}_n  \stackrel{d}{\rightarrow} N(-\xi,1)$,
%\begin{equation}\label{eq_theorem_power_0}
%		\wh{U}_n  \stackrel{d}{\rightarrow} N(-\xi,1),
%\end{equation}
where $\xi=\tilde{C}\nabla^2_{\theta|\bgamma}R(\bbeta^*)/\sigma^{*}$ is assumed to be a constant.
		\item[2.] When $\phi<\frac{\ell}{2\ell+1}$, it holds that for any fixed $t$,
$\lim_{n\rightarrow\infty}\PP(|\wh{U}_n| > t) = 1$.
%		\begin{equation}\label{eq_theorem_power_1}
%		\lim_{n\rightarrow\infty}\PP(|\wh{U}_n| > t) = 1.
%\end{equation}
	\end{enumerate}
\end{theorem}
%\begin{proof}
%	See Section \ref{pf_theorem_power} for a detailed proof.
%\end{proof}

In addition to the conditions imposed in Theorem \ref{theorem_score} and Lemmas S8 and S9, we further require two additional conditions involving the magnitude of $\theta^*$ in (\ref{eq_power_condition}). The first condition  $\norm{\bv^*}_1^2M_n^4n^{1-4\phi}/\delta = o(1)$ is imposed to ensure the local asymptotic normality (LAN) in terms of the parameter $\theta^*$.
%Assuming $\|\bv^*\|_1$ and ${M_n=O(1)}$, this condition requires $\phi>\frac{\ell+1}{2(2\ell+1)}$ with the rate optimal bandwidth $\delta \asymp n^{-1/(2\ell + 1)}$.
The second condition in (\ref{eq_power_condition}) is similar to (\ref{eq_theorem_score_condition}), which controls
the magnitude of $\norm{\nabla R(0,\bgamma^*)}_\infty$ and $\norm{\nabla^2 R_\delta(\theta^*,\bgamma^*) - \nabla^2 R_\delta(0,\bgamma^*) }_{\max}$ under the alternative hypothesis.

This theorem implies that the proposed test converges in distribution to a normal distribution with mean $-\xi$, when the contiguous alternatives approach the null hypothesis at a rate $n^{-\frac{\ell}{2\ell+1}}$.  In addition, if the alternatives deviate  from the null hypothesis in the magnitude larger than $n^{-\frac{\ell}{2\ell+1}}$ (i.e., $\phi<\frac{\ell}{2\ell+1}$), the asymptotic power of our test is 1. In other words, our test can successfully detect the nonzero $\theta^*$ whose magnitude exceeds the order of $n^{-\ell/(2\ell + 1)}$. In contrast, for regular models, the local alternative that is detectable is of the standard parametric rate $n^{-1/2}$.

%Recall that the (near) optimal rate for  $\wh{\bbeta}$ is   $\eta_1(n)=\sqrt{s}(\frac{s\log(d)}{n})^{\ell/(2\ell+1)}$ \citep{feng2019nonregular}. Given this result and the condition $M_ns^{(4\ell + 1)/(4\ell+2)}(\log(d)/n)^{1/(2\ell+1)} = o(1)$, the $L_1$ rate of $\wh{\bomega}$ simplifies to $ \norm{\bv^*}_1s'(\log(d)/n)^{(\ell-1)/(2\ell+1)}$, so that  Assumption \ref{ass_estimators} holds with $\eta_2(n)=s'(\log(d)/n)^{(\ell-1)/(2\ell + 1)}$.

%In this case, the condition on the tuning parameter becomes $\lambda' \geq C(\log(d)/n)^{(\ell-1)/(2\ell+1)}$ for some constant $C$.
{
\begin{remark}
  By choosing the optimal bandwidth $\delta \asymp n^{-1/(2\ell + 1)}$, all the conditions in Theorem~\ref{theorem_power} can be simplified and summarized as
  \begin{eqnarray*}
  && M_n^2 \leq n^{\ell/(2\ell+1)}(\log d)^{-1/2}, \log d = o(n^{(2\ell-2)/(2\ell+1)}), \norm{\bv^*}_1 M_n^2 = o(n^{2\phi-(\ell+1)/(2\ell+1)}), \mbox{ and }\\
  && \norm{\bv^*}_1 M_n = o(n^{\ell/(6\ell+3)} \wedge n^{\phi-\ell/(2\ell+1)}\{\eta_1(n)\vee \eta_2(n)\}^{-1}),
  \end{eqnarray*}
  where $\eta_1(n)=s^{(4\ell+1)/(4\ell+2)} (\log d/n)^{\ell/(2\ell+1)}=o(1)$, $\eta_2(n)=s' (\log d/n)^{(\ell-1)/(2\ell+1)}=o(1)$ and $\norm{\bv^*}_1 \{(\log d)^{1/2}\vee n^{\ell/(2\ell+1)}M_n^2 \eta_1(n)\} \{n^{1/(2\ell+1)}\eta_1(n)\vee \eta_2(n)\} = o(1)$.
  Consider the extreme case $\ell\to \infty$, it can be verified that $\log d = n^{1/5}$, $s=s'=n^{1/10}$, $\norm{\bv^*}_1=n^{1/20}$, $M_n=n^{1/50}$ would satisfy all of these conditions when $\phi=1/2$.

In addition, these conditions could be further simplified if one is willing to assume $\norm{\bv^*}_1=O(1)$ and $M_n = O(1)$.
If that is the case, condition (\ref{eq_theorem_score_condition}) becomes
\bq \label{eq_simple_rate}
&s^{(4\ell+1)/(4\ell+2)}(s^{(4\ell+1)/(4\ell+2)}\vee s') n^{-(\ell -1)/(2\ell + 1)}(\log d)^{(4\ell - 1)/(4\ell + 2)} = o(1)
\eq
when taking $\delta \asymp n^{-1/(2\ell + 1)}$. If we consider the extreme case with $\ell\rightarrow\infty$,
it suffices to have $s(s\vee s')\log d= o(n^{1/2})$ in order to satisfy all of the conditions when $\phi=1/2$.
%(\ref{eq_simple_rate}) reduces to $s(s\vee s')\log d= o(n^{1/2})$.
%Also, the first condition in (\ref{eq_power_condition}), $\norm{\bv^*}_1^2M_n^4n^{1-4\phi}/\delta = o(1)$, requires $\phi>\frac{\ell+1}{2(2\ell+1)}$ with the rate optimal bandwidth $\delta \asymp n^{-1/(2\ell + 1)}$.
\end{remark}
}

\begin{comment}
{
Let $\delta=n^{-1/(2\ell+1)}$, then all the conditions we need thus far are
\begin{itemize}
  \item $M_n^2 \leq n^{\ell/(2\ell+1)}(\log d)^{-1/2}$,
  \item $\log d = o(n^{(2\ell-2)/(2\ell+1)})$,
  \item $\norm{\bv^*}_1 M_n = o(n^{\ell/(6\ell+3)} \wedge n^{\phi-\ell/(2\ell+1)}\{\eta_1(n)\vee \eta_2(n)\}^{-1})$
  \item $\norm{\bv^*}_1 M_n^2 = o(n^{2\phi-(\ell+1)/(2\ell+1)})$
  \item $\eta_1(n)=s^{(4\ell+1)/(4\ell+2)} (\log d/n)^{\ell/(2\ell+1)}=o(1)$
  \item $\eta_2(n)=s' (\log d/n)^{(\ell-1)/(2\ell+1)}=o(1)$
    \item $\norm{\bv^*}_1 \{(\log d)^{1/2}\vee n^{\ell/(2\ell+1)}M_n^2 \eta_1(n)\} \{n^{1/(2\ell+1)}\eta_1(n)\vee \eta_2(n)\} = o(1)$
\end{itemize}

Now consider the extreme case that $\ell\to \infty$, then all the conditions above simplify to
\begin{itemize}
  \item $M_n^2 \leq (\log d/n)^{-1/2}$,
  \item $\log d = o(n)$,
  \item $\norm{\bv^*}_1 M_n = o(n^{1/6} \wedge n^{\phi-1/2}\{\eta_1(n)\vee \eta_2(n)\}^{-1})$
  \item $\norm{\bv^*}_1 M_n^2 = o(n^{2\phi-1/2})$
  \item $\eta_1(n)=s  (\log d/n)^{1/2}=o(1)$
  \item $\eta_2(n)=s' (\log d/n)^{1/2}=o(1)$
  \item $\norm{\bv^*}_1 \{(\log d)^{1/2}\vee n^{1/2}M_n^2 \eta_1(n)\} \{\eta_1(n)\vee \eta_2(n)\} = o(1)$
\end{itemize}
It can be verified that $\log d = n^{1/5}$, $s=s'=n^{1/10}$, $\norm{\bv^*}_1=n^{1/20}$, $M_n=n^{1/50}$ would satisfy when $\phi=1/2$.
}
\end{comment}

\section{Data-Driven Bandwidth Selection}\label{adaptivity}
%The choice of bandwidth is crucial for practical performances in nonparametric literature including but not limited to \cite{silverman1986density, rudemo1982empirical, bowman1984alternative, sheather1991reliable, hall1992smoothed}, see also \cite{jones1996brief, sheather2004density} and reference therein.

In the previous section, we establish the theoretical property of the bias corrected smoothed decorrelated score when the underlying conditional density $f(x|y,\bz)$ is $\ell$th order smooth. However, this smoothness parameter $\ell$ is typically unknown in practice, leading to the following two complications. First, in Assumption \ref{ass_kernel}, a kernel function $K$ of the same order is applied, which implicitly requires the knowledge on the smoothness parameter $\ell$. In practice, the choice of kernel functions is often determined by the user's preference rather than the theory. Since high order kernels may exacerbate the problem of variability, choosing low order kernels of 2 or 4 is often recommended (even if the density is more smooth); see \cite{hardle1992regression}. Second, the optimal bandwidth $\delta \asymp n^{-1/(2\ell + 1)}$ that balances the asymptotic bias and variance of the decorrelated score in Lemma \ref{lemma_ori_normality_new} also depends on the unknown $\ell$. It is well known from the nonparametric literature that the choice of bandwidth is an extremely important problem of both theoretical and practical values \citep{silverman1986density, bowman1984alternative, sheather1991reliable, hall1992smoothed,jones1996brief}.

In this section, we focus on how to choose the bandwidth $\delta$ in a data-driven manner. In view of the above discussion on the kernels, we assume that a low order kernel $K$ is chosen (for simplicity, we still denote its order by $\ell$) and meanwhile the underlying conditional density has a higher order smoothness parameter.

\begin{assumption}\label{ass_sb_additional}
	We assume that the kernel $K$ is of order $\ell$ and $f(x|y,\bz)$ is $(\ell + r)$th order smooth for some $\ell \geq 2$ and $r > 0$.
\end{assumption}
We define the optimal bandwidth $\delta^*$ as the one that minimizes the MSE of the smoothed decorrelated score:
\beq\label{eq_deltastar}
\delta^* = \argmin_{\delta}M(\delta), ~~\textrm{where}~~M(\delta)=\EE[(\bv^{*T}\nabla R_\delta^n(\bbeta^*))^2].\eeq
A direct bias-variance decomposition of $M(\delta)$ gives
%In the previous section, when we estimate $\mu^*$, we use the estimator defined in (\ref{biasvariance_esti}) where a bandwidth with order $\delta^{\eta/(2\ell + 1)}, 0<\eta <1$ is applied. This requires the knowledge of $\ell$ which is typically unknown in practice. Meanwhile, selecting the bandwidth $\delta$ for the test statistic is a crucial step. In this section, we study how to select the
%in order to conduct valid inference, we should either select a bandwidth $\delta$ with a sufficiently small order so that the bias is $o(1)$, or we try to explicitly estimate the scaled bias term $\bv^{*T}\bb$ and remove it. In either approach, the smoothness $\ell$ is required. In this section, we study how to adaptively choose the bandwidth parameter in a data driven manner. In particular, we focus on the latter approach that explicitly eliminates the bias in an adaptive way.
\beq
M(\delta)= \frac1n\EE[(\bv^{*T}\nabla \bar{R}_\delta^1(\bbeta^*))^2] + \frac{n-1}{n}(\bv^{*T}\nabla R_\delta(\bbeta^*))^2:= \frac1n V(\delta) + \frac{n-1}{n}SB(\delta),
\eeq
where $\nabla \bar{R}_\delta^1(\bbeta^*)$ is defined in (\ref{eq_erisk_smooth}), $V(\delta)=\EE[(\bv^{*T}\nabla \bar{R}_\delta^1(\bbeta^*))^2]$ is used as a proxy for the variance, and $SB(\delta)=(\bv^{*T}\nabla R_\delta(\bbeta^*))^2$  denotes the squared error.
To estimate $\delta^*$, our main idea is to construct estimators  $\wh{V}(\delta)$ and $\wh{SB}(\delta)$ for $V(\delta)$ and $SB(\delta)$ and then estimate $\delta^*$ by
$$
\wh\delta = \argmin_{\delta}\wh M(\delta)~~\textrm{where}~~\wh{M}(\delta) = \frac{1}{n}\wh{V}(\delta) + \frac{n-1}{n}\wh{SB}(\delta).
$$

From the proof of Lemma \ref{lemma_ori_normality_new}, we can show that ${SB}(\delta)=(\delta^{\ell}\mu^*)^2(1+o(1))$ and ${V}(\delta)=\delta^{-1}\sigma^{*2}(1+o(1))$ as $\delta\rightarrow 0$, and thus $\sigma^{*2}/\delta$ and $(\delta^{\ell}\mu^*)^2$ are the asymptotic versions of $V(\delta)$ and $SB(\delta)$, respectively.
As a result, one may attempt to estimate the optimal bandwidth by minimizing the asymptotic MSE $\hat\sigma^{2}/\delta+(\delta^{\ell}\hat\mu)^2$ with the plug-in estimators $\hat\sigma$ and $\hat\mu$ developed in the previous section. However, the asymptotic MSE depends on the unknown smoothness $\ell$ and therefore is not appropriate for bandwidth selection in practice.

{
Instead, we propose to estimate $V(\delta)$ and $SB(\delta)$ using a different strategy.
Our estimates are still in the cross-fitting fashion, but when we estimate the bias $B(\delta)$ that dues to the approximation using the kernel function $K(\cdot)$ of order $\ell$ with bandwidth $\delta$, we will have to use a new pilot kernel function $J(\cdot)$ of order $r$ with bandwidth $b$.
Essentially when the target function has higher order smoothness than the kernel function applied for estimation, a different kernel smoothing procedure has to be used for estimating the bias.
This is motivated by the ``double-smoothing'' technique in nonparametric statistics  \citep{hardle1992regression,neumann1995automatic}, and is also related to the ``smoothed cross validation'' approach \citep{hall1992smoothed}.

To be specific, we estimate $V(\delta)$ with the following  moment estimator
\beq\label{eq_adaptive_sq_ori_1}
\wh{V}(\delta) =  \frac {1}{2} \big(
\wh{\bv}^{(1)T}\wh{\bGamma}^{(1)}(\delta)\wh{\bv}^{(1)} + \wh{\bv}^{(2)T}\wh{\bGamma}^{(2)}(\delta)\wh{\bv}^{(2)} \big),
\eeq
where $\wh{\bGamma}^{(1)}(\delta) = \frac{1}{|\cN_1|}\sum_{i\in \cN_1} \nabla \bar{R}_\delta^i(\wh{\bbeta}^{(2)})\nabla \bar{R}_\delta^i(\wh{\bbeta}^{(2)})^T$ and similarly for $\wh{\bGamma}^{(2)}$.
%The following Lemma shows the convergence rate of $\wh{V}(\delta)$.
To estimate the squared bias $SB(\delta)$, note that the bias term $B(\delta)$ can be written as
\beq\label{eq_adaptive_sq_ori}
B(\delta) = \bv^{*T} \nabla R_\delta(\bbeta^*)
= \bv^{*T} \underbrace{(\nabla R_\delta(\bbeta^*) - \nabla R(\bbeta^*))}_{A(\bbeta^*,\delta)}
%=&  \sum_y w(y)y\int_{\bz} \bv^{*T}\bz   \int_u K(u) \big[f(u\delta+ \bbeta^{*T}\bz|y,\bz) - f(\bbeta^{*T}\bz|y,\bz)\big]du  f(y,\bz) d\bz\\
%=&  \int_u K(u) \bigg[\sum_y w(y) y\int_{\bz} \bv^{*T}\bz \big(f(u\delta+ \bbeta^{*T}\bz|y,\bz) - f(\bbeta^{*T}\bz|y,\bz)\big) f(y,\bz) d\bz\bigg] du\\
=\int_u K(u) \biggl\{
\bv^{*T}(\nabla R(u\delta,\bbeta^*) -\nabla R(\bbeta^*))
\biggr\} du,
\eeq
where we use $\nabla R(u\delta,\bbeta^*) = \sum_y w(y)\int_{\bz} \bz y
f(u\delta+ \bbeta^{*T}\bz|y,\bz) f(y,\bz)d\bz $ to denote the population gradient with a bias induced by $u\delta$, with a bit abuse of notation.
%Note that in (\ref{eq_adaptive_sq_ori}), we add the term $\bv^{*T}\nabla R(\bbeta^*)$, which is zero by definition, to show that the bias comes from the difference between $\nabla R(u\delta,\bbeta^*) $ and $\nabla R(\bbeta^*)$.
We propose to estimate $B(\delta)$ by
\beq\label{eq_bias_split}
\wh{B}(\delta) = \frac 12 \bigg(
\wh{\bv}^{(1)T} \frac{1}{|\cN_1|}\sum_{i \in \cN_1}A_i(\wh{\bbeta}^{(2)},\delta) + \wh{\bv}^{(2)T} \frac{1}{|\cN_2|}\sum_{i \in \cN_2}A_i(\wh{\bbeta}^{(1)},\delta)
\bigg),
\eeq
where
\beq
A_i(\wh \bbeta,\delta) =&\int_u K(u)w(y_i)\frac{\bz_iy_i}{b}
\Big[
J(\frac{x_i - \wh{\bbeta}^T\bz_i - u\delta}{b}) - J(\frac{x_i - \wh{\bbeta}^{T}\bz_i}{b})
\Big]du,
\eeq
and $J(\cdot)$ is the aforementioned new pilot kernel function of order $r$ with bandwidth $b$. Essentially, we substitute $\nabla R(u\delta,\bbeta^*)$ and $\nabla R(\bbeta^*)$ in (\ref{eq_adaptive_sq_ori}) with their corresponding second smoothers through kernel function $J(\cdot)$.
%This approach is motivated by the ``double-smoothing'' technique in nonparametric statistics  \citep{hardle1992regression,neumann1995automatic}, and is also related to the ``smoothed cross validation'' approach \citep{hall1992smoothed}.
%In these works, when the target  function has higher order smoothness than the kernel function applied for estimation, a second kernel smoothing procedure is applied for estimating the bias.
We now estimate the squared bias by $\wh{SB}(\delta)= \wh{B}(\delta)^2$.
}

To analyze theoretical properties of the estimates, let's define $\Delta = [q_1 n^{-1+\epsilon_1},q_2n^{-1+\epsilon_2}]$ as the range of bandwidth $\delta$ for some constants
$0 < q_1\leq q_2$ and $0<\epsilon_1 < \epsilon_2<1$. Since the optimal bandwidth $\delta^*$ is of order $n^{-1/(2\ell + 1)}$, we can guarantee $\delta^*\in \Delta$ for some suitable $\epsilon_1$ and $\epsilon_2$. Under some conditions, the uniform convergence rates of $\wh{V}(\delta) $ and $\wh{SB}(\delta)$ are given by
$$
|\wh{V}(\delta) - V(\delta)| \lesssim \frac{\psi_1(n,\delta)}{\delta},~~~~	|\wh{SB}(\delta) - SB(\delta)|\lesssim \delta^{2\ell} \psi_2(n,\delta),
$$
uniformly over all $\delta \in \Delta$, where
\begin{eqnarray*}
\psi_1(n,\delta) &=& \norm{\bv^*}_1^2\bigg(\eta_2(n) \vee \sqrt{\frac{\log(n\vee d)}{n\delta}} \vee M_n\eta_1(n)\bigg),\\
\psi_2(n,\delta) &=& \norm{\bv^*}_1^2\bigg(\sqrt{\frac{\log(n\vee d)}{nb^{2\ell+1}}} \vee (\delta\vee b)^r \vee M_n \eta_1(n)(1 \vee \frac{M_n\eta_1(n)}{\delta^{\ell}}) \vee \eta_2(n)\bigg).
\end{eqnarray*}
We refer to Lemmas S11 and S12 in Section S2 in the Supplement for the formal statement of the results and further interpretations of the rates.
% The next Lemma characterizes the convergence rate of $\wh{SB}(\delta)= \wh{B}(\delta)^2$.
With these two lemmas, we further establish the convergence rate of the data-driven bandwidth $\hat\delta$,  i.e., $\frac{\wh{\delta} - \delta^*}{\delta^*} \lesssim C_{n,\delta^*}$, where $C_{n,\delta^*}=\psi_1(n,\delta^*) \vee \psi_2(n,\delta^*)$, see Theorem S2 in the Supplement. That is, our data-driven bandwidth $\hat\delta$ is consistent.

For notational simplicity, let $\hat U_n(\delta)$ denote the bias corrected smoothed decorrelated score test statistic with a pre-specified bandwidth  parameter $\delta$. The main result in this section shows that $\hat{U}_{n}(\hat\delta)$ with the data-driven bandwidth $\hat\delta$ is still asymptotically normal under the null hypothesis.

\begin{theorem}\label{theorem_adapt_main_short}
	Under the conditions in Theorem S3 in the Supplement and $H_{0}: \theta^{*}=0$, it holds that
\[
\hat{U}_n({\hat{\delta}}) \stackrel{d}{\rightarrow} N(0,1).
\]
\end{theorem}

There are two major challenges in the analysis of $\hat{U}_n({\hat{\delta}})$. First, the estimator $\hat{\delta}$ and the decorrelated score statistic are generally dependent with each other, which prevents the direct use of many concentration inequalities such as Bernstein's. To decouple the dependence, similar to Section \ref{sec_details}, we carefully design a cross-fitting approach by splitting the data into three folds. Due to the space constraint, we leave the detailed algorithm to Section S2 in the Supplement.
Second, different from Theorem \ref{theorem_score} which presents the asymptotic normality of $\hat{U}_n({{\delta}})$ with a fixed $\delta$,
the uncertainty of $\hat{\delta}$ needs to be incorporated in the proof of Theorem \ref{theorem_adapt_main_short}. In particular, we use concentration inequalities to take care of the higher order error terms in the Taylor expansion of $\hat{U}_n({\hat{\delta}})$ with respect to both $\delta$ and $\bbeta$. This leads to much more involved analysis than that in Theorem \ref{theorem_score}.

\section{Hypothesis Testing for the Linear Combination}\label{sec:imcid}

Our results presented thus far are for the hypothesis testing (\ref{eq:test theta}), where $\theta$ is simply a single element in $\bbeta$. In applications, researchers may be more interested in the hypothesis testing (\ref{eq:test MCID}), a linear combination of the parameter $\bbeta$. For example, as we mentioned at the beginning of Section~\ref{method}, in the study of inferring iMCID, it is of interest to test $\bc_0^T\bbeta^*=0$ where $\bc_0$ represents the realized value of a new patient's clinical profile and we assume $c_{01}\neq 0$ without loss of generality.
%Below, we show that all the methods and results developed for (\ref{eq:test theta}) can be  extended to (\ref{eq:test MCID}) in a parallel manner.
The methods and results for (\ref{eq:test theta}) are essential.
Below, we show that, in a parallel manner, all of them can be developed for (\ref{eq:test MCID}) which is more applicable in scientific applications.
We also point out, technically, our methods and results can be further extended to a more general null hypothesis $H_{0M}: \Mb\bbeta^*=\zero$ where $\Mb\in\RR^{m\times d}$ and $m$ is a fixed integer.

To test the hypothesis (\ref{eq:test MCID}), consider the one to one reparametrization $(\theta,\bgamma)\rightarrow (\xi,\bgamma)$, where $\xi=\bc_0^T\bbeta$. Under this new set of parameters, the null hypothesis can be written as $H_{0L}: \xi^*=0$, and the smoothed surrogate loss reduces to $R^n_{\delta}(\frac{\xi-\bc_{02}^T\bgamma}{c_{01}},\bgamma)$, where we write $\bc_0 = (c_{01},\bc_{02}^T)^T$ with $\bc_{02}\in \RR^{d-1}$.
Define $\bC = \begin{bmatrix}
       \frac{1}{c_{01}} & 0          \\[0.3em]
       \frac{-\bc_{02}}{c_{01}} & \bI_{d-1}
     \end{bmatrix} \in \RR^{d\times d}$.
From the chain rule, we can show that
\beq
&\nabla_{(\xi,\bgamma)} R^n_{\delta}(\frac{\xi-\bc_{02}^T\bgamma}{c_{01}},\bgamma) = \bC \nabla R_\delta^n(\theta,\bgamma),~~\nabla^2_{(\xi,\bgamma),(\xi,\bgamma)} R^n_{\delta}(\frac{\xi-\bc_{02}^T\bgamma}{c_{01}},\bgamma)
= \bC\nabla^2 R_\delta^n(\theta,\bgamma)\bC^T,
\eeq
and similarly for $R_\delta(\frac{\xi-\bc_{02}^T\bgamma}{c_{01}},\bgamma)$ and $R(\frac{\xi-\bc_{02}^T\bgamma}{c_{01}},\bgamma)$.
Therefore, following the same idea as in Section \ref{method}, we define the smoothed decorrelated score as
\beq
S^L_{\delta}(\xi,\bgamma,\bomega^*_L)=\nabla_\xi R^n_{\delta}(\frac{\xi-\bc_{02}^T\bgamma}{c_{01}},\bgamma)-\bomega^{*T}_L\nabla_{\bgamma} R^n_{\delta}(\frac{\xi-\bc_{02}^T\bgamma}{c_{01}},\bgamma),
\eeq
where
$\bomega^*_L = \Big [
\nabla^2_{\bgamma,\bgamma} R(\frac{\xi-\bc_{02}^T\bgamma}{c_{01}},\bgamma)
\Big ]^{-1}\nabla^2_{\bgamma,\xi} R(\frac{\xi-\bc_{02}^T\bgamma}{c_{01}},\bgamma).$

We write $\bv^*_L = (1,\bomega^{*T}_L)^T$ and denote $\mu^*_{L}, \sigma^*_{L}$ as  the (scaled) asymptotic bias and standard deviation of the score function $S^L_{\delta}(\xi,\bgamma,\bomega^*_L)$, i.e.,
\beq
\mu^*_{L} = \bv^{*T}_L\bC\bb^*,~~~~~~ \sigma^*_L =  \sqrt{\bv^{*T}_L\bC\bSigma^*\bC^T\bv^*_L},
\eeq
where $\bb^*$ and $\bSigma^*$ are defined in (\ref{eq:definemustar}) and (\ref{eq:defineSigmastar}), respectively. From above we can see that the estimation methods for $\bomega^*, \mu^*$ and $\sigma^*$ proposed in Section \ref{method} can be easily extended to obtain corresponding estimators for
$\bomega^*_L, \mu^*_{L}$ and $\sigma^*_L$.
Given these estimators, we define the test statistics for $H_{0L}$ as
$$
\wh{U}^L_n = (n\delta)^{1/2}\frac{S^L_{\delta}(0,\hat\bgamma,\hat\bomega_L) - \delta^{\ell}\wh{\mu}_L}{\wh{\sigma}_L}.
$$
%To analyze its theoretical property of $\wh{U}^L_n$, we apply the same cross-fitting approach as in Section \ref{method}.
Accordingly, all the parallel results presented in Section~\ref{theory} and Section~\ref{adaptivity} can be developed. In the interest of space, we only present the following result that
characterizes the asymptotic distribution of $\wh U_n^L$ under the null hypothesis $H_{0L}$. All other parallel results are omitted.

\begin{theorem}\label{theorem_linear_hypothesis}
If Assumptions \ref{ass_smooth} - \ref{ass_estimators} hold with $\mu^*,\sigma^*,\bv^*,\wh\bv$ substituted by $\mu^*_L,\sigma^*_L,\bv^*_L,\wh\bv_L$, and in addition $\wh\mu_{L}$ and $\wh\sigma_L$ are consistent estimators of $\mu^*_{L}$ and $\sigma^*_L$, respectively, then under the same conditions as in Theorem \ref{theorem_score} and the null hypothesis $H_{0L}:\xi^* = 0$, it holds that
	$\wh U_n^L \stackrel{d}{\rightarrow}  N(0,1)$.
\end{theorem}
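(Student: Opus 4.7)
The plan is to reduce Theorem~\ref{theorem_linear_hypothesis} to Theorem~\ref{theorem_score} via the linear reparametrization encoded by $\bC$. The crucial observation is that, by the chain-rule identity displayed in Section~\ref{sec:imcid}, the decorrelated score admits the representation
$$
S_\delta^L(\xi,\bgamma,\bomega_L^*) = (\bC^T\bv_L^*)^T \nabla R_\delta^n\Big(\tfrac{\xi-\bc_{02}^T\bgamma}{c_{01}},\,\bgamma\Big),
$$
so that setting $\tilde\bv^* := \bC^T\bv_L^*$ makes this score formally identical to the scalar-test score with $\bv^*$ replaced by $\tilde\bv^*$. By the very definition of $\mu_L^*$ and $\sigma_L^*$ together with (\ref{eq:definemustar}) and (\ref{eq:defineSigmastar}), one has $\tilde\bv^{*T}\bSigma^*\tilde\bv^* = \sigma_L^{*2}$ and $\tilde\bv^{*T}\bb^* = \mu_L^*$, so Lemma~\ref{lemma_ori_normality_new} applied with $\tilde\bv^*$ directly delivers the CLT for $\sqrt{n\delta}\,\tilde\bv^{*T}(\nabla R_\delta^n(\bbeta^*) - \nabla R_\delta(\bbeta^*))$ with the correct asymptotic variance and bias.

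First I would verify that the rate conditions assumed on $(\bv^*,\wh\bv)$ in Theorem~\ref{theorem_score} transfer to $(\tilde\bv^*,\wh{\tilde\bv} := \bC^T\wh\bv_L)$. Since $\bC$ is a fixed matrix determined only by the known vector $\bc_0$, we have $\|\tilde\bv^*\|_1 \asymp \|\bv_L^*\|_1$ and $\|\wh{\tilde\bv} - \tilde\bv^*\|_1 \asymp \|\wh\bv_L - \bv_L^*\|_1$, so Assumption~\ref{ass_estimators}, restated for $\bv_L^*$, supplies the same $\eta_2(n)$ rate required in (\ref{eq_theorem_score_condition}).

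Next I would mimic the proof of Theorem~\ref{theorem_score} and establish the decomposition
\begin{align*}
\sqrt{n\delta}\,\wh S_\delta^L(0,\wh\bgamma) - \sqrt{n\delta^{2\ell+1}}\,\mu_L^*
&= \sqrt{n\delta}\,\tilde\bv^{*T}\big(\nabla R_\delta^n(\bbeta^*) - \nabla R_\delta(\bbeta^*)\big) + R_n,
\end{align*}
where the remainder $R_n$ collects (a) the plug-in error from replacing $\bgamma^*$ by $\wh\bgamma$, (b) the decorrelation-vector error from $\wh{\tilde\bv}-\tilde\bv^*$, and (c) the higher-order bias beyond $\delta^\ell\mu_L^*$. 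Each piece is controlled by the same chain of inequalities used in Theorem~\ref{theorem_score}: cross-fitting guarantees independence between $\wh\bbeta^{(-k)}$ and the empirical process on $\cN_k$; a Taylor expansion of $\nabla R_\delta^n$ in $\bgamma$ bounds (a); a H\"older inequality together with the $\ell_\infty$ control of $\nabla R_\delta^n(\bbeta^*)$ bounds (b); and Lemma~\ref{lemma_ori_normality_new} yields (c). Because $\bC$ is non-random with norms determined by $\bc_0$, it does not inflate any of these bounds beyond a constant factor already absorbed into $\|\bv_L^*\|_1$.

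The main subtlety I expect is that under $H_{0L}$, $\xi^* = 0$ does not imply $\theta^* = 0$; rather, $\theta^* = -\bc_{02}^T\bgamma^*/c_{01}$. Consequently, evaluating the score at $(\xi,\bgamma)=(0,\wh\bgamma)$ corresponds in the original parametrization to the point $(\theta,\bgamma)=(-\bc_{02}^T\wh\bgamma/c_{01},\wh\bgamma)$, which differs from $\bbeta^*$ in $\ell_1$ norm by at most $(1 + \|\bc_{02}\|_\infty/|c_{01}|)\,\|\wh\bgamma-\bgamma^*\|_1 \lesssim \eta_1(n)$. This ensures that Taylor expansions around $\bbeta^*$ still give remainders of the same order as in the scalar test. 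Combining the CLT for the leading term with the assumed consistency of $\wh\mu_L$ and $\wh\sigma_L$ via Slutsky's theorem then yields $\wh U_n^L \stackrel{d}{\rightarrow} N(0,1)$, as claimed.
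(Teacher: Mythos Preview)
Your proposal is correct and follows essentially the same route as the paper: both arguments reduce to the scalar case by writing the linear-combination score as $(\bC^T\bv_L^*)^T\nabla R_\delta^n(\theta,\bgamma)$, invoke Lemma~\ref{lemma_ori_normality_new} with this transformed projection vector, and then rerun the $I_1,I_2$ decomposition of Theorem~\ref{theorem_score} with $\bv_L^{*T}\bC$ in place of $\bv^{*T}$. Your explicit treatment of the point that $H_{0L}$ forces $\theta^* = -\bc_{02}^T\bgamma^*/c_{01}$ rather than $\theta^*=0$---so that the evaluation point $(-\bc_{02}^T\wh\bgamma/c_{01},\wh\bgamma)$ still lies within $\cO(\eta_1(n))$ of $\bbeta^*$ in $\ell_1$---is a detail the paper's sketch suppresses but which is indeed needed to justify reusing the Taylor-expansion bounds from Theorem~\ref{theorem_score}.
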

%\begin{proof}
%    We defer the proof to Section \ref{pf_theorem_linear_hypothesis}.
%\end{proof}
%Thus, given the significance level $\alpha$, we can define the test function as \[T_{DS}^L = I(||\wh U_n^L| > \Phi^{-1}(1 - \alpha/2)).\]

\section{Simulation Studies}\label{simulation}
In this section, we evaluate the empirical performance of the proposed methods. Although many models can be formulated as special cases of our problem (\ref{eq_risk_ori}), here we mainly consider the following binary response model
\beq\label{eq_binary_reponse_model}
Y = \sign{X - \bbeta^{*T}\bZ + \epsilon},
\eeq
where $\epsilon$ possibly depends on $X$ and $\bZ$ but the median of $\epsilon$ given $X$ and $\bZ$ is 0. %Popular models such as probit and logistic model can fit into this framework.

\subsection{Experiments with pre-specified bandwidth}\label{simulation_fixed}
In the first set of experiments, we evaluate the performance of the proposed test statistic with pre-specified bandwidth.
We use Gaussian kernel $K$ of order 2 with bandwidth pre-specified at $\delta = n^{-1/5}$. The choices of other tuning parameters are detailed in Section S5 in the Supplement.
Throughout this subsection, we consider sample size $n = 800$, dimension $d = 100,500,1000$ and generate $\beta^*_2,\dotso,\beta^*_{s}$ by sampling from a uniform distribution within $[1,2]$ for $s = 3,10$. The first coordinate $\beta_1^*$ would vary depending on the purpose of the experiment, and the rest coordinates of $\bbeta^*$ are all set to 0. After that, the coefficient vector is then normalized such that $\norm{\bbeta^*}_2 = 1$. We generate $X \sim N(0,1)$ and $\bZ \sim N(0,\bSigma_{\rho})$, where $(\bSigma_{\rho})_{jk}=\rho^{|j-k|}$ with $\rho = 0.2, 0.5, 0.7$. %To estimate $\bbeta^*$, we apply the path-following algorithm proposed in \cite{feng2019nonregular} and the tuning parameters are chosen by 5-fold cross-validation. We leave the computational details for this algorithm in the appendix Section \ref{appendix_simulation}. Following Section \ref{section_practical}, we set the tuning parameter $\lambda' = 2(\log(d)/n)^{1/5}$ and $\delta = 1$ in (\ref{eq_def_dantzig}) to estimate $\bomega^*$ . In addition, for the asymptotic bias and variance, we apply the pilot estimators (\ref{pilot_bias}) and (\ref{pilot_var}) with Gaussian kernel and bandwidths $h = 2(\log(d)/n)^{1/9}$, $g = 2(\log(d)/n)^{1/5}$, in (\ref{pilot_bias}) and (\ref{pilot_var}), respectively.
For all cases, the simulations are repeated 250 times.

In the first scenario, we let $\epsilon \sim N(0, 0.2^2(1 + 2(X - \bbeta^{*T}\bZ)^2))$, which is referred to as Heteroskedastic Gaussian scenario later on.
We compare the proposed smoothed decorrelated score test (SDS) with the decorrelated score test method (DS) \citep{ning2017general} and Honest confidence region method (Honest) \citep{belloni2013honest} from the ``hdm'' package.
We fix the significance level at $0.05$ and firstly evaluate the performance of the tests under the null hypothesis $H_0: \beta^*_1 = 0$. In this case, we set $\beta^*_1 = 0$.
Note that the R code for the DS and Honest approaches is tailored for the high-dimensional logistic regression, which differs from the above data generating process.
%We also tested the performance of the  desparsify/debias method \citep{zhang2014confidence,dezeure2015high} implemented in the ``hdi'' package for the high-dimensional logistic model, however the numerical performance is highly unstable under our data generating model, so we omit the results.

Table \ref{table_simu_manski} reports the empirical Type I error rate under the first scenario. The error rate from the SDS method is generally close to the nominal significance level 0.05, which empirically verifies the theoretical results in Theorem \ref{theorem_score}. For both Honest and DS methods, the empirical Type I error rate seems to be consistently higher or lower than the nominal level. This is expected as these two methods only work for the logistic regression.
By taking a closer look at the Normal Q-Q plot of the test statistics, we observe that the distribution of the test statistics from the Honest and DS methods deviate substantially from Gaussian, as opposed to those yield by the proposed SDS method. Please see Section S6.2 in the Supplement for more details.

\begin{table}[h]
	\caption{The empirical Type I error rate of the tests under the Heteroskedastic Gaussian scenario from SDS, DS and Honest methods.}\label{table_simu_manski}
	\setlength\extrarowheight{1pt}
	\begin{center}
		\begin{tabular}{ c c| c c c | c c c }
			\hline
			&& &$s = 3$&&&$s=10$&\\
			d&method&$\rho = 0.2$&$\rho = 0.5$&$\rho = 0.7$&$\rho = 0.2$&$\rho = 0.5$&$\rho = 0.7$\\
			\hline
			100&SDS&5.6\%&5.0\%&6.4\%&4.8\%&4.8\%&5.2\%\\
			&DS&1.2\%&2.0\%&2.0\%&2.0\%&1.8\%&1.8\%\\
			%&Desparsify&98.0\%&98.2\%&98.8\%&96.4\%&98.4\%&97.4\%\\
			&Honest&5.2\%&5.6\%&7.6\%&5.4\%&5.2\%&6.8\%\\
			500&SDS&4.8\%&4.4\%&5.6\%&5.6\%&5.0\%&4.8\%\\
			&DS&0.2\%&0.4\%&0.4\%&0.2\%&0.0\%&0.4\%\\
			%&Desparsify&&&&&&\\
			&Honest&7.0\%&10.8\%&7.6\%&8.2\%&6.8\%&7.2\%\\
			1000&SDS&4.4\%&6.0\%&5.6\%&5.0\%&5.4\%&5.0\%\\
			&DS&0.0\%&0.4\%&0.2\%&0.0\%&0.0\%&0.4\%\\
			%&Desparsify&&&&&&\\
			&Honest&10.0\%&10.4\%&12.6\%&12.4\%&6.4\%&15.2\%\\
			\hline
		\end{tabular}
	\end{center}	
\end{table}

\begin{table}[h]
	\caption{The empirical Type I error rate of the tests under the Heteroskedastic Uniform scenario from SDS, DS and Honest methods.}\label{table_simu_uniform}
	\setlength\extrarowheight{1pt}
	\begin{center}
		\begin{tabular}{ c c| c c c | c c c }
			\hline
			&& &$s = 3$&&&$s=10$&\\
			d&method&$\rho = 0.2$&$\rho = 0.5$&$\rho = 0.7$&$\rho = 0.2$&$\rho = 0.5$&$\rho = 0.7$\\
			\hline
			100&SDS&6.0\%&5.6\%&6.8\%&7.2\%&6.8\%&7.2\%\\
			&DS&2.0\%&0.4\%&0.8\%&1.6\%&0.8\%&1.2\%\\
	
			&Honest&9.6\%&17.6\%&19.2\%&9.6\%&18.4\%&21.6\%\\
			500&SDS&6.4\%&6.8\%&6.4\%&6.0\%&7.6\%&7.2\%\\
			&DS&1.6\%&0.8\%&0.0\%&0.8\%&1.2\%&0.4\%\\
			&Honest&11.2\%&15.6\%&20.0\%&13.6\%&17.2\%&22.8\%\\
			1000&SDS&8.4\%&7.6\%&8.8\%&9.2\%&7.6\%&8.0\%\\
			&DS&0.0\%&0.4\%&0.4\%&1.2\%&0.8\%&0.0\%\\
        	&Honest&13.6\%&15.2\%&22.4\%&16.0\%&19.2\%&20.8\%\\
			\hline
		\end{tabular}
	\end{center}	
\end{table}

In the second scenario, we let $\epsilon \sim 0.2\cdot \text{Unif}(-G(X,\bZ),G(X,\bZ))$, where $G(x,\bz) = \sqrt{1 + 2(x - \bbeta^{*T}\bz)^2}$. In other words, the error $\epsilon$ follows a uniform distribution such that its range depends on the covariates $X,\bZ$ (we will call it Heteroskedastic Uniform scenario). Similar to the Heteroskedastic Gaussian case, we compare SDS method with DS and Honest methods and study the empirical Type I error rate. %where the DS and Honest methods are implemented as if the underlying true model is logistic.
From Table \ref{table_simu_uniform} we can see that the proposed method yields Type I error close to the nominal level as opposed to the other two. The Normal QQ-plots in Section S6.2 in the Supplement further confirm the asymptotic normality of our SDS test statistics. The above results  suggest that in practice, if the underlying data generating process is the binary response model, our proposed approach provides valid inferential results while the existing approaches fail.

Next, we investigate the empirical power of the SDS method. We use the same data generating processes as in the above two scenarios, but instead of setting $\beta^*_1 = 0$, we vary $\beta^*_1$ in the grid $\{0.02,0.05,0.075, 0.10,0.15,0.20,0.25,0.30\}$ for the Heteroskedastic Gaussian case, and $\{0.025,0.05,0.075, 0.10,0.125,0.15, 0.175\}$ for the Heteroskedastic Uniform case. Similarly, we consider $s = 3, 10$, $d = 100, 500, 1000$ and $\rho = 0.2,0.5,0.7$. Figure \ref{fig_Power_1} shows the empirical rejection rate of the SDS method when $s = 10$ (see Section S6.3 for the results when $s = 3$). Note that we do not compare with the DS and Honest methods for the empirical power, because these two tests do not maintain the desired Type I error in our scenarios. We can see that for all considered cases, the empirical power converges to 1 as the magnitude of the signal $\beta^*_1$ becomes larger, which agrees with Theorem \ref{theorem_power}. In addition, we find that the dimension $d$ has minor effects on the empirical power,  which is reasonable as Theorem \ref{theorem_power} only depends on $\log d$ via the condition (\ref{eq_power_condition}). Finally, we note that the power of the test deteriorates as the correlation of the design increases.

\begin{figure}[h]
	\centering
	{\subfigure[]{\includegraphics[width=80mm,height = 80mm]{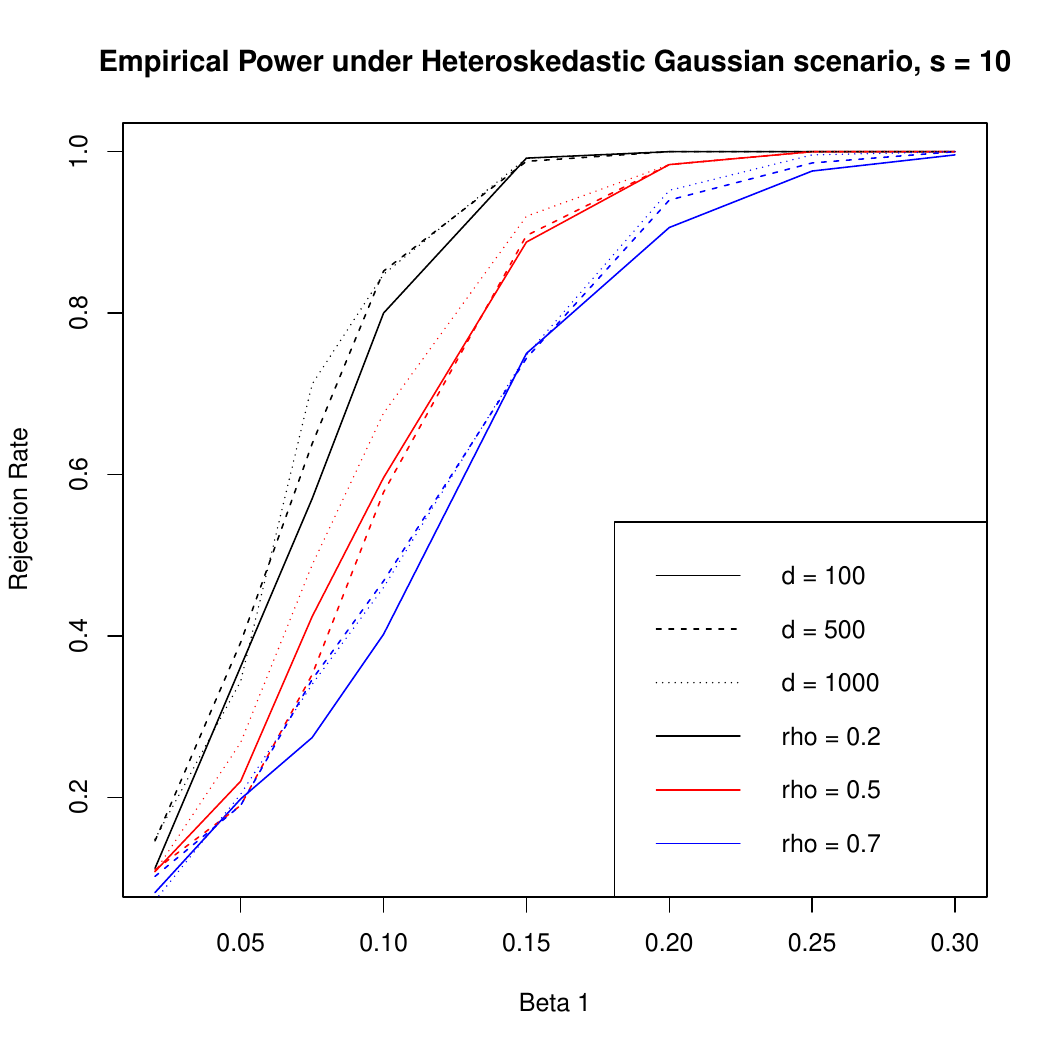}}
	}
	{\subfigure[]{\includegraphics[width=80mm,height = 80mm]{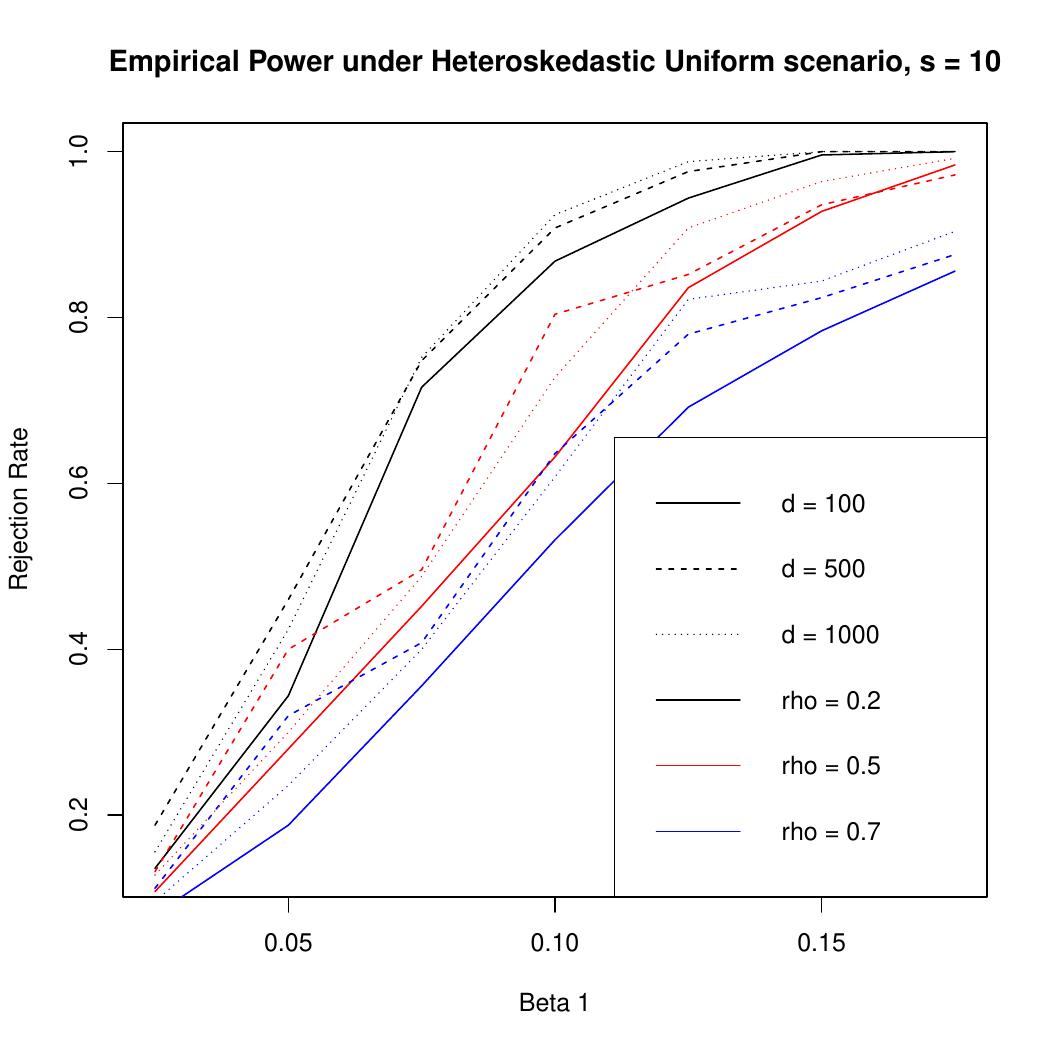}}
	}
	\caption{Empirical rejection rate of the proposed test under both scenarios with $s = 10$, $d = 100, 500, 1000$ and $\rho = 0.2,0.5,0.7$.}   \label{fig_Power_1}
\end{figure}

\subsection{Experiments with data-driven bandwidth}\label{sec_sim_band}
In the next set of experiments, we study the empirical performance of the data-driven bandwidth selection approach.
%The practical implementation of this method is shown in Section \ref{section_practical}.
We firstly study the type I error and power of our SDS method for testing $H_0: \beta^*_1 = 0$ versus $H_1: \beta^*_1 \neq 0$ with data-driven bandwidth.
We consider the same data generating processes as in  Section \ref{simulation_fixed} with $n = 800, d = 100, s  = 3,10$ and $\rho = 0.2,0.5,0.7$. We seek for the minimizer of the estimated MSE over $\delta \in [0.1,1.2]$ and each experiment is repeated 250 times.  After $\wh\delta$ is obtained, we plug-in it into the test statistic and estimate the bias and variance as discussed in Section S5 in the Supplement. With the same implementations, we also evaluate the empirical power of the test by varying $\beta^*_1$ in the same grid as in Section \ref{simulation_fixed}.

\begin{table}[h]
	\caption{The empirical Type I error rate of the tests under the Heteroskedastic Gaussian and Uniform scenarios with data-driven bandwidth $\wh\delta$.}\label{table_adaptive_type1}
	\setlength\extrarowheight{1pt}
	\begin{center}
	\begin{tabular}{ c | c c c | c c c }
			\hline
			& &$s = 3$&&&$s=10$&\\
			Data generating process&$\rho = 0.2$&$\rho = 0.5$&$\rho = 0.7$&$\rho = 0.2$&$\rho = 0.5$&$\rho = 0.7$\\
			\hline
			Heteroskedastic Gaussian&6.8\%&7.2\%&5.6\%&8.4\%&7.2\%&6.4\%\\
			Heteroskedastic Uniform&8.8\%&8.0\%&7.6\%&8.4\%&6.8\%&5.2\%\\
			\hline
		\end{tabular}
	\end{center}	
\end{table}

\begin{figure}[h]
	\centering
	{\subfigure[]{\includegraphics[width=80mm,height = 80mm]{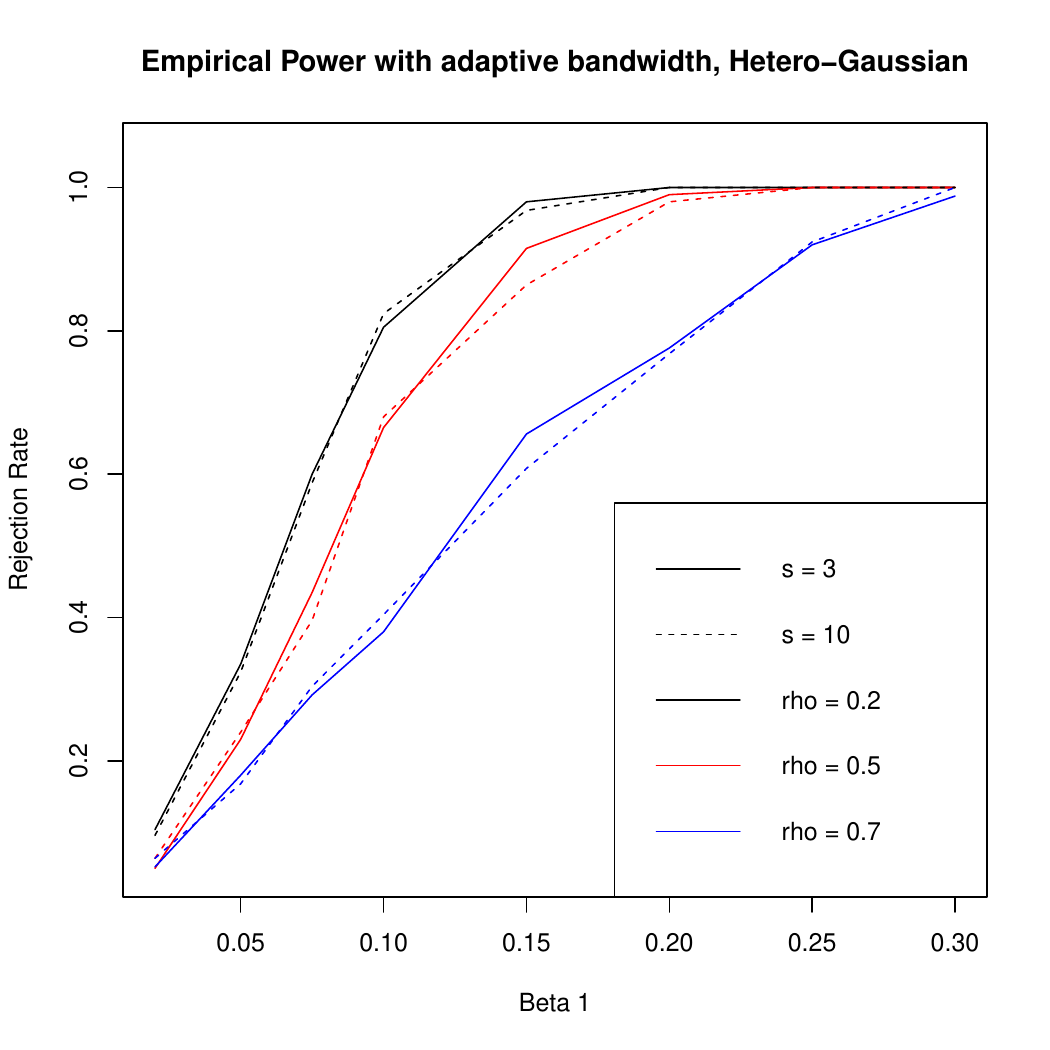}}
	}
	%\vskip \baselineskip
	{\subfigure[]{\includegraphics[width=80mm,height = 80mm]{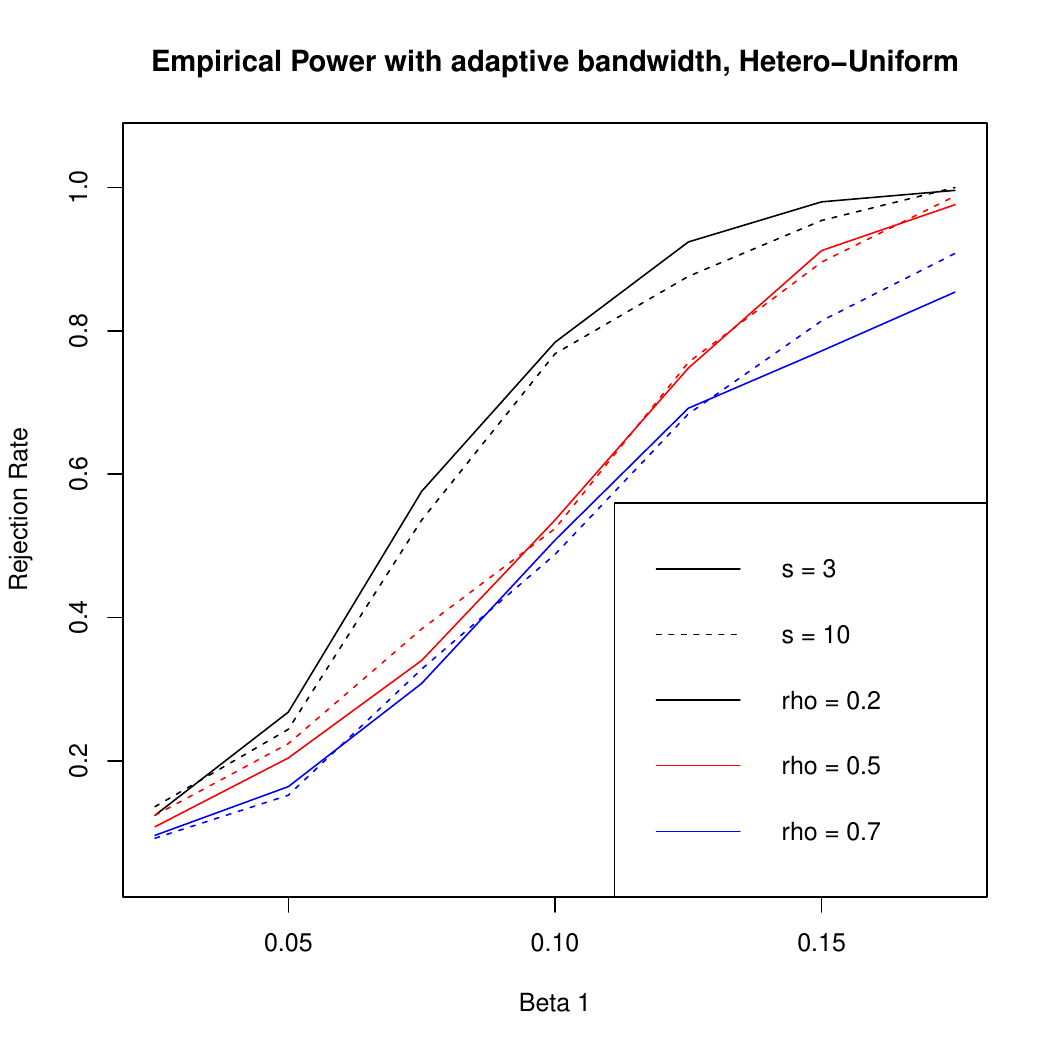}}
	}
	\caption{Empirical power of the tests under the Heteroskedastic Gaussian and Uniform scenarios with data-driven bandwidth $\wh\delta$.}   \label{fig_adaptive}
\end{figure}

Table \ref{table_adaptive_type1} shows the empirical Type I error rate over 250 repetitions  when $\beta^*_1 = 0$, and Figure \ref{fig_adaptive} shows the empirical power for different $\beta^*_1 \neq 0$ in these scenarios. Similar to the case when the bandwidth $\delta$ is pre-specified, the empirical Type I errors are generally close to the nominal level 0.05, and the empirical power converges to 1 as $\beta^*_1$ becomes larger. We refer to Section S6 in the Supplement for further numerical results.

%In Section \ref{appendix_band}, we show that the data-driven bandwidth is indeed close to the theoretically optimal bandwidth that minimizes the MSE.
From all of these numerical results, we recommend using the data-driven bandwidth selection approach in practice.

%Meanwhile, we obeserve that as $b$ becomes larger, the standard deviation of $\wh M(\delta)$ decreases. Recall that Lemma \ref{prop_sb} indicates that as $b$ becomes larger, the variance of estimating $SB(\delta)$ becomes smaller, which agrees with the empirical results here. We can also see that the minimizer of the average estimated MSE slightly shifts to the right as $b$ gradually increases in this particular scenario. This experiment suggests that  with a properly chosen $b$, the adaptively selected bandwidth $\wh\delta$ using the proposed approach should be close to the true minimizer of $M(\delta)$.

\section{Analysis of ChAMP Trial}\label{sec_data}

In this section we analyze the ChAMP (Chondral Lesions And Meniscus Procedures) trial \citep{bisson2017patient}, which contains clinical information about $n = 138$ patients undergoing arthroscopic partial meniscectomy (APM), a knee surgery for meniscal tears. The response variable is $Y = 1$ if the patient is healthy/satisfactory and $-1$ otherwise, obtained from the SF-36 survey. The continuous measurement $X$ encodes the WOMAC pain score change from the baseline to one-year after the surgery.
%In traditional clinical studies, this serves as the primary measurement for the success of APM.
The dataset also contains $d = 160$ additional variables from the patient's clinical profile, denoted by $\bZ$. The scientific question is to determine the iMCID, defined as a linear combination of the covariates $\bbeta^T\bZ$, such that the treatment of debriding chondral lesions can be claimed as clinically significant by comparing the WOMAC pain score change with this individualized threshold. As we can see, this application naturally fits into our formulation (\ref{eq_risk_ori}) with weight function $w(y) = 1/\PP(Y = y)$. The goal of the analysis is to address this question by providing valid inferential results for each component of $\bbeta$.

% exten_inj_pre range of motion, degrees of extension at baseline
% flex_inj_pre range of motion, degrees of flexion at baseline

%It is both important to accurately estimate $\bbeta$ under this high-dimensional regime as well as to identify significant factors that will affect this personalized cutoff, and here we focus on the latter problem.

We apply the proposed SDS test for $H_{0j}: \beta_j=0$ versus $H_{1j}: \beta_j\neq 0$, where $1\leq j\leq d$. We use the same tuning parameter setting for estimating $\bbeta^*,\omega^*$ and the asymptotic bias and variance of the score function following Section S5. For comparison, we also apply the DS and Honest methods discussed in Section \ref{simulation_fixed}.

\begin{table}[h]
	\caption{The three significant variables (with p-value $<0.05/d$) identified by the proposed SDS method, and their corresponding p-values obtained from the DS and Honest methods.}\label{table_champ}
	\begin{center}
		\begin{tabular}{ c |c c c }
			\hline
			p-value& KQOL\_6wk& flex\_inj\_pre &KSymp\_3mo\\
			\hline
			SDS& 3.583e-07& 5.575e-05&  4.482e-05\\
			DS&0.0168 & 0.0340 & 0.0213 \\
			Honest&0.0591&0.0241 &   0.4383\\
			\hline
		\end{tabular}
	\end{center}	
\end{table}

Table \ref{table_champ} lists the three significant variables (i.e., those with p-value $<0.05/d =$3.125e-04)  from the proposed SDS approach.
Interestingly, all of them are clinically relevant and can provide meaningful implications for iMCID. The significance of the variable KQOL\_6wk, which represents the KOOS score for quality of life at 6-week, definitely indicates how the patients recover at a relatively early stage after the surgery.
The variable flex\_inj\_pre means the degree of flexion right before the surgery. Its significance recommends that the baseline disease severity would affect the magnitude of iMCID---this similar phenomenon was also discovered in the clinical literature for other types of diseases, such as the shoulder pain reduction study \citep{heald1997shoulder}.
The third variable KSymp\_3mo is the KOOS score for other symptoms at 3-month. In some previous analysis of ChAMP trial where only estimate is available but without inference results, this variable has the second largest coefficient \citep{feng2022nonregular}.

The results from the DS and the Honest methods are different.
Firstly, the DS method only yields 1 significant variable and the Honest method yields 13. From Table \ref{table_champ}, the three significant variables identified by the proposed SDS method cannot be identified by either DS or Honest.
In general, compared to the proposed SDS method, DS identifies fewer significant variables while Honest identifies more. This phenomenon is also evident from the simulation results in Section \ref{simulation_fixed}.
On the other hand, the results from the three methods do not completely contradict with each other. For instance, the two significant variables, KQOL\_6wk and KSymp\_3mo, identified from the proposed SDS method, has the fourth and fifth smallest p-values in the DS method.
Please refer to Section S6.7 in the Supplement for more results of this analysis.

In general, recall that DS and Honest methods are devised for the logistic regression, while our
proposed SDS method can produce valid inference results under the binary response model (\ref{eq_binary_reponse_model}), which is more flexible since the distribution of $\epsilon$ is left unspecified.
Therefore, we expect that the significant variables identified by SDS are potentially more reliable and clinically more relevant. Our results presented in this session echo this rationale.

\section*{Supplemental Material}

The Supplement include the technical proofs, some more detailed theoretical results and discussions, and additional numerical results.

\section*{Acknowledgment}

Ning is supported in part by National Science Foundation (NSF) CAREER award DMS-1941945 and NSF award DMS-1854637.
Zhao is supported in part by NSF award DMS-2122074 and a start-up grant from the University of Wisconsin-Madison.
The authors would like to thank the Editor, an Associate Editor, and two reviewers for their insightful comments which have helped improve the manuscript substantially.

\bibliographystyle{asa}
\bibliography{ref}

 \renewcommand\theequation{S\arabic{equation}}
 \renewcommand\thelemma{S\arabic{lemma}}
\renewcommand{\thesection}{S\arabic{section}}
 \renewcommand\theassumption{S\arabic{assumption}}
 \renewcommand\thecorollary{S\arabic{corollary}}
 \renewcommand\thetheorem{S\arabic{theorem}}
 \renewcommand\theproposition{S\arabic{proposition}}
  \renewcommand\theremark{S\arabic{remark}}

\newpage
\begin{center}
    \bf{\Large Supplemental Material for ``Test of Significance for High-dimensional Thresholds with Application to Individualized Minimal Clinically Important Difference''
    }
\end{center}

% \title{Supplemental Material for ``Test of Significance for High-dimensional Thresholds with Application to Individualized Minimal Clinically Important Difference''}

% \maketitle
% \thispagestyle{empty}

% \newpage
% \setcounter{page}{1} %%% Set page number to 1

%\setcounter{equation}{0}

%\noindent
%This Supplemental Material is the Appendix Section of the main document of the manuscript.

%\section*{Appendix}

\section{Proofs of Main Results}
\subsection{Proof of Lemma \ref{lemma_ori_normality_new}}\label{pf_lemma_ori_normality_new}
	Recall that
	\beq
	\nabla \bar{R}_{\delta}^i(\bbeta^{*}) = w(y_i)\frac{y_i\bz_i}{\delta}K\bigg(\frac{y_i(x_i - \bbeta^{*T}\bz_i)}{\delta}\bigg).
	\eeq
	Let $T_i =\frac{ \sqrt{\delta}\bv^{*T}(\nabla \bar{R}_\delta^i(\bbeta^*) - \nabla R_\delta(\bbeta^*))}{\sqrt{\bv^{*T}\bSigma^*\bv^*}}$, we know by definition $\EE T_i = 0$.
	Consider
	\beq
	Var[\bv^{*T}\nabla \bar{R}_\delta^i(\bbeta^*)] =&
	\EE [(\bv^{*T}\nabla \bar{R}_\delta^i(\bbeta^*))^2] - (\bv^{*T}\nabla R_\delta(\bbeta^*))^2.
	\eeq
	Here 	
	\beq
	\EE [(\bv^{*T}\nabla \bar{R}_\delta^i(\bbeta^*))^2] =& \sum_{y \in \{-1,1\}}w(y)^2\int\frac{(\bv^{*T}\bz)^2}{\delta^2}\int K^2(\frac{x - \bbeta^{*T}\bz}{\delta})f(x|y,\bz) dxf(y,\bz)d\bz\\
	=&\sum_{y \in \{-1,1\}}w(y)^2\int\frac{(\bv^{*T}\bz)^2}{\delta}\int K^2(u)f(u\delta + \bbeta^{*T}\bz|y,\bz) duf(y,\bz)d\bz\\
	=&\frac{1}{\delta}\sum_{y \in \{-1,1\}}w(y)^2\int(\bv^{*T}\bz)^2\int K^2(u)(f(\bbeta^{*T}\bz|y,\bz) + u\delta f'(\bbeta^{*T}\bz|y,\bz)
+ o(\delta))\\
& duf(y,\bz)d\bz\\
	=&\frac{1}{\delta}(\bv^{*T}\bSigma^*\bv^*(1+ o(1))),
	\eeq
	where the second equality is due to a change of variable, and the third equality is due to Assumption \ref{ass_smooth}. Meanwhile, we know
	\beq
	\bv^{*T}\nabla R_\delta(\bbeta^*) =& \sum_{y \in \{-1,1\}}w(y)y\int\frac{(\bv^{*T}\bz)}{\delta}\int K(\frac{x - \bbeta^{*T}\bz}{\delta})f(x|y,\bz) dxf(y,\bz)d\bz\\
	=&\sum_{y \in \{-1,1\}}w(y)y\int(\bv^{*T}\bz)\int K(u)f(u\delta + \bbeta^{*T}\bz|y,\bz) duf(y,\bz)d\bz\\
	=&\sum_{y \in \{-1,1\}}w(y)y\int(\bv^{*T}\bz)\int K(u)	\frac{(u\delta)^{\ell}}{\ell!}\bigg(f^{(\ell)}(\bbeta^{*T}\bz|y,\bz) + \cO((u\delta)^{\zeta})\bigg)
	duf(y,\bz)d\bz\\
	=&\delta^{\ell}\bv^{*T}\bb^*(1 + o(1)).
	\eeq
	%and thus $\delta(\bv^{*T}\nabla R_\delta(\bbeta^*))^2 = \delta^{2\ell+1}(\bv^{*T}\bb^*)^2(1 + o(1))$.
	This together with Assumption \ref{ass_projection_norm} implies that $Var[\bv^{*T}\nabla R_\delta^i(\bbeta^*)] = \frac{1}{\delta}\bv^{*T}\bSigma^*\bv^*(1+ o(1))$ and therefore $Var(T_i) = 1 + o(1).$ Now we verify the Lyapunov condition
	\beq\label{eq_asymp_0}
	\frac{1}{n^{3/2}}\sum_i^n \EE|T_i|^3 =& \frac{1}{n^{3/2}}\sum_i^n\EE\bigg|
	\frac{ \sqrt{\delta}\bv^{*T}(\nabla \bar{R}_\delta^i(\bbeta^*) - \nabla R_\delta(\bbeta^*))}{\sqrt{\bv^{*T}\bSigma^*\bv^*}}\bigg|^3.
	%\bigg|^3\\
	%\leq& (\frac{\delta}{n})^{3/2}(\frac{\norm{\bv^*}_1}{c})^3\sum_i^n \EE[\norm{\nabla %\bar{R}_\delta^i(\bbeta^*) - \nabla R_\delta(\bbeta^*)}_{\infty}^3]\\
	\eeq
	By Assumption \ref{ass_projection_norm},
	\beq\label{eq_asymp_1}
	\EE\bigg|
	\frac{ \sqrt{\delta}\bv^{*T}(\nabla \bar{R}_\delta^i(\bbeta^*) - \nabla R_\delta(\bbeta^*))}{\sqrt{\bv^{*T}\bSigma^*\bv^*}}
	\bigg|^3
	\lesssim& \delta^{3/2}\EE\bigg|
	\bv^{*T}(\nabla \bar{R}_\delta^i(\bbeta^*) - \nabla R_\delta(\bbeta^*))
	\bigg |^3\\
	\lesssim &\delta^{3/2}(\EE |\bv^{*T}\nabla \bar{R}_\delta^i(\bbeta^*)|^3 + |\bv^{*T}\nabla R_\delta(\bbeta^*)|^3),
	\eeq
	where we can show that
	\beq
	\EE |\bv^{*T}\nabla \bar{R}_\delta^i(\bbeta^*)|^3 \leq& \norm{\bv^*}_1^3\sum_{y \in \{-1,1\}}\int\delta\max_{j}\bigg|\frac{y\bz_j}{\delta}K(u)
	\bigg|^3f(u\delta + \bbeta^{*T}\bz,y,\bz)dud\bz\\
	\lesssim&\norm{\bv^*}_1^3M_n^3/\delta^2,
	\eeq
	and from Lemma \ref{lemma_bias} it's easy to see that the first term on the RHS of (\ref{eq_asymp_1}) is dominant.  This implies that
	\beq
	\frac{1}{n^{3/2}}\sum_i^n \EE|T_i|^3 = \cO((\norm{\bv^*}_1M_n)^3/(n\delta)^{1/2}).
	\eeq
	Therefore, under the condition of this Lemma the Lyapunov condition holds, which completes the proof by applying Lindeberg Feller Central Limit Theorem.

\subsection{Proof of Theorem \ref{theorem_score}}\label{pf_theorem_score}
	It suffices to show that $(n\delta)^{1/2}|\wh{S}_\delta(\wh{\bbeta}_0) - S_\delta(\bbeta^*)| = o_{\PP}(1)$ where $\wh{\bbeta}_0 = (0,\wh{\gamma})$. Here we only show that $ (n\delta)^{1/2}|\wh{S}_\delta^{(1)}(\wh{\bbeta}_0^{(2)}) - S_\delta(\bbeta^*)| = o_{\PP}(1)$  and the desired result shall follow naturally from Lemma \ref{lemma_ori_normality_new}.
	
	By definition, we have
	\beq
	&(n\delta)^{1/2}|\wh{S}_\delta^{(1)}(\wh{\bbeta}_0^{(2)})- S_\delta(\bbeta^*)|\\
	=& (n\delta)^{1/2}|\wh{\bv}^{(1)T}\nabla R_\delta^{n_{(1)}}(\wh{\bbeta}_0^{(2)}) - \bv^{*T}\nabla R_\delta^{n_{(1)}}(\bbeta^*)|\\
	\leq&(n\delta)^{1/2}|\bv^{*T}(\nabla R_\delta^{n_{(1)}}(\wh{\bbeta}_0^{(2)}) - \nabla R_\delta^{n_{(1)}}(\bbeta^*))| + (n\delta)^{1/2}|(\wh{\bv}^{(1)} - \bv^{*})\nabla R_\delta^{n_{(1)}}(\wh{\bbeta}_0^{(2)})|\\
	:=&I_1 + I_2.
	\eeq
	The fact that $\bv^{*T}\nabla^2_{\cdot\bgamma}R(\bbeta^*) = 0$ and Lemma \ref{lemma_second-order-samplediff} implies that %\note{Check conditions from this Lemma would match this theorem} implies that
	\beq
	I_1 =& (n\delta)^{1/2}|\bv^{*T}\nabla^2_{\cdot\bgamma}R^{n_{(1)}}_{\delta}(\bbeta^*)(\wh{\bbeta}^{(2)} - \bbeta^*)| + o_{\PP}(1)\\
	\leq&(n\delta)^{1/2}\norm{\bv^*}_1\big(
	\norm{\nabla^2_{\cdot\bgamma}R^{n_{(1)}}_{\delta}(\bbeta^*) - \nabla^2_{\cdot\bgamma}R_{\delta}(\bbeta^*)}_{\max} + \norm{\nabla^2_{\cdot\bgamma}R_{\delta}(\bbeta^*) - \nabla^2_{\cdot\bgamma}R(\bbeta^*)}_{\max}
	\big)\norm{\wh{\bbeta}^{(2)} - \bbeta^*}_1\\
& + o_{\PP}(1)\\
	\lesssim& (n\delta)^{1/2}\norm{\bv^*}_1\frac{\eta_1(n)}{\delta}(\sqrt{\frac{\log(d)}{n\delta}} + \delta^{\ell})+ o_{\PP}(1)\\
	=&o_{\PP}(1).
	\eeq
	Similarly, since $\nabla R(\bbeta^*) = 0$, imply that
	\beq
	&I_2 = (n\delta)^{1/2}|(\wh{\bv}^{(1)} - \bv^{*})(\nabla R_\delta^{n_{(1)}}(\wh{\bbeta}_0^{(2)}) - \nabla R(\bbeta^*))|\\
	\leq&(n\delta)^{1/2}\norm{\wh{\bv}^{(1)} - \bv^*}_1\norm{\nabla R_\delta^{n_{(1)}}(\wh{\bbeta}_0^{(2)}) - \nabla R(\bbeta^*)}_\infty\\
	\leq&(n\delta)^{1/2}\norm{\wh{\bv}^{(1)} - \bv^*}_1
	\bigg(
	\norm{\nabla R_\delta^{n_{(1)}}(\wh{\bbeta}_0^{(2)}) - \nabla R_\delta(\wh{\bbeta}_0^{(2)})}_\infty\\ &~~~~~~~~~~~~~~~~~~~~~~~~~~~~~ + \norm{\nabla R_\delta(\wh{\bbeta}_0^{(2)}) - \nabla R(\wh{\bbeta}_0^{(2)})}_\infty +
	\norm{\nabla R(\wh{\bbeta}_0^{(2)}) - \nabla R(\bbeta^*)}_\infty
	\bigg).
	\eeq
	Since $\wh{\bbeta}_0^{(2)}$ depends on the set of samples that is disjoint with $\cN_1$, Lemma \ref{lemma_variance} together with Lemma \ref{lemma_addi_conditional} implies that
	\[
	\norm{\nabla R_\delta^{n_{(1)}}(\wh{\bbeta}_0^{(2)}) - \nabla R_\delta(\wh{\bbeta}_0^{(2)})}_\infty = \cO_{\PP}(\sqrt{\frac{\log(d)}{n\delta}}).
	\]
	This in combine with Lemma \ref{lemma_bias} and \ref{lemma_sample_diff} further implies that
	$$
	I_2 \lesssim (n\delta)^{1/2}\norm{\bv^*}_1\eta_2(n)\bigg(
	M_n \eta_1(n) + \sqrt{\frac{\log(d)}{n\delta}} + \delta^{\ell}
	\bigg) = o_{\PP}(1).	
	$$
	Putting all above together with the results from Lemma~\ref{lemma_consistent_bias_pilot} and Lemma~\ref{prop_var_pilot} as well as the Slutsky's theorem, we obtain that the bias corrected decorrelated score statistic $\wh{U}_n\stackrel{d}{\rightarrow}  N(0,1)$ under the null hypothesis.

\subsection{Proof of Theorem \ref{theorem_power}}\label{pf_theorem_power}
In the following, we shall prove the following general version of Theorem \ref{theorem_power}.

\begin{theorem}\label{theorem_power_2}
Under the conditions in Theorem \ref{theorem_score} and Lemmas \ref{lemma_consistent_bias_pilot} and \ref{prop_var_pilot} in the supplemental materials, we further assume
	\beq\label{eq_power_condition}
	&\norm{\bv^*}_1^2M_n^4n^{1-4\phi}/\delta = o(1),\quad (n\delta)^{1/2}
	\norm{\bv^*}_1(\eta_1(n)\vee \eta_2(n))M_n n^{-\phi} = o(1),
	\eeq
	 and $\wh{\mu}, \wh{\sigma}$ are consistent estimators of $\mu^*, \sigma^*$. Then the following results hold under the local alternative hypothesis $H_1: \theta^* = \tilde{C}n^{-\phi}$.
	\begin{itemize}
		\item If $\tilde{C}(n\delta)^{1/2}n^{-\phi}\nabla^2_{\theta|\bgamma}R(\bbeta^*)\sigma^{*-1} \rightarrow \xi$ for some constant $\xi$, then it holds that
		\beq\label{eq_theorem_power_0}
		\wh{U}_n  \stackrel{d}{\rightarrow} N(-\xi,1).
		\eeq
		\item If $\tilde{C}(n\delta)^{1/2}n^{-\phi}\nabla^2_{\theta|\bgamma}R(\bbeta^*)\sigma^{*-1} \rightarrow \infty$, then for any fixed $t$, it holds that \beq\label{eq_theorem_power_1}
		\lim_{n\rightarrow\infty}\PP(|\wh{U}_n| > t) = 1.
		\eeq
	\end{itemize}
\end{theorem}

Proof: Recall that $\wh{S}_\delta(0,\wh{\bgamma}) = \frac12 \big( \wh{S}^{(1)}_\delta(0,\wh{\bgamma}^{(2)}) + \wh{S}^{(2)}_\delta(0,\wh{\bgamma}^{(1)})  \big)$. Let's focus on $\wh{S}^{(1)}_\delta(0,\wh{\bgamma}^{(2)}) $. By definition
	\beq
	&\wh{S}^{(1)}(0,\wh{\bgamma}^{(2)})= \wh{\bv}^{(1)T}\nabla R_\delta^{n_{(1)}}(0,\wh{\bgamma}^{(2)})\\
	=&\underbrace{(\vdiffone)^T\nabla R_\delta^{n_{(1)}}(0,\wh{\bgamma}^{(2)})}_{I_1} + \bv^{*T}\nabla R_\delta^{n_{(1)}}(0,\wh{\bgamma}^{(2)})\\
	=&I_1 + \underbrace{\bv^{*T}\big(\nabla R_\delta^{n_{(1)}}(0,\wh{\bgamma}^{(2)}) - \nabla R_\delta^{n_{(1)}}(0,\bgamma^*)\big)}_{I_2} + \bv^{*T}\nabla R_\delta^{n_{(1)}}(0,\bgamma^*) \\
	=&I_1 + I_2 + \bv^{*T}\nabla R_\delta^{n_{(1)}}(\bbeta^*) - \theta^*\bv^{*T}\nabla^2_{\cdot\theta} R(\bbeta^*) + \underbrace{\bv^{*T}(\nabla R_\delta^{n_{(1)}}(0,\bgamma^*) - \nabla R_\delta^{n_{(1)}}(\bbeta^*)) + \theta^*\bv^{*T}\nabla^2_{\cdot\theta}R(\bbeta^*)}_{I_3}\\
	=&S^{(1)}(\bbeta^*)- \theta^*\bv^{*T}\nabla^2_{\cdot\theta}R(\bbeta^*) + I_1 + I_2 + I_3.
	\eeq
	For $I_1$, we know
	$$
	I_1 \leq \norm{\vdiffone}_1\norm{\nabla R_\delta^n(0,\wh{\bgamma})}_\infty = \cO_{\PP}(\norm{\bv^*}_1\eta_2(n)(\sqrt{\log(d)/(n\delta)} + \delta^{\ell} + M_n(\theta^* + \eta_1(n))))  = o_{\PP}((n\delta)^{-1/2}).
	$$
	Following a similar proof of Lemma \ref{lemma_second-order-samplediff}, we can show that $$I_2 = \cO_{\PP}\big(\norm{\bv^*}_1\eta_1(n)\big[
	\delta^{\ell-1} + \sqrt{\log(d)/(n\delta^3)} + M_n\theta^*
	\big]\big) = o_{\PP}((n\delta)^{-1/2}).$$
	Finally, for $I_3$, since by definition
	\beq
	\bv^{*T}\nabla^2_{\cdot\theta}R(\bbeta^*) = \nabla^2_{\theta|\bgamma}R(\bbeta^*),
	\eeq
	Lemma \ref{lemma_power_approximation} implies that  $I_3 = o_{\PP}((n\delta)^{-1/2})$.
	%\beq
	%I_3 =& -\theta^*\bv^{*T}\nabla^2_{\cdot \theta}R_\delta^{n_{(1)}}(\bbeta^*) + \theta^*\bv^{*T}\nabla^2_{\cdot\theta}R(\bbeta^*) + o_\PP(1)\\
	%=&-\theta^*\bigg(
	%\nabla^2_{\theta\theta} R_\delta^{n_{(1)}}(\bbeta^*) - \nabla^2_{\theta\theta} R(\bbeta^*) + \bomega^{*T}(\nabla^2_{\bgamma\theta} R_\delta^{n_{(1)}}(\bbeta^*) - \nabla^2_{\bgamma\theta} R(\bbeta^*))
	%\bigg)\\
	%\leq & |\theta^*|\norm{\bv^*}_1\norm{\nabla^2 R_\delta^{n_{(1)}}(\bbeta^*) - \nabla^2 R(\bbeta^*)}_{\max}\\
	%=&\cO_{\PP}(n^{-\phi}\norm{\bv^*}_1(\sqrt{\log(d)/(n\delta^3)} + \delta^{\ell-1}))\\
	%=&,
	%\eeq
	%where the first step is due to Lemma \ref{lemma_second-order-samplediff}.
	Put all pieces together, we have shown that
	\[
	|\wh{S}^{(1)}(0,\wh{\bgamma}^{(2)}) - (S^{(1)}(\bbeta^*)- \theta^*\nabla^2_{\theta|\bgamma}R(\bbeta^*))| = o_{\PP}((n\delta)^{-1/2}).
	\]
	A similar result will also hold for $\wh{S}^{(2)}(0,\wh{\bgamma}^{(1)})$, and thus we conclude
	\beq
	|\wh{S}(0,\wh{\bgamma}) - (S(\bbeta^*)- \theta^*\nabla^2_{\theta|\bgamma}R(\bbeta^*))| = o_{\PP}((n\delta)^{-1/2}).
	\eeq
	At this point the conclusion of (\ref{eq_theorem_power_0}) is shown. To show (\ref{eq_theorem_power_1}), the above formula also implies that
	\beq
	\PP(|(n\delta)^{1/2}\wh{S}(0,\wh{\bgamma})/\sigma^*| \leq t) \leq \PP(L(n) \leq (n\delta)^{1/2}S(\bbeta^*)/\sigma^* \leq U(n)) + o(1),
	\eeq
	where $L(n) = -t - q(n) + \tilde{C}\sigma^{*-1}\nabla^2_{\theta|\bgamma}R(\bbeta^*)n^{-\phi}(n\delta)^{1/2}$, $U(n) = t + q(n) + \tilde{C}\sigma^{*-1}\nabla^2_{\theta|\bgamma}R(\bbeta^*)n^{-\phi}(n\delta)^{1/2}$, and $q(n) = o(1)$ is some deterministic sequence. Since $n^{-\phi}(n\delta)^{1/2}\nabla^2_{\theta|\bgamma}R(\bbeta^*)\sigma^{*-1}\rightarrow\infty$, it is easily seen that $\PP(|(n\delta)^{1/2}\wh{S}(0,\wh{\bgamma})/\sigma^*| \leq t) \rightarrow 0$. Since $\wh{\sigma}$ is consistent, for $n$ large enough we will get $|\wh{\sigma}/\sigma - 1|\leq 3$, which finally implies the desired result.
	This completes the proof.

\subsection{Proof of Theorem \ref{theorem_linear_hypothesis} }\label{pf_theorem_linear_hypothesis}
The proof is very similar to the proof of Theorem \ref{theorem_score} so we only give a sketch here. Following the same derivation of Lemma \ref{lemma_ori_normality_new}, we can show that
\beq
(n\delta)^{1/2}\frac{S_\delta^L(0,\bgamma^*,\bomega^*_L) - \delta^{\ell}\mu^*_L}{\sigma^*_L} \stackrel{d}{\rightarrow} N(0,1).
\eeq

Next, following the derivation of Theorem \ref{theorem_score}, we can show that $I_1 = (n\delta)^{1/2}|\bv^{*T}_L\bC(\nabla R_\delta^{n_{(1)}}(\wh{\bbeta}_0^{(2)}) - \nabla R_\delta^{n_{(1)}}(\bbeta^*))|$ and $I_2 = (n\delta)^{1/2}|(\wh{\bv}_L^{(1)} - \bv_L^{*})\bC\nabla R_\delta^{n_{(1)}}(\wh{\bbeta}_0^{(2)})|$ are $o_{\PP}(1)$, which further implies that $(n\delta)^{1/2}|S_\delta^{L(1)}(0,\wh\bgamma^{(2)},\wh\bomega^{(1)}_L) - S_\delta^L(0,\bgamma^*,\bomega^*_L)| = o_{\PP}(1)$. This shall hold similarly for $S_\delta^{L(2)}(0,\wh\bgamma^{(1)},\wh\bomega^{(2)}_L)$. Invoking Slutsky's theorem completes the proof.

%~~~~~~~~~~~~~~~~~~~~~~~~~~~~~~~~~~~~~~ additional technical lemma ~~~~~~~~~~~

\subsection{Proofs of additional technical lemmas}

\begin{lemma}\label{lemma_bias}
	Under Assumptions \ref{ass_smooth} - \ref{ass_moment}, for any fixed $\bbeta$, we have\beq
	&\norm{\nabla R_\delta(\bbeta) - \nabla R(\bbeta)}_{\infty} \lesssim \delta^{\ell},\\
	&\norm{\nabla^2 R_\delta(\bbeta) - \nabla R(\bbeta)}_{\max }\lesssim \delta^{\ell-1}.
	\eeq
\end{lemma}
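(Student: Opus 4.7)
The plan is to reduce both bounds to a Taylor-expansion-plus-kernel-moments calculation, with an additional integration-by-parts step needed for the Hessian. I will treat the two claims in turn.

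For the first bound I start from the componentwise expression for the difference, change variables to $u = y(x-\bbeta^T\bz)/\delta$ inside the kernel, and use $y^2=1$ together with the symmetry $K(-t)=K(t)$ to rewrite the inner integral as $\int K(u)\,f(\bbeta^T\bz + u\delta \mid y,\bz)\,du$. This gives
$$(\nabla R_\delta(\bbeta) - \nabla R(\bbeta))_j = \sum_{y} w(y)\, y \int z_j \Big\{\int K(u)\big[f(\bbeta^T\bz + u\delta\mid y,\bz) - f(\bbeta^T\bz\mid y,\bz)\big]du\Big\} f(y,\bz)\,d\bz.$$
Taylor expanding $f(\bbeta^T\bz + u\delta\mid y,\bz)$ around $x=\bbeta^T\bz$ to order $\ell$ with Lagrange remainder and invoking the order-$\ell$ property of $K$ (namely $\int K(u) u^j du = 0$ for $j=1,\dots,\ell-1$) collapses the expansion to the single remainder term $\frac{\delta^\ell}{\ell!}\int K(u) u^\ell f^{(\ell)}(\bbeta^T\bz + \tau u\delta\mid y,\bz)\,du$ for some $\tau\in[0,1]$. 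Assumption~\ref{ass_smooth} bounds $|f^{(\ell)}|\le C$ uniformly and Assumption~\ref{ass_kernel} ensures $\int |u|^\ell |K(u)|du < \infty$, so the inner bracket is $\cO(\delta^\ell)$ uniformly in $(\bz,y)$. Pulling out $\delta^\ell$ and using $\EE[|Z_j|]\le (\EE[Z_j^2])^{1/2}\le C$ from the fourth-moment part of Assumption~\ref{ass_moment} then delivers the first claim.

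For the Hessian, the naive Taylor expansion faces the factor $\nabla^2_\bbeta K(y(x-\bbeta^T\bz)/\delta) = -\bz\bz^T K'(y(x-\bbeta^T\bz)/\delta)/\delta^2$, which is one power of $\delta$ worse than what we want. The key trick is an integration by parts in $x$: since $K$ is supported on $[-1,1]$ and degenerates at its boundary, writing $K'(y(x-\bbeta^T\bz)/\delta) = y\delta\, \partial_x K(y(x-\bbeta^T\bz)/\delta)$ lets us transfer the $x$-derivative onto $f(x\mid y,\bz)$ and kill one power of $1/\delta$. After the same change of variable as before, this yields
$$\nabla^2 R_\delta(\bbeta)_{jk} = \sum_y y\, w(y) \int z_j z_k \int K(u)\, f'(\bbeta^T\bz + u\delta\mid y,\bz)\,du\, f(y,\bz)\,d\bz,$$
while a direct calculation from the definition gives $\nabla^2 R(\bbeta)_{jk} = \sum_y y\, w(y) \int z_j z_k f'(\bbeta^T\bz\mid y,\bz)\,f(y,\bz)\,d\bz$. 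The same Taylor-plus-kernel-moments argument now applied to $f'$ runs only up to order $\ell-1$, with remainder controlled by $|(f')^{(\ell-1)}|=|f^{(\ell)}|\le C$ from Assumption~\ref{ass_smooth}, yielding an $\cO(\delta^{\ell-1})$ inner bracket; combined with $\EE[|Z_jZ_k|]\le C$ this gives the second claim.

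The only delicate bookkeeping is tracking the sign conventions produced by $y=\pm1$ through both the change of variable and the integration by parts so that the symmetry of $K$ absorbs the $y$ cleanly in the gradient case while the leading $y$ factor correctly reappears to match $\nabla^2 R$ in the Hessian case. Beyond that, both claims are deterministic and require no high-dimensional concentration machinery, only the smoothness of $f$, the kernel moment conditions, and the bounded second moment of the coordinates of $\bZ$.
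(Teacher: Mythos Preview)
Your proposal is correct and follows essentially the same approach as the paper: change of variable inside the kernel, integration by parts to trade the $K'$ for a derivative on $f$ in the Hessian case, then a Taylor expansion of $f$ (respectively $f'$) whose lower-order terms are annihilated by the order-$\ell$ moment conditions on $K$, leaving a remainder of size $\delta^\ell$ (respectively $\delta^{\ell-1}$). The paper presents the Hessian case and leaves the gradient as ``very similar,'' while you spell out both and add the sign/symmetry bookkeeping explicitly, but the arguments are substantively identical.
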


\begin{proof}[Proof of Lemma \ref{lemma_bias}]
	We focus on proving the result for $\norm{\nabla^2 R_\delta(\bbeta) - \nabla^2 R(\bbeta)}_{\max }$ and the proof for the other one is very similar. By definition
	\beq
	&\norm{\nabla^2 R_\delta(\bbeta) - \nabla^2 R(\bbeta)}_{\max }\\
	=&\max_{j,k} |\sum_y w(y)\int_{\bz} y z_jz_k (\int \frac{-1}{\delta^2}K'(\frac{x - \bbeta^T\bz}{\delta}) f(x|y,\bz) dx - f'(\bbeta^T\bz|y,\bz)) f(y,\bz) d\bz|\\
	=&\max_{j,k} |\sum_y w(y)\int_{\bz} y z_jz_k (\int \frac{-1}{\delta}K'(u) f(u\delta + \bbeta^T\bz|y,\bz) du - f'(\bbeta^T\bz|y,\bz)) f(y,\bz) d\bz|\\
	=&\max_{j,k} |\sum_y w(y)\int_{\bz} y z_jz_k \int K(u) (f'(u\delta + \bbeta^T\bz|y,\bz)  - f'(\bbeta^T\bz|y,\bz))du f(y,\bz) d\bz|\\
	=&\max_{j,k} |\sum_y w(y)\int_{\bz} y z_jz_k \int K(u)\frac{(u\delta)^{\ell-1}}{(\ell-1)!} f^{(\ell)}(\tau u\delta + \bbeta^T\bz|y,\bz)du f(y,\bz) d\bz|\\
	\lesssim&\cO(\delta^{\ell-1}),
	\eeq
	where the first equality is by definition, the second equality follows from a change of variable, the third equality follows from an integration by parts, the last equality follows from Assumptions \ref{ass_smooth}, \ref{ass_kernel} and the last inequality follows from Assumptions \ref{ass_proportion}, \ref{ass_moment}. The proof is complete.
\end{proof}

\begin{lemma}\label{lemma_variance}Under Assumptions \ref{ass_smooth} - \ref{ass_moment}, for any fixed $\bbeta$, we have with probability greater than $1-\cO(d^{-1})$\beq
	&\norm{\nabla R^n_\delta(\bbeta) - \nabla R_\delta(\bbeta)}_{\infty} \lesssim \sqrt{\frac{\log(d)}{n\delta}},\\
	&\norm{\nabla^2 R^n_\delta(\bbeta) - \nabla^2 R_\delta(\bbeta)}_{\max }\lesssim \sqrt{\frac{\log(d)}{n\delta^3}}.
	\eeq
\end{lemma}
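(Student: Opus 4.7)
The plan is to establish both inequalities by applying Bernstein's inequality coordinate-wise and then taking a union bound over the $d$ coordinates (resp. $d^2$ entries of the Hessian). Fix $\bbeta$ and, for the gradient statement, define the i.i.d.\ random vectors
\[
\bxi_i := \nabla \bar R^i_\delta(\bbeta) = w(y_i)\frac{y_i\bz_i}{\delta}K\!\left(\frac{y_i(x_i-\bbeta^T\bz_i)}{\delta}\right),
\]
so that $\nabla R^n_\delta(\bbeta)-\nabla R_\delta(\bbeta) = n^{-1}\sum_{i=1}^n(\bxi_i - \EE\bxi_i)$. Each coordinate $\xi_{ij}$ is bounded in absolute value by $B_1 \lesssim M_n/\delta$ via Assumptions \ref{ass_kernel}, \ref{ass_proportion} and \ref{ass_moment}. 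For the variance, I would compute
\[
\EE[\xi_{ij}^2] \lesssim \frac{1}{\delta^2}\,\EE\!\left[Z_j^2 \,K^2\!\Big(\tfrac{Y(X-\bbeta^T\bZ)}{\delta}\Big)\right],
\]
and then, conditioning on $(Y,\bZ)$ and performing the change of variable $u = (x-\bbeta^T\bz)/\delta$, absorb one factor of $\delta$ from the Jacobian. Since $f(x\mid y,\bz)$ is bounded (inherited from Assumption \ref{ass_smooth}) and $\EE[Z_j^2\mid Y]$ is bounded by Assumption \ref{ass_moment}, this yields a variance bound $\sigma_1^2 \lesssim 1/\delta$.

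Bernstein then gives, for each $j$,
\[
\PP\!\left(\Big|\tfrac{1}{n}\sum_i(\xi_{ij}-\EE \xi_{ij})\Big| > t\right) \le 2\exp\!\left(-\frac{nt^2/2}{\sigma_1^2 + B_1 t/3}\right).
\]
Choosing $t = C\sqrt{\log(d)/(n\delta)}$ makes the variance term of order $\log d$; the Bernstein side condition reduces to $M_n^2 \lesssim n\delta/\log d$, which is implied by Assumption \ref{ass_moment}. A union bound over $j=1,\ldots,d$ (each bound holding with failure probability $O(d^{-2})$ after adjusting constants) delivers the first claim.

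For the Hessian statement I would apply exactly the same template to
\[
\nabla^2 \bar R^i_\delta(\bbeta)_{jk} = -\,w(y_i)\frac{z_{ij}z_{ik}}{\delta^2}K'\!\left(\tfrac{y_i(x_i-\bbeta^T\bz_i)}{\delta}\right),
\]
with $|K'|$ bounded by Assumption \ref{ass_kernel}. The uniform bound is $B_2 \lesssim M_n^2/\delta^2$, and the second moment is $\sigma_2^2 \lesssim 1/\delta^3$, where the $1/\delta^3$ arises again from one $\delta$ absorbed by the change of variable in the kernel integral and the $\delta^{-4}$ prefactor, while $\EE[Z_j^2 Z_k^2]$ is controlled by Cauchy--Schwarz together with the bounded fourth conditional moment in Assumption \ref{ass_moment}. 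Bernstein with $t = C\sqrt{\log(d)/(n\delta^3)}$ again gives $\log d$ in the variance term; the side condition becomes $M_n^4 \lesssim n\delta/\log d$, which is exactly what Assumption \ref{ass_moment}'s condition $M_n^2 \le C\sqrt{n\delta/\log d}$ supplies. A union bound over the $d^2$ entries produces $\log(d^2)=2\log d$ in the exponent, completing the second claim.

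The only real obstacle is the variance calculation: one must carefully carry out the change-of-variable $u=(x-\bbeta^T\bz)/\delta$ and verify that, after using the boundedness of $f(\cdot\mid y,\bz)$ and of the relevant (second or fourth) conditional moments of $\bZ$, exactly one factor of $\delta$ is gained, giving the advertised rates $1/\delta$ and $1/\delta^3$. Everything else is bookkeeping to check the Bernstein sub-exponential condition against Assumption \ref{ass_moment}.
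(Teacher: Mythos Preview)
Your proposal is correct and follows essentially the same route as the paper: bound each coordinate (entry) uniformly, compute the variance via the change of variable $u=(x-\bbeta^T\bz)/\delta$ to gain one factor of $\delta$, apply Bernstein's inequality, and take a union bound. If anything, you are more explicit than the paper in checking that the Bernstein side condition is supplied by the bound $M_n^2 \le C\sqrt{n\delta/\log d}$ in Assumption~\ref{ass_moment}.
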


\begin{proof}[Proof of Lemma \ref{lemma_variance}]
	We focus on proving the result for $\norm{\nabla^2 R^n_\delta(\bbeta) - \nabla^2 R_\delta(\bbeta)}_{\max }$ and the proof for the other one is very similar. Denote
	\beq
	T_{ijk} =& (\nabla^2 \bar{R}^i_\delta(\bbeta) - \nabla^2 R_\delta(\bbeta))_{jk}\\
	=&-w(y_i)y_i\frac{z_{ij}z_{ik}}{\delta^2}K'(\frac{x_i - \bbeta^T\bz_i}{\delta}) - (\nabla^2 R_\delta(\bbeta))_{jk}.
	\eeq
	By definition, we know $\EE[T_{ijk}] = 0$. Meanwhile for $Var[T_{ijk}]$, we know
	\beq
	\EE[(\nabla^2 \bar{R}^i_\delta(\bbeta))_{jk}^2] =& \sum_y \int w(y)^2 \frac{z_j^2z_k^2}{\delta^4}K'^2(\frac{x - \bbeta^T\bz}{\delta})f(x|y,\bz)dx f(y,\bz)d\bz\\
	=&\sum_y \int w(y)^2\frac{z_j^2z_k^2}{\delta^3}K'^2(u)f(u\delta + \bbeta^T\bz|y,\bz)du f(y,\bz)d\bz\\
	=&\cO(\frac{1}{\delta^3}),
	\eeq
	and with a similar derivation $(\nabla^2 R_\delta(\bbeta))^2_{jk} = \cO(\frac{1}{\delta^2})$. This shows that $Var[T_{ijk}] = \cO(\frac{1}{\delta^3})$. Since $|T_{ijk}|\lesssim \frac{M_n^2}{\delta^2}$, by Bernstein inequality we can show that with probability greater than $1 - \cO(d^{-1})$
	\beq
	\norm{\nabla^2 R^n_\delta(\bbeta) - \nabla^2 R_\delta(\bbeta)}_{\max }\lesssim \sqrt{\frac{\log(d)}{n\delta^3}}.
	\eeq
	This completes the proof.
\end{proof}

\begin{lemma}\label{lemma_sample_diff}
	Under Assumptions \ref{ass_smooth} - \ref{ass_moment},  it holds that\beq
	&\norm{\nabla R (\wh{\bbeta}) - \nabla R (\bbeta^*)}_{\infty} \lesssim M_n\norm{\wh{\bbeta} - \bbeta^*}_1,\\
	&\norm{\nabla^2 R (\wh{\bbeta}) - \nabla^2 R (\bbeta^*)}_{\max} \lesssim M_n\norm{\wh{\bbeta} - \bbeta^*}_1.
	\eeq
\end{lemma}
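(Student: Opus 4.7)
The plan is to use the closed-form expressions for $\nabla R(\bbeta)$ and $\nabla^2 R(\bbeta)$ in terms of the conditional density $f(\cdot\mid y,\bz)$, subtract, and then apply the mean value theorem on the scalar argument $\bbeta^T\bz$. From the display just before Assumption \ref{ass_smooth},
\[
\nabla R(\bbeta) = \sum_{y\in\{-1,1\}} w(y)\int y\bz\, f(\bbeta^T\bz\mid y,\bz)\, f(y,\bz)\,d\bz,
\]
and differentiating once more in $\bbeta$ (legal under Assumption \ref{ass_smooth}) gives
\[
\nabla^2 R(\bbeta) = \sum_{y\in\{-1,1\}} w(y)\int y\bz\bz^T f'(\bbeta^T\bz\mid y,\bz)\, f(y,\bz)\,d\bz.
\]
For each fixed $\bz$, the one-dimensional MVT yields
\[
f(\wh\bbeta^T\bz\mid y,\bz)-f(\bbeta^{*T}\bz\mid y,\bz)=f'(\tilde\bbeta(\bz)^T\bz\mid y,\bz)\cdot(\wh\bbeta-\bbeta^*)^T\bz,
\]
for some intermediate $\tilde\bbeta(\bz)$ on the segment between $\wh\bbeta$ and $\bbeta^*$, and an analogous expansion with $f''$ controls $f'(\wh\bbeta^T\bz\mid y,\bz)-f'(\bbeta^{*T}\bz\mid y,\bz)$. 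The uniform bounds $|f^{(i)}|\le C$ from Assumption \ref{ass_smooth} make these derivatives harmless.

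The second step is to extract the $\ell_1$ norm of $\wh\bbeta-\bbeta^*$ from the inner product $(\wh\bbeta-\bbeta^*)^T\bz$ without over-inflating the dependence on $M_n$. Using $|\bZ_j|\le M_n$ from Assumption \ref{ass_moment}, we bound
\[
|(\wh\bbeta-\bbeta^*)^T\bz|\le \|\bz\|_\infty\|\wh\bbeta-\bbeta^*\|_1\le M_n\|\wh\bbeta-\bbeta^*\|_1.
\]
Combined with bounded weights (Assumption \ref{ass_proportion}) and $|f'|\le C$, the $j$th coordinate of the gradient difference satisfies
\[
|(\nabla R(\wh\bbeta)-\nabla R(\bbeta^*))_j|\;\lesssim\; M_n\|\wh\bbeta-\bbeta^*\|_1\cdot\EE[|Z_j|],
\]
and the bounded fourth (hence second) conditional moments of $Z_j$ in Assumption \ref{ass_moment} make $\EE[|Z_j|]$ a constant; taking $\max_j$ produces the first bound. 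For the Hessian entry $(j,k)$, the same argument gives
\[
|(\nabla^2 R(\wh\bbeta)-\nabla^2 R(\bbeta^*))_{jk}|\;\lesssim\; M_n\|\wh\bbeta-\bbeta^*\|_1\cdot\EE[|Z_jZ_k|],
\]
and Cauchy--Schwarz together with $\EE[Z_j^2\mid Y=y]\lesssim 1$ keeps the expectation constant, so that $\max_{j,k}$ delivers the second bound.

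I do not foresee a real obstacle here; the only subtlety worth flagging is the bookkeeping that produces a single $M_n$ rather than $M_n^2$ or $M_n^3$. The key is to use $\|\bz\|_\infty\le M_n$ exactly once, to peel the $\ell_1$ norm out of the directional difference $(\wh\bbeta-\bbeta^*)^T\bz$, while the remaining $z_j$ (resp.\ $z_j z_k$) is absorbed in expectation via the constant second-moment bounds on $\bZ$. A naive application of $|z_j|\le M_n$ to every factor would give $M_n^2\|\wh\bbeta-\bbeta^*\|_1$ for the gradient and $M_n^3\|\wh\bbeta-\bbeta^*\|_1$ for the Hessian, which is the trap to avoid and which would weaken the conditions needed in Theorems \ref{theorem_score} and \ref{theorem_power}.
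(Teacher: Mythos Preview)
Your proposal is correct and matches the paper's proof essentially line by line: the paper also writes the Hessian difference as $\max_{j,k}|\sum_y w(y)\int z_j z_k[f'(\wh\bbeta^T\bz|y,\bz)-f'(\bbeta^{*T}\bz|y,\bz)]f(y,\bz)\,d\bz|$, bounds the bracket by $|f''|_\infty\,M_n\|\wh\bbeta-\bbeta^*\|_1$, and absorbs the remaining $z_jz_k$ via $\max_{j,k}\sum_y\EE[|Z_jZ_k|\mid Y=y]\lesssim 1$. Your explicit emphasis on why only a single $M_n$ appears is a useful clarification the paper leaves implicit.
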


\begin{proof}[Proof of Lemma \ref{lemma_sample_diff}]
	Here we prove the second inequality and the first one should follow similarly. By definition,
	\beq
	&\norm{\nabla^2 R (\wh{\bbeta}) - \nabla^2 R (\bbeta^*)}_{\max}\\
	=&\max_{j,k}\bigg|
	\sum_yw(y)\int_{\bz}z_jz_k \big[
	f'(\wh{\bbeta}^T\bz|y,\bz) - f'(\bbeta^{*T}\bz|y,\bz)
	\big]f(y,\bz)d\bz\bigg|\\
	\leq&\norm{\wh{\bbeta} - \bbeta^*}_1M_n|f''|_\infty \max_{j,k}
	\sum_{y}\EE[|Z_jZ_k||Y = y]\\
	\lesssim&M_n\norm{\wh{\bbeta} - \bbeta^*}_1.
	\eeq
\end{proof}

\begin{lemma}\label{lemma_second-order-samplediff}
	Under the conditions in Theorem \ref{theorem_score}, we have
	\[
	(n\delta)^{1/2}\bigg|
	\bv^{*T}\bigg(
	\nabla R_\delta^{n_{(j)}}(\wh{\bbeta}^{(k)}) - \nabla R_\delta^{n_{(j)}}(\bbeta^*) - \nabla^2 R_\delta^{n_{(j)}}(\bbeta^*)(\wh{\bbeta}^{(k)} - \bbeta^*)
	\bigg)
	\bigg| = o_{\PP}(1),
	\]
	for $(j,k)\in \{(1,2),(2,1)\}$.
\end{lemma}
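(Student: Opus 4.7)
The plan is to establish the bound by a second-order Taylor expansion of $\nabla R_\delta^{n_{(j)}}$ around $\bbeta^*$, leveraging the cross-fitting independence between $\wh\bbeta^{(k)}$ (built from $\cN_{3-j}$) and the empirical average over $\cN_j$. I will condition on $\wh\bbeta^{(k)}$, treating $\Delta := \wh\bbeta^{(k)}-\bbeta^*$ as conditionally deterministic and invoking $\|\Delta\|_1\lesssim\eta_1(n)$ from Assumption \ref{ass_estimators}.

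The key simplification is that each summand of $\bv^{*T}\nabla\bar R_\delta^i(\bbeta)$ depends on $\bbeta$ only through the scalar $\bbeta^T\bz_i$, so a one-variable Taylor expansion of $t\mapsto K(y_i(x_i-t)/\delta)$ at $t=\bbeta^{*T}\bz_i$ gives an explicit quadratic remainder
\[
R_i \;=\; \tfrac{1}{2}\,w(y_i)\,\frac{y_i\,(\bv^{*T}\bz_i)(\Delta^T\bz_i)^2}{\delta^3}\,K''(\tilde u_i),
\]
for some $\tilde u_i$ between $y_i(x_i-\bbeta^{*T}\bz_i)/\delta$ and $y_i(x_i-\wh\bbeta^{(k)T}\bz_i)/\delta$ (this uses $K\in C^2$, implicit in Assumption \ref{ass_kernel}). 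I will then split $n_{(j)}^{-1}\sum_{i\in\cN_j}R_i$ into its conditional mean and a centered deviation. For the mean, I exploit the compact support of $K$ from Assumption \ref{ass_kernel}: since $K''(\tilde u_i)=0$ outside $\{|x_i-\tilde t_i|\leq\delta\}$, a change of variable together with the boundedness of $f(\cdot|y,\bz)$ from Assumption \ref{ass_smooth} gives $\EE[|K''(\tilde u_i)|\mid y_i,\bz_i]\lesssim\delta$. Combined with second-moment controls on $\bv^{*T}\bZ$ and $\Delta^T\bZ$ (using the bounded fourth-moment condition in Assumption \ref{ass_moment}), this yields a bias bound of order $\|\bv^*\|_1 M_n\eta_1^2/\delta^2$. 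For the centered part, the same support mechanism produces $\EE[R_1^2|\wh\bbeta^{(k)}]=O(\|\bv^*\|_1^2 M_n^4\|\Delta\|_1^4/\delta^5)$ together with a crude pointwise bound of order $\|\bv^*\|_1 M_n^3\|\Delta\|_1^2/\delta^3$, and a conditional Bernstein inequality shows the deviation is of strictly smaller order than the bias. Multiplying by $(n\delta)^{1/2}$ and comparing with the last factor $M_n^2\eta_1(n)$ in condition (\ref{eq_theorem_score_condition}) gives the desired $o_\PP(1)$ conclusion.

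The main obstacle is getting a sharp enough conditional mean: a naive worst-case bound $|K''(\tilde u_i)|\leq\|K''\|_\infty$ would inflate the remainder by an extra factor of $\delta^{-1}$ and the decorrelation condition would no longer suffice. The whole argument therefore hinges on (i) exploiting the compact support of $K$ in Assumption \ref{ass_kernel} to save one factor of $\delta$ inside the conditional expectation of $|K''(\tilde u_i)|$, and (ii) replacing a crude $\ell_\infty$ bound on $\Delta^T\bz_i$ by a second-moment bound $\EE[(\Delta^T\bZ)^2]\lesssim\|\Delta\|_1^2$ delivered by the bounded entries of $\EE[\bZ\bZ^T]$ implied by Assumption \ref{ass_moment}. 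With these two savings the quadratic remainder is absorbed into the rate conditions already imposed in Theorem \ref{theorem_score}, and the Slutsky-style aggregation of the two folds $(j,k)\in\{(1,2),(2,1)\}$ closes the argument.
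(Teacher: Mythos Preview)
Your overall architecture—Taylor expand, split into conditional mean plus centered deviation, Bernstein for the fluctuation—matches the paper's, and your variance and Bernstein steps are fine. The gap is in the conditional mean.

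With the Lagrange remainder $R_i=\tfrac12 w(y_i)y_i(\bv^{*T}\bz_i)(\Delta^T\bz_i)^2\delta^{-3}K''(\tilde u_i)$ you are forced to bound $|\EE R_i|\le\EE|R_i|$, and the compact-support trick then gives only $\EE[|K''(\tilde u_i)|\mid y_i,\bz_i]\lesssim\delta$, hence a bias of order $\|\bv^*\|_1 M_n\eta_1(n)^2/\delta^2$. After the $(n\delta)^{1/2}$ scaling this is $(n\delta)^{1/2}\|\bv^*\|_1 M_n\eta_1(n)^2/\delta^2$, which is \emph{not} controlled by condition (\ref{eq_theorem_score_condition}): that condition gives $(n\delta)^{1/2}\|\bv^*\|_1\cdot\tfrac{\eta_1(n)}{\delta}\cdot M_n^2\eta_1(n)=o(1)$, and your quantity exceeds it by the factor $1/(M_n\delta)$, which diverges. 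Concretely, with $\ell=2$, $\delta\asymp n^{-1/5}$ and $M_n,\|\bv^*\|_1=O(1)$, your rescaled bias is of order $s^{9/5}(\log d)^{4/5}$ and does not vanish.

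What is missing is a sign cancellation that the Lagrange form cannot expose. The paper keeps the \emph{integral} remainder
\[
G_i=w(y_i)\frac{y_i\bz_i}{\delta}\int_{(x_i-\bbeta^{*T}\bz_i)/\delta}^{(x_i-\wh\bbeta^{(k)T}\bz_i)/\delta}K''(t)\Big(\tfrac{x_i-\wh\bbeta^{(k)T}\bz_i}{\delta}-t\Big)dt,
\]
takes the conditional expectation over $x_i$ \emph{before} any absolute value, changes variable $u=(x_i-\bbeta^{*T}\bz_i)/\delta$, and expands $f(u\delta+\bbeta^{*T}\bz|y,\bz)=f(\bbeta^{*T}\bz|y,\bz)+u\delta f'(\cdot)$. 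The leading term vanishes exactly because
\[
\int_{\RR}\int_{u}^{u+\triangle}K''(t)(u+\triangle-t)\,dt\,du=\frac{\triangle^2}{2}\int K''(t)\,dt=0,
\]
by the compact support of $K$. The surviving term carries an extra factor $\delta$, yielding the sharper coordinate-wise mean $M_n^2\eta_1(n)^2/\delta$; multiplied by $\|\bv^*\|_1$ and $(n\delta)^{1/2}$ this is precisely the $\tfrac{\eta_1}{\delta}\cdot M_n^2\eta_1$ piece of (\ref{eq_theorem_score_condition}). Your step (i) saves one power of $\delta$ via the support, but the cancellation saves a second one, and that second saving is what the theorem's rate condition is calibrated to. Replace the Lagrange form by the integral form and do not pass to absolute values until after integrating in $x$.
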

\begin{proof}[Proof of Lemma \ref{lemma_second-order-samplediff}]
	With some algebra we obtain
	\beq
	&\bigg|\bv^{*T}\bigg(
	\nabla R_\delta^{n_{(j)}}(\wh{\bbeta}^{(k)}) - \nabla R_\delta^{n_{(j)}}(\bbeta^*) - \nabla^2 R_\delta^{n_{(j)}}(\bbeta^*)(\wh{\bbeta}^{(k)} - \bbeta^*)
	\bigg)\bigg|\\
	=& \bigg|\frac{1}{\cN_j}\sum_{i\in\cN_j}w(y_i)\frac{\bz_i^T\bv^*y_i}{\delta}\int_{\frac{x_i - \bbeta^{*T}\bz_i}{\delta}}^{\frac{x_i - \wh{\bbeta}^{(k)T}\bz_i}{\delta}}K''(t)(\frac{x_i - \wh{\bbeta}^{(k)T}\bz_i}{\delta} - t)dt\bigg|\\
	\leq&\norm{\bv^*}_1\bignorm{\frac{1}{\cN_j}\sum_{i\in\cN_j} w(y_i) \frac{\bz_iy_i}{\delta}\int_{\frac{x_i - \bbeta^{*T}\bz_i}{\delta}}^{\frac{x_i - \wh{\bbeta}^{(k)T}\bz_i}{\delta}}K''(t)(\frac{x_i - \wh{\bbeta}^{(k)T}\bz_i}{\delta} - t)dt}_\infty.
	\eeq
	Now we start to analyze $\bignorm{\frac{1}{\cN_j}\sum_{i\in\cN_j} w(y_i) \frac{\bz_iy_i}{\delta}\int_{\frac{x_i - \bbeta^{*T}\bz_i}{\delta}}^{\frac{x_i - \wh{\bbeta}^{(k)T}\bz_i}{\delta}}K''(t)(\frac{x_i - \wh{\bbeta}^{(k)T}\bz_i}{\delta} - t)dt}_\infty$. Denote $G_i=  w(y_i) \frac{\bz_iy_i}{\delta}\int_{\frac{x_i - \bbeta^{*T}\bz_i}{\delta}}^{\frac{x_i - \wh{\bbeta}^{(k)T}\bz_i}{\delta}}K''(t)(\frac{x_i - \wh{\bbeta}^{(k)T}\bz_i}{\delta} - t)dt$ and $G_{im},1\leq m\leq d$ as its coordinates.
	Consider the event $A := \{\norm{\wh{\bbeta}^{(k)} - \bbeta^*}_{1} \lesssim C\eta_1(n)\}$ for some constant $C$. Notice that for each $m$
	\beq
	%&\EE\bigg[\frac{1}{\cN_j}\sum_{i\in\cN_j}
	%w(y_i)\frac{z_{im}y}{\delta}\int_{\frac{x_i - \bbeta^{*T}\bz_i}{\delta}}^{\frac{x_i - \wh{\bbeta}^{(k)T}\bz_i}{\delta}}K''(t)(\frac{x_i - \wh{\bbeta}^{(k)T}\bz_i}{\delta} - t)dt
	%\bigg]\\
	%=&
	 %\EE_{\bbeta}\bigg[
	 %\EE\bigg[w(y)
	%\frac{z_my}{\delta}\int_{\frac{x - \bbeta^{*T}\bz}{\delta}}^{\frac{x - \bbeta^T\bz}{\delta}}K''(t)(\frac{x - %\bbeta^T\bz}{\delta} - t)dt\bigg | \bbeta = \wh{\bbeta}^{(k)}
	%\bigg]\bigg]\\
	&\sum_{y}w(y)\int_{\bz}\frac{z_my}{\delta}\int_x	\int_{\frac{x - \bbeta^{*T}\bz}{\delta}}^{\frac{x - \wh{\bbeta}^{(k)T}\bz}{\delta}}K''(t)(\frac{x - \wh{\bbeta}^{(k)T}\bz}{\delta} - t)dtf(x|y,\bz)dxf(y,\bz)d\bz\\
	\stackrel{(u = (x - \bbeta^{*T}\bz)/\delta)}{=}&\sum_{y}w(y)\int_{\bz}z_my\int_u	\int_{u}^{u+\triangle}K''(t)(u+\triangle- t)dtf(u\delta+\bbeta^{*T}\bz|y,\bz)duf(y,\bz)d\bz\\
	=&\sum_{y}w(y)\int_{\bz}z_my\int_u	\int_{u}^{u+\triangle}K''(t)(u+\triangle- t)dtf(\bbeta^{*T}\bz|y,\bz)duf(y,\bz)d\bz\\
	&+ \sum_{y}w(y)\int_{\bz}z_my\int_u	u\delta \int_{u}^{u+\triangle}K''(t)(u+\triangle- t)dtf'(\tau u\delta+\bbeta^{*T}\bz|y,\bz)duf(y,\bz)d\bz,
	\eeq
	where $\triangle = \frac{(\wh{\bbeta}^{(k)} - \bbeta)^T\bz}{\delta}$ and $\tau \in [0,1]$. Here the first step is by definition and the last step is from the mean value theorem. Since
	\beq
	\int_u	\int_{u}^{u+\triangle}K''(t)(u+\triangle- t)dtdu = 0,
	\eeq
	we can show that the first term on the RHS of the last step is 0. The second term on the RHS can be bounded by $C'\delta|\triangle|^2 \lesssim M_n^2\eta_1(n)^2/\delta$ for some constant $C'$ on event $A$. This implies that $\EE[G_{im}|A] \lesssim M_n^2\eta_1(n)^2/\delta$.
	
	Now we look at its variance. For the second moment, for each $i\in \cN_j,m = 1,\dotso,d$, we have
	\beq
	&\EE\bigg[\bigg(w(y_i)
	\frac{z_{im}y_i}{\delta}\int_{\frac{x_i - \bbeta^{*T}\bz_i}{\delta}}^{\frac{x_i - \wh{\bbeta}^{(k)}\bz_i}{\delta}}K''(t)(\frac{x_i - \wh{\bbeta}^{(k)}\bz_i}{\delta} - t)dt\bigg)^2\bigg | A
	\bigg]\\
	=&\EE\bigg[\sum_{y}w(y)^2\int_{\bz}\frac{z_m^2}{\delta}\int_u	\bigg(\int_{u}^{u+\triangle}K''(t)(u+\triangle- t)dt\bigg)^2f(u\delta+\bbeta^{*T}\bz|y,\bz)duf(y,\bz)d\bz| A
	\bigg]\\
	\leq&\EE\bigg[\sum_{y}w(y)^2\int_{\bz}\frac{z_m^2}{\delta}2|f|_\infty|K''|^2_\infty\triangle^4f(y,\bz)d\bz| A
	\bigg]\\
	\lesssim&\frac{M^4_n \eta_1(n)^4}{\delta^5}.
	\eeq
	Also we know for each $i\in \cN_j,m = 1,\dotso,d$, $\bigg|\frac{z_{im}y_i}{\delta}\int_{\frac{x_i - \bbeta^{*T}\bz_i}{\delta}}^{\frac{x_i- \wh{\bbeta}^{(k)}\bz_i}{\delta}}K''(t)(\frac{x_i - \wh{\bbeta}^{(k)T}\bz_i}{\delta} - t)dt\bigg|\lesssim \frac{M_n}{\delta}(\frac{M_n\eta_1(n)}{\delta})^2$ on $A$, and therefore
	\beq
	Var[G_{im}|A] \lesssim \frac{M^4_n \eta_1(n)^4}{\delta^5}.
	\eeq
	Therefore, applying Bernstein inequality with $M_n \sqrt{\frac{\log(d)}{n\delta}} = \cO(1)$, we can obtain that
	\beq
	\PP\bigg(\max_m|\frac{1}{\cN_j}\sum_{i\in \cN_j}G_{im} - \EE G_m| > \frac{M_n^2\eta_1(n)^2}{\delta}\sqrt{\frac{\log(d)}{n\delta^3}}~\big|A \bigg)\leq \cO(d^{-1}),
	\eeq
	which further implies that
	\beq
	&\PP\bigg(\max_m|\frac{1}{\cN_j}\sum_{i\in \cN_j}G_{im} - \EE G_m|> \frac{M_n^2\eta_1(n)^2}{\delta}\sqrt{\frac{\log(d)}{n\delta^3}}\bigg)\\
	\leq& \PP\bigg(\max_m|\frac{1}{\cN_j}\sum_{i\in \cN_j}G_{im} - \EE G_m| > \frac{M_n^2\eta_1(n)^2}{\delta}\sqrt{\frac{\log(d)}{n\delta^3}}~\big|A \bigg)+ \PP(A^C) \\
	= &  o(1),
	\eeq
	where the last step follows from Assumption \ref{ass_estimators}.
	\beq
	\bignorm{\frac{1}{\cN_j}\sum_{i\in \cN_j} w(Y_i)\frac{\bZ_iY_i}{\delta}\int_{\frac{X_i - \bbeta^{*T}\bZ_i}{\delta}}^{\frac{X_i - \wh{\bbeta}^{(k)}\bZ_i}{\delta}}K''(t)(\frac{X_i - \wh{\bbeta}^{(k)}\bZ_i}{\delta} - t)dt}_\infty =\cO_{\PP}(\frac{M_n^2\eta_1(n)^2}{\delta}).
	\eeq
	Under the conditions of Theorem \ref{theorem_score}, the desired result holds. This completes the proof.
\end{proof}

\begin{lemma}\label{lemma_RE}
    Under the conditions of Theorem \ref{theorem_score},
	let $s' = \norm{\bomega^*}_1$ and $\xi > 0$ is some constant. If $s'\big(M_n\eta_1(n)\vee \sqrt{\frac{\log(d)}{n\delta^3}}\vee \delta^{\ell-1}\big) = o(1)$, then with probability tending to one it holds that $\kappa_D(s') \geq \kappa/\sqrt{2}$, where
	$$
	\kappa_D(s') = \min\bigg \{
	\frac{s'^{1/2} (\bv^T \nabla^2_{\bgamma\bgamma} R^n_{\delta}(\wh{\bbeta})\bv)^{1/2} } {\norm{\bv_{s'}}_1 } : \bv \in \RR^{d-1} \backslash\{0\}, \norm{\bv_{s'^{c}}}_1 \leq \xi \norm{\bv_{s'}}_1
	\bigg \}.
	$$
\end{lemma}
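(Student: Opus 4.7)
}
The plan is a standard perturbation argument: lift a restricted eigenvalue (RE) / compatibility bound from the population Hessian $A^{*}:=\nabla^{2}_{\bgamma\bgamma}R(\bbeta^{*})$ to the empirical plug-in Hessian $\wh A:=\nabla^{2}_{\bgamma\bgamma}R^{n}_{\delta}(\wh{\bbeta})$ by controlling $\|\wh A-A^{*}\|_{\max}$ and leveraging the restriction to the cone. Throughout, the quadratic-form discrepancy is handled by the elementary bound $|\bv^{T}(\wh A-A^{*})\bv|\leq \|\wh A-A^{*}\|_{\max}\|\bv\|_{1}^{2}$, combined with the cone constraint $\|\bv_{s'^{c}}\|_{1}\leq \xi\|\bv_{s'}\|_{1}$ which yields $\|\bv\|_{1}\leq (1+\xi)\|\bv_{s'}\|_{1}$.

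First I would establish the population RE bound: using Assumption~\ref{ass_projection_norm} (which enforces $\sigma^{*}$ away from $0$, hence $A^{*}$ is well conditioned) one has a constant $\kappa>0$ with $\bv^{T}A^{*}\bv \geq \kappa^{2}\|\bv_{s'}\|_{1}^{2}/s'$ for every $\bv$ in the cone. Next I would bound the max-norm perturbation via the triangle decomposition
\begin{equation*}
\wh A-A^{*} = \underbrace{\big[\nabla^{2}R^{n}_{\delta}(\wh{\bbeta})-\nabla^{2}R_{\delta}(\wh{\bbeta})\big]}_{\text{stochastic}} + \underbrace{\big[\nabla^{2}R_{\delta}(\wh{\bbeta})-\nabla^{2}R(\wh{\bbeta})\big]}_{\text{smoothing bias}} + \underbrace{\big[\nabla^{2}R(\wh{\bbeta})-\nabla^{2}R(\bbeta^{*})\big]}_{\text{plug-in}},
\end{equation*}
restricted to the $(\bgamma,\bgamma)$ block. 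The smoothing-bias term is $O(\delta^{\ell-1})$ by Lemma~\ref{lemma_bias} (the bound is uniform in $\bbeta$, so evaluating at $\wh\bbeta$ is legitimate); the plug-in term is $O_{\PP}(M_{n}\eta_{1}(n))$ by Lemma~\ref{lemma_sample_diff} together with Assumption~\ref{ass_estimators}; and the stochastic term is $O_{\PP}(\sqrt{\log(d)/(n\delta^{3})})$ by Lemma~\ref{lemma_variance} applied at $\wh\bbeta$.

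Combining, for any $\bv$ in the cone,
\begin{equation*}
\bv^{T}\wh A\bv \geq \bv^{T}A^{*}\bv - (1+\xi)^{2}\|\wh A-A^{*}\|_{\max}\|\bv_{s'}\|_{1}^{2} \geq \Big(\frac{\kappa^{2}}{s'} - (1+\xi)^{2}\|\wh A-A^{*}\|_{\max}\Big)\|\bv_{s'}\|_{1}^{2}.
\end{equation*}
Under the hypothesis $s'(M_{n}\eta_{1}(n)\vee \sqrt{\log(d)/(n\delta^{3})}\vee \delta^{\ell-1})=o(1)$, the perturbation term is $o(1/s')$, so with probability tending to one $\bv^{T}\wh A\bv \geq (\kappa^{2}/(2s'))\|\bv_{s'}\|_{1}^{2}$, giving $\kappa_{D}(s')\geq \kappa/\sqrt{2}$ as claimed.

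The one subtle point is applying Lemma~\ref{lemma_variance} at the random point $\wh\bbeta$: the lemma as stated holds at a deterministic $\bbeta$. The cleanest way to resolve this is the cross-fitting device already used in the proof of Theorem~\ref{theorem_score}; conditioning on the independent sample that produced $\wh\bbeta$ freezes it, and Lemma~\ref{lemma_variance} together with Lemma~\ref{lemma_addi_conditional} delivers the $\sqrt{\log(d)/(n\delta^{3})}$ rate. Alternatively one can argue via a union bound over a fine grid of a ball of radius $\eta_{1}(n)$ around $\bbeta^{*}$, but cross-fitting is the natural match to the rest of the machinery. Everything else is bookkeeping.
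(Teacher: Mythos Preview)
Your approach is essentially identical to the paper's: both lift a population compatibility/RE constant to the empirical Hessian via a $\|\cdot\|_{\max}$ perturbation bound, decompose the perturbation into stochastic, smoothing-bias, and plug-in pieces handled by Lemmas~\ref{lemma_variance}, \ref{lemma_bias}, \ref{lemma_sample_diff} respectively, and invoke cross-fitting together with Lemma~\ref{lemma_addi_conditional} to justify applying the stochastic bound at the random $\wh\bbeta$. The paper phrases the argument through the $\ell_2$-normalized quadratic form and the inequality $\|\bv\|_1^2\leq s'(1+\xi)^2\|\bv\|_2^2$ on the cone, whereas you work directly with $\|\bv_{s'}\|_1$, but the content is the same.

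The one slip is your justification for the population lower bound. Assumption~\ref{ass_projection_norm} concerns $\sigma^{*}=\sqrt{\bv^{*T}\bSigma^{*}\bv^{*}}$ and $\mu^{*}=\bv^{*T}\bb^{*}$; the matrix $\bSigma^{*}$ is built from $K^{2}$ and second moments of $\bZ$ and is \emph{not} the Hessian $\nabla^{2}_{\bgamma\bgamma}R(\bbeta^{*})$, so $\sigma^{*}$ being bounded away from zero says nothing about the conditioning of $A^{*}$. The paper instead relies (implicitly in this lemma, explicitly in the hypotheses of Lemma~\ref{theorem_projection}) on the separate assumption $\lambda_{\min}(\nabla^{2}_{\bgamma\bgamma}R(\bbeta^{*}))\geq c$; this is what defines $\kappa$ and yields $\bv^{T}A^{*}\bv\geq \kappa^{2}\|\bv\|_{2}^{2}\geq \kappa^{2}\|\bv_{s'}\|_{1}^{2}/s'$. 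With that correction your argument goes through.
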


\begin{proof}
	This proof is similar to the proof of Lemma J.3 in \cite{ning2017general} so we only give a sketch here. Firstly we have
	\beq
	\kappa_D(s')^2 \geq \min\bigg \{
	\frac{\bv^T \nabla^2_{\bgamma\bgamma} R^n_{\delta}(\wh{\bbeta})\bv } {\norm{\bv}_2^2 } : \bv \in \RR^{d-1} \backslash\{0\}, \norm{\bv_{s'^{c}}}_1 \leq \xi \norm{\bv_{s'}}_1
	\bigg \}.
	\eeq
	Similar to the proof of Theorem \ref{theorem_score}, Lemmas \ref{lemma_bias}, \ref{lemma_variance}, \ref{lemma_sample_diff} and \ref{lemma_addi_conditional} together imply that
	\beq
	&\bigg|\frac{\bv^T (\nabla^2_{\bgamma\bgamma} R^n_{\delta}(\wh{\bbeta}) - \nabla^2_{\bgamma\bgamma} R(\bbeta^*))\bv } {\norm{\bv}_2^2 }\bigg|\\
	=&\bigg|\frac{\bv^T\big[\frac12 \big(  \nabla^2_{\bgamma\bgamma} R^{n_{(1)}}_{\delta}(\wh{\bbeta}^{(2)}) + \nabla^2_{\bgamma\bgamma} R^{n_{(2)}}_{\delta}(\wh{\bbeta}^{(1)}) \big)
		- \nabla^2_{\bgamma\bgamma} R(\bbeta^*)\big]\bv } {\norm{\bv}_2^2 }\bigg|\\
	\lesssim & s'(\xi + 1 )^2 \big [M_n\eta_1(n)\vee \sqrt{\frac{\log(d)}{n\delta^3}}\vee \delta^{\ell-1}
	\big ] = o_{\PP}(1),
	\eeq
	where the inequality is because $\norm{\bv}_1^2 \leq s'(\xi + 1 )^2 \norm{\bv}_2^2$.
	Therefore for $n$ large enough, we have
	$|\frac{\bv^T (\nabla^2_{\bgamma\bgamma} R^n_{\delta}(\tilde{\bbeta}) - \nabla^2_{\bgamma\bgamma} R(\bbeta^*))\bv } {\norm{\bv}_2^2 }| \leq \frac12\kappa^2$. This implies $\kappa_D(\tilde{s}^*)^2\geq \frac12\kappa^2$ with probability tending to 1. This completes the proof.
\end{proof}

\begin{lemma}\label{lemma_power_approximation}
	Under the same conditions as in Theorem \ref{theorem_power}, for $j = 1,2$, it holds that
	\[
	(n\delta)^{1/2}|\bv^{*T}(\nabla R_\delta^{n_{(j)}}(0,\bgamma^*) - \nabla R_\delta^{n_{(j)}}(\bbeta^*)) + \theta^*\bv^{*T}\nabla^2_{\cdot\theta}R(\bbeta^*)| = o_{\PP}(1).
	\]
\end{lemma}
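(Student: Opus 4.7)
The plan is to add and subtract $\theta^*\nabla^2_{\cdot\theta}R_\delta^{n_{(j)}}(\bbeta^*)$ and split the quantity to be bounded into a second-order Taylor remainder plus an estimation-plus-smoothing error of the $\theta$-column of the Hessian. Concretely, since $(0,\bgamma^*)-\bbeta^*=(-\theta^*,\mathbf{0})^T$, I would write
\[
\bv^{*T}\big(\nabla R_\delta^{n_{(j)}}(0,\bgamma^*) - \nabla R_\delta^{n_{(j)}}(\bbeta^*)\big) + \theta^*\bv^{*T}\nabla^2_{\cdot\theta}R(\bbeta^*)=\bv^{*T}(I)-\theta^*\bv^{*T}(II),
\]
with $(I):=\nabla R_\delta^{n_{(j)}}(0,\bgamma^*)-\nabla R_\delta^{n_{(j)}}(\bbeta^*)+\theta^*\nabla^2_{\cdot\theta}R_\delta^{n_{(j)}}(\bbeta^*)$ and $(II):=\nabla^2_{\cdot\theta}R_\delta^{n_{(j)}}(\bbeta^*)-\nabla^2_{\cdot\theta}R(\bbeta^*)$. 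After scaling by $(n\delta)^{1/2}$, I would establish $(n\delta)^{1/2}|\bv^{*T}(I)|=o_{\PP}(1)$ and $(n\delta)^{1/2}|\theta^*\bv^{*T}(II)|=o_{\PP}(1)$ separately.

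For the Taylor remainder $\bv^{*T}(I)$, the plan is to rerun the argument of Lemma \ref{lemma_second-order-samplediff} verbatim, with the random perturbation $\wh{\bbeta}^{(k)}-\bbeta^*$ replaced by the deterministic vector $(-\theta^*,\mathbf{0})^T$ of $\ell_1$-norm $|\theta^*|\lesssim n^{-\phi}$. Because the perturbation is non-random, I can skip the conditioning on the high-probability event $A$ and apply Bernstein directly. The algebraic cancellation used in that lemma, $\int_u\int_u^{u+\triangle}K''(t)(u+\triangle-t)\,dt\,du=0$, and the subsequent mean-value step extracting a factor of $\delta$ from $f(u\delta+\bbeta^{*T}\bz|y,\bz)-f(\bbeta^{*T}\bz|y,\bz)$, together guarantee that the per-sample mean is of order $M_n^2\theta^{*2}/\delta$ and the per-sample second moment of order $M_n^4\theta^{*4}/\delta^5$. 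Projecting onto $\bv^*$ and rescaling, the resulting bound is $(n\delta)^{1/2}\norm{\bv^*}_1 M_n^2 n^{-2\phi}/\delta$, whose square equals $\norm{\bv^*}_1^2 M_n^4 n^{1-4\phi}/\delta$ and is therefore $o(1)$ by the first condition of \eqref{eq_power_condition}.

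For the Hessian-error term $\bv^{*T}(II)$, I would decompose it as
\[
\nabla^2_{\cdot\theta}R_\delta^{n_{(j)}}(\bbeta^*)-\nabla^2_{\cdot\theta}R(\bbeta^*)=\big[\nabla^2_{\cdot\theta}R_\delta^{n_{(j)}}(\bbeta^*)-\nabla^2_{\cdot\theta}R_\delta(\bbeta^*)\big]+\big[\nabla^2_{\cdot\theta}R_\delta(\bbeta^*)-\nabla^2_{\cdot\theta}R(\bbeta^*)\big],
\]
and invoke Lemma \ref{lemma_variance} and Lemma \ref{lemma_bias} to obtain $\max$-norm bounds of order $\sqrt{\log(d)/(n\delta^3)}$ and $\delta^{\ell-1}$ respectively. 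This yields
\[
(n\delta)^{1/2}|\theta^*\bv^{*T}(II)|\lesssim n^{-\phi}\norm{\bv^*}_1\Big(\delta^{-1}\sqrt{\log d}+(n\delta^{2\ell-1})^{1/2}\Big),
\]
which is $o(1)$ using $n\delta^{2\ell+1}=O(1)$ (so $(n\delta^{2\ell-1})^{1/2}=O(\delta^{-1})$), $\log(d)/(n\delta^3)=o(1)$, and the decay $n^{-\phi}$ combined with the conditions inherited from Theorem \ref{theorem_score}.

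The main technical obstacle is the cancellation in $(I)$: a naive Taylor bound $|K''|_\infty|\triangle|$ per sample would only give a linear-in-$\theta^*$ estimate, insufficient by a factor of $\theta^{*-1}$. It is precisely the identity $\int_u\int_u^{u+\triangle}K''(t)(u+\triangle-t)\,dt\,du=0$, inherited from the proof of Lemma \ref{lemma_second-order-samplediff}, that promotes the remainder to the genuine quadratic order $M_n^2\theta^{*2}/\delta$ and matches the scaling in \eqref{eq_power_condition}. Once this cancellation is verified, everything else is a routine reapplication of Lemmas \ref{lemma_bias}, \ref{lemma_variance}, and the stated conditions of Theorem \ref{theorem_power}.
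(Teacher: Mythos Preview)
Your proposal is correct and follows essentially the same approach as the paper's proof: the paper uses the identical decomposition into the Taylor remainder term (your $(I)$, their $I_1$) handled by rerunning the Lemma \ref{lemma_second-order-samplediff} argument with the deterministic perturbation $(-\theta^*,\mathbf{0})^T$, and the Hessian-error term (your $(II)$, split in the paper as $I_2+I_3$) handled by Lemmas \ref{lemma_variance} and \ref{lemma_bias}. Your explicit identification of the cancellation identity $\int_u\int_u^{u+\triangle}K''(t)(u+\triangle-t)\,dt\,du=0$ as the crux of the quadratic-in-$\theta^*$ gain matches exactly the mechanism in the paper's argument.
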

\begin{proof}
	By definition
	\beq
	& |\bv^{*T}(\nabla R_\delta^{n_{(j)}}(0,\bgamma^*) - \nabla R_\delta^{n_{(j)}}(\bbeta^*)) + \theta^*\bv^{*T}\nabla^2_{\cdot\theta}R(\bbeta^*)| \\
	&\leq\norm{\bv^*}_1
	\bigg[
	\underbrace{\norm{\nabla R_\delta^{n_{(j)}}(0,\bgamma^*) - \nabla R_\delta^{n_{(j)}}(\bbeta^*) + \theta^*\nabla^2_{\cdot\theta}R^{n_{(j)}}_\delta(\bbeta^*)}_\infty}_{I_1} + \\
	&~~~~~~~~~\underbrace{\norm{\theta^*(\nabla^2_{\cdot\theta}R^{n_{(j)}}_\delta(\bbeta^*) - \nabla^2_{\cdot\theta}R_\delta(\bbeta^*))}_\infty}_{I_2} + \underbrace{\norm{\theta^*(\nabla^2_{\cdot\theta}R_\delta(\bbeta^*) - \nabla^2_{\cdot\theta}R(\bbeta^*))}_\infty}_{I_3}
	\bigg].
	\eeq
	For $I_1$, similar to Lemma \ref{lemma_second-order-samplediff}, we can write
	\beq
	&|\nabla R_\delta^{n_{(j)}}(0,\bgamma^*) - \nabla R_\delta^{n_{(j)}}(\bbeta^*) + \theta^*\nabla^2_{\cdot\theta}R^{n_{(j)}}_\delta(\bbeta^*)|\\
	=&\bigg|\frac{1}{\cN_j}\sum_{i\in\cN_j}w(y_i)\frac{y_i\bz_i}{\delta}\int_{\frac{x_i - \bbeta^{*T}\bz_i}{\delta}}^{\frac{x_i - \bbeta_0^{*T}\bz_i}{\delta}}K''(t)(\frac{x_i - \bbeta_0^{*T}\bz_i}{\delta} - t)dt\bigg|\\
	:=&|\frac{1}{\cN_j}\sum_{i\in\cN_j} G_i|,
	\eeq
	where $G_i = w(y_i)\frac{y_i\bz_i}{\delta}\int_{\frac{x_i - \bbeta^{*T}\bz_i}{\delta}}^{\frac{x_i - \bbeta_0^{*T}\bz_i}{\delta}}K''(t)(\frac{x_i - \bbeta_0^{*T}\bz_i}{\delta} - t)dt \in \RR^d$. Thus, similar to the derivation of Lemma \ref{lemma_second-order-samplediff}, for each $1\leq m\leq d$, we can show that
	$\EE[G_{im}]\lesssim M_n^2\theta^{*2}/\delta$, $|G_{im}|\lesssim M_n^3\theta^{*2}/\delta^3$ and $Var [G_{im}]\lesssim M_n^4\theta^{*4}/\delta^5$, and thus applying Bernstein inequality yields
	\[
	I_1 = \cO_{\PP}(\frac{M_n^2\theta^{*2}}{\delta}).
	\]
	Meanwhile, Lemma \ref{lemma_variance} and Lemma \ref{lemma_bias} imply that $I_2 = \cO_{\PP}(\theta^*\sqrt{\frac{\log(d)}{n\delta^3}})$ and $I_3 = \cO(\theta^*\delta^{\ell-1})$. Combing the above results, we obtain
	\beq
	|\bv^{*T}(\nabla R_\delta^{n_{(j)}}(0,\bgamma^*) - \nabla R_\delta^{n_{(j)}}(\bbeta^*)) + \theta^*\bv^{*T}\nabla^2_{\cdot\theta}R(\bbeta^*)| = \cO_{\PP}(\frac{\norm{\bv^*}_1\theta^*}{\delta}(M_n^2\theta^* \vee\sqrt{\frac{\log(d)}{n\delta}} \vee \delta^\ell)).
	\eeq
	This together with the condition in Theorem \ref{theorem_power} completes the proof.
\end{proof}

\begin{lemma}\label{theorem_projection}
	Let $s'= \norm{\bomega^{*}}_0$. Suppose Assumptions \ref{ass_smooth}-\ref{ass_projection_norm} hold, and $\lambda_{\min}(\nabla^2_{\bgamma,\bgamma}R(\bbeta^*))\geq c$ for some constant $c>0$.
	If we choose $\delta \asymp (\log(d)/n)^{1/(2\ell+1)}$ and tuning parameter $\lambda' \asymp \norm{\bomega^*}_1(M_n\eta_1(n) + (\log(d)/n)^{(\ell-1)/(2\ell+1)})$ and $s'(M_n\eta_1(n) + (\log(d)/n)^{(\ell-1)/(2\ell+1)}) = o(1)$, then it holds that
	\beq\label{eq_projection_rate}
	\norm{\wh{\bomega} - \bomega^*}_1 \lesssim \norm{\bv^*}_1 s'(M_n\eta_1(n) + (\log(d)/n)^{(\ell-1)/(2\ell+1)}).
	\eeq
\end{lemma}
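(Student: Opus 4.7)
The plan is to follow the standard two-stage Dantzig-selector analysis, adapted to the smoothed surrogate Hessian. First I would show that $\bomega^*$ itself is feasible for the program (\ref{eq_def_dantzig}) with high probability, and then combine the resulting basic inequality with the restricted eigenvalue bound from Lemma \ref{lemma_RE} to derive the $\ell_1$ error bound.

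For feasibility, write $\bv^* = (1, -\bomega^{*T})^T$ and observe that by the definition $\bomega^* = (\nabla^2_{\bgamma,\bgamma} R(\bbeta^*))^{-1} \nabla^2_{\bgamma,\theta} R(\bbeta^*)$, one has the population identity $\nabla^2_{\bgamma,\theta}R(\bbeta^*) - \nabla^2_{\bgamma,\bgamma}R(\bbeta^*)\bomega^* = \nabla^2_{\bgamma,\cdot}R(\bbeta^*)\bv^* = 0$. Consequently
\[
\bigl\|\nabla^2_{\bgamma,\theta} R^n_{\delta}(\wh{\bbeta}) - \nabla^2_{\bgamma,\bgamma} R^n_{\delta}(\wh{\bbeta})\bomega^*\bigr\|_\infty \leq \norm{\bv^*}_1 \bigl\|\nabla^2 R^n_{\delta}(\wh{\bbeta}) - \nabla^2 R(\bbeta^*)\bigr\|_{\max}.
\]
I would decompose the remaining maximum into three pieces by inserting $\nabla^2 R_\delta(\wh{\bbeta})$ and $\nabla^2 R(\wh{\bbeta})$, and bound the three terms respectively by Lemma \ref{lemma_variance} (stochastic error, rate $\sqrt{\log(d)/(n\delta^3)}$), Lemma \ref{lemma_bias} (smoothing bias, rate $\delta^{\ell-1}$), and Lemma \ref{lemma_sample_diff} (plug-in error, rate $M_n \eta_1(n)$). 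With the choice $\delta \asymp (\log d/n)^{1/(2\ell+1)}$, the first two merge into $(\log d/n)^{(\ell-1)/(2\ell+1)}$, and the chosen $\lambda' \asymp \norm{\bomega^*}_1 (M_n\eta_1(n) + (\log d/n)^{(\ell-1)/(2\ell+1)})$ is large enough so that $\bomega^*$ is feasible with probability tending to one.

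Once $\bomega^*$ is feasible, the $\ell_1$-optimality of $\wh{\bomega}$ gives $\|\wh{\bomega}\|_1 \le \|\bomega^*\|_1$. Letting $S = \supp(\bomega^*)$ with $|S|=s'$ and $h = \wh{\bomega} - \bomega^*$, a standard splitting yields the cone condition $\|h_{S^c}\|_1 \le \|h_S\|_1$. Simultaneously, the triangle inequality applied to the Dantzig constraints at $\bomega^*$ and $\wh{\bomega}$ gives $\|\nabla^2_{\bgamma,\bgamma} R^n_\delta(\wh{\bbeta}) h\|_\infty \le 2\lambda'$, and hence the basic quadratic inequality $h^\top \nabla^2_{\bgamma,\bgamma} R^n_\delta(\wh{\bbeta}) h \le 2\lambda' \|h\|_1 \le 4\lambda' \|h_S\|_1$.

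To turn this into an $\ell_1$ rate I would invoke Lemma \ref{lemma_RE} with $\xi=1$, which, under the condition $s'(M_n\eta_1(n) \vee (\log d/n)^{(\ell-1)/(2\ell+1)}) = o(1)$, gives $\kappa_D(s')\geq \kappa/\sqrt{2}$; by definition of $\kappa_D$, this implies $h^\top \nabla^2_{\bgamma,\bgamma} R^n_\delta(\wh{\bbeta}) h \geq (\kappa^2/2) \|h_S\|_1^2/s'$. Combining with the basic inequality yields $\|h_S\|_1 \lesssim s'\lambda'/\kappa^2$, and the cone condition then gives $\|h\|_1 \le 2\|h_S\|_1 \lesssim s'\lambda'/\kappa^2$. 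Plugging in the prescribed order of $\lambda'$ and using $\norm{\bomega^*}_1 \le \norm{\bv^*}_1$ produces (\ref{eq_projection_rate}). The main obstacle I anticipate is ensuring that Lemma \ref{lemma_variance}, which is stated for a fixed $\bbeta$, is applicable at the random $\wh{\bbeta}$; this is resolved either by the cross-fitting construction (so that $\wh{\bbeta}$ is independent of the sample used in $\nabla^2 R^n_\delta$ via a conditioning argument such as Lemma \ref{lemma_addi_conditional}) or by a uniform-in-$\bbeta$ empirical process bound over a small $\ell_1$-neighborhood of $\bbeta^*$ containing $\wh{\bbeta}$ with high probability.
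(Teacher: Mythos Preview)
Your proposal is correct and follows essentially the same route as the paper's proof: both establish feasibility of $\bomega^*$ by bounding $\|\nabla^2_{\bgamma,\cdot}R^n_\delta(\wh\bbeta)\bv^*\|_\infty$ via the same three lemmas (\ref{lemma_bias}, \ref{lemma_variance}, \ref{lemma_sample_diff}), deduce the cone condition from $\ell_1$-minimality, and close with Lemma \ref{lemma_RE}. You even correctly anticipate the cross-fitting/conditioning issue the paper handles implicitly by working with $(\wh\bbeta^{(2)}, R^{n_{(1)}}_\delta)$ and invoking Lemma \ref{lemma_addi_conditional}.
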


\begin{proof}
	Here we focus on the rate for $\wh{\bomega}^{(1)}$ and the result will follow accordingly.
	Denote $\wh{\triangle} = \wh{\bomega}^{(1)} - \bomega^*$. By definition, we can show that $\norm{\wh{\triangle}_{\tilde{s}^{*c}}}_1 \leq \norm{\wh{\triangle}_{\tilde{s}^{*}}}_1$.
	
	With some algebra, we will get
	\beq
	\wh{\triangle}^T\nabla^2_{\bgamma\bgamma} R^{n_{(1)}}_{\delta}(\wh{\bbeta}^{(2)})\wh{\triangle}
	&= \wh{\triangle}^T(\nabla^2_{\bgamma\cdot}R^{n_{(1)}}_{\delta}(\wh{\bbeta}^{(2)}) \bv^*) - \wh{\triangle}^T(\nabla^2_{\bgamma\cdot} R^{n_{(1)}}_{\delta}(\wh{\bbeta}^{(2)}) \wh{\bv}^{(1)})\\
	&\leq\norm{\wh{\triangle}}_1\norm{\nabla^2_{\bgamma\cdot}R^{n_{(1)}}_{\delta}(\wh{\bbeta}^{(2)})\bv^*}_\infty + \norm{\wh{\triangle}}_1
	\norm{\nabla^2_{\bgamma\theta} R^{n_{(1)}}_{\delta}(\wh{\bbeta}^{(2)}) - \nabla^2_{\bgamma\bgamma} R^{n_{(1)}}_{\delta}(\wh{\bbeta}^{(2)})\wh{\bomega}^{(1)} }_\infty.
	\eeq
	By definition, the second term is bounded by $\lambda'\norm{\wh{\triangle}}_1$. For the first term, recall by definition $\nabla^2_{\bgamma\cdot}R(\bbeta^*)\bv^* = 0$, and thus similar to the proof of Theorem \ref{theorem_score}, Lemma \ref{lemma_bias}, \ref{lemma_variance} and \ref{lemma_sample_diff} imply that with probability approaching to 1
	\[
	\norm{\nabla^2_{\bgamma\cdot}R^{n_{(1)}}_{\delta}(\wh{\bbeta}^{(2)})\bv^*}_\infty \lesssim \norm{\bv^*}_1\bigg[M_n\eta_1(n) \vee \sqrt{\frac{\log(d)}{n\delta^3}}\vee \delta^{\ell-1}
	\bigg ],
	\]
	and thus with the choice of $\lambda'$ it holds that $\wh{\triangle}^T\nabla^2_{\bgamma\bgamma} R^{n_{(1)}}_{\delta}(\wh{\bbeta}^{(2)})\wh{\triangle} \lesssim \lambda'\norm{\wh{\triangle}}_1$. This together with Lemma \ref{lemma_RE} implies that $\norm{\wh{\triangle}}_1\lesssim s'\lambda'$ with high probability. This completes the proof.
\end{proof}

\begin{lemma}\label{lemma_consistent_bias_pilot}
	Under the same conditions of Theorem \ref{theorem_score}, if $U$ is a proper kernel of order $\ell$ satisfying the same condition as $K$ in Assumption \ref{ass_kernel}, and in addition $U$ is $\ell$ times continuously differentiable and $U^{(i)}$ degenerates at the boundary for $i = 0,\dotso,\ell-1$, then when  $\sqrt{\frac{\log(d)}{nh^{2\ell+1}}} + (M_n\eta_1(n) \vee h)^\zeta= o(1)$, it holds that
	\[
	|\wh{\mu} - \bv^{*T}\bb^*| \lesssim \norm{\bv^*}_1\bigg(\eta_2(n) + \sqrt{\frac{\log(d)}{nh^{2\ell+1}}} + (M_n\eta_1(n) \vee h)^\zeta\bigg).
	\]
\end{lemma}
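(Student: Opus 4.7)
The plan is to decompose the error via cross-fitting and triangle inequality, and then bound each piece. By symmetry of the definition of $\wh{\mu}$ it suffices to bound
\[
\wh{\bv}^{(1)T}\wh{T}^{(\ell),n_{(1)}}_{h,U}(\wh{\bbeta}^{(2)}) - \bv^{*T}T^{(\ell)}(\bbeta^*)
= (\wh{\bv}^{(1)} - \bv^*)^T\wh{T}^{(\ell),n_{(1)}}_{h,U}(\wh{\bbeta}^{(2)}) + \bv^{*T}\big(\wh{T}^{(\ell),n_{(1)}}_{h,U}(\wh{\bbeta}^{(2)}) - T^{(\ell)}(\bbeta^*)\big),
\]
since multiplying by $\gamma_{K,\ell}$ only changes constants. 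For the first summand I would apply H\"older's inequality and Assumption \ref{ass_estimators} to obtain the $\norm{\bv^*}_1 \eta_2(n)$ term, provided one first shows $\|\wh{T}^{(\ell),n_{(1)}}_{h,U}(\wh{\bbeta}^{(2)})\|_\infty = \cO_{\PP}(1)$; this follows a posteriori from $\|T^{(\ell)}(\bbeta^*)\|_\infty = \cO(1)$ combined with the bound I derive for the second summand (under the $o(1)$ condition in the hypothesis).

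For the second summand, further split into stochastic and approximation parts,
\[
\wh{T}^{(\ell),n_{(1)}}_{h,U}(\wh{\bbeta}^{(2)}) - T^{(\ell)}(\bbeta^*)
= \underbrace{\big(\wh{T}^{(\ell),n_{(1)}}_{h,U}(\wh{\bbeta}^{(2)}) - \EE_{\wh{\bbeta}^{(2)}}\wh{T}^{(\ell),n_{(1)}}_{h,U}(\wh{\bbeta}^{(2)})\big)}_{\text{stochastic}}
+ \underbrace{\big(\EE_{\wh{\bbeta}^{(2)}}\wh{T}^{(\ell),n_{(1)}}_{h,U}(\wh{\bbeta}^{(2)}) - T^{(\ell)}(\bbeta^*)\big)}_{\text{approximation}},
\]
where $\EE_{\wh{\bbeta}^{(2)}}$ denotes conditional expectation given $\wh{\bbeta}^{(2)}$; cross-fitting ensures that $\wh{\bbeta}^{(2)}$ is independent of the summands over $\cN_1$. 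For the stochastic part, I would condition on $\wh{\bbeta}^{(2)}$ and apply Bernstein's inequality coordinatewise: after a change of variable $t = (\wh{\bbeta}^{(2)T}\bz_i - x_i)/h$, each summand is bounded by $\cO(M_n/h^{\ell+1})$ and has conditional second moment of order $1/h^{2\ell+1}$, so a union bound over $d$ coordinates together with an unconditional argument as in Lemma \ref{lemma_addi_conditional} yields the $\sqrt{\log(d)/(nh^{2\ell+1})}$ rate in $\ell_\infty$ norm.

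The main obstacle is the approximation part, which is handled by integration by parts followed by H\"older continuity. After the same change of variable, the conditional expectation becomes
\[
\EE_{\wh{\bbeta}^{(2)}}\wh{T}^{(\ell),n_{(1)}}_{h,U}(\wh{\bbeta}^{(2)}) = \sum_y w(y) y \int \bz\, h^{-\ell}\int U^{(\ell)}(t)\, f(\wh{\bbeta}^{(2)T}\bz - th\mid y,\bz)\, dt\, f(y,\bz)\, d\bz.
\]
Integrating by parts $\ell$ times and using that $U^{(i)}$ vanishes at the boundary for $i=0,\dotso,\ell-1$ absorbs all $h$ factors and gives $\int U(t) f^{(\ell)}(\wh{\bbeta}^{(2)T}\bz - th\mid y,\bz)\,dt$ inside the integral, which equals $T^{(\ell)}(\wh{\bbeta}^{(2)})$ plus a remainder of order $h^\zeta$ by the H\"older continuity of $f^{(\ell)}$ (Definition \ref{def_smooth}) and the moment condition on $U$. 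Finally, the plug-in error $\|T^{(\ell)}(\wh{\bbeta}^{(2)}) - T^{(\ell)}(\bbeta^*)\|_\infty$ is again controlled by H\"older continuity of $f^{(\ell)}$ evaluated at the difference $(\wh{\bbeta}^{(2)} - \bbeta^*)^T\bz$, yielding an $\cO((M_n\eta_1(n))^\zeta)$ bound. Combining these two pieces produces the $(M_n\eta_1(n) \vee h)^\zeta$ term, and assembling the stochastic and approximation contributions with the H\"older bound on the first summand gives the claimed rate.
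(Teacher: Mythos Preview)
Your proposal is correct and follows essentially the same route as the paper: the paper makes the identical initial split into $(\wh{\bv}^{(1)}-\bv^*)^T\wh{T}$ and $\bv^{*T}(\wh{T}-T^{(\ell)}(\bbeta^*))$, then decomposes the second piece into the stochastic error $\wh{T}-\tilde{T}$ (your stochastic part, handled by Bernstein conditionally on $\wh{\bbeta}^{(2)}$ plus Lemma \ref{lemma_addi_conditional}), the kernel bias $\tilde{T}-T^{(\ell)}(\wh{\bbeta}^{(2)})$ (your integration-by-parts step yielding $h^\zeta$), and the plug-in error $T^{(\ell)}(\wh{\bbeta}^{(2)})-T^{(\ell)}(\bbeta^*)$ (your $(M_n\eta_1(n))^\zeta$ term). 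The ingredients and the order-of-magnitude calculations you list match the paper's exactly.
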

\begin{proof}
	It suffices to proof the results for $\wh{\bv}^{(1)T}\wh{T}_{h,U}^{(\ell),n_{(1)}}(\wh{\bbeta}^{(2)})$. By definition,
	\beq \label{eq_bias_pilot_-1}
	|\bv^{*T}T^{(\ell)}(\bbeta^*) - \wh{\bv}^{(1)T}\wh{T}_{h,U}^{(\ell),n_{(1)}}(\wh{\bbeta}^{(2)})|\leq
	|(\bv^* - \wh{\bv})^T \wh{T}_{h,U}^{(\ell),n_{(1)}}(\wh{\bbeta}^{(2)}) | + |\bv^{*T}(T^{(\ell)}(\bbeta^*) - \wh{T}_{h,U}^{(\ell),n_{(1)}}(\wh{\bbeta}^{(2)}))|.
	\eeq
	We firstly look at $\norm{T^{(\ell)}(\bbeta^*) - \wh{T}_{h,U}^{(\ell),n_{(1)}}(\wh{\bbeta}^{(2)})}_\infty$. Direct calculation gives that
	\beq\label{eq_bias_pilot_0}
	&\norm{T^{(\ell)}(\bbeta^*) - \wh{T}_{h,U}^{(\ell),n_{(1)}}(\wh{\bbeta}^{(2)})}_\infty\\
	\leq&
	\norm{T^{(\ell)}(\bbeta^*) - T^{(\ell)}(\wh{\bbeta}^{(2)})}_\infty +
	\norm{T^{(\ell)}(\wh{\bbeta}^{(2)}) - \tilde{T}_{h,U}^{(\ell)}(\wh{\bbeta}^{(2)})}_\infty +
	\norm{\tilde{T}_{h,U}^{(\ell)}(\wh{\bbeta}^{(2)}) - \wh{T}_{h,U}^{(\ell),n_{(1)}}(\wh{\bbeta}^{(2)})}_\infty,
	\eeq
	where $\tilde{T}_{h,U}^{(\ell)}(\wh{\bbeta}^{(2)}) := \sum_{y \in \{-1,1\}}w(y)y\int \frac{ \bz}{h^{1+\ell}}\int U^{(\ell)}\bigg(\frac{\wh{\bbeta}^{(2)T}\bz - x}{h}\bigg)f(x|y,\bz)dx f(y,\bz)d\bz$. For the first term, we have
	\beq
	&\norm{T^{(\ell)}(\bbeta^*) - T^{(\ell)}(\wh{\bbeta}^{(2)})}_\infty\\
	=&\bignorm{\sum_{y \in \{-1,1\}}w(y)\int y\bz (f^{(\ell)}(\bbeta^{*T}\bz|y,\bz) - f^{(\ell)}(\wh{\bbeta}^{(2)T}\bz|y,\bz))f(y,\bz) d\bz}_\infty\\
	=&\cO_{\PP}((M_n\eta_1(n))^{\zeta} ),
	\eeq
	where the last step follows from Assumption \ref{ass_smooth}. For the second term, notice that we can rewrite
	\beq
	&\tilde{T}_{h,U}^{(\ell)}(\wh{\bbeta}^{(2)})\\
	=&\sum_{y \in \{-1,1\}}w(y)y\int \frac{ \bz}{h}\int U\bigg(\frac{ \wh{\bbeta}^{(2)T}\bz - x}{h}\bigg)f^{(\ell)}(x|y,\bz)dx f(y,\bz)d\bz\\
	=&\sum_{y \in \{-1,1\}}w(y)y\int \bz \int U(u)f^{(\ell)}(\wh{\bbeta}^{(2)T}\bz - hu|y,\bz)du f(y,\bz)d\bz,
	\eeq
	where the first step is by repeated integration by parts and the second step is by a change of variable. This implies that
	\beq
	&\norm{T^{(\ell)}(\wh{\bbeta}^{(2)}) - \tilde{T}_{h,U}^{(\ell)}(\wh{\bbeta}^{(2)})}_\infty\\
	=&\bignorm{\sum_{y \in \{-1,1\}}w(y)y\int \bz \int U(u)\bigg(
		f^{(\ell)}(\wh{\bbeta}^{(2)T}\bz|y,\bz) -
		f^{(\ell)}(\wh{\bbeta}^{(2)T}\bz-hu|y,\bz)
		\bigg)du f(y,\bz)d\bz}_\infty\\
	=&\cO(h^{\zeta}).
	\eeq
	Finally, similar to the proof of Theorem \ref{theorem_score}, we can show that the third term on the RHS of (\ref{eq_bias_pilot_0}) is $\cO_{\PP}(\sqrt{\frac{\log(d)}{nh^{2\ell + 1}}})$ based on Lemmas \ref{lemma_variance} and \ref{lemma_addi_conditional}. Putting all three terms together we have
	\beq
	\norm{T^{(\ell)}(\bbeta^*) - \wh{T}_{h,U}^{(\ell),n_{(1)}}(\wh{\bbeta}^{(2)})}_\infty = \cO_{\PP}\bigg(\sqrt{\frac{\log(d)}{nh^{2\ell+1}}} + (M_n\eta_1(n) \vee h)^\zeta\bigg),
	\eeq
	and thus $\norm{\wh{T}_{h,U}^{(\ell),n_{(1)}}(\wh{\bbeta}^{(2)})} = \cO_{\PP}(1)$ by the condition of this lemma. Plugging this back to (\ref{eq_bias_pilot_-1}) gives the desired result. This completes the proof.
\end{proof}

\begin{lemma}\label{prop_var_pilot}
	Under the same conditions of Theorem \ref{theorem_score}, if $L$ is a proper kernel of order $\ell$ satisfying Assumption \ref{ass_kernel}  and  $g^{\ell} + \sqrt{\frac{\log(d)}{ng}} + M_n\eta_1(n) = o(1)$, then we have \beq
	|\wh{\sigma}^2 - \bv^{*T}\bSigma^*\bv^* | \lesssim \norm{\bv^*}_1^2\bigg(
	\eta_2(n) + g^{\ell} + \sqrt{\frac{\log(d)}{ng}} + M_n\eta_1(n)
	\bigg).
	\eeq
	
	%\beq
	%|\wh{\sigma}^2 - \bv^{*T}\bSigma^*\bv^*  | = \norm{\bv^*}_1^2\cO_{\PP}\bigg(
	%\eta_2(n) \vee g^{\ell} \vee \sqrt{\frac{\log(d)}{ng}} \vee M_n\eta_1(n)
	%\bigg).
	%\eeq
\end{lemma}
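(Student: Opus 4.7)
The plan is to mimic the structure used for $\wh{\mu}$ in Lemma~\ref{lemma_consistent_bias_pilot}, replacing the kernel estimator of the $\ell$th derivative by the kernel estimator of $H(\bbeta^*)=\EE[w(Y)^2\bZ\bZ^T f(\bbeta^{*T}\bZ|Y,\bZ)]$. First I would reduce $\wh\sigma^2$ to a single cross-fitted term, say $\tilde\mu_K \wh{\bv}^{(1)T}\wh{H}^{n_{(1)}}_{g,L}(\wh\bbeta^{(2)})\wh{\bv}^{(1)}$, since the other piece behaves identically. Then, using the standard decomposition for quadratic forms employed in the proof of Lemma~\ref{lemma_V_delta},
\[
\bigl|\wh{\bv}^{(1)T}\wh H\wh{\bv}^{(1)}-\bv^{*T}H(\bbeta^*)\bv^*\bigr|
\le \|\wh{\bv}^{(1)}-\bv^*\|_1\bigl(\|\wh H\wh{\bv}^{(1)}\|_\infty+\|\wh H\bv^*\|_\infty\bigr)+\bigl|\bv^{*T}(\wh H-H(\bbeta^*))\bv^*\bigr|,
\]
so it suffices to establish $\|\wh H^{n_{(1)}}_{g,L}(\wh\bbeta^{(2)})-H(\bbeta^*)\|_{\max}=o_{\PP}(1)$ and to show $\|\wh H\|_{\max}=\cO_{\PP}(1)$ (the latter follows from the former together with boundedness of $H(\bbeta^*)$ implied by Assumptions~\ref{ass_proportion}--\ref{ass_moment}).

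The main work is the three-term decomposition of $\wh H^{n_{(1)}}_{g,L}(\wh\bbeta^{(2)})-H(\bbeta^*)$ into a stochastic piece $\wh H^{n_{(1)}}_{g,L}(\wh\bbeta^{(2)})-\tilde H_{g,L}(\wh\bbeta^{(2)})$, an approximation piece $\tilde H_{g,L}(\wh\bbeta^{(2)})-H(\wh\bbeta^{(2)})$, and a plug-in piece $H(\wh\bbeta^{(2)})-H(\bbeta^*)$, where $\tilde H_{g,L}(\bbeta) := \EE_{\bbeta}[\wh H^{n_{(1)}}_{g,L}(\bbeta)]$. Conditioning on $\wh\bbeta^{(2)}$ (which is independent of $\cN_1$ by cross-fitting) and using Assumption~\ref{ass_moment} so that the summand of $\wh H^{n_{(1)}}_{g,L}(\wh\bbeta^{(2)})$ is bounded by $M_n^2/g$ with second moment of order $1/g$, a Bernstein-inequality argument identical to Lemma~\ref{lemma_variance} (then transferred to unconditional probability by Lemma~\ref{lemma_addi_conditional}) yields $\sqrt{\log(d)/(ng)}$ for the stochastic term. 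For the approximation piece, a change of variables and Taylor expansion of $f(x|y,\bz)$ about $\wh\bbeta^{(2)T}\bz$ combined with the fact that $L$ is of order $\ell$ yields an $\cO(g^\ell)$ bound in max norm, exactly as in Lemma~\ref{lemma_bias}. For the plug-in piece, the mean-value theorem and Assumption~\ref{ass_smooth} together with $\|\wh\bbeta^{(2)}-\bbeta^*\|_1\lesssim \eta_1(n)$ give $\cO_{\PP}(M_n\eta_1(n))$, mirroring Lemma~\ref{lemma_sample_diff}.

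Assembling these three bounds gives $\|\wh H-H(\bbeta^*)\|_{\max}=\cO_{\PP}(g^\ell+\sqrt{\log(d)/(ng)}+M_n\eta_1(n))$, which is $o(1)$ under the lemma's hypothesis, so $\|\wh H\|_{\max}=\cO_{\PP}(1)$ follows. Substituting into the quadratic-form decomposition with $\|\wh{\bv}^{(1)}-\bv^*\|_1\lesssim \|\bv^*\|_1\eta_2(n)$ from Assumption~\ref{ass_estimators} produces
\[
\bigl|\wh{\bv}^{(1)T}\wh H\wh{\bv}^{(1)}-\bv^{*T}H(\bbeta^*)\bv^*\bigr|\lesssim \|\bv^*\|_1^2\Bigl(\eta_2(n)+g^\ell+\sqrt{\tfrac{\log d}{ng}}+M_n\eta_1(n)\Bigr),
\]
which is the stated rate after multiplying by the constant $\tilde\mu_K$. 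The main obstacle I anticipate is the stochastic piece: one must carefully track that the cross-fitting independence lets the conditional Bernstein bound be promoted to an unconditional bound without losing a factor in $M_n$, and that the $M_n^2/g$ summand bound together with $M_n^2\le C\sqrt{n\delta/\log d}$ (Assumption~\ref{ass_moment}) is compatible with the small-deviation regime of Bernstein at scale $\sqrt{\log(d)/(ng)}$; aside from that, every other step is a direct transcription of the arguments already developed for $\wh\mu$.
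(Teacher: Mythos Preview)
Your proposal is correct and follows essentially the same route as the paper's own proof: reduce to a single cross-fitted quadratic form, bound $\|\wh H^{n_{(1)}}_{g,L}(\wh\bbeta^{(2)})-H(\bbeta^*)\|_{\max}$ via the stochastic/approximation/plug-in decomposition (the paper phrases this as ``following Lemmas~\ref{lemma_bias}, \ref{lemma_variance}, \ref{lemma_sample_diff} and \ref{lemma_addi_conditional}''), and then feed this into the quadratic-form triangle inequality together with Assumption~\ref{ass_estimators}. Your decomposition of the quadratic form differs cosmetically from the paper's (which writes $\|\wh\bv-\bv^*\|_1^2\|\wh H\|_{\max}+2\|\bv^{*T}\wh H\|_\infty\|\wh\bv-\bv^*\|_1+|\bv^{*T}(\wh H-H)\bv^*|$), but both are equivalent and yield the same bound.
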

\begin{proof}
	It suffices to show the convergence rate of $\tilde{\mu}_K\wh{\bv}^{(1)T}\wh{H}_{g,K}^{n_{(1)}}(\wh{\bbeta}^{(2)})\wh{\bv}^{(1)}$.
	Similar to the proof of Theorem \ref{theorem_score}, following Lemmas \ref{lemma_bias}, \ref{lemma_variance}, \ref{lemma_sample_diff} and \ref{lemma_addi_conditional}, we can show that
	\beq
	\norm{\wh{H}_{g,K}^{(1)}(\wh{\bbeta}^{(2)}) - H(\bbeta^*)}_{\max} = \cO_{\PP}\bigg(g^{\ell} + \sqrt{\frac{\log(d)}{ng}} + M_n\eta_1(n)\bigg).
	\eeq
	Since $\norm{H(\bbeta^*)}_{\max} =\cO(1)$, we know $\norm{\wh{H}_{g,K}^{(1)}(\wh{\bbeta}^{(2)})}_{\max} = \cO_{\PP}(1)$ when $g^{\ell} + \sqrt{\frac{\log(d)}{ng}} + M_n\eta_1(n) = o(1)$.
	Now by triangle inequality, we obtain
	\beq\label{eq_lemma_consistent_var_1}
	&|\wh{\bv}^{(1)T}\wh{H}_{g,K}^{(1)}(\wh{\bbeta}^{(2)})\wh{\bv}^{(1)} - \bv^{*T}H(\bbeta^*)\bv^*|\\ \leq& \norm{\wh{\bv}^{(1)} - \bv^*}^2_1\norm{\wh{H}_{g,K}^{(1)}(\wh{\bbeta}^{(2)})}_{\max} + 2\norm{\bv^{*T}\wh{H}_{g,K}^{(1)}(\wh{\bbeta}^{(2)})}_\infty \norm{\wh{\bv}^{(1)} - \bv^*}_1 + |\bv^{*T}(\wh{H}_{g,K}^{(1)}(\wh{\bbeta}^{(2)}) - H(\bbeta^*))\bv^*|.
	\eeq
	This implies that
	\beq
	|\wh{\sigma}^2 - \bv^{*T}\bSigma^*\bv^* | = \norm{\bv^*}_1^2\cO_{\PP}\bigg(
	\eta_2(n) + g^{\ell} + \sqrt{\frac{\log(d)}{ng}} + M_n\eta_1(n)
	\bigg).
	\eeq
	This completes the proof.
\end{proof}

\begin{lemma}[Lemma 6.1 of \citet{chernozhukov2018double}]\label{lemma_addi_conditional}
	Let $\{\bX_m\}$ and $\{\bY_m\}$ be sequences of random vectors. If $\norm{\bX_m} = \cO_{\PP}(A_m)$ conditional on $\{\bY_m\}$ for a sequence of positive constants $\{A_m\}$, then $\norm{\bX_m} = \cO_{\PP}(A_m)$  unconditionally.
\end{lemma}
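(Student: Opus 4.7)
The plan is to unpack the definition of conditional stochastic boundedness and transfer it to the unconditional statement using the tower property of conditional expectation together with bounded convergence. Write $W_m = \|\bX_m\|/A_m$ and let $\cF_m = \sigma(\bY_m)$. The assumption ``$\|\bX_m\|=\cO_\PP(A_m)$ conditional on $\{\bY_m\}$'' means that the conditional probabilities $p_m(M):=\PP(W_m>M\mid \cF_m)$ satisfy, for every $\varepsilon>0$, $\lim_{M\to\infty}\limsup_{m\to\infty}\PP(p_m(M)>\varepsilon)=0$ (equivalently, $\limsup_{m\to\infty}p_m(M)\to 0$ in probability as $M\to\infty$). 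The target is to show the unconditional analog: for every $\eta>0$ there exists $M$ with $\limsup_m \PP(W_m>M)\leq \eta$.

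First I would fix $\eta>0$, set $\varepsilon=\eta/2$, and use the conditional $\cO_\PP$ hypothesis to choose $M_\eta$ and $N_\eta$ so that for all $m\geq N_\eta$, $\PP(p_m(M_\eta)>\varepsilon)\leq \varepsilon$. Next I would invoke the tower property to write
\[
\PP(W_m>M_\eta) \;=\; \EE\bigl[p_m(M_\eta)\bigr] \;=\; \EE\bigl[p_m(M_\eta)\mathbf{1}\{p_m(M_\eta)\leq\varepsilon\}\bigr] + \EE\bigl[p_m(M_\eta)\mathbf{1}\{p_m(M_\eta)>\varepsilon\}\bigr].
\]
The first term is bounded by $\varepsilon$ deterministically, while the second is bounded by $1\cdot\PP(p_m(M_\eta)>\varepsilon)\leq \varepsilon$ since $0\leq p_m(M_\eta)\leq 1$. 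Thus $\PP(W_m>M_\eta)\leq 2\varepsilon=\eta$ for all $m\geq N_\eta$, which is exactly unconditional $\cO_\PP(A_m)$.

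The only potentially subtle point is making sure the definition of conditional $\cO_\PP$ is the one that enables the choice of $M_\eta$ uniformly (in the sense above). Under the standard interpretation used in Chernozhukov et al.\ (2018), this uniform-in-$m$ control is built into the definition, so no additional work is needed. The key structural feature that makes the argument trivial is that $p_m(M)$ is a $[0,1]$-valued random variable, which lets us split at the threshold $\varepsilon$ and bound the ``bad'' part by $1$ rather than by some unbounded quantity. No interchange of limits or dominated-convergence beyond this boundedness is required, so I do not anticipate any real obstacle; the proof is essentially a two-line application of Markov's inequality in disguise.
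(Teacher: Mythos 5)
Your proof is correct and is essentially the same argument as the one behind the cited Lemma 6.1 of Chernozhukov et al.\ (2018), which the paper imports without reproving: apply the tower property to write $\PP(W_m>M)=\EE[p_m(M)]$ and exploit that the conditional probability is $[0,1]$-valued (they phrase this as dominated convergence; your $\varepsilon$-split at the threshold is the same mechanism made explicit). The only cosmetic difference is that you use the $\limsup$-in-$m$ formulation of conditional $\cO_{\PP}$ rather than the ``for every $\ell_m\to\infty$'' formulation in the source, but these are equivalent and your handling of the quantifiers is sound.
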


\section{Theoretical Results and Discussions for the Bandwidth Selection}\label{app_lem_band}
\subsection{Key results}
In this section, we collect a few key results in the analysis of bandwidth selection procedures and defer the proofs to Section \ref{app_pf_band}.

\begin{lemma} \label{lemma_V_delta}
	Under Assumptions \ref{ass_smooth} - \ref{ass_estimators}, if $\sqrt{\frac{\log(n\vee d)}{n\delta}} \vee M_n\eta_1(n) = o(1)$, and $M_n^2\sqrt{\log(n\vee d)/n^{\epsilon_1}} = \cO(1)$, it holds that
	\beq
	|\wh{V}(\delta) - V(\delta)| \lesssim \frac{\psi_1(n,\delta)}{\delta},
	\eeq
	uniformly over all $\delta \in \Delta$, where
	\beq\label{eq_psi1}
	\psi_1(n,\delta) = \norm{\bv^*}_1^2\bigg(\eta_2(n) \vee \sqrt{\frac{\log(n\vee d)}{n\delta}} \vee M_n\eta_1(n) \bigg).
	\eeq
\end{lemma}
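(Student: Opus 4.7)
The plan is to work with a single fold (the other is symmetric by cross-fitting) and to decompose the deviation
\[
\wh\bv^{(1)T}\wh\bGamma^{(1)}(\delta)\wh\bv^{(1)} - \bv^{*T}\bGamma^*(\delta)\bv^*,
\]
where $\bGamma^*(\delta) := \EE[\nabla\bar{R}_\delta^1(\bbeta^*)\nabla\bar{R}_\delta^1(\bbeta^*)^T]$ so that $V(\delta) = \bv^{*T}\bGamma^*(\delta)\bv^*$, into three pieces: a decorrelation error from $\wh\bv^{(1)} - \bv^*$; a plug-in bias $\bv^{*T}(\bGamma(\wh\bbeta^{(2)},\delta) - \bGamma^*(\delta))\bv^*$ coming from replacing $\bbeta^*$ with $\wh\bbeta^{(2)}$ inside the population second moment $\bGamma(\bbeta,\delta) := \EE[\nabla\bar{R}_\delta^1(\bbeta)\nabla\bar{R}_\delta^1(\bbeta)^T]$; and a centered stochastic piece $\bv^{*T}(\wh\bGamma^{(1)}(\delta) - \bGamma(\wh\bbeta^{(2)},\delta))\bv^*$. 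The key payoff of cross-fitting is that, conditional on the second-fold sample, $\wh\bbeta^{(2)}$ is deterministic and independent of the points in $\cN_1$, making this last piece a mean-zero empirical average of i.i.d.\ summands.

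For the stochastic piece I would apply Bernstein's inequality entrywise. After the change of variable $u = y(x - \wh\bbeta^{(2)T}\bz)/\delta$ and the fourth-moment bound in Assumption \ref{ass_moment}, the typical summand $w(y_i)^2 z_{ij}z_{ik}\delta^{-2}K^2(y_i(x_i - \wh\bbeta^{(2)T}\bz_i)/\delta)$ has envelope $O(M_n^2/\delta^2)$ and conditional variance $O(\delta^{-3})$. Bernstein plus a union bound over the $d^2$ entries yields $\|\wh\bGamma^{(1)}(\delta) - \bGamma(\wh\bbeta^{(2)},\delta)\|_{\max} \lesssim \delta^{-1}\sqrt{\log(n\vee d)/(n\delta)}$; the extra hypothesis $M_n^2\sqrt{\log(n\vee d)/n^{\epsilon_1}} = O(1)$ is precisely what ensures the Bernstein condition $Mt \lesssim \sigma^2$ holds throughout $\Delta$, so the sub-Gaussian term dominates uniformly. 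Sandwiching with $\bv^*$ via $|\bv^{*T}A\bv^*| \le \|\bv^*\|_1^2\|A\|_{\max}$ gives the $\|\bv^*\|_1^2\sqrt{\log(n\vee d)/(n\delta)}/\delta$ contribution to $\psi_1/\delta$.

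For the plug-in bias I would rewrite $\bv^{*T}\bGamma(\bbeta,\delta)\bv^* = \EE[w(Y)^2(\bv^{*T}\bZ)^2\tilde\phi_\delta(\bbeta^T\bZ;Y,\bZ)]$ with $\tilde\phi_\delta(t;y,\bz) := \delta^{-1}\int K^2(u)f(\delta u + t\mid y,\bz)\,du$ and Taylor-expand $\tilde\phi_\delta$ about $\bbeta^{*T}\bZ$. Because $f$ is smooth (Assumption \ref{ass_smooth}), $|\partial_t\tilde\phi_\delta|\lesssim \delta^{-1}$, so the pointwise remainder is at most $O(\delta^{-1}M_n\eta_1(n))$. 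Invoking the fourth-moment bound on $\bZ$ to absorb $|Z_jZ_k|$-factors via Cauchy--Schwarz---crucial because it avoids an extra $M_n^2$ factor that a naive $|Z_j|\le M_n$ bound would pick up---one obtains $\|\bv^*\|_1^2 M_n\eta_1(n)/\delta$, which matches the $M_n\eta_1(n)$ term in $\psi_1/\delta$. The decorrelation piece is then standard: using $\|\wh\bv^{(1)} - \bv^*\|_1 \le \|\bv^*\|_1\eta_2(n)$ (Assumption \ref{ass_estimators}) together with $\|\wh\bGamma^{(1)}(\delta)\|_{\max} \lesssim \delta^{-1}$ (combining the stochastic bound with $\|\bGamma^*(\delta)\|_{\max} = O(\delta^{-1})$) produces the $\|\bv^*\|_1^2\eta_2(n)/\delta$ term.

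To lift the pointwise bounds to uniformity on $\Delta = [q_1 n^{-1+\epsilon_1}, q_2 n^{-1+\epsilon_2}]$, I would discretize $\Delta$ on a polynomial grid $\delta_1 < \cdots < \delta_N$ with $N = O(n^c)$ for some $c>0$ (so the union bound only inflates logarithmic factors) and control the oscillation of $\delta\mapsto\wh\bGamma^{(1)}(\delta)$ within each cell through the pointwise derivative bound $|\partial_\delta\{\delta^{-2}K^2(u/\delta)\}| = O(\delta^{-3})$, which follows from compact support and boundedness of $K$ and $K'$. The main obstacle I anticipate is exactly this uniformity step: the envelope $\delta^{-2}K^2(u/\delta)$ blows up as $\delta\to 0$, and the $\delta$-derivative is also $\delta$-singular, so Bernstein's condition barely holds at the lower end of $\Delta$. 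The hypothesis $M_n^2\sqrt{\log(n\vee d)/n^{\epsilon_1}} = O(1)$ is tuned exactly to offset both the envelope growth and the cell-width cost, so reconciling the union bound with the mesh-refinement argument across all of $\Delta$---rather than at a single $\delta$---is the genuinely delicate part.
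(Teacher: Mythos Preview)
Your proposal is correct and follows essentially the same route as the paper's proof: the same three-part decomposition (stochastic error via Bernstein conditional on $\wh\bbeta^{(2)}$ with envelope $M_n^2/\delta^2$ and variance $O(\delta^{-3})$, plug-in bias of order $M_n\eta_1(n)/\delta$, and decorrelation error of order $\|\bv^*\|_1^2\eta_2(n)/\delta$), the same polynomial-grid discretization of $\Delta$ combined with a Lipschitz bound on $\delta\mapsto\delta^{-2}K^2(\cdot/\delta)$ for uniformity, and the same use of the hypothesis $M_n^2\sqrt{\log(n\vee d)/n^{\epsilon_1}}=O(1)$ to keep the sub-Gaussian Bernstein term dominant at the lower edge of $\Delta$. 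The only cosmetic difference is that the paper bounds $\|\wh\bGamma^{(1)}(\delta)-\bGamma(\delta)\|_{\max}$ first and then sandwiches with $\wh\bv^{(1)}$ and $\bv^*$, whereas you interleave the sandwiching with the decomposition; the resulting bounds are identical.
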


\begin{lemma}\label{prop_sb}
	Suppose Assumptions \ref{ass_smooth}- \ref{ass_sb_additional} hold. Choose $J$ as a proper kernel function of order $r$ satisfying the conditions of $K$ in Assumption \ref{ass_kernel}. In addition, assume $J$ is $\ell$ times continuously differentiable and $J^{(i)}$ degenerates at the boundary for $i = 1,\dotso,\ell-1$. If $M_n\leq C\sqrt{nb/\log(nd)}$ for some constant $C$ and $\frac{\log(n\vee d)}{nb^{2\ell+1}}=o(1)$, then it holds that
	 %If $M_n\eta_1(n)/\delta = o(1)$ and $M_n^2\eta^2_1(n)/\delta^{\ell} = o(1)$, then we have
	\beq\label{eq_double_smoothing_rate}
	|\wh{SB}(\delta) - SB(\delta)|
	\lesssim \delta^{2\ell} \psi_2(n,\delta)
	\eeq
	uniformly over all $\delta\in\Delta$, where
	\beq\label{eq_psi2}
	\psi_2(n,\delta) = \norm{\bv^*}_1^2\bigg(\sqrt{\frac{\log(n\vee d)}{nb^{2\ell+1}}} \vee (\delta\vee b)^r \vee M_n \eta_1(n)(1 \vee \frac{M_n\eta_1(n)}{\delta^{\ell}}) \vee \eta_2(n)\bigg).
	\eeq
\end{lemma}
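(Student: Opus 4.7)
The plan is to exploit the factorization $\wh{SB}(\delta)-SB(\delta)=(\wh B(\delta)-B(\delta))(\wh B(\delta)+B(\delta))$, establish $|B(\delta)|\lesssim \delta^{\ell}\|\bv^*\|_1$ (a direct consequence of $K$ having vanishing moments $1,\dots,\ell-1$ together with the $\ell$-smoothness of $f$, by Taylor-expanding $\nabla R(u\delta,\bbeta^*)$ in $u\delta$ inside the representation (\ref{eq_adaptive_sq_ori}) and integrating against $K$), and then bound $|\wh B(\delta)-B(\delta)| \lesssim \delta^{\ell}\psi_2(n,\delta)/\|\bv^*\|_1$ uniformly in $\delta\in\Delta$. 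Together with the triangle bound $|\wh B+B|\lesssim \delta^\ell\|\bv^*\|_1$, multiplying the two estimates delivers the claim.

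The key structural observation is that after the change of variable $s=u\delta$ inside $A_i$, the double-smoothed estimator collapses into a single effective kernel $G_{\delta,b}(x) = \int K_{\delta}(s)[J_b(x-s)-J_b(x)]\,ds$ with $K_\delta=K(\cdot/\delta)/\delta$ and $J_b=J(\cdot/b)/b$, so that each half of $\wh B$ takes the form $\frac{1}{|\cN_k|}\sum_{i\in\cN_k} w(y_i)y_i\wh\bv^{(k)T}\bz_i\, G_{\delta,b}(x_i-\wh\bbeta^{(k')T}\bz_i)$. Taylor-expanding $J_b(x-s)-J_b(x)$ in $s$, the order-$\ell$ vanishing moments of $K$ together with the boundary-vanishing of $J^{(1)},\dots,J^{(\ell-1)}$ justify $\ell$-fold integration by parts and yield the sharp norm bounds $\|G_{\delta,b}\|_{L^2}^2 \lesssim \delta^{2\ell}/b^{2\ell+1}$, $\|G_{\delta,b}\|_\infty\lesssim \delta^\ell/b^{\ell+1}$, and similarly for derivatives. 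These scalings turn the naive single-kernel variance rate $b^{-1/2}$ into the much faster $\delta^\ell b^{-(2\ell+1)/2}$ appearing in $\psi_2$.

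Given this simplification, I decompose $\wh B-B$ into four standard pieces, in the spirit of the analyses of Lemmas \ref{lemma_consistent_bias_pilot} and \ref{prop_var_pilot}. \emph{(i) Stochastic fluctuation}: cross-fitting renders $\wh\bbeta^{(k')}$ independent of $\cN_k$, so Bernstein's inequality applied coordinate-wise with the $L^2$-bound on $G_{\delta,b}$ and a union bound over the $d$ coordinates of $\bz$ yields $\delta^\ell\|\bv^*\|_1\sqrt{\log(n\vee d)/(nb^{2\ell+1})}$. \emph{(ii) Kernel approximation bias}: the conditional expectation of $A_i(\bbeta^*,\delta)$ reduces to $\int K(u)\int J(t)f(\bbeta^{*T}\bz+u\delta+tb|y,\bz)\,dt\,du$ minus its $u=0$ counterpart; Taylor-expanding $f$ first in $tb$ to order $r$ (permitted by $J$ being order $r$ and $f$ being $(\ell+r)$-smooth), then collecting the $u\delta$ dependence and invoking the vanishing moments of $K$, produces the $\delta^\ell(\delta\vee b)^r$ contribution. \emph{(iii) Plug-in error from $\wh\bbeta^{(k')}$}: a Taylor expansion of $G_{\delta,b}(x-\bbeta^T\bz)$ around $\bbeta^*$ to first or second order, together with $\|\wh\bbeta-\bbeta^*\|_1 \lesssim \eta_1(n)$ and $\|\bz\|_\infty\leq M_n$, yields $M_n\eta_1(n)(1\vee M_n\eta_1(n)/\delta^\ell)$, the max capturing which of the linear or quadratic remainder dominates. \emph{(iv) Plug-in error from $\wh\bv^{(k)}$}: H\"older's inequality and $\|\wh\bv-\bv^*\|_1\lesssim \|\bv^*\|_1\eta_2(n)$ combined with the sup-norm control from (i)--(iii) produces the $\eta_2(n)$ term.

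Uniformity over $\delta\in\Delta=[q_1n^{-1+\epsilon_1},q_2n^{-1+\epsilon_2}]$ is obtained via a standard $n^{-C}$-net discretization: since $\delta\mapsto\wh B(\delta)-B(\delta)$ is Lipschitz with a polynomially-in-$n$ bounded constant (derivatives in $\delta$ are controlled by the smoothness of $K,J$ and the boundedness of the data), extending the pointwise bound costs only a $\log n$ union-bound factor already absorbed in $\log(n\vee d)$. The main obstacle will be establishing the sharp $b^{-(2\ell+1)/2}$ variance scaling for $G_{\delta,b}$: naively treating $\wh B$ as a plain kernel estimator gives only $b^{-1/2}$, and extracting the extra $(\delta/b)^\ell$ gain requires careful $\ell$-fold integration by parts enabled by the boundary-degeneracy of $J^{(i)}$ for $i\leq \ell-1$ together with the order-$\ell$ vanishing moments of $K$. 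A secondary difficulty lies in bookkeeping step (iii) when $\delta$ is small relative to $M_n\eta_1(n)$, where the quadratic Taylor remainder overtakes the linear term --- this is precisely what the unusual factor $1\vee M_n\eta_1(n)/\delta^\ell$ in $\psi_2$ encodes.
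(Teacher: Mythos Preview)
Your proposal is correct and follows essentially the same approach as the paper, which likewise decomposes the error into stochastic fluctuation (Bernstein via cross-fitting), pilot-kernel approximation bias (double Taylor expansion exploiting the vanishing moments of $K$ and $J$), and plug-in errors from $\wh\bbeta$ and $\wh\bv$, with uniformity in $\delta$ handled by grid discretization plus Lipschitz continuity. The only differences are organizational: the paper telescopes the squared quantity directly into four pieces $I_1,\dots,I_4$ rather than factoring $(\wh B-B)(\wh B+B)$ first, and obtains the sharp $\delta^\ell/b^{\ell+1}$ scaling via a mean-value Taylor remainder on $J$ together with the order-$\ell$ vanishing moments of $K$ (no integration by parts is actually invoked there; the boundary-vanishing of $J^{(i)}$ serves only to keep $J^{(\ell)}$ bounded), but these lead to identical bounds.
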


	In view of (\ref{eq_double_smoothing_rate}) and (\ref{eq_psi2}), the error terms $\sqrt{\frac{\log(n\vee d)}{nb^{2\ell+1}}}$ and $(\delta\vee b)^r$ correspond to the variance and bias when estimating $A(\bbeta^*,\delta)$ in (\ref{eq_adaptive_sq_ori}) with the extra smoothing step. This rate is similar to the case of estimating the $\ell$th derivative of a density function using a kernel of order $r$ in the context of kernel density estimation under the smoothness Assumption \ref{ass_sb_additional}. The terms $M_n \eta_1(n)(1 \vee \frac{M_n\eta_1(n)}{\delta^{\ell}})$ and $\eta_2(n)$ come from the plug-in error of $\wh{\bbeta}$ and $\wh{\bv}$, respectively.

Now we are ready to present the following theorem on the rate of $\hat\delta$.

\begin{theorem}\label{theorem_adapt_main}
	Under the conditions in Lemmas \ref{lemma_V_delta} and \ref{prop_sb}, if $\psi_1(n,\delta),\psi_2(n,\delta) = o(1)$, then with probability tending to 1,
	\[
	\frac{\wh{\delta} - \delta^*}{\delta^*} \lesssim \psi_1(n,\delta^*) \vee \psi_2(n,\delta^*).
	\]
	Thus, $\wh{\delta}/\delta^*\rightarrow 1$ in probability.
\end{theorem}
%\begin{proof}
%	See Section \ref{pf_theorem_adapt} for a detailed proof.
%\end{proof}
\begin{remark}
Notice that in Theorem \ref{theorem_adapt_main}, the MSE-optimal bandwidth satisfies $\delta^*\asymp n^{-1/(2\ell + 1)}$. To examine the order of $\psi_1(n,\delta^*) \vee \psi_2(n,\delta^*)$, we consider $d\gtrsim n$ (high-dimensional case) and optimize the bandwidth $b$ in $\psi_2(n,\delta^*)$, leading to $b\asymp (\log d/n)^{1/(2\ell + 2r + 1)}$.
Furthermore, take $\eta_1(n), \eta_2(n)$ as the rate from \citet{feng2022nonregular} and Lemma \ref{theorem_projection}, and assume that $\norm{\bv^*}_1, M_n = \cO(1)$, then $\psi_1(n,\delta^*) \vee \psi_2(n,\delta^*)$ can be simplified to
\beq\label{eq_adapt_rate}
\Big(\frac{\log d}{n}\Big)^{r/(2\ell+2r+1)}  \vee s^{(4\ell+1)/(2\ell + 1)}\Big(\frac{\log ^2d}{n}\Big)^{\ell/(2\ell+1)} \vee s'\Big(\frac{\log d}{n}\Big)^{(\ell-1)/(2\ell + 1)}.
\eeq
There are in general two scenarios: firstly, if $r$ is small relative to $\ell$ while $s,s'$ do not grow too fast, then $(\log d/n)^{r/(2\ell + 2r + 1)}$ from the extra smoothing step for estimating $SB(\delta)$ will be the dominant term.  However, if $r$ is relatively large such that $r/(2\ell + 2r + 1) \geq (\ell - 1)/(2\ell + 1)$, then the last two terms in (\ref{eq_adapt_rate}) due to $\wh\bbeta$ and $\wh \bomega$ dominate. In this case, even if $f(x|y,\bz)$ has a large amount of extra smoothness $r$, the convergence rate of $\hat\delta$ cannot be further improved, as the rate is dominated by the error from $\wh\bbeta$ and $\wh \bomega$.
\end{remark}

Recall that our goal is to show the asymptotic normality of $\hat{U}_{n}(\hat\delta)$. We need the following additional assumption on the kernel function.

\begin{assumption}\label{ass_kernell_2}
Assume $K^{\prime\prime}(\cdot),K^{\prime\prime\prime}(\cdot)< C_1$ for some constant $C_1$, $K^\prime$ degenerates at the boundaries, and
\[
\int |K(u)|du,\int |u| |K(u)|du,
\]
\[
\int |u|^p |K^\prime(u)| du, \forall p\in \{0,1,2,3\},
\]
\[
 \int |u|^q |K^{\prime\prime}(u)| du, \forall q\in \{2,3,4,5\},
\]
\[
\int |u|^t |K^{\prime\prime \prime}(u)|du, \forall t\in \{1,2,3\},
\]
are all bounded by some constant.
\end{assumption}

%assume $K^{\prime\prime},K^{\prime\prime\prime} < C_1$ for some constant $C_1$, $K^\prime$ degenerates at the boundaries,
%\[\int |K(u)|du,\int |u| |K(u)|du,  \int  K(u)^2 du, \int |u| K(u)^2 du, \]
%\[
%\int |u| |K^{\prime}(u)|du,  \int u^2 |K^{\prime}(u)|du, \int |u|^p K^\prime(u)^2 du, \forall p\in \{0,1,2,3\},
%\]
%\[
%\int u^2 |K^{\prime\prime}(u)|du, \int |u^3| |K^{\prime\prime}(u)|du,  \int |u|^q K^{\prime\prime}(u)^2 du, \forall q\in \{2,3,4,5\},
%\]
%\[
%\int |u| |K^{\prime\prime \prime}(u)|du, \int u^2 |K^{\prime\prime \prime}(u)|du \int u^2 K^{\prime\prime \prime}(u)^2 du, \int |u^3| K^{\prime\prime \prime}(u)^2du
%\]
%are bounded by some constant

Similar to Section \ref{sec_details}, we need to use the cross-fitting approach to estimate $\delta$. Specifically, we split the data into 3 equal folds. Following the approach described in Section \ref{sec_details}, we construct $\wh{S}^{(1)}_\delta(\theta,\wh{\bgamma}^{(2)})$ and $\wh{S}^{(2)}_\delta(\theta,\wh{\bgamma}^{(1)})$ using the first and second folds. We further estimate the variance and bias in the same way as in Section \ref{adaptivity}, and denote the estimators by $\hat V^{(1,2)}(\delta)$ and $\hat B^{(1,2)}(\delta)$, where the superscript $(1,2)$ refers to the fact that we only use  the data in the first and second folds. Then we estimate $\delta$ by
$$
\wh\delta^{(1,2)} = \argmin_{\delta} \Big[\frac{1}{n}\wh{V}^{(1,2)}(\delta) + \frac{n-1}{n}(\wh{B}^{(1,2)}(\delta))^2\Big].
$$
Finally, we plug-in $\wh\delta^{(1,2)}$ into the third fold of the data to construct  the bias corrected smoothed decorrelated score statistic
$$
\wh{U}_n^{(3)}(\wh\delta^{(1,2)}) = \sqrt{n_3\wh\delta^{(1,2)}}\Big(\frac{\wh{S}^{(3)}_{\wh\delta^{(1,2)}}(0,\wh{\bgamma}^{(1,2)}) - (\wh\delta^{(1,2)})^{\ell}\wh{\mu}^{(1,2)}}{\wh{\sigma}^{(1,2)}}\Big),
$$
where $n_3$ is the sample size for the third fold, $\wh{\mu}^{(1,2)}$ and $\wh{\sigma}^{(1,2)}$ are the cross-fitted estimators in (\ref{eq_cross_mu}), $\wh{\bgamma}^{(1,2)}=(\wh{\bgamma}^{(1)}+\wh{\bgamma}^{(2)})/2$, and
$$
\wh{S}^{(3)}_\delta(\theta,\bgamma)=\nabla_{\theta} R^{(3)}_\delta(\theta,\bgamma) - \wh{\bomega}^{(1,2)T}\nabla_{\bgamma} R^{(3)}_\delta(\theta, \bgamma),
$$
in which we use the data in the third fold to compute the gradient $\nabla_{\theta} R^{(3)}_\delta(\theta,\bgamma)$ and $\nabla_{\bgamma} R^{(3)}_\delta(\theta, \bgamma)$. Similarly, we can construct $\wh{U}_n^{(1)}(\wh\delta^{(2,3)})$ and $\wh{U}_n^{(2)}(\wh\delta^{(1,3)})$. The following theorem indeed implies that $\wh{U}_n^{(3)}(\wh\delta^{(1,2)})\stackrel{d}{\rightarrow} N(0,1)$, and $\wh{U}_n^{(1)}(\wh\delta^{(2,3)})$, $\wh{U}_n^{(2)}(\wh\delta^{(1,3)})$ and $\wh{U}_n^{(3)}(\wh\delta^{(1,2)})$ are asymptotically independent. Thus, the final test statistic  given by $[\wh{U}_n^{(1)}(\wh\delta^{(2,3)})+\wh{U}_n^{(2)}(\wh\delta^{(1,3)})+\wh{U}_n^{(3)}(\wh\delta^{(1,2)})]/\sqrt{3}$ is asymptotically $N(0,1)$.

\begin{theorem}\label{thm1_11}
Under the conditions in Lemmas \ref{lemma_V_delta} and \ref{prop_sb}, Theorem \ref{theorem_score} and Assumption \ref{ass_kernell_2}, we further assume $\max_{1\leq i\leq n}|X_i| \leq M_n$ and $\|\bbeta^*\|_1$ is bounded by a constant.
If
% $\left(\left\|\boldsymbol{v}^{*}\right\|_{1} M_{n}\right)^{3} /(n \delta)^{1 / 2}=o(1), \frac{\log (d)}{n \delta^{3}}=o(1), n \delta^{2 \ell+1}=O(1)$ and
\begin{equation}\label{thmasp0}
    \frac{M_n^3}{\delta^{* 2}}\sqrt{\frac{\log d}{n\delta^*}}= \mathcal{O}(1),
\end{equation}
% \[
% (n \delta)^{1 / 2}\left\|\boldsymbol{v}^{*}\right\|_{1}\left(\frac{\eta_{1}(n)}{\delta} \vee \eta_{2}(n)\right)\left(\sqrt{\frac{\log (d)}{n \delta}} \vee \delta^{\ell} \vee M_{n}^{2} \eta_{1}(n)\right)=o(1),
% \]
and
\begin{equation}\label{thmasp2}
    (n \delta^*)^{1 / 2}\left\|\boldsymbol{v}^{*}\right\|_{1}
C_{n,\delta^*}
\left(
C_{n,\delta^*} \vee \frac{M_{n}^{2} \eta_{1}(n)^{2}}{\delta^*}
\vee \sqrt{\frac{\log d}{n\delta^*}}
\vee \delta^{* (\ell-1)}\eta_1(n) \vee
\delta^{* \ell} \eta_2(n)
\right) = o(1),
\end{equation}
where $\psi_{1}\left(n, \delta^{*}\right) \vee \psi_{2}\left(n, \delta^{*}\right)\coloneqq C_{n,\delta^*}=o(1)$,  then under $H_{0}: \theta^{*}=0$, it holds that
\[
\wh{U}_n^{(3)}(\wh\delta^{(1,2)}) \stackrel{d}{\rightarrow} N(0,1).
\]
\end{theorem}

This is a more complete version of Theorem \ref{theorem_adapt_main_short} in the main paper. In the next subsection, we provide the proof of Theorem \ref{thm1_11}.

\begin{remark}
As remarked after Theorem \ref{theorem_adapt_main}, take $\eta_1(n), \eta_2(n)$ as the rate from \cite{feng2022nonregular} and Lemma \ref{theorem_projection},  and assume that $\left\|\boldsymbol{v}^{*}\right\|_{1}, M_{n}=\mathcal{O}(1)$, then we have
\begin{equation}\label{rate1}
    C_{n,\delta^*} \asymp \left(\frac{\log d}{n}\right)^{r /(2 \ell+2 r+1)} \vee s^{(4 \ell+1) /(2 \ell+1)}\left(\frac{\log ^{2} d}{n}\right)^{\ell /(2 \ell+1)} \vee s^{\prime}\left(\frac{\log d}{n}\right)^{(\ell-1) /(2 \ell+1)}.
\end{equation}
Since the optimal bandwidth has $\delta^* \asymp n^{-1/(2\ell+1)}$, (\ref{thmasp0}) is satisfied if we have $\ell \geq 3$ and $\log d\ll n^{\frac{2(\ell-2)}{2\ell+1}}$. Given (\ref{rate1}), with some calculation we can show that (\ref{thmasp2}) reduces to
\[
(n \delta^*)^{1 / 2}
C_{n,\delta^*}^2 = o(1).
\]
Followed by the remark after Theorem \ref{theorem_adapt_main}, if the first term in (\ref{rate1}) dominates, to satisfy the condition above we need $r > \frac{\ell(2\ell+1)}{2(\ell+1)}$ (ignoring the $\log d$ factor for simplicity). If the last two terms in (\ref{rate1}) dominate  and assume $s,s'=O(1)$, then $\ell \geq 3$ is sufficient for the condition above.
\end{remark}

\subsection{Proofs}\label{app_pf_band}	

\subsubsection{Proof of Lemma \ref{lemma_V_delta}}\label{pf_lemma_V_delta}

	It suffices to show the rate for $\wh{\bv}^{(1)T}\wh{\bGamma}^{(1)}(\delta)\wh{\bv}^{(1)}$.
	Let $\tilde{\bGamma}(\delta) = \EE\bigg[
	\nabla \bar{R}_\delta^1(\wh{\bbeta}^{(2)})\nabla \bar{R}_\delta^1(\wh{\bbeta}^{(2)})^T
	\bigg]$, and $\bGamma(\delta)=\EE[\nabla \bar{R}^1_\delta(\bbeta^*)\nabla \bar{R}^1_\delta(\bbeta^*)^T]$. We firstly look at
	\beq\label{eq_var_adapt_2}
	\norm{\wh{\bGamma}^{(1)}(\delta) - \bGamma(\delta)}_{\max} \leq
	\norm{\wh{\bGamma}^{(1)}(\delta) - \tilde{\bGamma}(\delta)}_{\max} + \norm{\tilde{\bGamma}(\delta) - \bGamma(\delta)}_{\max}.
	\eeq
	Now we bound the first term. For any $\delta \in \Delta$, we can always find a $\wt\delta \in \wt\Delta$ such that $|\delta - \wt\delta| \lesssim n^{-\rho}$, where $\rho$ is some constant that can be arbitrarily large and the set $\wt\Delta$ that is a subset of $\Delta$ has cardinality of order $\cO(n^{\rho})$. For such pair $(\delta,\wt\delta)$, we can show that
	\beq \label{eq_var_adapt_1}
	&\sqrt{n\delta^3}\norm{\wh{\bGamma}^{(1)}(\delta) - \tilde{\bGamma}(\delta)}_{\max}\\ \lesssim& \sqrt{n\wt\delta^3}\bigg (\norm{\wh{\bGamma}^{(1)}(\wt\delta) - \tilde{\bGamma}(\wt\delta)}_{\max} + \norm{\wh{\bGamma}^{(1)}(\delta) - \wh{\bGamma}^{(1)}(\tilde\delta)}_{\max} + \norm{\tilde{\bGamma}(\wt\delta) - \tilde{\bGamma}(\delta)}_{\max}\bigg)\\
	\lesssim& \sqrt{n\wt\delta^3}\norm{\wh{\bGamma}^{(1)}(\wt\delta) - \tilde{\bGamma}(\wt\delta)}_{\max} + \cO(n^{-1/2}),
	\eeq
	where the first inequality is by taking $\rho$ large enough.
	To see last step of above, recall that under Assumptions \ref{ass_kernel}, \ref{ass_proportion} and \ref{ass_moment}
	\beq
	\norm{\wh{\bGamma}^{(1)}(\delta) - \wh{\bGamma}^{(1)}(\wt\delta)}_{\max}=& \max_{j,k}\bigg |\frac{1}{|\cN_1|}\sum_{i\in \cN_1}
	w(y_i)^2\bz_{ij}\bz_{ik} \bigg[\frac{K^2((x_i - \wh\bbeta^{(2)T}\bz_i)/\delta)}{\delta^2}
	- \frac{K^2((x_i - \wh\bbeta^{(2)T}\bz_i)/\wt\delta)}{\wt\delta^2}
	\bigg]\bigg|\\
	\lesssim& M_n^2(\frac{1}{\delta^2} - \frac{1}{\wt\delta^2}).
	\eeq
Thus, taking $\rho$ large enough will ensure $\sqrt{n\wt\delta^3}(1+o(1))\norm{\wh{\bGamma}^{(1)}(\delta) - \wh{\bGamma}^{(1)}(\wt\delta)}_{\max} = \cO(n^{-1/2})$. With a similar derivation we will also obtain $\sqrt{n\wt\delta^3}(1+o(1))\norm{\wt{\bGamma}(\delta) - \wt{\bGamma}(\wt\delta)}_{\max} = \cO(n^{-1/2})$, and thus (\ref{eq_var_adapt_1}) holds.
	
	Now we start to bound $\max_{\wt\delta\in\wt\Delta}\sqrt{n\wt\delta^3}\norm{\wh{\bGamma}^{(1)}(\wt\delta) - \tilde{\bGamma}(\wt\delta)}_{\max}$.
    Since
	$$|\wh{\bGamma}^{(1)}_{ijk}(\wt\delta)| = |(\nabla \bar{R}_{\wt\delta}^i(\wh{\bbeta}^{(2)})\nabla \bar{R}_{\wt\delta}^i(\wh{\bbeta}^{(2)})^T)_{jk}| \lesssim \frac{M_n^2}{\wt\delta^2}$$ and $$Var[(\nabla \bar{R}_{\wt\delta}^i(\wh{\bbeta}^{(2)})\nabla \bar{R}_{\wt\delta}^i(\wh{\bbeta}^{(2)})^T)_{jk}] \lesssim \frac{1}{\wt\delta^3},$$
	by applying Bernstein inequality similar to the proof of Lemma \ref{lemma_variance} , conditioned on $\wh\bbeta^{(2)}$ we can show that
	\beq
	&\PP_{\wh\bbeta^{(2)}}(\max_{\wt\delta\in \wt\Delta}\sqrt{n\wt\delta^3}\norm{\wh{\bGamma}^{(1)}(\wt\delta) - \tilde{\bGamma}(\wt\delta)}_{\max} > t)\\
	\leq&\sum_{\wt\delta\in\wt\Delta}\sum_{j = 1}^d\sum_{k = 1}^d\PP_{\wh\bbeta^{(2)}}(\sqrt{n\wt\delta^3}|\wh{\bGamma}^{(1)}(\wt\delta) - \tilde{\bGamma}(\wt\delta)|_{jk} > t)\\
	\lesssim&n^{\rho}d^2\exp\bigg(-\frac{\frac{1}{2}t^2/\wt\delta^3}{\frac{c_0}{\wt\delta^3} + \frac{c1M_n^2t}{3\wt\delta^3\sqrt{n\wt\delta}} } \bigg),
	\eeq
	where $c_0,c_1>0$ are some constants. By taking $t = c_2\sqrt{\log(nd)}$ for some constant $c_2$ large enough, we obtain that conditioned on $\wh{\bbeta}^{(2)}$, with probability greater than $1 - \cO((nd)^{-1})$
	\beq
	\norm{\wh{\bGamma}^{(1)}(\wt\delta) - \tilde{\bGamma}(\wt\delta)}_{\max}\lesssim \sqrt{\frac{\log(nd)}{n\wt\delta^3}}
	\eeq
	uniformly over $\wt\delta\in\wt\Delta$.
	This combines with Lemma \ref{lemma_addi_conditional} further implies that with probability approaching to 1
	\beq
	\norm{\wh{\bGamma}^{(1)}(\delta) - \tilde{\bGamma}(\delta)}_{\max} \lesssim \sqrt{\frac{\log(nd)}{n\delta^3}}
	\eeq
	uniformly over $\delta\in\Delta$.
	
	For the second term on the RHS of (\ref{eq_var_adapt_2}), similar to Lemma \ref{lemma_sample_diff}, we can show that
	\beq
	\norm{\tilde{\bGamma}(\delta) - \bGamma(\delta)}_{\max} \lesssim \frac{M_n\eta_1(n)}{\delta}.
	\eeq
	This implies that
	\beq
	\norm{\wh{\bGamma}(\delta) - \bGamma(\delta)}_{\max} = \frac1\delta \cO_{\PP}(\sqrt{\frac{\log(nd)}{n\delta}} \vee M_n\eta_1(n)).
	\eeq
	Meanwhile, since
	\beq
	\norm{\delta \cdot \bGamma(\delta)}_{\max} =& \max_{i,j}\bigg|\sum_{y}w(y)^2\int \frac{z_iz_j}{\delta} \int K^2(\frac{y(x - \bbeta^{*T}\bz)}{\delta})f(x|y,\bz)dxf(y,\bz)d\bz
	\bigg|\\
	=&\cO(1),
	\eeq
	under the condition that $\sqrt{\frac{\log(nd)}{n\delta}} \vee M_n\eta_1(n) = o(1)$, we know $\norm{\delta \wh{\bGamma}(\delta)}_\infty = \cO_{\PP}(1)$.
	Finally, by triangle inequality
	\beq
	&|\wh{\bv}^{(1)T}\wh{\bGamma}^{(1)}(\delta)\wh{\bv}^{(1)} - \bv^{*T}{\bGamma}(\delta)\bv^*|\\
	\leq& \norm{\wh{\bv}^{(1)} - \bv^*}_1\norm{\wh{\bGamma}^{(1)}(\delta)(\wh{\bv}^{(1)} - \bv)}_\infty + 2\norm{\bv^{*T}\wh{\bGamma}^{(1)}(\delta)}_\infty \norm{\wh{\bv}^{(1)} - \bv^*}_1 + |\bv^{*T}(\wh{\bGamma}^{(1)}(\delta) - {\bGamma}(\delta))\bv^*|,
	\eeq
	which further implies that with probability approaching 1
	\[
	|\wh{\bv}^{(1)T}\wh{\bGamma}^{(1)}(\delta)\wh{\bv}^{(1)} - \bv^{*T}{\bGamma}(\delta)\bv^*| \lesssim  \frac{\norm{\bv^*}_1^2}{\delta}\bigg(\eta_2(n) + \sqrt{\frac{\log(nd)}{n\delta}} + M_n\eta_1(n) \bigg)
	\]
	uniformly over $\delta\in\Delta$. This completes the proof.

\subsubsection{Proof of Lemma \ref{prop_sb}}\label{pf_prop_sb}

	It suffices to bound $\bigg |
	\bigg (\wh{\bv}^{(1)T} \frac{1}{|\cN_1|}\sum_{i \in \cN_1}A_i(\wh{\bbeta}^{(2)},\delta)\bigg )^2 - SB(\delta)
	\bigg |$. Define
	\beq
	\wt A(\bbeta,\delta) =& \EE_{\bbeta}(A_1(\bbeta,\delta))\\
	=&\sum_{y}w(y)y\int_{\bz}\bz\int_t J(t)\int_uK(u)[f(u\delta+tb+\bbeta^{T}\bz|y,\bz) - f(tb+ \bbeta^T\bz|y,\bz)]dudtf(y,\bz)d\bz.
	\eeq By definition,
	\beq
	&\bigg |
	\bigg (\wh{\bv}^{(1)T} \frac{1}{|\cN_1|}\sum_{i \in \cN_1}A_i(\wh{\bbeta}^{(2)},\delta)\bigg )^2 - SB(\delta)
	\bigg |\\
	=& \bigg|
	\bigg (\wh{\bv}^{(1)T} \frac{1}{|\cN_1|}\sum_{i \in \cN_1}A_i(\wh{\bbeta}^{(2)},\delta)\bigg )^2 - \big(\bv^{*T}A(\bbeta^*,\delta)\big)^2
	\bigg|\\
	\leq& \bigg|
	\bigg (\wh{\bv}^{(1)T} \frac{1}{|\cN_1|}\sum_{i \in \cN_1}A_i(\wh{\bbeta}^{(2)},\delta)\bigg )^2 - \big(\wh{\bv}^{(1)T}A(\wh{\bbeta}^{(2)},\delta)\big)^2
	\bigg| + \bigg|
	\big(\wh{\bv}^{(1)T}A(\wh{\bbeta}^{(2)},\delta)\big)^2
	- \big(\bv^{*T}A(\bbeta^*,\delta)\big)^2
	\bigg|\\
	\leq&\underbrace{\bigg|
		\bigg(\wh{\bv}^{(1)T}\big(\frac{1}{|\cN_1|}\sum_{i\in \cN_1}A_i(\wh{\bbeta}^{(2)},\delta) - \wt A(\hat{\bbeta}^{(2)},\delta)\big)\bigg)^2
		\bigg|}_{I_1}\\
	&~~~~~~~~ + 2\underbrace{\bigg|
		\bigg(
		\wh{\bv}^{(1)T}\wt A(\hat{\bbeta}^{(2)},\delta)\cdot \wh{\bv}^{(1)T}\big(\frac{1}{|\cN_1|}\sum_{i\in \cN_1}A_i(\wh{\bbeta}^{(2)},\delta) - \wt A(\hat{\bbeta}^{(2)},\delta)\big)
		\bigg|}_{I_2}\\
	&~~~~~~~~ + \underbrace{\bigg|
		\big(\wh{\bv}^{(1)T}\wt A(\hat{\bbeta}^{(2)},\delta)\big)^2 - \big(\wh{\bv}^{(1)T}A(\wh{\bbeta}^{(2)},\delta)\big)^2
		\bigg|}_{I_3} + \underbrace{\bigg|
		\big(\wh{\bv}^{(1)T}A(\wh{\bbeta}^{(2)},\delta)\big)^2
		- \big(\bv^{*T}A(\bbeta^*,\delta)\big)^2
		\bigg|}_{I_4}.
	\eeq
	Now we start to bound each term.
	\begin{itemize}
		\item For $I_1$,
		we firstly study $\delta^{-\ell}\norm{\frac{1}{|\cN_1|}\sum_{i\in \cN_1}A_i(\wh{\bbeta}^{(2)},\delta) - \wt A(\hat{\bbeta}^{(2)},\delta)}_\infty$ condition on $\wh{\bbeta}^{(2)}$.
		Similar to the proof of Lemma \ref{prop_var_pilot}, it suffices to bound
		\beq
		\wt\delta^{-\ell}\norm{\frac{1}{|\cN_1|}\sum_{i\in \cN_1}A_i(\wh{\bbeta}^{(2)},\wt\delta) - \wt A(\hat{\bbeta}^{(2)},\wt\delta)}_\infty
		\eeq
		uniformly over $\wt\delta \in \wt\Delta$, where $\wt\Delta$ is a subset of $\Delta$ with cardinality of order $\cO(n^\rho)$ for some constant $\rho >0$, such that for each $\delta\in\Delta$, there exist $\wt\delta\in\wt\Delta$ with $|\delta - \wt\delta| \lesssim n^{-\rho}$.
		
		Let $T_{ij}(\wt\delta) = \wt\delta^{-\ell}\big( A_i(\wh{\bbeta}^{(2)},\wt\delta) - \wt A(\hat{\bbeta}^{(2)},\wt\delta) \big)_j$, we know $\EE_{\wh{\bbeta}^{(2)}} T_{ij}(\wt\delta) = 0$. Moreover for all $j$,
		
		\beq
		\wt\delta^{-\ell}|(A_{i}(\wh{\bbeta}^{(2)},\wt\delta))_j| =& \wt\delta^{-\ell}\bigg |
		w(y_i)\frac{z_{ij}y_i}{b}\int K(u) \big[
		J(\frac{x_i - \wh{\bbeta}^{(2)T}\bz_i - u\wt\delta}{b}) - J(\frac{x_i - \wh{\bbeta}^{(2)T}\bz_i}{b})
		\big]
		\biggr\} du
		\bigg | \\
		=& \wt\delta^{-\ell}\bigg |
		w(y_i)\frac{z_{ij}y_i}{b}\int K(u) \frac{\wt\delta^\ell}{b^{\ell}}\frac{u^\ell}{\ell!}
		J^{\ell}(\frac{x_i - \wh{\bbeta}^{(2)T}\bz_i - \tau u\wt\delta}{b})
		\big]
		\biggr\} du
		\bigg | \\
		\lesssim & \frac{M_n}{b^{\ell+1}},
		\eeq
		where $\tau \in [0,1]$ and the second equality is because $K$ is of order $\ell$ and mean value theorem. This implies that $|T_{ij}(\wt\delta)|\lesssim \frac{M_n}{b^{\ell+1}}$. Now we look at $\EE_{\wh{\bbeta}^{(2)}}[T^2_{ij}(\wt\delta)]$.
		
		Notice that for all $j$
		\beq\label{eq_sb_1}
	\wt A(\hat{\bbeta}^{(2)},\wt\delta)_j =& \sum_{y}w(y)\int_{\bz}\bz_j y\int_t J(t)\int_uK(u)[f(u\wt\delta+tb+\wh{\bbeta}^{(2)T}\bz|y,\bz) - f(tb+\wh{\bbeta}^{(2)T}\bz|y,\bz)]\\
& dudtf(y,\bz)d\bz\\
		=&\sum_{y}w(y)\int_{\bz}\bz_j y\int_t J(t)\int_uK(u)\frac{(u\wt\delta)^{\ell}}{\ell!} f^{(\ell)}(\tau u\wt\delta+tb+\wh{\bbeta}^{(2)T}\bz|y,\bz)dudtf(y,\bz) d\bz\\
		\lesssim&\wt\delta^{\ell},
		\eeq
		where the second equality is because $K$ is of order $\ell$ and $\tau \in [0,1]$.
		For the second moment, we have
		\beq
		&\EE_{\wh{\bbeta}^{(2)}}[(\wt\delta^{-\ell}A_{1}(\wh{\bbeta}^{(2)},\wt\delta))_j^2]\\
		=&\wt\delta^{-2\ell}\EE_{\wh{\bbeta}^{(2)}} \bigg[
		w(y)^2 \frac{z_j^2}{b^2}\bigg(\int K(u) \big[
		J(\frac{x - \wh{\bbeta}^{(2)T}\bz - u\wt\delta}{b}) - J(\frac{x - \wh{\bbeta}^{(2)T}\bz}{b})
		\big]
		\biggr\} du\bigg)^2
		\bigg] \\
		=&\wt\delta^{-2\ell}\EE_{\wh{\bbeta}^{(2)}} \bigg[
		w(y)^2 \frac{z_j^2}{b^2}\bigg(
		\int K(u) \frac{\wt\delta^\ell}{b^{\ell}}\frac{u^\ell}{\ell!}
		J^{\ell}(\frac{x - \wh{\bbeta}^{(2)T}\bz - \tau u\wt\delta}{b})du
		\bigg)^2
		\bigg] \\
		=&\frac{1}{b^{2\ell+1}} \bigg |   \sum_{y}\int_z w(y)^2z_j^2\int_t \bigg(
		\int K(u) \frac{u^\ell}{\ell!}
		J^{\ell}(t - \frac{\tau u\wt\delta}{b})d u
		\bigg)^2f(tb + \wh{\bbeta}^{(2)T}\bz|y,\bz)dt f(y,\bz) d\bz \bigg | \\
		\lesssim & \frac{1}{b^{2\ell+1}},
		\eeq
		where the last equality is due to a change of variable $t = \frac{x - \wh{\bbeta}^{(2)T}\bz}{b}$.
		Since $b\rightarrow0$, we conclude that $\EE_{\wh{\bbeta}^{(2)}}[T^2_{ij}(\wt\delta)]\lesssim \frac{1}{b^{2\ell+1}}.$ Now applying Bernstein inequality similar to Lemma \ref{prop_var_pilot} with $M_n\sqrt{\frac{\log(d)}{nb}} = \cO(1)$, second moment of order $\frac{1}{b^{2\ell+1}}$ and each term is bounded by $\frac{M_n}{b^{\ell+1}}$, we will obtain that with probability greater than $1 - \cO((nd)^{-1})$, conditioned on $\wh{\bbeta}^{(2)}$
		\beq
		\norm{\frac{1}{|\cN_1|}\sum_{i\in \cN_1}A_i(\wh{\bbeta}^{(2)},\wt\delta) - \EE A_i(\wh{\bbeta}^{(2)},\wt\delta)}_\infty\lesssim \frac{\wt\delta^{\ell}}{b^{\ell}}\sqrt{\frac{\log(nd)}{nb}},
		\eeq
		which together with Lemma \ref{lemma_addi_conditional} implies that with probability approaching 1
		\beq\label{eq_sb_2}
		\norm{\frac{1}{|\cN_1|}\sum_{i\in \cN_1}A_i(\wh{\bbeta}^{(2)},\delta) - \EE_{\wh{\bbeta}^{(2)}} A_i(\wh{\bbeta}^{(2)},\delta)}_\infty\lesssim \frac{\delta^{\ell}}{b^{\ell}}\sqrt{\frac{\log(nd)}{nb}}
		\eeq
		holds uniformly for all $\delta\in\Delta$.
		
		Therefore we conclude that
		\beq
		I_1 = \cO_{\PP}(\norm{\bv^*}_1^2\frac{\delta^{2\ell}}{b^{2\ell}}\frac{\log(nd)}{nb}).
		\eeq
		
		\item For $I_2$, (\ref{eq_sb_1}) implies that
		$\norm{\EE_{\wh{\bbeta}^{(2)}} A_1(\wh{\bbeta}^{(2)},\delta)}_{\infty} \lesssim \delta^{\ell}$ regardless the choice of $\wh{\bbeta}^{(2)}$. This combined with (\ref{eq_sb_2}) further implies that
		\beq
		I_2 = \cO_{\PP}(\norm{\bv^*}_1^2 \delta^{\ell}\sqrt{\frac{\delta^{2\ell}}{b^{2\ell}}\frac{\log(nd)}{nb}}) = \cO_{\PP}(
		\norm{\bv^*}_1^2
		\frac{\delta^{2\ell}}{b^{\ell}}\sqrt{\frac{\log(nd)}{nb}}).\eeq
		
		\item For $I_3$, by definition for all $j$ we can write
		\beq
		&\big ( \wt A(\wh{\bbeta}^{(2)},\delta) - A(\wh{\bbeta}^{(2)},\delta)\big)_j\\
		=& \sum_{y}w(y)\int_{\bz}\bz_j y\int_t J(t)\int_uK(u)[(f(u\delta+tb+\wh{\bbeta}^{(2)T}\bz|y,\bz) - f(tb+\wh{\bbeta}^{(2)T}\bz|y,\bz))\\
		&~~~~~~~~ - (f(u\delta + \wh{\bbeta}^{(2)T}\bz|y,\bz) - f(\wh{\bbeta}^{(2)T}\bz|y,\bz)) ]dudtf(y,\bz)d\bz\\
		=&\sum_{y}w(y)\int_{\bz}\bz_j y\int_u\sum_{j = \ell}^{\ell + r -1}
		\frac{(u\delta)^j}{j!}K(u)\int_t J(t) [  f^{(j)}(tb+\wh{\bbeta}^{(2)T}\bz|y,\bz) - f^{(j)}(\wh{\bbeta}^{(2)T}\bz|y,\bz)]\\
& dtduf(y,\bz)d\bz + \cO(\delta^{\ell + r})\\
		=& \sum_{y}w(y)\int_{\bz}\bz_j y\int_u\sum_{j = \ell}^{\ell + r - 1}
		\frac{(u\delta)^j}{j!}K(u)\int_t J(t) \frac{(tb)^{\ell + r - j}}{(\ell + r - j)!}  f^{(\ell + r)}(\tau tb+\wh{\bbeta}^{(2)T}\bz|y,\bz) \\
& dtduf(y,\bz)d\bz + \cO(\delta^{\ell + r})\\
		=&\cO(\delta^{\ell}(\delta \vee b)^{r}),
		\eeq
		where $\tau \in [0,1]$.
		Here the first equality is by definition, the second equality uses the property that $\int K(u)u^i du = 0~\forall~i < \ell$, and the third equality uses the property that $\int J(u)u^i du = 0~\forall ~ i < r$ and the mean value theorem. This implies that
		\beq
		\norm{\wt A(\wh{\bbeta}^{(2)},\delta) - A(\wh{\bbeta}^{(2)},\delta)}_\infty = \cO(\delta^{\ell}(\delta \vee b)^{r}).
		\eeq
		Meanwhile, by (\ref{eq_sb_1}) and a similar derivation, we know both $\norm{\wt A(\wh{\bbeta}^{(2)},\delta)}_{\infty}$ and $\norm{A(\wh{\bbeta}^{(2)},\delta)}_{\infty}$ are of order $\delta^{\ell}$. These together imply that
		\beq
		I_3 = \cO(\norm{\bv^*}_1^2\delta^{2\ell}(\delta \vee b)^{r}).
		\eeq
		
		\item For $I_4$, again $\norm{A(\wh{\bbeta}^{(2)},\delta)}_\infty\asymp \norm{A(\bbeta^*,\delta)}_\infty = \cO(\delta^\ell)$. For the difference, we have
		\beq
		&\norm{A(\wh{\bbeta}^{(2)},\delta) - A(\bbeta^*,\delta)}_\infty\\
		=&\max_j
		\bigg |
		\int_u K(u) \bigg[
		\sum_y w(y)\int_{\bz} z_j y \big(
		f(u\delta+ \wh{\bbeta}^{(2)T}\bz|y,\bz) - f(\wh{\bbeta}^{(2)T}\bz|y,\bz)\\
		&~~~~~~~~~~~~~~~~~~~~~~~~~~~~ - f(u\delta+ \bbeta^{*T}\bz|y,\bz) + f(\bbeta^{*T}\bz|y,\bz)
		\big) f(y,\bz) d\bz
		\bigg] du \bigg |\\
		=&\max_j
		\bigg | \sum_{y} w(y)\int_{\bz}z_j y \int_{u} K(u)\bigg[
		\big(
		f(u\delta+ \wh{\bbeta}^{(2)T}\bz|y,\bz) - f(u\delta+ \bbeta^{*T}\bz|y,\bz)
		\big)\\
		&~~~~~~~~~~~~~~~~~~~~~~~~~~~~ - \big(
		f(\wh{\bbeta}^{(2)T}\bz|y,\bz) - f(\bbeta^{*T}\bz|y,\bz)
		\big)
		\bigg] f(y,\bz) d\bz du\bigg | \\
		=&\max_j
		\bigg | \sum_{y} w(y)\int_{\bz}z_j y \int_{u} K(u)\bigg[
		(\betadiff)^T\bz\big(
		f'(u\delta+ \bbeta^{*T}\bz|y,\bz)
		- f'(\bbeta^{*T}\bz|y,\bz)\big) \\
		&~~~~~~~~~~~~~~~~~~~~~~~~~~~~~  + ((\betadiff)^T\bz)^2
		\big (
		f''(u\delta + \tilde{\bbeta}^T\bz) - f''( \breve{\bbeta}^T\bz)
		\big )
		\bigg]duf(y,\bz)d\bz \bigg | \\
		\lesssim&M_n\norm{\betadiff}_1\delta^{\ell} + M_n^2\norm{\betadiff}_1^2,
		\eeq
		where the last step follows from an $\ell$ order Taylor expansion of $f'(u\delta+ \bbeta^{*T}\bz|y,\bz)
		- f'(\bbeta^{*T}\bz|y,\bz)$ for the first term and boundedness of $f''$ for the second term.
		This implies that
		\beq
		\norm{A(\wh{\bbeta}^{(2)},\delta) - A(\bbeta^*,\delta)}_\infty \lesssim M_n\eta_1(n)(\delta^{\ell} + M_n\eta_1(n))
		\eeq
		with probability approaching 1 by Assumption \ref{ass_estimators}. Therefore, we can obtain
		\beq
		I_4 \leq& \bigg|
		(\bv^{(1)T}A(\wh{\bbeta}^{(2)},\delta) - \bv^{*T}A(\bbeta^*,\delta))(\bv^{(1)T}A(\wh{\bbeta}^{(2)},\delta) + \bv^{*T}A(\bbeta^*,\delta))
		\bigg|\\
		\leq&\bigg|
		\bv^{(1)T}(A(\wh{\bbeta}^{(2)},\delta) - A(\bbeta^*,\delta)) + (\bv^{(1)T}- \bv^{*T})A(\bbeta^*,\delta)
		\bigg|\bigg|
		\bv^{(1)T}A(\wh{\bbeta}^{(2)},\delta) + \bv^{*T}A(\bbeta^*,\delta)
		\bigg|\\
		=&\norm{\bv^*}_1^2\delta^{2\ell}\cO_{\PP}(M_n\eta_1(n)(1 \vee M_n\eta_1(n)/\delta^{\ell}) + \eta_2(n)).
		\eeq
	\end{itemize}
	Combining all bounds for $I_1,\dotso,I_4$, we know that uniformly over all $\delta\in\Delta$
	\beq
	&\bigg |
	\bigg (\wh{\bv}^{(1)T} \frac{1}{|\cN_1|}\sum_{i \in \cN_1}A_i(\wh{\bbeta}^{(2)},\delta)\bigg )^2 - SB(\delta)
	\bigg |\\
	\lesssim&\norm{\bv^*}_1^2\delta^{2\ell} \bigg(\frac{\log(nd)}{nb^{2\ell+1}} + \sqrt{\frac{\log(nd)}{nb^{2\ell+1}}} + (\delta\vee b)^r + M_n \eta_1(n)(1 \vee \frac{M_n\eta_1(n)}{\delta^{\ell}}) + \eta_2(n)\bigg).
	\eeq
	This completes the proof.

\subsubsection{Proof of Theorem \ref{theorem_adapt_main}}\label{pf_theorem_adapt}
	By Lemma \ref{lemma_V_delta} and Lemma \ref{prop_sb}, uniformly over all $\delta\in\Delta$
	\beq
	&|\wh{M}(\delta) - M(\delta)|\\
	=&\norm{\bv^*}_1^2\frac{1}{n\delta}\cO_{\PP}\bigg(\eta_2(n) + \sqrt{\frac{\log(nd)}{n\delta}} + M_n\eta_1(n) \bigg) \\
	&~~~~~~ + \norm{\bv^*}_1^2\delta^{2\ell} \cO_{\PP}\bigg(\frac{\log(nd)}{nb^{2\ell+1}} + \sqrt{\frac{\log(nd)}{nb^{2\ell+1}}} + (\delta\vee b)^r + M_n \eta_1(n)(1 \vee \frac{M_n\eta_1(n)}{\delta^{\ell}}) + \eta_2(n)\bigg)\\
	\lesssim& \frac{1}{n\delta}\psi_1(n) + \delta^{2\ell}\psi_2(n).\label{eq_pf_theorem_adapt_0}
	\eeq
	Recall that $M(\delta) = \frac1n V(\delta) + \frac{n-1}{n}SB(\delta),$ where
	\beq
	V(\delta) = \frac{1}{\delta}(\bv^{*T}\bSigma^*\bv^*)(1+o(1))~~~~~\text{and}~~ {SB(\delta)} = \delta^{2\ell}(\bv^{*T}\bb^*)^2(1+o(1)).\label{eq_pf_theorem_adapt_1}
	\eeq
	This implies that $\delta^*=(\frac{1}{n}\frac{\bv^{*T}\bSigma^*\bv^*}{2\ell(\bv^{*T}\bb^*)^2})^{1/(2\ell + 1)}(1+o(1))$. By Assumption \ref{ass_projection_norm} we know $M(\delta) \gtrsim \frac{1}{n\delta} + \delta^{2\ell}$, which implies
	\[
	\frac{|\wh M(\delta) - M(\delta)|}{M(\delta)} \lesssim \psi_1(n) \vee \psi_2(n)= o_{\PP}(1),
	\]
	uniformly over all $\delta\in\Delta$. That means for any $\epsilon>0$, uniformly over all $\delta\in\Delta$, the event $(1-\epsilon)M(\delta)\leq \wh M(\delta)\leq (1+\epsilon)M(\delta)$ holds with probability tending to 1. Under this event, we have that
	\beq
	M(\hat\delta)&=\wh M(\hat\delta)+[M(\hat\delta)-\hat M(\hat\delta)]\\
	&\leq \wh M(\delta^*)+[M(\hat\delta)-\hat M(\hat\delta)]\\
	&\leq (1+\epsilon) M(\delta^*)+\epsilon M(\hat\delta),
	\eeq
	where the first inequality holds because $\hat\delta$ is the minimizer of $\wh M(\delta)$ and the second inequality follows from $\wh M(\delta^*)\leq (1+\epsilon)M(\delta^*)$ and $\wh M(\wh\delta)\geq (1-\epsilon)M(\wh\delta)$. As a result, we obtain $M(\hat\delta)\leq \frac{1+\epsilon}{1-\epsilon}M(\delta^*)$, from which we can claim that $\hat\delta\leq \frac{3}{2}\delta^*$. To see this, let us consider the complement case $\hat\delta> \frac{3}{2}\delta^*$. When it holds, we know from (\ref{eq_pf_theorem_adapt_1}) that
	$$
	M(\hat\delta)\geq (\frac{3}{2})^{2\ell}SB(\delta^*)(1+o(1))=(\frac{3}{2})^{2\ell}\frac{1}{2\ell+1}M(\delta^*)(1+o(1))\geq 1.01M(\delta^*)(1+o(1)),
	$$
	where the second step follows from $M(\delta^*)=(2\ell+1)SB(\delta^*)(1+o(1))$ and last step holds by the condition $\ell\geq 2$. However, the above inequality contradicts with $M(\hat\delta)\leq \frac{1+\epsilon}{1-\epsilon}M(\delta^*)$ for some sufficiently small $\epsilon$. This justifies the statement $\hat\delta\leq \frac{3}{2}\delta^*$. Following a similar argument, we can show that $\hat\delta\geq \frac{1}{2}\delta^*$. Thus, with probability tending to 1, we have $|\hat\delta-\delta^*|/\delta^*\leq 1/2$.

	By definition, we have $M'(\delta^*) = 0$ and $\wh{M}'(\wh{\delta}) = 0$. This implies that
	\beq
	\wh{M}'(\wh{\delta}) - M'(\wh{\delta}) = -M''(\tilde{\delta})(\wh{\delta} - \delta^*),
	\eeq
	for some intermediate value $\tilde{\delta}$, which gives
	\beq
	(\wh{\delta} - \delta^*) = \frac{\wh{M}'(\wh{\delta}) - M'(\wh{\delta}) }{-M''(\tilde{\delta})}.\label{eq_pf_theorem_adapt_2}
	\eeq
	For any ${\delta} \asymp \delta^*$, after some algebra, we can show that
	\beq
	M''(\delta) \asymp \frac{1}{\delta^{3}n}.
	\eeq
	Similar to the proof of (\ref{eq_pf_theorem_adapt_0}), we can show that $(\wh{M}' - M')(\delta)$ has the same order as $(\wh{M} - M)(\delta)$, except for an additional factor $\frac{1}{\delta}$. Since $|\hat\delta-\delta^*|/\delta^*\leq 1/2$ holds in probability, it implies $\wh\delta$ and $\tilde\delta$ are of the same order of $\delta^*$. Thus, \eqref{eq_pf_theorem_adapt_2} implies
	\beq
	\frac{\wh{\delta} - \delta^*}{\delta^*} \lesssim& \psi_1(n) \vee \psi_2(n).
	\eeq
	This completes the proof.
	
\subsubsection{Proof of Theorem \ref{thm1_11}}	

We start from the following two lemmas.

\begin{lemma} \label{lemma1_11}
Denote $s(\delta) = \sqrt{\delta}\boldsymbol{v}^{T}\nabla R_{\delta}^{n}(\boldsymbol{\beta})$ for some fix $\boldsymbol{\beta}$ and $\boldsymbol{v}$, under the conditions of Theorem~\ref{thm1_11} we have
\[
\sqrt{n}\left|s(\hat{\delta})-s(\delta^*)
-s^\prime(\delta^*)(\hat{\delta}-\delta^*)
\right|=o_{\mathbb{P}}(1).
\]
\end{lemma}	
\begin{proof}
Note that
\[
s(\delta)= \frac{1}{n}\sum\limits_{i=1}^n w(y_i)y_i\boldsymbol{z}_i^T\boldsymbol{v}\frac{1}{\sqrt{\delta}}
K\left(\frac{x_i-\boldsymbol{\beta}^T\boldsymbol{z_i}}{\delta}\right),
\]
\[
s^{\prime}(\delta)=\frac{1}{n}\sum\limits_{i=1}^n w(y_i)y_i\boldsymbol{z}_i^T\boldsymbol{v}\left[
-\frac{1}{2}\delta^{-\frac{3}{2}}K\left(\frac{x_i-\boldsymbol{\beta}^T\boldsymbol{z_i}}{\delta}\right)-
(x_i-\boldsymbol{\beta}^T\boldsymbol{z_i})\delta^{-\frac{5}{2}}K^{\prime}\left(\frac{x_i-\boldsymbol{\beta}^T\boldsymbol{z_i}}{\delta}\right)
\right].
\]
With some algebra we have
\begin{align}
    &\sqrt{n}\left|s(\hat{\delta})-s(\delta^*)
-s^\prime(\delta^*)(\hat{\delta}-\delta^*)
\right|\nonumber\\
=& \sqrt{n}\Bigg|
\frac{1}{n}\sum\limits_{i=1}^n w(y_i)y_i\boldsymbol{z}_i^T\boldsymbol{v}
\int_{\delta^*}^{\hat{\delta}} \Bigg[(x_i-\boldsymbol{\beta}^T\boldsymbol{z_i})^2K^{\prime \prime}\left(\frac{x_i-\boldsymbol{\beta}^T\boldsymbol{z_i}}{t}\right)t^{-\frac{9}{2}}\nonumber\\
+ &3(x_i-\boldsymbol{\beta}^T\boldsymbol{z_i})K^{\prime}\left(\frac{x_i-\boldsymbol{\beta}^T\boldsymbol{z_i}}{t}\right)t^{-\frac{7}{2}}+
\frac{3}{4}K\left(\frac{x_i-\boldsymbol{\beta}^T\boldsymbol{z_i}}{t}\right)t^{-\frac{5}{2}}\Bigg](\hat{\delta}-t)dt
\Bigg| \label{lemma0.1}
% \\
% \lesssim & \sqrt{n}\|\boldsymbol{v}\|_1\left\|
% \frac{1}{n}\sum\limits_{i=1}^n w(y_i)y_i\boldsymbol{z}_i \int_{\delta^*}^{\hat{\delta}}(x_i-\boldsymbol{\beta}^T\boldsymbol{z_i})^2K^{\prime \prime}\left(\frac{x_i-\boldsymbol{\beta}^T\boldsymbol{z_i}}{t}\right)t^{-\frac{9}{2}}(\hat{\delta}-t)dt
% \right\|_\infty
\end{align}

Now consider the first term in the integral in (\ref{lemma0.1}).
Denote
$$
G_i = w(y_i)\boldsymbol{z}_iy_i\int_{\delta^*}^{\hat{\delta}} (x_i-\boldsymbol{\beta}^T\boldsymbol{z_i})^2K^{\prime \prime}\left(\frac{x_i-\boldsymbol{\beta}^T\boldsymbol{z_i}}{t}\right)t^{-\frac{9}{2}}(\hat{\delta}-t)dt$$
and $G_{im}, 1 \leq m \leq d$ as its coordinate. Consider event $A \coloneqq \{\frac{\widehat{\delta}-\delta^{*}}{\delta^{*}} \lesssim C_{n,\delta*}\}$, where $C_{n,\delta*}=\psi_{1}\left(n, \delta^{*}\right) \vee \psi_{2}\left(n, \delta^{*}\right)$. For each $m$ we have
{\small
\begin{align*}
    &\sum_{y} w(y) \int_{z} z_my\int_x
    (x-\boldsymbol{\beta}^T\boldsymbol{z})^2
    \int_{\delta^*}^{\hat{\delta}} K^{\prime \prime}\left(\frac{x-\boldsymbol{\beta}^T\boldsymbol{z}}{t}\right)t^{-\frac{9}{2}}(\hat{\delta}-t)dt f(x \mid y,\boldsymbol{z})dx f(y, \boldsymbol{z})d\boldsymbol{z}\\
    x-\boldsymbol{\beta}^T\boldsymbol{z}=u\delta^* \atop =& (\delta^*)^3\sum_{y} w(y) \int_{z} z_my
    \int_u
    u^2
    \int_{\delta^*}^{\hat{\delta}} K^{\prime \prime}\left(\frac{u\delta^*}{t}\right)t^{-\frac{9}{2}}(\hat{\delta}-t)dt f(u\delta^*+\boldsymbol{\beta}^T\boldsymbol{z} \mid y,\boldsymbol{z})du f(y, \boldsymbol{z})d\boldsymbol{z}.
\end{align*}
}
Note that $ \frac{\widehat{\delta}-\delta^{*}}{\delta^{*}}
\coloneqq C_{n,\delta^*}=o(1)$, hence
\begin{align*}
     &\int_u u^2
    \int_{\delta^*}^{\hat{\delta}} K^{\prime \prime}\left(\frac{u\delta^*}{t}\right)t^{-\frac{9}{2}}(\hat{\delta}-t)dt f(u\delta^*+\boldsymbol{\beta}^T\boldsymbol{z} \mid y,\boldsymbol{z})du\\
    &\lesssim \delta^{* -\frac{9}{2}}(\hat{\delta}-\delta^*)^2 \int_u u^2 K^{\prime \prime}\left(u(1+o(1))\right)\left(
    f(\boldsymbol{\beta}^T\boldsymbol{z} \mid y,\boldsymbol{z})+
    u\delta^* f^\prime(\tau u\delta^* + \boldsymbol{\beta}^T\boldsymbol{z} \mid y,\boldsymbol{z})  \right)du,
\end{align*}
where $\tau \in [0,1]$. Since $\int u^2 |K^{\prime\prime}(u)|du, \int |u^3| |K^{\prime\prime}(u)|du$ are bounded by a constant,
 the last formula is bounded by $C\delta^{* \frac{1}{2}}C_{n,\delta*}^2$ for some constant $C$.
This implies that $\mathbb{E}\left[G_{i m} \mid A\right] \lesssim \delta^{* \frac{1}{2}}C_{n,\delta*}^2$.

Then we look at its variance. Similarly, note that $\int u^4 K^{\prime\prime}(u)^2 du, \int |u^5| K^{\prime\prime}(u)^2 du$ are bounded by a constant, we have
\begin{align*}
    & \mathbb{E}[G_{im}^2\mid A] \\
    =&
    \sum_{y} w(y)^2 \int_{z} z_m^2 \int_x (x-\boldsymbol{\beta}^T\boldsymbol{z})^4 \left(
    \int_{\delta^*}^{\hat{\delta}} K^{\prime \prime}\left(\frac{x-\boldsymbol{\beta}^T\boldsymbol{z}}{t}\right)t^{-\frac{9}{2}}(\hat{\delta}-t)dt
    \right)^2
    f(x \mid y,\boldsymbol{z})dx f(y, \boldsymbol{z})d\boldsymbol{z} \\
    \lesssim & (\delta^*)^5(\delta^{* -\frac{5}{2}}C_{n,\delta*}^2)^2=C_{n,\delta*}^4.
\end{align*}
Also we know for each $m=1, \ldots, d$,
\[
\left|z_{m}y\int_{\delta^*}^{\hat{\delta}} (x-\boldsymbol{\beta}^T\boldsymbol{z})^2K^{\prime \prime}\left(\frac{x-\boldsymbol{\beta}^T\boldsymbol{z}}{t}\right)t^{-\frac{9}{2}}(\hat{\delta}-t)dt\right| \lesssim M_n^3 \delta^{* -\frac{5}{2}}C_{n,\delta*}^2.
\]
Therefore, applying Bernstein inequality with $M_n^3 \delta^{* -\frac{5}{2}} \sqrt{\frac{\log d}{n}} = \mathcal{O}(1)$, we have for some constant $C$,
\[
\mathbb{P}\left(\max _{m}\left|\frac{1}{n} \sum_{i=1}^n G_{i m}-\mathbb{E} G_{im}\right|> C\sqrt{\frac{\log d}{n}}C_{n,\delta*}^2 \mid A\right) \leq \mathcal{O}\left(d^{-1}\right),
\]
which further implies that
\begin{align*}
    &\mathbb{P}\left(\max _{m}\left|\frac{1}{n} \sum_{i=1}^n G_{i m}-\mathbb{E} G_{im}\right|> C\sqrt{\frac{\log d}{n}}C_{n,\delta*}^2\right)\\
    \leq& \mathbb{P}\left(\max _{m}\left|\frac{1}{n} \sum_{i=1}^n G_{i m}-\mathbb{E} G_{im}\right|> C\sqrt{\frac{\log d}{n}}C_{n,\delta*}^2 \mid A\right)+ \mathbb{P}(A^c)= o(1).
\end{align*}
Hence
\begin{align}\label{G}
    \left\|\frac{1}{n}\sum\limits_{i=1}^nG_{im}\right\|_\infty = \mathcal{O}_{\mathbb{P}}(\delta^{* \frac{1}{2}}C_{n,\delta*}^2).
\end{align}

% \begin{align*}
%     &\sqrt{n}\left|s(\hat{\delta})-s(\delta^*)
% -s^\prime(\delta^*)(\hat{\delta}-\delta^*)
% \right|\\
% =& \sqrt{n}\Bigg|
% \frac{1}{n}\sum\limits_{i=1}^n w(y_i)y_i\boldsymbol{z}_i^T\boldsymbol{v}
% \int_{\delta^*}^{\hat{\delta}} \Bigg[(x_i-\boldsymbol{\beta}^T\boldsymbol{z_i})^2K^{\prime \prime}\left(\frac{x_i-\boldsymbol{\beta}^T\boldsymbol{z_i}}{t}\right)t^{-\frac{9}{2}}\\
% + &3(x_i-\boldsymbol{\beta}^T\boldsymbol{z_i})K^{\prime}\left(\frac{x_i-\boldsymbol{\beta}^T\boldsymbol{z_i}}{t}\right)t^{-\frac{7}{2}}+
% \frac{3}{4}K\left(\frac{x_i-\boldsymbol{\beta}^T\boldsymbol{z_i}}{t}\right)t^{-\frac{5}{2}}\Bigg](\hat{\delta}-t)dt
% \Bigg|
% \end{align*}
Similarly, denote
\[
H_i = w(y_i)\boldsymbol{z}_iy_i\int_{\delta^*}^{\hat{\delta}} (x_i-\boldsymbol{\beta}^T\boldsymbol{z_i})K^{ \prime}\left(\frac{x_i-\boldsymbol{\beta}^T\boldsymbol{z_i}}{t}\right)t^{-\frac{7}{2}}(\hat{\delta}-t)dt,
\]
\[
J_i = w(y_i)\boldsymbol{z}_iy_i\int_{\delta^*}^{\hat{\delta}} K\left(\frac{x_i-\boldsymbol{\beta}^T\boldsymbol{z_i}}{t}\right)t^{-\frac{5}{2}}(\hat{\delta}-t)dt
\]
and $H_{im}, J_{im}, 1 \leq m \leq d$ as their coordinate. Note that for $H_{im}$, we have
\[
\int |u| |K^{\prime}(u)|du,
\int u^2 |K^{\prime}(u)|du, \int u^2 K^{\prime}(u)^2 du
\int |u^3| K^{\prime}(u)^2 du
\]
are bounded, for $J_{im}$, we have $\int |K(u)|du,\int |u| |K(u)|du,  \int  K(u)^2 du, \int |u| K(u)^2 du$ are bounded,
it can be shown that
% \begin{align*}
%     &\sum_{y} w(y) \int_{z} z_my\int_x
%     (x-\boldsymbol{\beta}^T\boldsymbol{z})
%     \int_{\delta^*}^{\hat{\delta}} K^{ \prime}\left(\frac{x-\boldsymbol{\beta}^T\boldsymbol{z}}{t}\right)t^{-\frac{7}{2}}(\hat{\delta}-t)dt f(x \mid y,\boldsymbol{z})dx f(y, \boldsymbol{z})d\boldsymbol{z}\\
%     x-\boldsymbol{\beta}^T\boldsymbol{z}=u\delta^* \atop =& (\delta^*)^2\sum_{y} w(y) \int_{z} z_my\int_u
%     u
%     \int_{\delta^*}^{\hat{\delta}} K^{ \prime}\left(\frac{u\delta^*}{t}\right)t^{-\frac{7}{2}}(\hat{\delta}-t)dt f(u\delta^*+\boldsymbol{\beta}^T\boldsymbol{z} \mid y,\boldsymbol{z})du f(y, \boldsymbol{z})d\boldsymbol{z}.
% \end{align*}
% \[
% \int_{\delta^*}^{\hat{\delta}}t^{-\frac{7}{2}}(\hat{\delta}-t)dt
% \lesssim \delta^{* -\frac{7}{2}}(\hat{\delta}-\delta^*)^2,
% \int_{\delta^*}^{\hat{\delta}}t^{-\frac{5}{2}}(\hat{\delta}-t)dt
% \lesssim \delta^{* -\frac{5}{2}}(\hat{\delta}-\delta^*)^2
% \]
$\mathbb{E}\left[H_{i m} \mid A\right] \lesssim \delta^{* \frac{1}{2}}C_{n,\delta*}^2$, $\mathbb{E}\left[J_{i m} \mid A\right] \lesssim \delta^{* \frac{1}{2}}C_{n,\delta*}^2$, $\mathbb{E}[H_{im}^2\mid A] \lesssim C_{n,\delta*}^4 $, $\mathbb{E}[J_{im}^2\mid A] \lesssim C_{n,\delta*}^4 $, $|H_{im}|\lesssim M_n^2\delta^{* -\frac{3}{2}}C_{n,\delta*}^2$ and  $|J_{im}|\lesssim M_n\delta^{* -\frac{1}{2}}C_{n,\delta*}^2$.
Therefore, applying Bernstein inequality with $M_n^2 \delta^{* -\frac{3}{2}} \sqrt{\frac{\log d}{n}} = \mathcal{O}(1)$, $M_n \delta^{* -\frac{1}{2}} \sqrt{\frac{\log d}{n}} = \mathcal{O}(1)$, we have for some constant $C$,
\[
\mathbb{P}\left(\max _{m}\left|\frac{1}{n} \sum_{i=1}^n H_{i m}-\mathbb{E} H_{im}\right|> C\sqrt{\frac{\log d}{n}}C_{n,\delta*}^2 \mid A\right) \leq \mathcal{O}\left(d^{-1}\right),
\]
\[
\mathbb{P}\left(\max _{m}\left|\frac{1}{n} \sum_{i=1}^n J_{i m}-\mathbb{E} J_{im}\right|> C\sqrt{\frac{\log d}{n}}C_{n,\delta*}^2 \mid A\right) \leq \mathcal{O}\left(d^{-1}\right),
\]
respectively. Hence
\begin{align}\label{H&J}
    \left\|\frac{1}{n}\sum\limits_{i=1}^nH_{im}\right\|_\infty, \left\|\frac{1}{n}\sum\limits_{i=1}^nJ_{im}\right\|_\infty = \mathcal{O}_{\mathbb{P}}(\delta^{* \frac{1}{2}}C_{n,\delta*}^2).
\end{align}
Plugging (\ref{G}) and (\ref{H&J}) back to (\ref{lemma0.1}) we have
\[
\sqrt{n}\left|s(\hat{\delta})-s(\delta^*)
-s^\prime(\delta^*)(\hat{\delta}-\delta^*)
\right| = \mathcal{O}_{\mathbb{P}}(\sqrt{n}\|\boldsymbol{v}\|_1\delta^{* \frac{1}{2}}C_{n,\delta*}^2).
\]
Under the conditions of Theorem~\ref{thm1_11}, the desired result holds.
\end{proof}

\begin{lemma} \label{lemma2_11} Under the conditions of Theorem~\ref{thm1_11}, we have
\[
\sqrt{n\delta^*}(\hat{\delta}-\delta^*)\left|\boldsymbol{v}^{* T}\left(\nabla_{\cdot \delta} R_{\delta^*}^{n_{(j)}}\left(\widehat{\boldsymbol{\beta}}^{(k)}\right)-\nabla_{\cdot \delta} R_{\delta^*}^{n_{(j)}}\left(\boldsymbol{\beta}^{*}\right)-\nabla^{2}_{\cdot \delta} R_{\delta^*}^{n_{(j)}}\left(\boldsymbol{\beta}^{*}\right)\left(\widehat{\boldsymbol{\beta}}^{(k)}-\boldsymbol{\beta}^{*}\right)\right)\right|=o_{\mathbb{P}}(1)
\]
for $(j, k) \in\{(1,2),(2,1)\}$.
% \[
% \nabla_{\cdot \delta} \bar{R}_{\delta}^{i}\left(\boldsymbol{\beta}^{*}\right)
% =-w(y_i)\frac{y_i\boldsymbol{z_i}}{\delta^2} K\left(\frac{x_{i}-\boldsymbol{\beta}^{* T} \boldsymbol{z}_{i}}{\delta}\right)-w(y_i)\frac{y_i\boldsymbol{z}_i}{\delta}K^\prime\left(\frac{x_{i}-\boldsymbol{\beta}^{* T} \boldsymbol{z}_{i}}{\delta}\right)\frac{x_{i}-\boldsymbol{\beta}^{* T} \boldsymbol{z}_{i}}{\delta^2},
% \]
% \begin{align*}
%     \nabla^2_{\cdot \delta} \bar{R}_{\delta}^{i}\left(\boldsymbol{\beta}^{*}\right)
% =2w(y_i)\frac{y_i\boldsymbol{z_i}\boldsymbol{z_i}^T}{\delta^3}K^\prime\left(\frac{x_{i}-\boldsymbol{\beta}^{* T} \boldsymbol{z}_{i}}{\delta}\right)+
% w(y_i)K^{\prime\prime}\left(\frac{x_{i}-\boldsymbol{\beta}^{* T} \boldsymbol{z}_{i}}{\delta}\right)\frac{y_i(x_i-\boldsymbol{\beta}^{* T}\boldsymbol{z}_i)}{\delta^4}\boldsymbol{z}_i\boldsymbol{z}_i^T,
% \end{align*}
\end{lemma}
\begin{proof}
Note that
\[
\nabla \bar{R}_{\delta}^{i}\left(\boldsymbol{\beta}\right)=w\left(y_{i}\right) \frac{y_{i} \boldsymbol{z}_{i}}{\delta} K\left(\frac{y_{i}\left(x_{i}-\boldsymbol{\beta}^{ T} \boldsymbol{z}_{i}\right)}{\delta}\right),
\]
\[
\nabla_{\cdot \delta} \bar{R}_{\delta}^{i}\left(\boldsymbol{\beta}\right)=
\nabla_\delta(\nabla \bar{R}_{\delta}^{i}\left(\boldsymbol{\beta}\right))
=-w(y_i)\frac{y_i\boldsymbol{z_i}}{\delta^2} K\left(\frac{x_{i}-\boldsymbol{\beta}^{ T} \boldsymbol{z}_{i}}{\delta}\right)-w(y_i)\frac{y_i\boldsymbol{z}_i}{\delta}K^\prime\left(\frac{x_{i}-\boldsymbol{\beta}^{ T} \boldsymbol{z}_{i}}{\delta}\right)\frac{x_{i}-\boldsymbol{\beta}^{ T} \boldsymbol{z}_{i}}{\delta^2}.
\]
% \begin{align*}
%     \nabla^2_{\cdot \delta} \bar{R}_{\delta}^{i}\left(\boldsymbol{\beta}\right)
% % &=-w(y_i)\frac{y_i\boldsymbol{z_i}}{\delta^2} K\left(\frac{x_{i}-\boldsymbol{\beta}^{* T} \boldsymbol{z}_{i}}{\delta}\right)-w(y_i)\frac{y_i\boldsymbol{z}_i}{\delta}K^\prime\left(\frac{x_{i}-\boldsymbol{\beta}^{* T} \boldsymbol{z}_{i}}{\delta}\right)\frac{x_{i}-\boldsymbol{\beta}^{* T} \boldsymbol{z}_{i}}{\delta^2}\\
% =&\nabla_\beta(\nabla_{\cdot \delta} \bar{R}_{\delta}^{i}\left(\boldsymbol{\beta}\right))\\
% =&
% \frac{2w(y_i)y_i\boldsymbol{z_i}\boldsymbol{z_i}^T}{\delta^3}K^\prime\left(\frac{x_{i}-\boldsymbol{\beta}^{ T} \boldsymbol{z}_{i}}{\delta}\right)+
% w(y_i)K^{\prime\prime}\left(\frac{x_{i}-\boldsymbol{\beta}^{ T} \boldsymbol{z}_{i}}{\delta}\right)\frac{y_i(x_i-\boldsymbol{\beta}^{ T}\boldsymbol{z}_i)}{\delta^4}\boldsymbol{z}_i\boldsymbol{z}_i^T,
% \end{align*}
% \begin{align*}
%      \nabla^3_{\cdot \delta}\bar{R}_{\delta}^{i}\left(\boldsymbol{\beta}\right)
%      =&\nabla_\beta(\nabla^2_{\cdot \delta} \bar{R}_{\delta}^{i}\left(\boldsymbol{\beta}\right))\\
%  = &-3w(y_i)\frac{y_i(\boldsymbol{z}_i\boldsymbol{z}_i^T)\boldsymbol{z}_i}{\delta^4}K^{\prime\prime}\left(\frac{x_{i}-\boldsymbol{\beta}^{T} \boldsymbol{z}_{i}}{\delta}\right)
%  -w(y_i)\frac{y_i(x_i-\boldsymbol{\beta}^T\boldsymbol{z}_i)(\boldsymbol{z}_i\boldsymbol{z}_i^T)\boldsymbol{z}_i}{\delta^5}K^{\prime\prime\prime}\left(\frac{x_{i}-\boldsymbol{\beta}^{T} \boldsymbol{z}_{i}}{\delta}\right).
% \end{align*}
With some algebra we have
\begin{align}
    & \left|\boldsymbol{v}^{* T}\left(\nabla_{\cdot \delta} R_{\delta^*}^{n_{(j)}}\left(\widehat{\boldsymbol{\beta}}^{(k)}\right)-\nabla_{\cdot \delta} R_{\delta^*}^{n_{(j)}}\left(\boldsymbol{\beta}^{*}\right)-\nabla^{2}_{\cdot \delta} R_{\delta^*}^{n_{(j)}}\left(\boldsymbol{\beta}^{*}\right)\left(\widehat{\boldsymbol{\beta}}^{(k)}-\boldsymbol{\beta}^{*}\right)\right)\right|\nonumber\\
    \leq &\left|
    \frac{1}{\mathcal{N}_{j}} \sum_{i \in \mathcal{N}_{j}}
    3w(y_i)\frac{y_i\boldsymbol{z}_i^T\boldsymbol{v}^*}{\delta^2} \int_{\frac{x_{i}-\boldsymbol{\beta}^{* T} \boldsymbol{z}_{i}}{\delta}}^{\frac{x_{i}-\widehat{\boldsymbol{\beta}}^{(k) T} \boldsymbol{z}_{i}}{\delta}} K^{\prime \prime}(t)\left(\frac{x_{i}-\widehat{\boldsymbol{\beta}}^{(k) T} \boldsymbol{z}_{i}}{\delta}-t\right) dt
    \right|\nonumber\\
    +&\left|
    \frac{1}{\mathcal{N}_{j}} \sum_{i \in \mathcal{N}_{j}}
    w(y_i)\frac{y_i\boldsymbol{z}_i^T\boldsymbol{v}^*}{\delta^2}
    \int_{\frac{x_{i}-\boldsymbol{\beta}^{* T} \boldsymbol{z}_{i}}{\delta}}^{\frac{x_{i}-\widehat{\boldsymbol{\beta}}^{(k) T} \boldsymbol{z}_{i}}{\delta}} K^{\prime\prime \prime}(t)t\left(\frac{x_{i}-\widehat{\boldsymbol{\beta}}^{(k) T} \boldsymbol{z}_{i}}{\delta}-t\right) dt
    \right|\nonumber\\
    \leq &
    \left\|\boldsymbol{v}^{*}\right\|_{1}
    \left\|\frac{1}{\mathcal{N}_{j}} \sum_{i \in \mathcal{N}_{j}} 3w(y_i)\frac{y_{i}\boldsymbol{z}_{i} }{\delta^2} \int_{\frac{x_{i}-\boldsymbol{\beta}^{* T} \boldsymbol{z}_{i}}{\delta}}^{\frac{x_{i}-\widehat{\boldsymbol{\beta}}^{(k) T} \boldsymbol{z}_{i}}{\delta}} K^{\prime \prime}(t)\left(\frac{x_{i}-\widehat{\boldsymbol{\beta}}^{(k) T} \boldsymbol{z}_{i}}{\delta}-t\right) d t\right\|_{\infty}\nonumber \\
    +&
     \left\|\boldsymbol{v}^{*}\right\|_{1}
     \left\|\frac{1}{\mathcal{N}_{j}} \sum_{i \in \mathcal{N}_{j}} w(y_i)\frac{y_{i}\boldsymbol{z}_{i} }{\delta^2} \int_{\frac{x_{i}-\boldsymbol{\beta}^{* T} \boldsymbol{z}_{i}}{\delta}}^{\frac{x_{i}-\widehat{\boldsymbol{\beta}}^{(k) T} \boldsymbol{z}_{i}}{\delta}} K^{\prime\prime \prime}(t)t\left(\frac{x_{i}-\widehat{\boldsymbol{\beta}}^{(k) T} \boldsymbol{z}_{i}}{\delta}-t\right) d t\right\|_{\infty} \label{lemma0.2}
\end{align}
Denote
\[
G_i = w(y_i)\frac{ y_{i}\boldsymbol{z}_{i}}{\delta^2} \int_{\frac{x_{i}-\boldsymbol{\beta}^{* T} \boldsymbol{z}_{i}}{\delta}}^{\frac{x_{i}-\widehat{\boldsymbol{\beta}}^{(k) T} \boldsymbol{z}_{i}}{\delta}} K^{\prime \prime}(t)\left(\frac{x_{i}-\widehat{\boldsymbol{\beta}}^{(k) T} \boldsymbol{z}_{i}}{\delta}-t\right) d t,
\]
\[
F_i = w(y_i)\frac{y_{i}\boldsymbol{z}_{i}}{\delta^2} \int_{\frac{x_{i}-\boldsymbol{\beta}^{* T} \boldsymbol{z}_{i}}{\delta}}^{\frac{x_{i}-\widehat{\boldsymbol{\beta}}^{(k) T} \boldsymbol{z}_{i}}{\delta}} K^{\prime\prime \prime}(t)t\left(\frac{x_{i}-\widehat{\boldsymbol{\beta}}^{(k) T} \boldsymbol{z}_{i}}{\delta}-t\right) d t,
\]
and $G_{im}, F_{im}$, $1 \leq m \leq d$ as its coordinates. Consider the event $A:=\left\{\left\|\widehat{\boldsymbol{\beta}}^{(k)}-\boldsymbol{\beta}^{*}\right\|_{1} \lesssim C \eta_{1}(n)\right\}$ for some constant $C$. Denote $\Delta=\frac{\left(\widehat{\boldsymbol{\beta}}^{(k)}-\boldsymbol{\beta}\right)^{T} \boldsymbol{z}}{\delta}$, then following the similar proof in Lemma \ref{lemma_second-order-samplediff}, we can show that $\mathbb{E}\left[G_{i m} \mid A\right] \lesssim |\Delta|^2 \asymp  \frac{M_{n}^{2} \eta_{1}(n)^{2}}{\delta^2}$, and $\mathbb{E}\left[G_{i m}^2 \mid A\right] \lesssim  \frac{M_{n}^{4} \eta_{1}(n)^{4}}{\delta^{7}}$. Also we know for each $i \in \mathcal{N}_{j}, m=1, \ldots, d$, $|G_{im}| \lesssim \frac{M_{n}}{\delta^2}\left(\frac{M_{n} \eta_{1}(n)}{\delta}\right)^{2}$. Therefore, by Bernstein inequality with $M_{n} \sqrt{\frac{\log (d)}{n \delta}}=\mathcal{O}(1)$, we can obtain that
\begin{align*}
    \mathbb{P}\left(\max _{m}\left|\frac{1}{\mathcal{N}_{j}} \sum_{i \in \mathcal{N}_{j}} G_{i m}-\mathbb{E} G_{m}\right|>\frac{M_{n}^{2} \eta_{1}(n)^{2}}{\delta^2} \sqrt{\frac{\log (d)}{n \delta^{3}}} \mid A\right) \leq \mathcal{O}\left(d^{-1}\right).
\end{align*}
Then since the event $A$ holds with probability tending to 1, we have
\begin{equation*}
    \mathbb{P}\left(\max _{m}\left|\frac{1}{\mathcal{N}_{j}} \sum_{i \in \mathcal{N}_{j}} G_{i m}-\mathbb{E} G_{m}\right|>\frac{M_{n}^{2} \eta_{1}(n)^{2}}{\delta^2} \sqrt{\frac{\log (d)}{n \delta^{3}}} \right)=o(1),
\end{equation*}
and therefore the first term in (\ref{lemma0.2}) follows
\begin{equation}\label{Gim}
     \left\|\frac{1}{\mathcal{N}_{j}} \sum_{i \in \mathcal{N}_{j}} 3w(y_i)\frac{y_{i}\boldsymbol{z}_{i} }{\delta^2} \int_{\frac{x_{i}-\boldsymbol{\beta}^{* T} \boldsymbol{z}_{i}}{\delta}}^{\frac{x_{i}-\widehat{\boldsymbol{\beta}}^{(k) T} \boldsymbol{z}_{i}}{\delta}} K^{\prime \prime}(t)\left(\frac{x_{i}-\widehat{\boldsymbol{\beta}}^{(k) T} \boldsymbol{z}_{i}}{\delta}-t\right) d t\right\|_{\infty}=
     \mathcal{O}_{\mathbb{P}}\left(\frac{M_{n}^{2} \eta_{1}(n)^{2}}{\delta^2}\right).
\end{equation}

Now we analyze $F_i$. Note that
{\small
\begin{align}
    &\sum_{y} w(y) \int_{\boldsymbol{z}} \frac{z_{m} y}{\delta^2} \int_{x} \int_{\frac{x-\boldsymbol{\beta}^{* T}\boldsymbol{z}}{\delta}}^{\frac{x-\hat{\boldsymbol{\beta}}^{(k) T}\boldsymbol{z}}{\delta}} K^{\prime\prime \prime}(t)t\left(\frac{x-\widehat{\boldsymbol{\beta}}^{(k) T} \boldsymbol{z}}{\delta}-t\right) d t f(x \mid y, \boldsymbol{z}) d x f(y, \boldsymbol{z}) d \boldsymbol{z}\nonumber\\
   (u=\left(x-\boldsymbol{\beta}^{* T} \boldsymbol{z}\right) / \delta)\atop=
   & \sum_{y} w(y) \int_{\boldsymbol{z}}\frac{z_{m} y}{\delta} \int_{u} \int_{u}^{u+\Delta}
   K^{\prime\prime \prime}(t)t(u+\Delta-t)dt f\left(u \delta+\boldsymbol{\beta}^{* T} \boldsymbol{z} \mid y, \boldsymbol{z}\right) d u f(y, \boldsymbol{z}) d \boldsymbol{z}\nonumber\\
   = & \sum_{y} w(y) \int_{\boldsymbol{z}}\frac{z_{m} y}{\delta} \int_{u} \int_{u}^{u+\Delta}
   K^{\prime\prime \prime}(t)t(u+\Delta-t)dt f\left(\boldsymbol{\beta}^{* T} \boldsymbol{z} \mid y, \boldsymbol{z}\right) d u f(y, \boldsymbol{z}) d \boldsymbol{z}\nonumber\\
   +&
   \sum_{y} w(y) \int_{\boldsymbol{z}}z_{m} y \int_{u}u \int_{u}^{u+\Delta}
   K^{\prime\prime \prime}(t)t(u+\Delta-t)dt f^\prime\left(\tau u\delta+ \boldsymbol{\beta}^{* T} \boldsymbol{z} \mid y, \boldsymbol{z}\right) d u f(y, \boldsymbol{z}) d \boldsymbol{z}, \label{lemma0.4}
\end{align}
}
where $\tau \in[0,1]$.
Note that
\begin{align*}
     \int_{u}^{u+\Delta}
   K^{\prime\prime \prime}(t)t(u+\Delta-t)dt
   =-u\Delta K^{\prime\prime}(u)+(u+\Delta)K^\prime(u+\Delta)+(\Delta-u)K^\prime(u)-2K(u+\Delta)+2K(u).
\end{align*}
Since $K^\prime$ degenerates at the boundaries and $\int K(u) du =1$, we have
\begin{align*}
    \int_{u} \int_{u}^{u+\Delta}
   K^{\prime\prime \prime}(t)t(u+\Delta-t)dt f\left(\boldsymbol{\beta}^{* T} \boldsymbol{z} \mid y, \boldsymbol{z}\right) d u = 0.
\end{align*}
Therefore, we only need to consider the second term of (\ref{lemma0.4}).
Note that $|f^\prime(x \mid y, \boldsymbol{z})|$ is bounded, we have for some constant $C, 0 < \tau <1$,
\begin{align*}
    &\int_{u}u \int_{u}^{u+\Delta}
   K^{\prime\prime \prime}(t)t(u+\Delta-t)dt f^\prime\left(\tau u\delta+ \boldsymbol{\beta}^{* T} \boldsymbol{z} \mid y, \boldsymbol{z}\right) d u\\
   \lesssim & C  \int_t  |t K^{\prime\prime \prime}(t)| \int_{t-\Delta}^t
   |u(u+\Delta-t)|du dt \\
   \leq &  C(1-\tau)\Delta^2 \int_t \left| t K^{\prime\prime \prime}(t) (t-\tau \Delta) \right|dt.
\end{align*}
Note that $\int |t| |K^{\prime\prime \prime}(t)|dt, \int t^2 |K^{\prime\prime \prime}(t)|dt$ are bounded, hence
% \begin{align*}
%     &\int_{u}u\delta \int_{u}^{u+\Delta}
%   K^{\prime\prime}(t) (u+\Delta-t)dt f^\prime\left(\tau u\delta+ \boldsymbol{\beta}^{* T} \boldsymbol{z} \mid y, \boldsymbol{z}\right) d u\\
%   \lesssim & C \delta \int_t   K^{\prime \prime}(t)  \int_{t-\Delta}^t
%   u(u+\Delta-t)dudt \\
%   = & C(1-\tau)\Delta^2 \delta \int_t   K^{\prime \prime}(t)  (t-\tau \Delta)dt
% \end{align*}
$\mathbb{E}\left[F_{i m} \mid A\right] \lesssim  \Delta^2  \lesssim \frac{M_n^2\eta_1(n)^2}{\delta^2}$.
For the second moment, note that $\int u^2 K^{\prime\prime \prime}(u)^2 du, \int |u^3| K^{\prime\prime \prime}(u)^2du$ are bounded,
we have
\begin{align*}
&\mathbb{E}\left[\left(w(y)\frac{yz_m}{\delta^2} \int^{\frac{x-\hat\bbeta^{(k)T}\bz}{\delta}}_{\frac{x-\bbeta^{*T}\bz}{\delta}}
K^{\prime\prime \prime}(t)t\left(\frac{x-\widehat{\boldsymbol{\beta}}^{(k) T} \boldsymbol{z}}{\delta}-t\right) d t\right)^2 \mid A\right] \\
    =&\sum_{y} w(y)^2 \int_{\boldsymbol{z}}\frac{z_{m}^2 }{\delta^3} \int_{u} \left(\int_{u}^{u+\Delta}
   K^{\prime\prime \prime}(t)t(u+\Delta-t)dt\right)^2 f\left(u \delta+\boldsymbol{\beta}^{* T} \boldsymbol{z} \mid y, \boldsymbol{z}\right) d u f(y, \boldsymbol{z}) d \boldsymbol{z}\\
   \lesssim &\sum_{y} w(y)^2 \int_{\boldsymbol{z}}\frac{z_{m}^2 }{\delta^3}
   \Delta^4
   f(y, \boldsymbol{z}) d \boldsymbol{z} \lesssim \frac{M_n^4\eta_1(n)^4}{\delta^7}.
\end{align*}
Note that $|F_{i m}| \lesssim \frac{M_{n}}{\delta^{2}}\cdot\frac{M_n}{\delta}\left(\frac{M_{n} \eta_{1}(n)}{\delta}\right)^{2} = \frac{M_n^4\eta_1(n)^2}{\delta^5}$. Therefore, by Bernstein inequality with $\frac{M_n^2}{\delta} \sqrt{\frac{\log (d)}{n \delta}}=\mathcal{O}(1)$, we can obtain that
\begin{equation*}
    \mathbb{P}\left(\max _{m}\left|\frac{1}{\mathcal{N}_{j}} \sum_{i \in \mathcal{N}_{j}} F_{i m}-\mathbb{E} F_{m}\right|>\frac{M_{n}^{2} \eta_{1}(n)^{2}}{\delta^3} \sqrt{\frac{\log (d)}{n \delta}} \mid A\right) \leq \mathcal{O}\left(d^{-1}\right),
\end{equation*}
% Similarly, we can show that
% $\mathbb{E}\left[F_{i m}^2 \mid A\right] \lesssim  \frac{M_{n}^{4} \eta_{1}(n)^{4}}{\delta^{7}}$ and $\left|F_{i m}\right| A \mid \lesssim \frac{M_{n}^{3}}{\delta^{4}}\left(\frac{M_{n} \eta_{1}(n)}{\delta}\right)^{2}$.
% Therefore, by Bernstein inequality with $\frac{M_{n}^3}{\delta^2} \sqrt{\frac{\log (d)}{n \delta}}=\mathcal{O}(1)$, we can obtain that
% \begin{equation*}
%     \mathbb{P}\left(\max _{m}\left|\frac{1}{\mathcal{N}_{j}} \sum_{i \in \mathcal{N}_{j}} F_{i m}-\mathbb{E} F_{m}\right|>\frac{M_{n}^{2} \eta_{1}(n)^{2}}{\delta^2} \sqrt{\frac{\log (d)}{n \delta^{3}}} \mid A\right) \leq \mathcal{O}\left(d^{-1}\right),
% \end{equation*}
which further implies that
\begin{equation}\label{Fim}
     \left\|\frac{1}{\mathcal{N}_{j}} \sum_{i \in \mathcal{N}_{j}} w(y_i)\frac{y_{i}\boldsymbol{z}_{i} }{\delta^2} \int_{\frac{x_{i}-\boldsymbol{\beta}^{* T} \boldsymbol{z}_{i}}{\delta}}^{\frac{x_{i}-\widehat{\boldsymbol{\beta}}^{(k) T} \boldsymbol{z}_{i}}{\delta}} K^{\prime\prime \prime}(t)t\left(\frac{x_{i}-\widehat{\boldsymbol{\beta}}^{(k) T} \boldsymbol{z}_{i}}{\delta}-t\right) d t\right\|_{\infty}=
     \mathcal{O}_{\mathbb{P}}\left(\frac{M_{n}^{2} \eta_{1}(n)^{2}}{\delta^2}\right).
\end{equation}
Plugging (\ref{Gim}) and (\ref{Fim}) back to (\ref{lemma0.2}) we have
\begin{align*}
    \left|\boldsymbol{v}^{* T}\left(\nabla_{\cdot \delta} R_{\delta^*}^{n_{(j)}}\left(\widehat{\boldsymbol{\beta}}^{(k)}\right)-\nabla_{\cdot \delta} R_{\delta^*}^{n_{(j)}}\left(\boldsymbol{\beta}^{*}\right)-\nabla^{2}_{\cdot \delta} R_{\delta^*}^{n_{(j)}}\left(\boldsymbol{\beta}^{*}\right)\left(\widehat{\boldsymbol{\beta}}^{(k)}-\boldsymbol{\beta}^{*}\right)\right)\right|=
     \mathcal{O}_{\mathbb{P}}\left(\frac{\|\boldsymbol{v}^*\|_1M_{n}^{2} \eta_{1}(n)^{2}}{\delta^{* 2}}\right),
\end{align*}
and
\begin{align*}
    &\sqrt{n\delta^*}(\hat{\delta}-\delta^*)\left|\boldsymbol{v}^{* T}\left(\nabla_{\cdot \delta} R_{\delta^*}^{n_{(j)}}\left(\widehat{\boldsymbol{\beta}}^{(k)}\right)-\nabla_{\cdot \delta} R_{\delta^*}^{n_{(j)}}\left(\boldsymbol{\beta}^{*}\right)-\nabla^{2}_{\cdot \delta} R_{\delta^*}^{n_{(j)}}\left(\boldsymbol{\beta}^{*}\right)\left(\widehat{\boldsymbol{\beta}}^{(k)}-\boldsymbol{\beta}^{*}\right)\right)\right|\\
    =&\mathcal{O}_{\mathbb{P}}\left(\frac{\sqrt{n}\|\boldsymbol{v^*}\|_1M_{n}^{2} \eta_{1}(n)^{2}C_{n,\delta^*}}{\delta^{* \frac{1}{2}}}\right).
\end{align*}
Under the conditions of Theorem~\ref{thm1_11}, the desired result holds.
\end{proof}

Now, we are ready to prove Theorem \ref{thm1_11}.

\begin{proof}
Recall that $\widehat{U}_{n}(\delta)=\sqrt{n \delta}\left(\frac{\widehat{S}_{\delta}(0, \widehat{\gamma})-\delta^{\ell} \widehat{\mu}}{\widehat{\sigma}}\right)$. Denote $f(\delta) = \sqrt{n \delta}\left(\widehat{S}_{\delta}(0, \widehat{\gamma})-\delta^{\ell} \widehat{\mu}\right)$. Note that the optimal bandwidth $\delta^*$ has the same order as the fixed oracle $\delta$, hence it suffices to show that $ f(\hat{\delta})-f(\delta^*)=o_{\mathbb{P}}(1)$. Recall that to make $\hat\delta$ and the data in the score function independent, we apply the same cross-fitting technique. In the following proof, we omit the superscript for the $i$th fold for simplicity.

Write $f(\delta) = \sqrt{n \delta}\widehat{S}_{\delta}(0, \widehat{\gamma})
- \sqrt{n \delta}\delta^{\ell} \widehat{\mu}\coloneqq f_1(\delta)+f_2(\delta)$. In Lemma~\ref{lemma1_11} we showed that
\[
\left|
f_1(\hat{\delta})-f_1(\delta^*)-f_1^\prime(\delta^*)(\hat{\delta}-\delta^*)
\right|=o_{\mathbb{P}}(1).
\]
Note that
\[
\left|
f_2(\hat{\delta})-f_2(\delta^*)-f_2^\prime(\delta^*)(\hat{\delta}-\delta^*)
\right|= \mathcal{O}\left( \sqrt{n}\Tilde{\delta}^{\ell-\frac{3}{2}}\delta^{* 2}C_{n,\delta^*}^2 \hat{\mu} \right),
\]
where $\Tilde{\delta}$ is between $\delta^*$ and $\hat{\delta}$. Note that in Lemma \ref{lemma_consistent_bias_pilot} we have shown that $|\hat{\mu}-\mu^*|=o_{\mathbb{P}}(1)$, since $\mu^*$ is bounded away from 0 and $\infty$, we have $\sqrt{n}\Tilde{\delta}^{\ell-\frac{3}{2}}\delta^{* 2}C_{n,\delta^*}^2 \hat{\mu}
\lesssim \sqrt{n}\delta^{* \ell+\frac{1}{2}}C_{n,\delta^*}^2 =
o_{\mathbb{P}}(1)$ under the conditions of Theorem~\ref{thm1_11}.

For simplicity we write $\nabla_{\cdot \delta} R_{\delta}^{n}(\boldsymbol{\beta})=\nabla_\delta(\nabla R_{\delta}^{n}(\boldsymbol{\beta}))$, we can write
\begin{align*}
    &f(\hat{\delta})-f(\delta^*)\\
    =&f^\prime(\delta^*)(\hat{\delta}-\delta^*)+o_{\mathbb{P}}(1)\\
    =
    &\frac{\hat{\delta}-\delta^*}{2\delta^*}\sqrt{n\delta^*}\left(\widehat{S}_{\delta^*}(0, \widehat{\gamma})-\delta^{* \ell} \widehat{\mu}\right)
    +
    \sqrt{n\delta^*}\left(\hat{\boldsymbol{v}}^{T}\nabla_{\cdot \delta} R_{\delta^*}^{n}(0, \hat{\gamma})
    -\ell\delta^{* \ell-1}\hat{\mu}
    \right)
    (\hat{\delta}-\delta^*)
    +o_{\mathbb{P}}(1).
\end{align*}
In Theorem \ref{theorem_score}, we showed that  $\sqrt{n\delta}\left(\widehat{S}_{\delta}(0, \widehat{\gamma})-\delta^\ell u^*\right)=O_{\mathbb{P}}(1)$ for the oracle $\delta^*$. Then under the assumption that $|\hat{\mu}-\mu^*|=o_{\mathbb{P}}(1)$, we have $  \frac{\hat{\delta}-\delta^*}{2\delta^*}\sqrt{n\delta^*}\left(\widehat{S}_{\delta^*}(0, \widehat{\gamma})-\delta^{* \ell} \widehat{\mu}\right) = o_{\mathbb{P}}(1)$.

%  $\nabla^2_{\cdot \delta} R_{\delta}^{n}(\boldsymbol{\beta})=\nabla_\delta(\nabla^2 R_{\delta}^{n}(\boldsymbol{\beta}))$ and the same notation for $R(\boldsymbol{\beta})$.
Denote
\begin{align*}
I_{1}&=\mathbb{E}(\boldsymbol{v}^{* T}\nabla_{\cdot \delta} R_{\delta^*}^{n}(\boldsymbol{\beta}^*))
    -\ell\delta^{* \ell-1}\hat{\mu},\\
I_{2}&=\boldsymbol{v}^{* T}\nabla_{\cdot \delta} R_{\delta^*}^{n}(\boldsymbol{\beta}^*)
    -\mathbb{E}(\boldsymbol{v}^{* T}\nabla_{\cdot \delta} R_{\delta^*}^{n}(\boldsymbol{\beta}^*)),\\
I_{3}&=\hat{\boldsymbol{v}}^{T}\nabla_{\cdot \delta} R_{\delta^*}^{n}(0, \hat{\gamma})
    -\boldsymbol{v}^{* T}\nabla_{\cdot \delta} R_{\delta^*}^{n}(\boldsymbol{\beta}^*).
\end{align*}
Then we can write
\begin{align*}
        \sqrt{n\delta^*}\left(\hat{\boldsymbol{v}}^{T}\nabla_{\cdot \delta} R_{\delta^*}^{n}(0, \hat{\gamma})
    -\ell\delta^{* \ell-1}\hat{\mu}
    \right)
    (\hat{\delta}-\delta^*) =  \sqrt{n\delta^*} (\hat{\delta}-\delta^*)\left(
    I_{1}+I_{2}+I_{3}
    \right).
\end{align*}
Note that $\mathbb{E}(\boldsymbol{v}^{* T}\nabla_{\cdot \delta} R_{\delta^*}^{n}(\boldsymbol{\beta}^*))=\boldsymbol{v}^{* T}\nabla_{\cdot \delta}R_{\delta^*}(\boldsymbol{\beta}^*)$ and
\[
\boldsymbol{v}^{* T} \nabla R_{\delta}\left(\boldsymbol{\beta}^{*}\right)=\sum_{y \in\{-1,1\}} w(y) y \int\left(\boldsymbol{v}^{* T} \boldsymbol{z}\right) \int K(u) f\left(u \delta+\boldsymbol{\beta}^{* T} \boldsymbol{z} \mid y, \boldsymbol{z}\right) d u f(y, \boldsymbol{z}) d \boldsymbol{z},
\]
hence
\begin{align}
    \boldsymbol{v}^{* T} \nabla_{\cdot \delta} R_{\delta}\left(\boldsymbol{\beta}^{*}\right)
&=\sum_{y \in\{-1,1\}} w(y) y \int\left(\boldsymbol{v}^{* T} \boldsymbol{z}\right) \int K(u) u f^\prime\left(u \delta+\boldsymbol{\beta}^{* T} \boldsymbol{z} \mid y, \boldsymbol{z}\right) d u f(y, \boldsymbol{z}) d \boldsymbol{z} \nonumber\\
&=\sum_{y \in\{-1,1\}} w(y) y \int\left(\boldsymbol{v}^{* T} \boldsymbol{z}\right) \int K(u) \frac{u^{\ell}\delta^{\ell-1}}{(\ell-1) !}\left(f^{(\ell)}\left(\boldsymbol{\beta}^{* T} \boldsymbol{z} \mid y, \boldsymbol{z}\right)+\mathcal{O}\left((u \delta)^{\zeta}\right)\right) d u f(y, \boldsymbol{z}) d \boldsymbol{z} \nonumber\\
&=\ell\delta^{\ell-1} \boldsymbol{v}^{* T} \boldsymbol{b}^{*}(1+o(1))
=\ell\delta^{\ell-1}\mu^*(1+o(1)).\label{pf1}
\end{align}
Under the assumption that $|\hat{\mu}-\mu^*|=o_{\mathbb{P}}(1)$, $\delta^* \asymp n^{-1 /(2 \ell+1)}$ and $\frac{\widehat{\delta}-\delta^{*}}{\delta^{*}}=o_{\mathbb{P}}(1)$, we have
\[
 \sqrt{n\delta^*} (\hat{\delta}-\delta^*)I_{1}= \sqrt{n\delta^*} (\hat{\delta}-\delta^*)\ell\delta^{* (\ell-1)}(\mu^*-\hat{\mu}+o(\mu^*))
 \asymp n^{\frac{1}{2 \ell+1}}(\hat{\delta}-\delta^*)
 =o_{\mathbb{P}}(1).
\]

Next we consider $I_{2}$. First write $\nabla_{\cdot \delta} R_{\delta}^{n}(\boldsymbol{\beta}^*)=\frac{1}{n}\sum_{i=1}^n
\nabla_{\cdot \delta} \bar{R}_{\delta}^{i}\left(\boldsymbol{\beta}^{*}\right)$. Denote
\[
T_{ij}=\left(\nabla_{\cdot \delta} \bar{R}_{\delta}^{i}(\boldsymbol{\beta}^*)-\nabla_{\cdot \delta}R_{\delta}(\boldsymbol{\beta}^*)\right)_{j},
\]
then $\mathbb{E}(T_{ij})=0$. Note that
\[
\nabla \bar{R}_{\delta}^{i}\left(\boldsymbol{\beta}^{*}\right)=w\left(y_{i}\right) \frac{y_{i} \boldsymbol{z}_{i}}{\delta} K\left(\frac{y_{i}\left(x_{i}-\boldsymbol{\beta}^{* T} \boldsymbol{z}_{i}\right)}{\delta}\right),
\]
hence
\[
\nabla_{\cdot \delta} \bar{R}_{\delta}^{i}\left(\boldsymbol{\beta}^{*}\right)
=-w(y_i)\frac{y_i\boldsymbol{z_i}}{\delta^2} K\left(\frac{x_{i}-\boldsymbol{\beta}^{* T} \boldsymbol{z}_{i}}{\delta}\right)-w(y_i)\frac{y_i\boldsymbol{z}_i}{\delta}K^\prime\left(\frac{x_{i}-\boldsymbol{\beta}^{* T} \boldsymbol{z}_{i}}{\delta}\right)\frac{x_{i}-\boldsymbol{\beta}^{* T} \boldsymbol{z}_{i}}{\delta^2},
\]
and
\begin{align*}
    & \mathbb{E}\left((\nabla_{\cdot \delta} \bar{R}_{\delta}^{i}(\boldsymbol{\beta}^*))_j^2\right)\\
    =&\sum\limits_{y}\int \left(
    w(y)\frac{yz_j}{\delta^2} K\left(\frac{x-\boldsymbol{\beta}^{* T} \boldsymbol{z}}{\delta}\right)+w(y)\frac{yz_j}{\delta}K^\prime\left(\frac{x-\boldsymbol{\beta}^{* T} \boldsymbol{z}}{\delta}\right)\frac{y\left(x-\boldsymbol{\beta}^{* T} \boldsymbol{z}\right)}{\delta^2}
    \right)^2f(x\mid y,\boldsymbol{z})dx f(y,\boldsymbol{z})d\boldsymbol{z}\\
    =&\sum\limits_{y}\int
    \left(
    w(y)^2\frac{z_j^2}{\delta^3}K^2(u)+
    w(y)^2\frac{z_j^2u^2}{\delta^3}K^{\prime 2}(u)+
    2w(y)^2y\frac{z_j^2u}{\delta^3}K(u)K^\prime(u)
    \right)
    f(u\delta + \boldsymbol{\beta}^{* T}\boldsymbol{z}\mid y,\boldsymbol{z})du f(y,\boldsymbol{z})d\boldsymbol{z}\\
    = & \sum\limits_{y}\int
    \left(
    w(y)^2\frac{z_j^2}{\delta^3}K^2(u)+
    w(y)^2\frac{z_j^2u^2}{\delta^3}K^{\prime 2}(u)+
    2w(y)^2y\frac{z_j^2u}{\delta^3}K(u)K^\prime(u)
    \right)\\
    &~~~~~~~\cdot
   \left(
    f( \boldsymbol{\beta}^{* T}\boldsymbol{z}\mid y,\boldsymbol{z})
    +
    u \delta f^\prime(\tau u\delta+\boldsymbol{\beta}^{* T}\boldsymbol{z}\mid y,\boldsymbol{z})
   \right)du f(y,\boldsymbol{z})d\boldsymbol{z},
\end{align*}
where $0<\tau<1$.
Note that
\[
\int K^2(u)du, \int |u|K^2(u)du, \int u^2 K^\prime(u)^2 du, \int |u^3| K^\prime(u)^2 du
\]
are bounded, hence $\mathbb{E}\left((\nabla_{\cdot \delta} \bar{R}_{\delta}^{i}(\boldsymbol{\beta}^*))_j^2\right) = \mathcal{O}(\frac{1}{\delta^3})$.
% \begin{align*}
%   &\sum\limits_{y}\int \left(
%     w(y)\frac{yz_j}{\delta^2} K\left(\frac{x-\boldsymbol{\beta}^{* T} \boldsymbol{z}}{\delta}\right)
%     \right)^2
%     f(x\mid y,\boldsymbol{z})dx f(y,\boldsymbol{z})d\boldsymbol{z}\\
%     =& \sum\limits_{y}\int
%     w(y)^2\frac{z_j^2}{\delta^3}K^2(u)
%     f(u\delta + \boldsymbol{\beta}^{* T}\boldsymbol{z}\mid y,\boldsymbol{z})du f(y,\boldsymbol{z})d\boldsymbol{z}= \mathcal{O}(\frac{1}{\delta^3}).
% \end{align*}
From (\ref{pf1}) we know that
\begin{equation}\label{ap2}
    \left(\nabla_{\cdot \delta}R_{\delta}(\boldsymbol{\beta}^*)\right)_j^2
=\mathcal{O}(\delta^{2\ell-2}).
\end{equation}
Therefore $\mathbb{E}T_{i j }^2=\mathcal{O}\left(\frac{1}{\delta^{3}}\right)$. Since $\left|T_{i j }\right| \lesssim \frac{M_{n}^{2}}{\delta^{3}}$, by Bernstein inequality with $M_{n}^{2} \sqrt{\frac{\log d}{n\delta^3}}=\mathcal{O}(1)$, we can show that with probability greater than $1-o(1)$
\begin{equation}\label{ap3}
    \|\nabla_{\cdot \delta} R_{\delta}^{n}(\boldsymbol{\beta}^*)-\nabla_{\cdot \delta}R_{\delta}(\boldsymbol{\beta}^*)\|_{\max} \lesssim \sqrt{\frac{\log d}{n\delta^3}}.
\end{equation}
Therefore with probability greater than $1-o(1)$,
\begin{align*}
    |\sqrt{n\delta^*} (\hat{\delta}-\delta^*)I_{2}|
    =&|\sqrt{n\delta^*} (\hat{\delta}-\delta^*)\left(\boldsymbol{v}^{* T}\nabla_{\cdot \delta} R_{\delta^*}^{n}(\boldsymbol{\beta}^*)
    -\mathbb{E}(\boldsymbol{v}^{* T}\nabla_{\cdot \delta} R_{\delta^*}^{n}(\boldsymbol{\beta}^*))\right)|\\
    \leq& |\sqrt{n\delta^*} (\hat{\delta}-\delta^*)|\|\boldsymbol{v}^*\|_1
     \|\nabla_{\cdot \delta} R_{\delta^*}^{n}(\boldsymbol{\beta}^*)-\nabla_{\cdot \delta}R_{\delta^*}(\boldsymbol{\beta}^*)\|_{\max} \\
     \lesssim&
     \sqrt{\log d}\|\boldsymbol{v}^*\|_1
     \left|\frac{\hat{\delta}-\delta^*}{\delta^{*}}\right|=o(1)
\end{align*}
under the conditions in Theorem~\ref{thm1_11}.

Now we analyze $I_3$. Write
\begin{align*}
    \sqrt{n\delta^*} (\hat{\delta}-\delta^*)I_{3} =& \sqrt{n\delta^*} \left(\hat{\boldsymbol{v}}^{ T}\nabla_{\cdot \delta} R_{\delta^*}^{n}(\hat{\boldsymbol{\beta}}_0)
    -\boldsymbol{v}^{* T}\nabla_{\cdot \delta} R_{\delta^*}^{n}(\boldsymbol{\beta}^*)\right)(\hat{\delta}-\delta^*)\\
     \leq &\sqrt{n\delta^*}(\hat{\delta}-\delta^*)|\boldsymbol{v}^{* T}\left(\nabla_{\cdot \delta} R_{\delta^*}^{n}(\hat{\boldsymbol{\beta}}_0)-\nabla_{\cdot \delta} R_{\delta^*}^{n}(\boldsymbol{\beta}^*)\right)|\\
    &~~+\sqrt{n\delta^*}(\hat{\delta}-\delta^*)|(\hat{\boldsymbol{v}}^{ T}-\boldsymbol{v}^{* T})\nabla_{\cdot \delta} R_{\delta^*}^{n}(\hat{\boldsymbol{\beta}}_0)|\\
    \coloneqq &I_{31}+I_{32}.
\end{align*}
By Lemma~\ref{lemma2_11} we have
\begin{align*}
   I_{31}&= \sqrt{n\delta^*}(\hat{\delta}-\delta^*)|\boldsymbol{v}^{* T}\nabla^2_{\cdot \delta} R_{\delta^*}^{n}(\boldsymbol{\beta}^*)(\hat{\boldsymbol{\beta}}_0-\boldsymbol{\beta}^*)|
+o_{\mathbb{P}}(1)\\
&\leq \sqrt{n\delta^*}(\hat{\delta}-\delta^*)\|\boldsymbol{v}^{*}\|_1\|\nabla^2_{\cdot \delta} R_{\delta^*}^{n}(\boldsymbol{\beta}^*)\|_{\max}\|\hat{\boldsymbol{\beta}}_0-\boldsymbol{\beta}^*\|_1+o_{\mathbb{P}}(1).
\end{align*}
Denote
\[
T_{ijk}=\left(\nabla^2_{\cdot \delta} \bar{R}_{\delta}^{i}(\boldsymbol{\beta}^*)-\nabla^2_{\cdot \delta}R_{\delta}(\boldsymbol{\beta}^*)\right)_{jk}.
\]
Note that
\begin{align*}
    \nabla^2_{\cdot \delta} \bar{R}_{\delta}^{i}\left(\boldsymbol{\beta}^{*}\right)
% &=-w(y_i)\frac{y_i\boldsymbol{z_i}}{\delta^2} K\left(\frac{x_{i}-\boldsymbol{\beta}^{* T} \boldsymbol{z}_{i}}{\delta}\right)-w(y_i)\frac{y_i\boldsymbol{z}_i}{\delta}K^\prime\left(\frac{x_{i}-\boldsymbol{\beta}^{* T} \boldsymbol{z}_{i}}{\delta}\right)\frac{x_{i}-\boldsymbol{\beta}^{* T} \boldsymbol{z}_{i}}{\delta^2}\\
=2w(y_i)\frac{y_i\boldsymbol{z_i}\boldsymbol{z_i}^T}{\delta^3}K^\prime\left(\frac{x_{i}-\boldsymbol{\beta}^{* T} \boldsymbol{z}_{i}}{\delta}\right)+
w(y_i)K^{\prime\prime}\left(\frac{x_{i}-\boldsymbol{\beta}^{* T} \boldsymbol{z}_{i}}{\delta}\right)\frac{y_i(x_i-\boldsymbol{\beta}^{* T}\boldsymbol{z}_i)}{\delta^4}\boldsymbol{z}_i\boldsymbol{z}_i^T,
\end{align*}
and we have
\begin{align*}
    \nabla^2_{\cdot \delta}R_{\delta}(\boldsymbol{\beta}^*)_{jk}=&\sum_{y \in\{-1,1\}} w(y) y \int  z_jz_k \int K(u) u f^{\prime\prime}\left(u \delta+\boldsymbol{\beta}^{* T} \boldsymbol{z} \mid y, \boldsymbol{z}\right) d u f(y, \boldsymbol{z}) d \boldsymbol{z} \\
    = & \sum_{y \in\{-1,1\}} w(y) y \int  z_jz_k \int K(u) \frac{u^{\ell-1}\delta^{\ell-2}}{(\ell-2) !}\left(f^{(\ell)}\left(\boldsymbol{\beta}^{* T} \boldsymbol{z} \mid y, \boldsymbol{z}\right)+\mathcal{O}\left((u \delta)^{\zeta}\right)\right) d u f(y, \boldsymbol{z}) d \boldsymbol{z}\\
    = & \mathcal{O}(\delta^{\ell-2}),
\end{align*}
hence
\begin{align*}
    \left(\nabla^2_{\cdot \delta}R_{\delta}(\boldsymbol{\beta}^*)\right)_{jk}^2=\mathcal{O}(\delta^{2\ell-4}).
\end{align*}
Note that
\[
\int K^\prime(u)^2 du. \int |u|K^\prime(u)^2 du, \int u^2 K^{\prime\prime}(u)^2 du, \int |u^3| K^{\prime\prime}(u)^2 du
\]
are bounded, hence
\begin{align*}
    &     \mathbb{E}[ \left(\nabla^2_{\cdot \delta}\Bar{R}_{\delta}^i(\boldsymbol{\beta}^*)\right)_{jk}^2]\\
    =&\sum\limits_{y}\int \left(
    2w(y)\frac{y z_jz_k}{\delta^3}K^\prime\left(\frac{x-\boldsymbol{\beta}^{* T} \boldsymbol{z}}{\delta}\right)+
w(y)K^{\prime\prime}\left(\frac{x-\boldsymbol{\beta}^{* T} \boldsymbol{z}}{\delta}\right)\frac{y(x-\boldsymbol{\beta}^{* T}\boldsymbol{z})}{\delta^4}z_jz_k
    \right)^2f(x\mid y,\boldsymbol{z})dx f(y,\boldsymbol{z})d\boldsymbol{z}\\
    =&\sum\limits_{y}\int
    w(y)^2\frac{z_j^2z_k^2}{\delta^5}\left(
    4K^\prime(u)^2+
   u^2K^{\prime\prime}(u)^2+
    4uK^{\prime\prime}(u)K^\prime(u)
    \right)
    f(u\delta + \boldsymbol{\beta}^{* T}\boldsymbol{z}\mid y,\boldsymbol{z})du f(y,\boldsymbol{z})d\boldsymbol{z}\\
    = & \sum\limits_{y}\int
    w(y)^2\frac{z_j^2z_k^2}{\delta^5}\left(
    4K^\prime(u)^2+
   u^2K^{\prime\prime}(u)^2+
    4uK^{\prime\prime}(u)K^\prime(u)
    \right)\\
    &~~~\cdot
   \left(
    f( \boldsymbol{\beta}^{* T}\boldsymbol{z}\mid y,\boldsymbol{z})
    +
    u \delta f^\prime(\tau u\delta+\boldsymbol{\beta}^{* T}\boldsymbol{z}\mid y,\boldsymbol{z})
   \right)du f(y,\boldsymbol{z})d\boldsymbol{z}= \mathcal{O}(\frac{1}{\delta^5}).
\end{align*}
% Follow the same proof as $I_2$ we can show that
% \begin{align*}
%     \mathbb{E}[ \left(\nabla^2_{\cdot \delta}\Bar{R}_{\delta}(\boldsymbol{\beta}^*)\right)_{jk}^2]
%     =\mathcal{O}(\frac{1}{\delta^8}).
% \end{align*}
% \[
%  \boldsymbol{v}^{* T} \nabla_{\cdot \delta} R_{\delta}\left(\boldsymbol{\beta}^{*}\right)
% =\sum_{y \in\{-1,1\}} w(y) y \int\left(\boldsymbol{v}^{* T} \boldsymbol{z}\right) \int K(u) u f^\prime\left(u \delta+\boldsymbol{\beta}^{* T} \boldsymbol{z} \mid y, \boldsymbol{z}\right) d u f(y, \boldsymbol{z}) d \boldsymbol{z}
% \]
% \begin{align}
%     \boldsymbol{v}^{* T} \nabla_{\cdot \delta} R_{\delta}\left(\boldsymbol{\beta}^{*}\right)
% &=\sum_{y \in\{-1,1\}} w(y) y \int\left(\boldsymbol{v}^{* T} \boldsymbol{z}\right) \int K(u) u f^\prime\left(u \delta+\boldsymbol{\beta}^{* T} \boldsymbol{z} \mid y, \boldsymbol{z}\right) d u f(y, \boldsymbol{z}) d \boldsymbol{z} \\
% &=\sum_{y \in\{-1,1\}} w(y) y \int\left(\boldsymbol{v}^{* T} \boldsymbol{z}\right) \int K(u) \frac{u^{\ell}\delta^{\ell-1}}{(\ell-1) !}\left(f^{(\ell)}\left(\boldsymbol{\beta}^{* T} \boldsymbol{z} \mid y, \boldsymbol{z}\right)+\mathcal{O}\left((u \delta)^{\zeta}\right)\right) d u f(y, \boldsymbol{z}) d \boldsymbol{z} \\
% &=\ell\delta^{\ell-1} \boldsymbol{v}^{* T} \boldsymbol{b}^{*}(1+o(1))
% =\ell\delta^{\ell-1}\mu^*(1+o(1)).
% \end{align}
Therefore $\mathbb{E}T_{i jk }^2=\mathcal{O}\left(\frac{1}{\delta^{5}}\right)$.
Since $\left|T_{i jk }\right| \lesssim \frac{M_{n}^{3}}{\delta^{4}}$, applying Bernstein inequality with $\frac{M_{n}^{3}}{\delta}\sqrt{\frac{\log d}{n\delta}}=\mathcal{O}(1)$, we can show that with probability greater than $1-o(1)$
\[
\|\nabla^2_{\cdot \delta} R_{\delta}^{n}(\boldsymbol{\beta}^*)-\nabla^2_{\cdot \delta}R_{\delta}(\boldsymbol{\beta}^*)\|_{\max} \lesssim \sqrt{\frac{\log d}{n\delta^{5}}} .
\]
Therefore,
\begin{align*}
    \|\nabla^2_{\cdot \delta} R_{\delta}^{n}(\boldsymbol{\beta}^*)\|_{\max}
    \leq  &\|\nabla^2_{\cdot \delta} R_{\delta}^{n}(\boldsymbol{\beta}^*)-\nabla^2_{\cdot \delta}R_{\delta}(\boldsymbol{\beta}^*)\|_{\max} + \max\left(\nabla^2_{\cdot \delta}R_{\delta}(\boldsymbol{\beta}^*)\right)_{jk} \\
    =& \mathcal{O}(\delta^{\ell-2})+\mathcal{O}_{\mathbb{P}}(\sqrt{\frac{\log d}{n\delta^{5}}}),
\end{align*}
and with probability greater than $1-o(1)$
\begin{align*}
        I_{31}\leq &\sqrt{n\delta^*}(\hat{\delta}-\delta^*)\|\boldsymbol{v}^{*}\|_1\|\nabla^2_{\cdot \delta} R_{\delta^*}^{n}(\boldsymbol{\beta}^*)\|_{\max}\|\hat{\boldsymbol{\beta}}_0-\boldsymbol{\beta}^*\|_1+o_{\mathbb{P}}(1)
        \\
    % \asymp &\sqrt{n\delta^*}\|\boldsymbol{v}^*\|_1\frac{\hat{\delta}-\delta^*}{\delta^{* 2}}\eta_{1}(n) =
    \lesssim
    &\frac{\sqrt{n}\|\boldsymbol{v}^*\|_1C_{n,\delta^*}\delta^{* \ell}}{\sqrt{\delta^*}}\eta_1(n)+o_{\mathbb{P}}(1)=o_{\mathbb{P}}(1)
\end{align*}
under the conditions in Theorem~\ref{thm1_11}, where $\|\widehat{\boldsymbol{\beta}}-\boldsymbol{\beta}^{*}\|_{1} \lesssim \eta_{1}(n)$ is from Assumption \ref{ass_estimators}.

% we can show that $I_{31} \leq \frac{\hat{\delta}-\delta^*}{\delta^{* 2}}\|\boldsymbol{v}^{*}\|_1+o_{\mathbb{P}}(1)$, which is the same as (\ref{ap1}). Hence under the same mild condition after (\ref{ap1}) we have $I_{31} = o_{\mathbb{P}}(1)$.

For $I_{32}$, by Lemma~\ref{lemma2_11} we have
\begin{align*}
    I_{32} &\leq \sqrt{n\delta^*}(\hat{\delta}-\delta^*)\|\hat{\boldsymbol{v}}^{ }-\boldsymbol{v}^{* }\|_1\|\nabla_{\cdot \delta} R_{\delta^*}^{n}(\hat{\boldsymbol{\beta}}_0)\|_{\max}\\
    & \leq
    \sqrt{n\delta^*}(\hat{\delta}-\delta^*)\|\hat{\boldsymbol{v}}-\boldsymbol{v}^{* }\|_1\left(\|\nabla_{\cdot \delta} R_{\delta^*}^{n}( \boldsymbol{\beta}^{*})\|_{\max}
+
\|\nabla^2_{\cdot \delta} R_{\delta^*}^{n}(\boldsymbol{\beta}^*)\|_{\max}\|\hat{\boldsymbol{\beta}}_0-\boldsymbol{\beta}^*\|_1+o_{\mathbb{P}}(1)\right)
\end{align*}
 In the proof for $I_{31}$ we have showed that $\sqrt{n\delta^*}(\hat{\delta}-\delta^*)\|\boldsymbol{v}^{*}\|_1\|\nabla^2_{\cdot \delta} R_{\delta^*}^{n}(\boldsymbol{\beta}^*)\|_{\max}\|\hat{\boldsymbol{\beta}}_0-\boldsymbol{\beta}^*\|_1 = o_{\mathbb{P}}(1)$.
By (\ref{ap2}) and (\ref{ap3}) we have $\|\nabla_{\cdot \delta} R_{\delta^*}^{n}( \boldsymbol{\beta}^{*})\|_{\max} = \mathcal{O}_{\mathbb{P}}(\delta^{* \ell-1})$, and from Assumption \ref{ass_estimators} we know that $\|\hat{\boldsymbol{v}}-\boldsymbol{v}^{* }\|_1 \lesssim \|\boldsymbol{v}^*\|_1\eta_2(n)$, hence we have
\begin{align*}
I_{32} = \mathcal{O}_{\mathbb{P}}(\sqrt{n\delta^*}\eta_2(n)\|\boldsymbol{v}^*\|_1(\hat{\delta}-\delta^*)
\delta^{* \ell-1}
)=
\mathcal{O}_{\mathbb{P}}(\sqrt{n\delta^*}\eta_2(n)\|\boldsymbol{v}^*\|_1
C_{n,\delta^*}
\delta^{* \ell}
)=
o_{\mathbb{P}}(1)
\end{align*}
under the conditions in Theorem~\ref{thm1_11}.
Combine the result we obtained for $I_1, I_2$ and $I_3$, the desired result holds.
\end{proof}

\section{More Details on Conditions and Assumptions}

\subsection{About existence and uniqueness of $\beta^*$}\label{sec:betaexistunique}
Recall that the risk function $R(\bbeta)$ is
$$
R(\bbeta)=\EE\big[w(Y)L_{01}\{Y(X - \bbeta^{T}\bZ)\},
$$
where $L_{01}(u)=\frac12\{1-\sign u\}$ is the 0-1 loss and $w(Y)$ is a known weight. Unfortunately, the function $R(\bbeta)$ is not always convex. The following is a counter-example. %Consider binary response model
%\[Y=\operatorname{sign}\left(X-\gamma^*+\epsilon\right),\]
%where $\gamma^* \in \mathbb{R}$ and the median of noise $\epsilon$ given $X$ is 0.
Suppose the distribution of $X$ given $Y$ is $X \mid Y \sim N(Y,1)$. For simplicity, we assume there is no covariate $\bZ$ and therefore we use a common threshold $\gamma$ to dichotomize $X$. Let the weight $w(1)=1/\pi, w(-1)=1/(1-\pi)$, where $\pi=\PP(Y=1)$, then the risk function becomes
\begin{align*}
    (1.2) &=R(\gamma)= \mathbb{E}\left[w(Y) L_{01}\left\{Y\left(X-\gamma\right)\right\}\right]\\
    &= \int_{-\infty}^\gamma f_{X\mid Y}(x \mid y=1) dx +
    \int_{\gamma}^\infty f_{X\mid Y}(x \mid y=-1) dx.
\end{align*}
Hence
\[
\nabla R(\gamma)= f_{X \mid Y}(\gamma \mid y=1)-f_{X \mid Y}(\gamma \mid y=-1)=\frac{1}{\sqrt{2\pi}}\left(e^{-\frac{(\gamma-1)^2}{2}}-e^{-\frac{(\gamma+1)^2}{2}}\right),
\]
\[
\nabla^2 R(\gamma)=\frac{1}{\sqrt{2\pi}}\left[-(\gamma-1)e^{-\frac{(\gamma-1)^2}{2}}+(\gamma+1)e^{-\frac{(\gamma+1)^2}{2}}\right].
\]
Note that for  $2<\gamma<3$, $\nabla^2 R(\gamma)<0$, hence in this case the risk function is not convex.

Given the fact that the risk function is generally not convex, a more subtle question is when $\bbeta^* = \argmin_{\bbeta}R(\bbeta)$  exists and is unique.
Before we elaborate this point, we first note that the existence and uniqueness of $\boldsymbol{\beta}^*$ is a standard identifiability assumption, commonly used in the M-estimation literature. Otherwise, the estimand $\bbeta^*$ is not well defined. In the following, we will discuss the existence and uniqueness of $\boldsymbol{\beta}^*$ respectively.

{\bf Existence}.  The existence of $\bbeta^*$ can be verified under more specific modeling assumptions. For example, \cite{manski1985semiparametric} considered the following binary response model
\begin{equation}\label{binary_model}
Y=\operatorname{sign}\left(X-\boldsymbol{Z}^{T} \boldsymbol{\beta}^*+\epsilon\right),
\end{equation}
where $\operatorname{Median}(\epsilon \mid X, \boldsymbol{Z})=0$. It is shown that the true coefficient $\bbeta^*$ is a minimizer of the risk function $R(\bbeta)$ with equal weight, and therefore the minimizer of the risk exists. Note that this is the data generating model used in our simulation studies. In addition, we can also show that the minimizer of the risk also exists under the logistic regression and linear discriminant analysis, we refer to \cite{feng2022nonregular} for more details.

{\bf Uniqueness}. Generally speaking, showing the uniqueness of the minimizer under a non-convex loss is a very challenging problem. In the following, we leverage the symmetric property of the loss to show the uniqueness of $\bbeta^*$ under the binary response model (\ref{binary_model}).

In this model, we assume $\epsilon \mid X,\boldsymbol{Z} \sim N(0,1)$, and $(X, \boldsymbol{Z})$ is multivariate normal with covariance matrix $\Sigma$ and mean 0. Suppose $w(1)=w(-1)=1$.
Note that
\[
f(x \mid y=-1, \boldsymbol{z})=\frac{ F_{\epsilon \mid x, \boldsymbol{z}}\left(\boldsymbol{\beta}^{* T} \boldsymbol{z}-x \mid x, \boldsymbol{z}\right) f_{x \mid \boldsymbol{z}}(x \mid \boldsymbol{z})}{\mathbb{P}(Y=-1 \mid \boldsymbol{Z}=\boldsymbol{z})},
\]
and
\[
f(x \mid y=1, \boldsymbol{z})=\frac{ F_{\epsilon \mid x, \boldsymbol{z}}\left(x-\boldsymbol{\beta}^{* T} \boldsymbol{z} \mid x, \boldsymbol{z}\right) f_{x \mid \boldsymbol{z}}(x \mid \boldsymbol{z})}{\mathbb{P}(Y=1 \mid \boldsymbol{Z}=\boldsymbol{z})},
\]
where $F_{\epsilon \mid x, \boldsymbol{z}}$ is the cdf of $\epsilon$ given $X,\bZ$. Hence we have
\begin{align*}
    \nabla R(\boldsymbol{\beta})= &\sum_{y \in\{-1,1\}} w(y) \int y z f\left(\boldsymbol{\beta}^{T} \boldsymbol{z} \mid y, \boldsymbol{z}\right) f(y, \boldsymbol{z}) d \boldsymbol{z} \\
    = & \int  \boldsymbol{z}  \left(F_{\epsilon \mid x, \boldsymbol{z}}\left(\boldsymbol{\beta}^T\boldsymbol{z}-\boldsymbol{\beta}^{* T} \boldsymbol{z} \mid x, \boldsymbol{z}\right) -
    F_{\epsilon \mid x, \boldsymbol{z}}\left(\boldsymbol{\beta}^{* T}\boldsymbol{z}-\boldsymbol{\beta}^{ T} \boldsymbol{z} \mid x, \boldsymbol{z}\right)
    \right) f_{x \mid \boldsymbol{z}}(\boldsymbol{\beta}^T\boldsymbol{z} \mid \boldsymbol{z})f(\boldsymbol{z}) d\boldsymbol{z}\\
    = &  \frac{1}{(2\pi)^{1+d/2} |\Sigma|^{1/2}}\int \boldsymbol{z} \int_{(\boldsymbol{\beta}^{*} - \boldsymbol{\beta})^T\boldsymbol{z} }^{(\boldsymbol{\beta}^{} - \boldsymbol{\beta}^*)^{ T}\boldsymbol{z}}\exp\left(-\frac{t^2}{2}\right)dt
    \exp \left(-\frac{(\boldsymbol{\beta}^T\boldsymbol{z},\boldsymbol{z})^{T}
    \Sigma^{-1}
     (\boldsymbol{\beta}^T\boldsymbol{z},\boldsymbol{z}) }{2}\right) d \boldsymbol{z}.
\end{align*}
% Suppose $\epsilon \mid X,\boldsymbol{Z} \sim N(0,1)$, $X \sim N(0,1), \boldsymbol{Z} \sim N(0, \Sigma)$, $X,\boldsymbol{Z}$ independent, both with mean 0. Then by symmetry we have
% \begin{align*}
%     \nabla R(\boldsymbol{\beta}) = &  \frac{1}{(2\pi)^{1+d/2}}\int \boldsymbol{z} \int_{(\boldsymbol{\beta}^{*} - \boldsymbol{\beta})^T\boldsymbol{z} }^{(\boldsymbol{\beta}^{} - \boldsymbol{\beta}^*)^{ T}\boldsymbol{z}}\exp{-\frac{t^2}{2}}dt
%     \exp \left(-\frac{\boldsymbol{z}^{T}\left(\Sigma^{-1}+\boldsymbol{\beta} \boldsymbol{\beta}^{ T}\right) \boldsymbol{z}}{2}\right) d \boldsymbol{z}\\
%     = &
%     \frac{2}{(2\pi)^{1+d/2}}\int_{\boldsymbol{z}>0} \boldsymbol{z} \int_{(\boldsymbol{\beta}^{*} - \boldsymbol{\beta})^T\boldsymbol{z} }^{(\boldsymbol{\beta} - \boldsymbol{\beta}^*)^{ T}\boldsymbol{z}}\exp{-\frac{t^2}{2}}dt
%     \exp \left(-\frac{\boldsymbol{z}^{T}\left(\Sigma^{-1}+\boldsymbol{\beta} \boldsymbol{\beta}^{ T}\right) \boldsymbol{z}}{2}\right) d \boldsymbol{z},
% \end{align*}
% which implies that the sign of $\nabla R(\boldsymbol{\beta})$ only depends on $\int_{(\boldsymbol{\beta}^{*} - \boldsymbol{\beta})^T\boldsymbol{z} }^{(\boldsymbol{\beta} - \boldsymbol{\beta}^*)^{ T}\boldsymbol{z}}\exp{-\frac{t^2}{2}}dt $. Therefore, $\nabla R(\boldsymbol{\beta})>0$ when $\boldsymbol{\beta}>\boldsymbol{\beta}^*$, $\nabla R(\boldsymbol{\beta})=0$ when $\boldsymbol{\beta}=\boldsymbol{\beta}^*$, and $\nabla R(\boldsymbol{\beta})<0$ when $\boldsymbol{\beta}<\boldsymbol{\beta}^*$.
Then by the mean value theorem, we have for any $\bbeta\neq\bbeta^*$ there exists some $0 < c < 1$,
\begin{align*}
      &R(\boldsymbol{\beta}) -   R(\boldsymbol{\beta}^*) \\
      =&  \nabla R(c\boldsymbol{\beta}+ (1-c)\boldsymbol{\beta}^*) (\boldsymbol{\beta}- \boldsymbol{\beta}^*)\\
      = & \frac{1}{(2\pi)^{1+d/2} |\Sigma|^{1/2}}
      \int  (\boldsymbol{\beta}- \boldsymbol{\beta}^*)^T \boldsymbol{z}
      \int_{c(\boldsymbol{\beta}^{*} - \boldsymbol{\beta})^T\boldsymbol{z} }^{c(\boldsymbol{\beta} - \boldsymbol{\beta}^*)^{ T}\boldsymbol{z}}\exp\left(-\frac{t^2}{2}\right)dt
    \exp \left(-\frac{(\Tilde{\boldsymbol{\beta}}^T\boldsymbol{z},\boldsymbol{z})^{T}
    \Sigma^{-1}
     (\Tilde{\boldsymbol{\beta}}^T\boldsymbol{z},\boldsymbol{z}) }{2}\right) d \boldsymbol{z}\\
     =& \frac{2}{(2\pi)^{1+d/2} |\Sigma|^{1/2}}
      \int_{(\boldsymbol{\beta}- \boldsymbol{\beta}^*)^T \boldsymbol{z}>0}  (\boldsymbol{\beta}- \boldsymbol{\beta}^*)^T \boldsymbol{z}
      \int_{c(\boldsymbol{\beta}^{*} - \boldsymbol{\beta})^T\boldsymbol{z} }^{c(\boldsymbol{\beta} - \boldsymbol{\beta}^*)^{ T}\boldsymbol{z}}\exp\left(-\frac{t^2}{2}\right)dt
    \exp \left(-\frac{(\Tilde{\boldsymbol{\beta}}^T\boldsymbol{z},\boldsymbol{z})^{T}
    \Sigma^{-1}
     (\Tilde{\boldsymbol{\beta}}^T\boldsymbol{z},\boldsymbol{z}) }{2}\right) d \boldsymbol{z}\\
     >& 0,
\end{align*}
where $\Tilde{\boldsymbol{\beta}} = c\boldsymbol{\beta}+ (1-c)\boldsymbol{\beta}^*$. The key is the third equality, where we use the fact that the above function is symmetry in $\bz$. Since we have $R(\boldsymbol{\beta})>R(\boldsymbol{\beta}^*)$ for any $\bbeta\neq\bbeta^*$, hence $\boldsymbol{\beta}^*$ is unique.

\subsection{About Nikol'ski class}\label{sec:nikolski}

Recall that the Nikol'ski class is defined as the set of functions $f: \RR \rightarrow \RR$ whose derivatives $f^{(\ell)}$ of order $\ell=\lfloor\beta\rfloor$ exist and satisfy
\[
\left[\int\left(f^{(\ell)}(x+t)-f^{(\ell)}(x)\right)^{2} d x\right]^{1 / 2} \leq L|t|^{\beta-\ell}, \quad \forall t \in \RR.
\]
After going through the proof carefully, we realize that we can relax the Hölder class condition defined in (3.1) to a Nikol'ski class condition. That is, we assume $f^{(\ell)}(x \mid y, \boldsymbol{z})$ satisfies the following Nikol'ski class constraint:
% \[
% \left[\int\left(f^{(\ell)}(x+\triangle \mid y, \boldsymbol{z})-f^{(\ell)}(x \mid y, \boldsymbol{z})\right)^{2} d x\right]^{1 / 2} \leq L \triangle^{\zeta}.
% \]
\begin{equation} \label{nikol}
    \left[\int\left(f^{(\ell)}(\boldsymbol{\beta}^{* T}\boldsymbol{z}+\triangle_{\boldsymbol{z}} \mid y, \boldsymbol{z})-f^{(\ell)}(\boldsymbol{\beta}^{* T}\boldsymbol{z} \mid y, \boldsymbol{z})\right)^{2} d \boldsymbol{z}\right]^{1 / 2} \leq L \triangle^{\zeta},
\end{equation}
for any $|\triangle_{\boldsymbol{z}}| \leq \triangle$, where  $\triangle_{\boldsymbol{z}}$ may depend on $\boldsymbol{z}$ and $L$ is a constant. The condition (\ref{nikol}) looks a bit unconventional, because we need to allow $\triangle_{\boldsymbol{z}}$ to depend on $\boldsymbol{z}$. The reason is as follows. In the proof of Lemma \ref{lemma_consistent_bias_pilot} on the consistency of the bias estimate, a key step is to bound $\|T^{(\ell)}\left(\boldsymbol{\beta}^{*}\right)-T^{(\ell)}(\widehat{\boldsymbol{\beta}}^{(2)})\|_{\infty}$, which is
\begin{align*}
    &\left\|T^{(\ell)}\left(\boldsymbol{\beta}^{*}\right)-T^{(\ell)}\left(\widehat{\boldsymbol{\beta}}^{(2)}\right)\right\|_{\infty} \\
=&\left\|\sum_{y \in\{-1,1\}} w(y) \int y \boldsymbol{z}\left(f^{(\ell)}\left(\boldsymbol{\beta}^{* T} \boldsymbol{z} \mid y, \boldsymbol{z}\right)-f^{(\ell)}\left(\widehat{\boldsymbol{\beta}}^{(2) T} \boldsymbol{z} \mid y, \boldsymbol{z}\right)\right) f(y, \boldsymbol{z}) d \boldsymbol{z}\right\|_{\infty} \\
%\leq & \max_j\left|\sum_{y \in\{-1,1\}} w(y) \int y z_j\left(f^{(\ell)}\left(\boldsymbol{\beta}^{* T} \boldsymbol{z} \mid y, \boldsymbol{z}\right)-f^{(\ell)}\left(\widehat{\boldsymbol{\beta}}^{(2) T} \boldsymbol{z} \mid y, \boldsymbol{z}\right)\right) f(y, \boldsymbol{z}) d \boldsymbol{z}\right|\\
\leq &
\max_j\sum_{y \in\{-1,1\}} w(y)
\sqrt{\int z_j^2 f^2(y,\boldsymbol{z})d\boldsymbol{z}}
\sqrt{\int \left(f^{(\ell)}\left(\boldsymbol{\beta}^{* T} \boldsymbol{z} \mid y, \boldsymbol{z}\right)-f^{(\ell)}\left(\widehat{\boldsymbol{\beta}}^{(2) T} \boldsymbol{z} \mid y, \boldsymbol{z}\right)\right)^2 d\boldsymbol{z}}
\\
\leq &
\max_j
\sum_{y \in\{-1,1\}} w(y)
\sqrt{\int z_j^2 f^2(y,\boldsymbol{z})d\boldsymbol{z}}\cdot
L|M_{n} \eta_{1}(n)|^\zeta
\\
=& \mathcal{O}_{\mathbb{P}}\left( \left(M_{n} \eta_{1}(n)\right)^{\zeta} \right),
\end{align*}
where the first inequity is from Cauchy inequality, and the second one is from the Nikol'ski condition (\ref{nikol}) with $\triangle_{\boldsymbol{z}}=(\hat\bbeta^{(2)}-\bbeta^*)^T\bz$ and $\triangle=CM_{n} \eta_{1}(n)$ as $\|\hat\bbeta^{(2)}-\bbeta^*\|_1\lesssim \eta_1(n)$ and $\|\bz\|_\infty\leq M_n$. In this case, we have to allow the increment  $\triangle_{\boldsymbol{z}}$ to depend on $\bz$ to make the proof work. Since in (\ref{nikol}) the integration is over $\bz$, the right hand of (\ref{nikol}) cannot depend on $\bz$ and therefore we introduce an upper bound $\triangle$ with $|\triangle_{\boldsymbol{z}}| \leq \triangle$ in (\ref{nikol}).

In conclusion, our theory does work under a more relaxed Nikol'ski class condition (\ref{nikol}). However, since (\ref{nikol}) deviates from the standard  Nikol'ski condition in a very subtle way, to avoid any confusion, we decide to keep the Hölder class in the main paper and include the above discussions on the Nikol'ski condition in the supplementary materials.

\subsection{Examples of kernel functions}\label{sec:kernel}

In Table~\ref{tb:kernel} below, we provide a list of commonly-seen second-order, fourth-order and sixth-order kernel functions.

\begin{table}[!htbp]
	\centering
\caption{Commonly-seen second-order, fourth-order and sixth-order kernel functions $K(\cdot)$.}\label{tb:kernel}
	\begin{tabular}{c|l|l}
        \hline
    & Kernel & Equation $K(\cdot)$ \\
    \hline
\multirow{5}{*}{Second-Order} & Uniform & $K_0(t)=\frac12 I(|t|\leq 1)$\\
    & Epanechnikov & $K_1(t)=\frac34 (1-t^2) I(|t|\leq 1)$\\
    & Biweight & $K_2(t)=\frac{15}{16}(1-t^2)^2 I(|t|\leq 1)$\\
    & Triweight & $K_3(t)=\frac{35}{32}(1-t^2)^3 I(|t|\leq 1)$\\
    & Gaussian & $K_\phi(t)=\frac1{\sqrt{2\pi}} \exp(-\frac{t^2}{2})$\\
    \hline
\multirow{4}{*}{Fourth-Order} & Epanechnikov & $K_{4,1}(t)=\frac{15}{8} \left(1-\frac{7}{3}t^2\right) K_1(t)$\\
    & Biweight & $K_{4,2}(t) = \frac{7}{4}(1-3t^2) K_2(t)$\\
    & Triweight & $K_{4,3}(t) = \frac{27}{16}\left(1-\frac{11}{3}t^2\right) K_3(t)$\\
    & Gaussian & $K_{4,\phi}(t) = \frac12(3-t^2)K_\phi(t)$\\
    \hline
\multirow{4}{*}{Sixth-Order} & Epanechnikov & $K_{6,1}(t)=\frac{175}{64}\left(1-6t^2+\frac{33}{5}t^4\right)K_1(t)$\\
    & Biweight & $K_{6,2}(t)=\frac{315}{128}\left( 1-\frac{22}{3}t^2+\frac{143}{15}t^4 \right)K_2(t)$\\
    & Triweight & $K_{6,2}(t)=\frac{297}{128}\left(1-\frac{26}{3}t^2+13t^4 \right)K_3(t)$\\
    & Gaussian & $K_{6,\phi}(t)=\frac18 (15-10t^2+t^4)K_\phi(t)$\\
		\hline
	\end{tabular}
\end{table}

\subsection{Verifying Assumptions under Binary Response Models}\label{app_verify_assumption}

In the section, we articulate the assumptions used in Section \ref{theory} under the binary response model. The following lemma is our main result.

\begin{lemma}\label{lem_verify_assumptions}
Consider the following binary response model   $Y=\operatorname{sign}\left(X-\boldsymbol{\beta}^{* T} \boldsymbol{Z}+\epsilon\right)$, where we assume $(X, \boldsymbol{Z})$ has mean $0$ and $\epsilon \sim N\left(0, \sigma^{2}\right)$. Assume that the smallest and largest eigenvalues of $\bSigma_Z:=\Cov(\bZ)$ are bounded away from 0 and infinity by some constants. Under either of the following two sets of conditions: for some constants $C_1$ and $C_2$, (1) $\|\bbeta^*\|_1\leq C_1$, $X \mid \boldsymbol{Z}\sim N(\bc^T\bZ, \sigma_X^2)$, with $\|\bc\|_1\leq C_1$ and $Z_j$ has a bounded support uniformly over $j$, (2) $\|\bbeta^*\|_2\leq C_2$ and $X \sim N(0, \sigma_X^2)$ independent of $\bZ\sim N(0,\bSigma_Z)$, our Assumptions \ref{ass_smooth}-\ref{ass_projection_norm} hold.
\end{lemma}

For the convergence rate of $\hat\bbeta$ in Assumption \ref{ass_estimators}, it is established by \cite{feng2022nonregular}. Specifically, Assumption 3.1 and 3.2 in \cite{feng2022nonregular} are the same as Assumption 1 and Assumption 4 in this paper. Verification of Assumption 3.3 and 3.4 for binary response model is shown in supplementary material for \cite{feng2022nonregular} section S.4 and S.3 respectively. Finally, we note that the convergence rate of $\hat\bv$ in Assumption \ref{ass_estimators} is established in Lemma \ref{theorem_projection} in the supplementary material under Assumptions \ref{ass_smooth}-\ref{ass_projection_norm}, and $\lambda_{\min}(\nabla^2_{\bgamma,\bgamma}R(\bbeta^*))\geq c$ for some constant $c>0$. The latter condition is an intermediate step in the proof of Lemma \ref{lem_verify_assumptions} and therefore also holds.

\begin{proof}
Denote $f_{X \mid \boldsymbol{Z}}(x \mid \boldsymbol{z})$ as the Gaussian p.d.f of $X$ given $\boldsymbol{Z}$, $\Phi(\cdot)$ as the c.d.f of the standard Gaussian distribution. Note that
\[
\mathbb{P}(Y=1 \mid X=x, \boldsymbol{Z}= \boldsymbol{z})
= \Phi\left(\frac{x-\boldsymbol{\beta}^{* T}\boldsymbol{z}}{\sigma}\right),
\]
hence
\begin{equation}\label{x|yz}
    f_{X \mid Y, \boldsymbol{Z}}(x \mid y=1, \boldsymbol{z})
=  \frac{\Phi\left(\frac{x-\boldsymbol{\beta}^{* T}\boldsymbol{z}}{\sigma}\right)f_{X \mid \boldsymbol{Z}}(x \mid \boldsymbol{z})}{\mathbb{P}(Y=1 \mid \boldsymbol{Z}=\boldsymbol{z})}.
\end{equation}
As Gaussian p.d.f is infinitely differentiable, for any integer $l \geq 1$, with some calculation we have
\[
f^{(\ell)}_{X \mid Y, \boldsymbol{Z}}(x \mid y=1, \boldsymbol{z})
=
\frac{\sum_{k=0}^{\ell}\left(\begin{array}{c}\ell \\ k\end{array}\right) \Phi^{(k)}\left(\frac{x-\boldsymbol{\beta}^{* T}\boldsymbol{z}}{\sigma}\right) f_{X \mid \boldsymbol{Z}}^{(\ell-k)}(x \mid \boldsymbol{z})}{\mathbb{P}(Y=1 \mid \boldsymbol{Z}=\boldsymbol{z})}.
\]

Note that the derivatives of Gaussian p.d.f are continuous and bounded, hence for any fixed integer $\ell \geq 2$, they are uniformly bounded. Therefore,  $\left|\Phi^{(i)}\left(\frac{x-\boldsymbol{\beta}^{* T}\boldsymbol{z}}{\sigma}\right)\right|,\left|f_{X \mid \boldsymbol{Z}}^{(i)}(x \mid \boldsymbol{z})\right|$ are bounded above by some constant for almost all $\boldsymbol{z}$ for $i=0, \ldots, l$. In addition, we can show that $\mathbb{P}(Y=1)=0.5$ and
\[
\mathbb{P}(Y=1 \mid \boldsymbol{Z}=\boldsymbol{z})
=\int_X \Phi\left(\frac{x-\boldsymbol{\beta}^{* T}\boldsymbol{z}}{\sigma}\right)f_{X \mid \boldsymbol{Z}}(x \mid \boldsymbol{z}) dx,
\]

Let us first consider case (1). When $\|\bbeta^*\|_1, \|\bc\|_1\leq C_1$ and $Z_j$ has a bounded support, we know that $|\bbeta^{*T}\bz|, |\bc^T\bz|\leq C$ for some constant $C$. Thus,
$$
\mathbb{P}(Y=1 \mid \boldsymbol{Z}=\boldsymbol{z})\geq \int_X \Phi\left(\frac{x-C}{\sigma}\right)f_{X \mid \boldsymbol{Z}}(x \mid \boldsymbol{z}) dx\geq \frac{1}{2} \PP(X>C|\bZ=\bz),
$$
which is lower bounded by a positive constant for any $\bz$. This implies that
$f(x \mid y, \boldsymbol{z})$ is $\ell$ smooth, and Assumptions \ref{ass_smooth} and \ref{ass_proportion} are satisfied. Since $Z_j$ is bounded, Assumption \ref{ass_moment} is also true.

To show that $f(x \mid y, \bz)$ is $\ell$ smooth under case (2), consider  $X\sim N(0, 1)$ without loss of generality. Denote
 $\phi(\cdot)$ as the p.d.f of the standard Gaussian distribution.
Then
\[
f^{(\ell)}_{X \mid Y, \boldsymbol{Z}}(x \mid y=1, \boldsymbol{z})
=
\frac{\sum_{k=0}^{\ell}\left(\begin{array}{c}\ell \\ k\end{array}\right) \Phi^{(k)}\left(x-\boldsymbol{\beta}^{* T}\boldsymbol{z}\right)\phi^{(\ell-k)}(x) }{
\int \Phi\left(x-\boldsymbol{\beta}^{* T}\boldsymbol{z}\right)\phi(x) dx
}.
\]
Note that $\Phi\left(x-\boldsymbol{\beta}^{* T}\boldsymbol{z}\right)=\frac{1}{\sqrt{2\pi}}\int_{-\infty}^{x-\boldsymbol{\beta}^{* T}\boldsymbol{z}}e^{-t^2/2}dt$, $\Phi^{(k)}\left(x-\boldsymbol{\beta}^{* T}\boldsymbol{z}\right) = g(x-\boldsymbol{\beta}^{* T}\boldsymbol{z})\exp(-\frac{(x-\boldsymbol{\beta}^{* T}\boldsymbol{z})^2}{2})$, where
$g(\cdot)$ is some polynomial function. Hence for any integer $0 \leq k \leq \ell$,
\begin{align*}
    \frac{\partial (\Phi^{(k)}\left(x-\boldsymbol{\beta}^{* T}\boldsymbol{z}\right)) }{\partial(\boldsymbol{\beta}^{* T}\boldsymbol{z})}=&
\frac{\partial g(x-\boldsymbol{\beta}^{* T}\boldsymbol{z})}{\partial(\boldsymbol{\beta}^{* T}\boldsymbol{z})}\exp(-\frac{(x-\boldsymbol{\beta}^{* T}\boldsymbol{z})^2}{2})+
g(x-\boldsymbol{\beta}^{* T}\boldsymbol{z})(x-\boldsymbol{\beta}^{* T}\boldsymbol{z})\exp(-\frac{(x-\boldsymbol{\beta}^{* T}\boldsymbol{z})^2}{2})\\
:=& h(x-\boldsymbol{\beta}^{* T}\boldsymbol{z})\exp(-\frac{(x-\boldsymbol{\beta}^{* T}\boldsymbol{z})^2}{2}),
\end{align*}
where $h(\cdot)$ is also some polynomial function.
Therefore,
\begin{align}\label{lim1}
    \lim_{|\boldsymbol{\beta}^{* T}\boldsymbol{z}|\rightarrow \infty}
    \frac{ \Phi^{(k)}\left(x-\boldsymbol{\beta}^{* T}\boldsymbol{z}\right)\phi^{(\ell-k)}(x) }{
\int \Phi\left(x-\boldsymbol{\beta}^{* T}\boldsymbol{z}\right)\phi(x) dx
}= \lim_{|\boldsymbol{\beta}^{* T}\boldsymbol{z}|\rightarrow \infty}
\frac{
h(x-\boldsymbol{\beta}^{* T}\boldsymbol{z})\exp(-\frac{(x-\boldsymbol{\beta}^{* T}\boldsymbol{z})^2}{2})\phi^{(\ell-k)}(x)
}
{-\int \exp{(-\frac{(x-\boldsymbol{\beta}^{* T}\boldsymbol{z})^2+x^2}{2})}dx}.
\end{align}
Note that
\begin{align*}
    &\int \exp{(-\frac{(x-\boldsymbol{\beta}^{* T}\boldsymbol{z})^2+x^2}{2})}dx\\
    =&
\frac{1}{\sqrt{2}}\exp{(-\frac{1}{4}(\boldsymbol{\beta}^{* T}\boldsymbol{z})^2)}\int
\exp{-\frac{(y-\frac{1}{\sqrt{2}}\boldsymbol{\beta}^{* T}\boldsymbol{z})^2}{2}}dy=
C\exp{(-\frac{1}{4}(\boldsymbol{\beta}^{* T}\boldsymbol{z})^2)}
\end{align*}
for some constant $C$. Hence,
\begin{align*}
    (\ref{lim1})= C_1 \lim_{|\boldsymbol{\beta}^{* T}\boldsymbol{z}|\rightarrow \infty}
    h(x-\boldsymbol{\beta}^{* T}\boldsymbol{z})\phi^{(\ell-k)}(x) \exp(-\frac{(\boldsymbol{\beta}^{* T}\boldsymbol{z}-2x)^2}{4})\exp(\frac{x^2}{2})=0.
\end{align*}
This implies that $\lim\limits_{|\boldsymbol{\beta}^{* T}\boldsymbol{z}|\rightarrow \infty}f^{(\ell)}_{X \mid Y, \boldsymbol{Z}}(x \mid y=1, \boldsymbol{z})=0$, and we already showed that when $Z_j$ has a bounded support, $f^{(\ell)}_{X \mid Y, \boldsymbol{Z}}(x \mid y=1, \boldsymbol{z})$ is bounded. Therefore, Assumption \ref{ass_smooth} is satisfied under case (2).

Finally, we focus on Assumption \ref{ass_projection_norm}. Recall that
\begin{align*}
    \boldsymbol{\Sigma}^{*}&:=\sum_{y \in\{-1,1\}} w(y)^{2} \int \boldsymbol{z} \boldsymbol{z}^{T} \int K(u)^{2} d u f\left(\boldsymbol{\beta}^{* T} \boldsymbol{z} \mid y, \boldsymbol{z}\right) f(y, \boldsymbol{z}) d \boldsymbol{z}\\
    &= \int K(u)^{2} d u
    \int_{\boldsymbol{Z}} \boldsymbol{z} \boldsymbol{z}^{T} f_{X \mid \boldsymbol{Z}}(\boldsymbol{\beta}^{* T} \boldsymbol{z}  \mid \boldsymbol{z})
    f(\boldsymbol{z}) d \boldsymbol{z}.
\end{align*}
Note that $0<\int K(u)^{2} d u < \infty$ and $X\mid \boldsymbol{Z}$ is Gaussian,  for some constants $0 < c_1 < C_1 < \infty$,  we have
\[
c_1<f_{X \mid \boldsymbol{Z}}(\boldsymbol{\beta}^{* T} \boldsymbol{z}  \mid \boldsymbol{z}) < C_1,
\]
under case (1).

Since the smallest and largest eigenvalues of $\bSigma_Z=\Cov(\bZ)$ are bounded away from 0 and infinity by some constants, if we can further show $\|\bv^*\|_2$ is also bounded from above and below, then we have $\sigma^{*}=\sqrt{\boldsymbol{v}^{* T} \boldsymbol{\Sigma}^{*} \boldsymbol{v}^{*}}$ is bounded away from $0$ and $\infty$.

Recall that $\boldsymbol{v}^{*}=\left(1,-\boldsymbol{\omega}^{* T}\right)^{T}$. So, clearly $\|\bv\|_2\geq 1$. In the following, we focus on the upper bound of $\bomega^*$. Recall that $\boldsymbol{\omega}^* = \left(\nabla_{\boldsymbol{\gamma}, \boldsymbol{\gamma}}^{2} R\left(\boldsymbol{\beta}^{*}\right)\right)^{-1} \nabla_{\boldsymbol{\gamma}, \theta}^{2} R\left(\boldsymbol{\beta}^{*}\right)$. We have
 \[
\frac{ \|\boldsymbol{\omega}^*\|_2^2}{\|\nabla_{\boldsymbol{\gamma}, \theta}^{2} R\left(\boldsymbol{\beta}^{*}\right)\|_2^2}\leq \lambda_{\max}^2\left(\left(\nabla_{\boldsymbol{\gamma}, \boldsymbol{\gamma}}^{2} R\left(\boldsymbol{\beta}^{*}\right)\right)^{-1}\right)\leq \lambda_{\max}^2\left((\nabla^{2} R(\boldsymbol{\beta}^*) )^{-1}\right).
 \]
Note that
\begin{align*}
\nabla^{2} R(\boldsymbol{\beta}^*) &=\sum_{y=\pm 1} w(y) \int_{\boldsymbol{Z}} \boldsymbol{z} \boldsymbol{z}^{T} y f^{\prime}\left(\boldsymbol{\beta}^{* T} \boldsymbol{z} \mid \boldsymbol{z}, y\right) f(\boldsymbol{z}, y) d \boldsymbol{z} \\
&=2 \int_{\boldsymbol{Z}} \boldsymbol{z} \boldsymbol{z}^{T}f_{\epsilon \mid x, \boldsymbol{z}}(0\mid x=\boldsymbol{\beta}^{* T}  \boldsymbol{z}, \boldsymbol{z} )f_{x \mid \boldsymbol{z}}(\boldsymbol{\beta}^{* T}  \boldsymbol{z})
f(\boldsymbol{z}) d \boldsymbol{z}\\
&=2f_{\epsilon}(0) \int_{\boldsymbol{Z}} \boldsymbol{z} \boldsymbol{z}^{T}f_{X \mid \boldsymbol{Z}}(\boldsymbol{\beta}^{* T}  \boldsymbol{z})
f(\boldsymbol{z}) d \boldsymbol{z},
\end{align*}
hence together with the eigenvalue condition on  $\operatorname{Cov}(\boldsymbol{Z})$, both $\lambda_{\max}^2\left((\nabla^{2} R(\boldsymbol{\beta}^*) )^{-1}\right)$ and $\nabla_{\boldsymbol{\gamma}, \theta}^{2} R\left(\boldsymbol{\beta}^{*}\right)$ are bounded above by some constant.
% For $\nabla_{\boldsymbol{\gamma}, \theta}^{2} R\left(\boldsymbol{\beta}^{*}\right)$, if the covariance between $Z_1$ and $Z_{-1}$ is small or $Z_1$ is independent from most components of $Z_{-1}$, we have $\|\nabla_{\boldsymbol{\gamma}, \theta}^{2} R\left(\boldsymbol{\beta}^{*}\right)\|_1$ is small(see our response of 4b for detailed discussion).
Therefore,
\[
\|\boldsymbol{\omega}^*\|_2^2\leq \|\nabla_{\boldsymbol{\gamma}, \theta}^{2} R\left(\boldsymbol{\beta}^{*}\right)\|_2^2 \lambda_{\max}^2\left((\nabla^{2} R(\boldsymbol{\beta}^*) )^{-1}\right)\leq C
\]
for some constant $C$.

For case (2), note that
\begin{align*}
   &\int_{\boldsymbol{Z}} \boldsymbol{z} \boldsymbol{z}^{T} f_{X \mid \boldsymbol{Z}}(\boldsymbol{\beta}^{* T} \boldsymbol{z}  \mid \boldsymbol{z})
    f(\boldsymbol{z}) d \boldsymbol{z}\\
    =&
     \int_{\boldsymbol{Z}} \boldsymbol{z} \boldsymbol{z}^{T}
     \frac{1}{(2\pi)^{1/2}}\frac{1}{|2\pi\bSigma_{\bZ}|^{1/2}}
     \exp\left(-\frac{(\boldsymbol{\beta}^{* T} \boldsymbol{z})^2  }{2}\right)
     \exp \left(-\frac{\boldsymbol{z}^{T}\bSigma_{\bZ}^{-1} \boldsymbol{z}}{2}\right)
      d \boldsymbol{z}\\
      =&
      \int_{\boldsymbol{Z}} \boldsymbol{z} \boldsymbol{z}^{T}
     \frac{1}{(2\pi)^{1/2}}\frac{1}{|2\pi\bSigma_{\bZ}|^{1/2}}
     \exp\left(
     -\frac{\boldsymbol{z}^{T}\left(\bSigma_{\bZ}^{-1}+\boldsymbol{\beta}^{*} \boldsymbol{\beta}^{*T} \right) \boldsymbol{z}}{2}
     \right)d\bz\\
     =&  \frac{1}{(2\pi)^{1/2}}\frac{\left(\bSigma_{\bZ}^{-1}+\boldsymbol{\beta}^{*} \boldsymbol{\beta}^{*T } \right)^{-1}}{|\bSigma_{\bZ}|^{1/2}|\bSigma_{\bZ}^{-1}+\boldsymbol{\beta}^{*}\boldsymbol{\beta}^{*T }|^{1/2}},
\end{align*}
where in the last step we use the integral of a normal distribution with variance $\left(\bSigma_{\bZ}^{-1}+\boldsymbol{\beta}^{*}\boldsymbol{\beta}^{* T} \right)^{-1}$. Using the assumption that  $\|\bbeta^*\|_2$ is bounded, we can similarly prove that $\sigma^{*}=\sqrt{\boldsymbol{v}^{* T} \boldsymbol{\Sigma}^{*} \boldsymbol{v}^{*}}$ is bounded away from $0$ and $\infty$.

For the bias $\mu^*$ in Assumption \ref{ass_projection_norm}, write
\begin{align*}
|\mu^{*}|&:=|\boldsymbol{v}^{* T} \boldsymbol{b}^{*}| =\Big|\boldsymbol{v}^{* T}\left(\int K(u) \frac{u^{\ell}}{\ell !} d u\right) \sum_{y \in\{-1,1\}} w(y) \int y \boldsymbol{z} f^{(\ell)}\left(\boldsymbol{\beta}^{* T} \boldsymbol{z} \mid y, \boldsymbol{z}\right) f(y, \boldsymbol{z}) d \boldsymbol{z}\Big| \\
&=2\Big|\boldsymbol{v}^{* T}
\left(\int K(u) \frac{u^{\ell}}{\ell !} d u\right) \cdot \int_{\boldsymbol{Z}}\boldsymbol{z}\Big(
f^{(\ell)}\left(\boldsymbol{\beta}^{* T} \boldsymbol{z} \mid y=1,\boldsymbol{z}
\right)\mathbb{P}(Y=1 \mid \boldsymbol{Z}=\boldsymbol{z})\\
&~~~~-
f^{(\ell)}\left(\boldsymbol{\beta}^{* T} \boldsymbol{z} \mid y=-1, \boldsymbol{z}
\right)\mathbb{P}(Y=-1 \mid \boldsymbol{Z}=\boldsymbol{z})
\Big)f(\boldsymbol{z}) d \boldsymbol{z}\Big|\\
&\lesssim \Big\{\int (\bv^{*T}\bz)^2f(\bz)d\bz\Big\}^{1/2},
\end{align*}
where we use the fact that $f^{(\ell)}(x \mid y, \boldsymbol{z})$ is bounded, and the last step is from the Cauchy-Schwarz inequality. Finally, from our previous proof, $\int (\bv^{*T}\bz)^2f(\bz)d\bz$ is bounded by a constant. This completes the proof.
\end{proof}

\subsection{About the magnitude of $w^*$ and $v^*$}\label{sec:wandv}

{
In general, we make assumptions on the magnitude of $\boldsymbol{\omega}^{*}$, such as $\boldsymbol{\omega}^{*}$ is sparse or approximately sparse, in order to achieve the consistency of the estimator $\hat\bomega$ and, subsequently, the validity of the debias/decorrelated inference for high-dimensional models.
Note that $\boldsymbol{v}^*=(1,-\boldsymbol{\omega}^{*T})^T$, thus the discussions on the magnitude of $\boldsymbol{\omega}^*$ and $\boldsymbol{v}^*$ are equivalent.
}

Usually, the sparsity of $\boldsymbol{\omega}^{*}$ can be verified under some additional independence assumption among covariates. In the following, we demonstrate this point under the binary response model.
Consider the binary response model $Y=\operatorname{sign}\left(X-\boldsymbol{\beta}^{* T} \boldsymbol{Z}+u \right)$. Assume  $w(y) \equiv 1$, $u$ is independent of $X,\bZ$ and $(X,\bZ)\sim N(0,\bSigma)$. Denote $f_{x \mid \boldsymbol{z}}(\cdot)$ as the p.d.f of $X$ given $\boldsymbol{Z}$ and $f_{u}(\cdot)$ as the p.d.f of $u$.
% Denote $\boldsymbol{Z}=(\boldsymbol{Z}_1, \boldsymbol{Z}_{-1}^T)^T$.
Note that
\begin{align*}
\nabla^{2} R(\boldsymbol{\beta}^*) &=2 \int_{\boldsymbol{Z}} \boldsymbol{z} \boldsymbol{z}^{T}f_{u}(0)f_{x \mid \boldsymbol{z}}(\boldsymbol{\beta}^{* T}  \boldsymbol{z})
f(\boldsymbol{z}) d \boldsymbol{z}.
\end{align*}
Our first assumption is that, given a subset of $\bZ$ denoted by $\bZ_{S_1}$, $X$ is independent of the rest of the covariates.  Since $(X,\bZ)\sim N(0,\bSigma)$, this assumption simply says the first row of the precision matrix $\bSigma^{-1}$ is sparse. Recall that we denote the support set of $\bbeta^*$ by $S$. Thus, the function $f_{x \mid \boldsymbol{z}}(\boldsymbol{\beta}^{* T}  \boldsymbol{z})$ only depends on $\bZ_{S_2}$, where $S_2=S_1\cup S$.

Our second assumption is that $\bZ$ is blockwise independent. To be precise, we assume there exists a small set $S_3$ containing $Z_1$ and $\bZ_{S_2}$ defined above such that $\bZ_{S_3}\perp \bZ_{S_3^c}$. Intuitively, we can find $S_3$ by merging all the blocks that contain variables in $Z_1$ and $\bZ_{S_2}$.

Then by rearranging $\bZ$ we have
\begin{align}
&\nabla^{2} R(\boldsymbol{\beta}^*) \nonumber\\
= &2f_{u}(0) \int_{\boldsymbol{Z}}
\left(\begin{array}{ll}
 \bz_{S_3}\bz_{S_3}^T & \bz_{S_3} \bz_{S_3^c}^T \\
 \bz_{S_3^c}\bz_{S_3}^T
& \bz_{S_3^c}\bz_{S_3^c}^T
\end{array}\right)
\cdot f_{x |z}(\boldsymbol{\beta}^{* T}  \boldsymbol{z})
f(\bz_{ S_3})f(\bz_{S_3^c}) d \bz_{ S_3} d \bz_{S_3^c}\nonumber\\
= &2f_{u}(0) \int_{\boldsymbol{Z}}
\left(\begin{array}{ll}
 \bz_{S_3}\bz_{S_3}^T & 0 \\
 0
& \bz_{S_3^c}\bz_{S_3^c}^T
\end{array}\right)
\cdot f_{x |z}(\boldsymbol{\beta}^{* T}  \boldsymbol{z})
f(\bz_{ S_3})f(\bz_{S_3^c}) d \bz_{ S_3} d \bz_{S_3^c},\label{eq_verify_sparse}
\end{align}
where the off-diagonal block is 0 because $\EE(\bZ_{S_3^c})=0$. Since $\nabla^{2} R(\boldsymbol{\beta}^*)$ is a block-wise diagonal matrix, $\left(\nabla^{2} R(\boldsymbol{\beta}^*)\right)^{-1}$ is also block-wise diagonal.

 Note that  previously we defined $\theta$ as the coefficient of $\bZ_1$, $\bgamma$ as the  coefficient of the rest components of $\bZ$, and we have  $\bomega^* =(\nabla_{\bgamma,\bgamma}^{2} R(\boldsymbol{\beta}^*) )^{-1}\nabla_{\theta,\bgamma}^{2} R(\boldsymbol{\beta}^*) $.
 From the block matrix inversion we know that
\begin{align}
       \nabla^{2} R(\boldsymbol{\beta}^*)^{-1} =&
\left(\begin{array}{ll}
\nabla_{\theta,\theta}^{2} R(\boldsymbol{\beta}^*)  & \nabla_{\theta,\bgamma}^{2} R(\boldsymbol{\beta}^*)  \\
\nabla_{\bgamma,\theta}^{2} R(\boldsymbol{\beta}^*)  & \nabla_{\bgamma,\bgamma}^{2} R(\boldsymbol{\beta}^*)
\end{array}\right)^{-1}\\
=&
\left(\begin{array}{ll}
~~~~~~~~~~~~~* & -\frac{1}{k} \nabla_{\bgamma,\bgamma}^{2} R(\boldsymbol{\beta}^*)^{-1}\nabla_{\theta,\bgamma}^{2} R(\boldsymbol{\beta}^*)  \\
 -\frac{1}{k}\nabla_{\bgamma,\theta}^{2} R(\boldsymbol{\beta}^*)\nabla_{\bgamma,\bgamma}^{2} R(\boldsymbol{\beta}^*)^{-1}    &
~~~~~~~~~~~~~**
\end{array}\right) \\
=&
\left(\begin{array}{ll}
~~~* & -\frac{1}{k} \bomega^*  \\
 -\frac{1}{k} \bomega^{*T}  & ~~~**
\end{array}\right),
\end{align}
 where $k = \nabla_{\theta,\theta}^{2} R(\boldsymbol{\beta}^*)- \nabla_{\bgamma,\theta}^{2} R(\boldsymbol{\beta}^*)  \nabla_{\bgamma,\bgamma}^{2} R(\boldsymbol{\beta}^*)^{-1} \nabla_{\theta,\bgamma}^{2} R(\boldsymbol{\beta}^*)$. Comparing this with (\ref{eq_verify_sparse}), we can conclude that $\bomega^*$ is sparse and its support set is contained in $S_3$.

 Finally, we note that under the assumption that the smallest and largest eigenvalues of $\bSigma_Z:=\Cov(\bZ)$ are bounded away from 0 and infinity by some constants, we can show that $\|\bomega^*\|_2\leq C$ for some constant $C$ (see the proof of Lemma \ref{lem_verify_assumptions} above). If $|S_3|\leq s_3$, Cauchy-Schwarz inequality yields
$\|\bomega^*\|_1\lesssim s_3^{1/2}$. Thus, $\|\bomega^*\|_1$ at most is of order $s_3^{1/2}$.

\subsection{About sparse eigenvalue assumptions}\label{sec:sparseeigenvalue}

To establish the rate of $\hat\bbeta$, we need the following so-called sparse eigenvalue assumption.
\begin{assumption}
Define the largest and smallest sparse eigenvalues as
\begin{align}
&\rho_{\max} = \sup\bigg \{
\bv^T\nabla^2R_\delta^n(\bbeta)\bv: \norm{\bv}_2 = 1,\norm{\bv}_0 \leq Cs, \bbeta \in \Omega,\norm{\bbeta}_0\leq Cs \bigg\},\\
&\rho_{\min} = \inf\bigg \{
\bv^T\nabla^2R_\delta^n(\bbeta)\bv: \norm{\bv}_2 = 1,\norm{\bv}_0 \leq Cs, \bbeta \in \Omega,\norm{\bbeta}_0\leq Cs
\bigg \},
\end{align}
where $C$ is constant, $s$ is the sparsity level of $\bbeta^*$, and $\Omega$ is a suitable convex set containing the true parameter $\bbeta^*$ (e.g. $\Omega=\{\bbeta\in\RR^d: \|\bbeta-\bbeta^*\|_2\leq R\}$ for some $R>0$).
We assume that $c\leq \rho_{\min}\leq \rho_{\max}\leq 1/c$ for some constant $0<c<1$.
\end{assumption}
We note that we need to introduce the set $\Omega$ in the above definition, since as shown above the population risk $R(\bbeta)$ is globally non-convex. Thus, we have to focus on the eigenvalue structure in a local neighborhood around the truth. Theorem 1 in \cite{feng2022nonregular} showed that under the current Assumptions 1-4 and the above sparse eigenvalue assumption, the estimator $\hat\bbeta$ attains the minimax-optimal rate with  $\eta_1(n)=\sqrt{s}(\frac{s\log(d)}{n})^{\ell/(2\ell+1)}$. The same paper further verifies that the sparse eigenvalue assumption indeed holds, when the data are generated from the binary response model under some mild assumptions (e.g. the smallest and largest eigenvalues of $\Cov(\bZ)$ are bounded away from 0 and infinity by constants).

\subsection{About $\|\beta^*\|_2$}\label{sec:aboutbeta}
It is seen from Lemma \ref{lem_verify_assumptions} that, when the data are generated by a binary response model, we need some conditions on the norm of $\bbeta^*$ (either $\|\bbeta^*\|_1\leq C$ or $\|\bbeta^*\|_2\leq C$ for some constant $C$) in order to verify Assumptions 1-6 used in our main results. So, for the binary response model, we need to normalize $\bbeta^*$ according to either $L_1$ or $L_2$ norm.

To see the reason, let us consider the binary response model $Y=\operatorname{sign}\left(X-\boldsymbol{\beta}^{T} \boldsymbol{Z}+u\right)$, where for simplicity we assume $X \sim N(0,1), \boldsymbol{Z} \sim N(0, \mathbf{I})$ independent of $X$, $u$ is independent of $X,\bZ$ and $w(y) \equiv 1 .$ Assume its density  $f_{u}(\cdot)$ is bounded by a constant. The median of $u$ is 0.
By the definition of $R(\boldsymbol{\beta})$, it is easily shown that for any $\|\boldsymbol{v}\|_{2}=1$,
$$
\boldsymbol{v}^{T} \nabla^{2} R\left(\boldsymbol{\beta}^{*}\right) \boldsymbol{v}=2 \int\left(\boldsymbol{z}^{T} \boldsymbol{v}\right)^{2} f_{u}(0) \phi\left(\boldsymbol{\beta}^{* T} \boldsymbol{z}\right) \phi\left(z_{1}\right) \ldots \phi\left(z_{n}\right) d \boldsymbol{z},
$$
where $\phi$ is the pdf of $N(0,1)$. Since $f_{u}(\cdot)$ is upper bounded by a constant $C$, we have
$$
\begin{aligned}
\boldsymbol{v}^{T} \nabla^{2} R\left(\boldsymbol{\beta}^{*}\right) \boldsymbol{v} & \leq 2 C \int\left(\boldsymbol{z}^{T} \boldsymbol{v}\right)^{2} \frac{1}{(2 \pi)^{1 / 2}} \frac{1}{(2 \pi)^{d / 2}} \exp \left(-\frac{\boldsymbol{z}^{T}\left(\mathbf{I}+\boldsymbol{\beta}^{*} \boldsymbol{\beta}^{* T}\right) \boldsymbol{z}}{2}\right) d \boldsymbol{z} \\
&=2 C \frac{\boldsymbol{v}^{T}\left(\mathbf{I}+\boldsymbol{\beta}^{*} \boldsymbol{\beta}^{* T}\right)^{-1} \boldsymbol{v}}{(2 \pi)^{1 / 2}\left|\mathbf{I}+\boldsymbol{\beta}^{*} \boldsymbol{\beta}^{* T}\right|^{1 / 2}},
\end{aligned}
$$
where in the last step we use the integral of a normal distribution with variance $\left(\mathbf{I}+\boldsymbol{\beta}^{*} \boldsymbol{\beta}^{* T}\right)^{-1}$. Since the matrix $\mathbf{I}+\boldsymbol{\beta}^{*} \boldsymbol{\beta}^{* T}$ has $d-1$ eigenvalues 1 and 1 eigenvalue $1+\left\|\boldsymbol{\beta}^{*}\right\|_{2}^{2}$, we obtain that
$$
\boldsymbol{v}^{T} \nabla^{2} R\left(\boldsymbol{\beta}^{*}\right) \boldsymbol{v} \leq \frac{2 C}{(2 \pi)^{1 / 2}\left(1+\left\|\boldsymbol{\beta}^{*}\right\|_{2}^{2}\right)^{1 / 2}}.
$$
Thus, when $\left\|\boldsymbol{\beta}^{*}\right\|_{2}$ is diverging, $\boldsymbol{v}^{T} \nabla^{2} R\left(\boldsymbol{\beta}^{*}\right) \boldsymbol{v}$ is of order $1 /\left\|\boldsymbol{\beta}^{*}\right\|_{2}\rightarrow 0$. In other words, the population Hessian matrix $\nabla^{2} R(\boldsymbol{\beta}^{*})$ shrinks to 0 in the operator norm. This leads to slower convergence rate of the estimator $\hat\bbeta$ (see \cite{feng2022nonregular}) and also the asymptotic variance of the score test $\sigma^{*2}$ may diverge to infinity. For this reason, we need to normalize $\bbeta^*$ in the simulation.

\section{An Alternative Variance Estimator}\label{section_variance}
Recall that the numerator of the score statistic $\hat{U}_n$ can be written as
\[
\widehat{S}_{{\delta}}(0, \widehat{\gamma})-{\delta}^{\ell} \widehat{\mu}=\widehat{\boldsymbol{v}}^{T} \nabla R_{{\delta}}^{n}\left(\widehat{\boldsymbol{\beta}}_{0}\right)-{\delta}^{\ell} \gamma_{K, \ell} \widehat{\boldsymbol{v}}^{T} \widehat{T}_{h, U}^{\ell}(\widehat{\boldsymbol{\beta}}):= \hat{\boldsymbol{v}}^T D_{{\delta}, h}^{n}(\widehat{\boldsymbol{\beta}}, \widehat{\boldsymbol{v}})
\]
where
\[
 D_{\delta, h}^{n}(\boldsymbol{\beta}, \boldsymbol{v})
= \frac{1}{n} \sum_{i=1}^{n} w\left(y_{i}\right) y_{i} \frac{ \boldsymbol{z}_{i}}{\delta} M\left(x_{i}, \boldsymbol{z}_{i} ; \boldsymbol{\beta}\right)
:=  \frac{1}{n} \sum_{i=1}^{n} \Bar{D}_{\delta, h}^{i}(\boldsymbol{\beta}),
\]
\[
M\left(x_{i}, \boldsymbol{z}_{i} ; \boldsymbol{\beta}\right)=K\left(\frac{\boldsymbol{\beta}^{T} \boldsymbol{z}_{i}-x_{i}}{\delta}\right)-
\frac{\delta^{\ell+1}}{h^{\ell+1}} U^{(\ell)}\left(\frac{\boldsymbol{\beta}^{T} \boldsymbol{z}_{i}-x_{i}}{h}\right).
\]
Recall that Lemma \ref{lemma_ori_normality_new} shows that the asymptotic variance of $(n \delta)^{1 / 2} S_{\delta}\left(\boldsymbol{\beta}^{*}\right) $ is $ \sigma^{* 2}=\boldsymbol{v}^{* T} \boldsymbol{\Sigma}^{*} \boldsymbol{v}^{*}$. Therefore, we can alternatively  estimate the asymptotic variance
\[
\delta \operatorname{Var}\left[\boldsymbol{v}^T \Bar{D}_{\delta, h}^{i}(\boldsymbol{\beta})\right]=
\delta\mathbb{E}\left[\left(w(Y) Y \frac{\boldsymbol{v}^{T} \boldsymbol{Z}}{\delta} M(X, \boldsymbol{Z} ; \boldsymbol{\beta})\right)^{2}\right]
-
\delta \left[\EE w(Y) Y \frac{\boldsymbol{v}^{T} \boldsymbol{Z}}{\delta} M(X, \boldsymbol{Z} ; \boldsymbol{\beta})\right]^{2}
\]
by replacing the expectations above with sample averages under $\widehat{\boldsymbol{\beta}} \text { and } \widehat{\boldsymbol{v}}$. Formally, denote
\[
\Sigma^n_{\delta,h}(\boldsymbol{\beta})=\frac{1}{n}\sum_{i=1}^{n}  \Bar{D}_{\delta, h}^{i}(\boldsymbol{\beta}) \Bar{D}_{\delta, h}^{i}(\boldsymbol{\beta})^T,
\]
\[
\Tilde{\Sigma}^n_{\delta,h}(\boldsymbol{\beta})= \left(\frac{1}{n}\sum_{i=1}^{n}  \Bar{D}_{\delta, h}^{i}(\boldsymbol{\beta}) \right)  \left(\frac{1}{n}\sum_{i=1}^{n}  \Bar{D}_{\delta, h}^{i}(\boldsymbol{\beta}) \right)^T,
\]
and
\[
M^n_{\delta, h}\left(\boldsymbol{\beta}, \boldsymbol{v}\right) =  \frac{1}{n} \sum_{i=1}^{n} \left( \boldsymbol{v}^T \Bar{D}_{\delta, h}^{i}(\boldsymbol{\beta})\right)^2 - \left(\frac{1}{n} \sum_{i=1}^{n} \boldsymbol{v}^T \Bar{D}_{\delta, h}^{i}(\boldsymbol{\beta})\right)^2= \boldsymbol{v}^T
\left(
\Sigma^n_{\delta,h}(\boldsymbol{\beta})+\Tilde{\Sigma}^n_{\delta,h}(\boldsymbol{\beta})
\right)
\boldsymbol{v},
\]
then our kernel-free variance estimator is $\delta M^n_{\delta, h}\left(\hat{\boldsymbol{\beta}}, \hat{\boldsymbol{v}}\right)$.

\begin{theorem}\label{thm_alternative_var}
Under the conditions in Lemmas \ref{alvar1} and \ref{alvar2} and
\[
\delta \vee \frac{\delta^{\ell+1}}{h^{\ell+1}} +
\sqrt{\frac{\log (d)}{n \delta}} +
 M_n\eta_1(n) = o(1),
\]
we have
\[
\left|\delta M^n_{\delta, h}\left(\hat{\boldsymbol{\beta}}, \hat{\boldsymbol{v}}\right) -
\boldsymbol{v}^{* T} \boldsymbol{\Sigma}^{*} \boldsymbol{v}^{*}\right|
=\|\boldsymbol{v}^*\|_1^2 \mathcal{O}_{\mathbb{P}}\left(
\eta_2(n) +
(\delta \vee \frac{\delta^{\ell+1}}{h^{\ell+1}}) +
\sqrt{\frac{\log (d)}{n \delta}} +
 M_n\eta_1(n)
\right).
\]
\end{theorem}

Compared to the convergence rate in Lemma \ref{prop_var_pilot}, we can see that the rate of the kernel-free estimator above may have a much slower rate.

\begin{proof}
Following Lemma~\ref{alvar1}, Lemma~\ref{alvar2}, and Lemma \ref{lemma_consistent_bias_pilot} where we showed that $$\left\|H(\hat{\boldsymbol{\beta}}) - H(\boldsymbol{\beta}^*)\right\|_{\max}
\lesssim  M_n \|\hat{\boldsymbol{\beta}} - \boldsymbol{\beta}^*\|_1,$$ we can obtain
\[
\left\|
\Sigma^n_{\delta,h}(\hat{\boldsymbol{\beta}})+\Tilde{\Sigma}^n_{\delta,h}(\hat{\boldsymbol{\beta}}) - \frac{ \Tilde{\mu}_k H(\boldsymbol{\beta}^*)}{\delta}
\right\|_{\max} = \mathcal{O}_{\mathbb{P}}\left(
1 \vee\frac{\delta^\ell}{h^{\ell+1}} +
\sqrt{\frac{\log (d)}{n \delta^{3}}} +
\frac{M_n\eta_1(n)}{\delta}
\right).
\]
Since $\left\|H\left(\boldsymbol{\beta}^{*}\right)\right\|_{\max }=\mathcal{O}(1)$, we have
$ \left\| \delta \left(\Sigma^n_{\delta,h}(\hat{\boldsymbol{\beta}})+\Tilde{\Sigma}^n_{\delta,h}(\hat{\boldsymbol{\beta}})\right) \right\|_{\max } = \mathcal{O}_{\mathbb{P}}(1)$ when
\[
\delta \vee \frac{\delta^{\ell+1}}{h^{\ell+1}} +
\sqrt{\frac{\log (d)}{n \delta}} +
 M_n\eta_1(n) = o(1).
\]

By triangle inequality we have
\begin{align*}
    & \left| \delta M^n_{\delta,h}\left(\hat{\boldsymbol{\beta}}, \hat{\boldsymbol{v}}\right) -
   \boldsymbol{v}^{* T} \boldsymbol{\Sigma}^{*} \boldsymbol{v}^{*}
    \right|\\
    \leq &
    \left\|\widehat{\boldsymbol{v}}^{(1)}-\boldsymbol{v}^{*}\right\|_{1}^{2}\delta
    \left\| \Sigma^n_{\delta,h}(\hat{\boldsymbol{\beta}})+\Tilde{\Sigma}^n_{\delta,h}(\hat{\boldsymbol{\beta}}) \right\|_{\max }
    +2 \left\|
    \boldsymbol{v}^{* T} \delta \left(\Sigma^n_{\delta,h}(\hat{\boldsymbol{\beta}})+\Tilde{\Sigma}^n_{\delta,h}(\hat{\boldsymbol{\beta}})\right) \right\|_{\infty}
    \left\|\widehat{\boldsymbol{v}}^{(1)}-\boldsymbol{v}^{*}\right\|_{1}\\
    +&\left|\boldsymbol{v}^{* T}\left(
    \delta\left(\Sigma^n_{\delta,h}(\hat{\boldsymbol{\beta}})+\Tilde{\Sigma}^n_{\delta,h}(\hat{\boldsymbol{\beta}})\right) - \boldsymbol{v}^{* T} \boldsymbol{\Sigma}^{*} \boldsymbol{v}^{*}
    \right) \boldsymbol{v}^{*}\right|.
\end{align*}
This implies that
\[
\left|\delta M^n_{\delta, h}\left(\hat{\boldsymbol{\beta}}, \hat{\boldsymbol{v}}\right) -
\boldsymbol{v}^{* T} \boldsymbol{\Sigma}^{*} \boldsymbol{v}^{*}\right|
=\|\boldsymbol{v}^*\|_1^2 \mathcal{O}_{\mathbb{P}}\left(
\eta_2(n) +
\delta \vee \frac{\delta^{\ell+1}}{h^{\ell+1}} +
\sqrt{\frac{\log (d)}{n \delta}} +
 M_n\eta_1(n)
\right).
\]
\end{proof}

\begin{lemma}\label{alvar1} Under Assumptions 1-4 and conditions in Lemma \ref{lemma_consistent_bias_pilot}, assume $\int U^{(l) 2}(u) du < C_1 < \infty$, $|U^{(\ell)}(u)| < C_2 < \infty$ for some constant $C_1,C_2$, for any fixed $\boldsymbol{\beta}$, we have
\[
\left\|\mathbb{E}\left( \Bar{D}_{\delta, h}^{i}(\boldsymbol{\beta}) \Bar{D}_{\delta, h}^{i}(\boldsymbol{\beta})^T\right) + \mathbb{E}\left(  \Bar{D}_{\delta, h}^{i}(\boldsymbol{\beta})\right) \mathbb{E}\left(  \Bar{D}_{\delta, h}^{i}(\boldsymbol{\beta})\right)^T
- \frac{\Tilde{\mu}_k H(\boldsymbol{\beta})}{\delta}\right\|_{\max} = \mathcal{O}(1 \vee \frac{\delta^\ell}{h^{\ell+1}} ).
\]
\end{lemma}
\begin{proof}
\begin{align*}
   \mathbb{E}\left( \Bar{D}_{\delta, h}^{i}(\boldsymbol{\beta}) \Bar{D}_{\delta, h}^{i}(\boldsymbol{\beta})^T\right)
    =&\mathbb{E}\left[w^2(Y) \frac{\boldsymbol{Z}\boldsymbol{Z}^T}{\delta^2} M^2(X, \boldsymbol{Z} ; \boldsymbol{\beta})\right] \\
    = &
     \mathbb{E}\left[\nabla \bar{R}_{\delta}^{i}\left(\boldsymbol{\beta}\right)\nabla \bar{R}_{\delta}^{i}\left(\boldsymbol{\beta}\right)^T
     \right]
     +\mathbb{E}\left[ w^2(Y) \boldsymbol{Z} \boldsymbol{Z}^T \frac{\delta^{2\ell}}{h^{2(\ell+1)}}\left(U^{(\ell)}\left(\frac{\boldsymbol{\beta}^{ T} \boldsymbol{Z}-X}{h}\right)\right)^2\right]\\
     - & 2 \mathbb{E}\left[\nabla \bar{R}_{\delta}^{i}\left(\boldsymbol{\beta}\right)
  \left(w(Y) \boldsymbol{Z}^T \frac{\delta^{\ell}}{h^{\ell+1}} U^{(\ell)}\left(\frac{\boldsymbol{\beta}^{ T} \boldsymbol{Z}-X}{h}\right)\right)\right].
\end{align*}
In the proof of Lemma 1 we showed that
\[
\max_{j,k}\left|  \left(\mathbb{E}\left[\nabla \bar{R}_{\delta}^{i}\left(\boldsymbol{\beta}\right)\nabla \bar{R}_{\delta}^{i}\left(\boldsymbol{\beta}\right)^T
     \right]\right)_{jk} - \left(\frac{\Tilde{\mu}_k H(\boldsymbol{\beta})}{\delta}\right)_{jk} \right|= \mathcal{O}(1).
\]

If $\int U^{(l) 2}(u) du < C_1 < \infty$, $|U^{(\ell)}(u)| < C_2 < \infty$ for some constant $C_1,C_2$, we have
% \begin{align*}
%     &\mathbb{E}\left[
%      \left(w(Y) Y\boldsymbol{v}^{* T} \boldsymbol{Z} \frac{\delta^{\ell}}{h^{\ell+1}} U^{(\ell)}\left(\frac{\boldsymbol{\beta}^{* T} \boldsymbol{z}_{i}-x_{i}}{h}\right)\right)^2\right]\\
% =& \sum_{y \in\{-1,1\}} \delta^{2\ell} w^2(y) \int \frac{(\boldsymbol{v}^{* T}\boldsymbol{z})^2}{h^{2(1+\ell)}} \int U^{(\ell) 2}\left(\frac{\boldsymbol{\beta}^{* T} \boldsymbol{z}-x}{h}\right) f(x \mid y, \boldsymbol{z}) d x f(y, \boldsymbol{z}) d \boldsymbol{z}\\
% =& \sum_{y \in\{-1,1\}} \delta^{2\ell} w^2(y) \int
% \frac{(\boldsymbol{v}^{* T}\boldsymbol{z})^2}{h^{2\ell +1}}\int U^{(\ell) 2}(u)f(uh+\boldsymbol{\beta}^{* T}\boldsymbol{z}\mid y,\boldsymbol{z})duf(y,\boldsymbol{z})d\boldsymbol{z}\\
% =& \mathcal{O}(\frac{\delta^{2\ell}}{h^{2\ell+1}}),
% \end{align*}
\begin{align*}
    &\left(\mathbb{E}\left[ w^2(Y) \boldsymbol{Z} \boldsymbol{Z}^T \frac{\delta^{2\ell}}{h^{2(\ell+1)}}\left(U^{(\ell)}\left(\frac{\boldsymbol{\beta}^{ T} \boldsymbol{Z}-X}{h}\right)\right)^2\right]\right)_{jk}\\
    = &\sum_{y \in\{-1,1\}} \delta^{2\ell} w^2(y) \int \frac{z_jz_k}{h^{2(1+\ell)}} \int U^{(\ell) 2}\left(\frac{\boldsymbol{\beta}^{T} \boldsymbol{z}-x}{h}\right) f(x \mid y, \boldsymbol{z}) d x f(y, \boldsymbol{z}) d \boldsymbol{z}\\
    = & \sum_{y \in\{-1,1\}} \delta^{2\ell} w^2(y) \int \frac{z_jz_k}{h^{2\ell+1}} \int U^{(\ell) 2}\left(u\right) f(uh+\boldsymbol{\beta}^{T}\boldsymbol{z} \mid y, \boldsymbol{z}) d u f(y, \boldsymbol{z}) d \boldsymbol{z}\\
    =& \mathcal{O}(\frac{\delta^{2\ell}}{h^{2\ell+1}}),
\end{align*}
and
% \begin{align*}
%     &\mathbb{E}\left[
%   \left(\boldsymbol{v}^{* T}\nabla \bar{R}_{\delta}^{i}\left(\boldsymbol{\beta}^*\right)\right)
%   \left(w(Y) Y\boldsymbol{v}^{* T} \boldsymbol{Z} \frac{\delta^{\ell}}{h^{\ell+1}} U^{(\ell)}\left(\frac{\boldsymbol{\beta}^{* T} \boldsymbol{z}_{i}-x_{i}}{h}\right)\right)
%     \right] \\
%     = &\mathbb{E}\left[
%     w^2(Y)(\boldsymbol{v}^{* T} \boldsymbol{Z})^2
%     \frac{\delta^{\ell-1}}{h^{\ell+1}}
%     K\left(\frac{\boldsymbol{\beta}^{* T} \boldsymbol{z}_{i}-x_{i}}{h}\right) U^{(\ell)}\left(\frac{\boldsymbol{\beta}^{* T} \boldsymbol{z}_{i}-x_{i}}{h}\right)
%     \right]\\
%     =&  \sum_{y \in\{-1,1\}}\delta^{\ell-1} \int \frac{(\boldsymbol{v}^{* T}\boldsymbol{z})^2}{h^{1+\ell}} \int  K\left(\frac{\boldsymbol{\beta}^{* T} \boldsymbol{z}_{i}-x_{i}}{h}\right) U^{(\ell)}\left(\frac{\boldsymbol{\beta}^{* T} \boldsymbol{z}_{i}-x_{i}}{h}\right)
%     f(x \mid y, \boldsymbol{z}) d x f(y, \boldsymbol{z}) d \boldsymbol{z}\\
%     =&  \sum_{y \in\{-1,1\}}\delta^{\ell-1}\int \frac{(\boldsymbol{v}^{* T}\boldsymbol{z})^2}{h^{\ell}}\int K(u)U^{(\ell)}(u)f(uh+\boldsymbol{\beta}^{* T}\boldsymbol{z}\mid y,\boldsymbol{z})duf(y,\boldsymbol{z})d\boldsymbol{z}\\
%     =& \mathcal{O}(\frac{\delta^{\ell-1}}{h^\ell}).
% \end{align*}
\begin{align*}
    & \left(\mathbb{E}\left[\nabla \bar{R}_{\delta}^{i}\left(\boldsymbol{\beta}\right)
  \left(w(Y) \boldsymbol{Z}^T \frac{\delta^{\ell}}{h^{\ell+1}} U^{(\ell)}\left(\frac{\boldsymbol{\beta}^{ T} \boldsymbol{Z}-X}{h}\right)\right)\right]\right)_{jk}\\
  = & \sum_{y \in\{-1,1\}}\delta^{\ell-1} w^2(y)\int \frac{z_jz_k}{h^{1+\ell}} \int  K\left(\frac{\boldsymbol{\beta}^{ T} \boldsymbol{z}_{i}-x_{i}}{\delta}\right) U^{(\ell)}\left(\frac{\boldsymbol{\beta}^{ T} \boldsymbol{z}_{i}-x_{i}}{h}\right)
    f(x \mid y, \boldsymbol{z}) d x f(y, \boldsymbol{z}) d \boldsymbol{z}\\
 =&  \sum_{y \in\{-1,1\}}\delta^{\ell}w^2(y)\int \frac{z_jz_k}{h^{1+\ell}}\int K(u)U^{(\ell)}(\frac{u\delta}{h})f(u\delta+\boldsymbol{\beta}^{T}\boldsymbol{z}\mid y,\boldsymbol{z})duf(y,\boldsymbol{z})d\boldsymbol{z}\\
  =& \mathcal{O}(\frac{\delta^{\ell}}{h^{\ell+1}}).
\end{align*}
Note that since $\frac{\delta}{h} = o(1)$, we have
\[
\max_{j,k}\left|  \left(\mathbb{E}\left( \Bar{D}_{\delta, h}^{i}(\boldsymbol{\beta}) \Bar{D}_{\delta, h}^{i}(\boldsymbol{\beta})^T\right)\right)_{jk} - \left(\frac{\Tilde{\mu}_k H(\boldsymbol{\beta})}{\delta}\right)_{jk} \right|= \mathcal{O}(1 \vee \frac{\delta^\ell}{h^{\ell+1}}).
\]
Together with Lemma~\ref{alvar2} which shows that $\left\|\mathbb{E}\left(\Bar{D}_{\delta, h}^{i}(\boldsymbol{\beta})\right) \right\|_{\infty}
= \mathcal{O}\left( \delta^\ell\right)$, we have,
\[
\left\|\mathbb{E}\left( \Bar{D}_{\delta, h}^{i}(\boldsymbol{\beta}) \Bar{D}_{\delta, h}^{i}(\boldsymbol{\beta})^T\right) + \mathbb{E}\left(  \Bar{D}_{\delta, h}^{i}(\boldsymbol{\beta})\right) \mathbb{E}\left(  \Bar{D}_{\delta, h}^{i}(\boldsymbol{\beta})\right)^T
- \frac{\Tilde{\mu}_k H(\boldsymbol{\beta})}{\delta}\right\|_{\max} = \mathcal{O}(1 \vee \frac{\delta^{\ell}}{h^{\ell+1}}).
\]
\end{proof}

\begin{lemma} \label{alvar2}Under Assumptions 1-4 and conditions in Lemma \ref{lemma_consistent_bias_pilot}, assume $|U^{(\ell)}(u)| < C < \infty$ for some constant $C$,  $\int K^4(t) dt < \infty$. For any fixed $\boldsymbol{\beta}$, we have
% \[
%  \left\| \Sigma^n_{\delta,h}(\boldsymbol{\beta}) - \mathbb{E}\left( \Bar{D}_{\delta, h}^{i}(\boldsymbol{\beta}) \Bar{D}_{\delta, h}^{i}(\boldsymbol{\beta})^T\right) \right\|_{\max} = \mathcal{O}_{\mathbb{P}}\left(\sqrt{\frac{\log (d)}{n \delta^{3}}}\right),
% \]
% \[
% \left\| \Tilde{\Sigma}^n_{\delta,h}(\boldsymbol{\beta}) - \mathbb{E}\left(  \Bar{D}_{\delta, h}^{i}(\boldsymbol{\beta})\right) \mathbb{E}\left(  \Bar{D}_{\delta, h}^{i}(\boldsymbol{\beta})\right)^T \right\|_{\max} = \mathcal{O}_{\mathbb{P}}\left(
% \left( \sqrt{\frac{\log (d)}{n \delta}} \vee \frac{\delta^{\ell+1}}{h^{\ell+1}}\right)\sqrt{\frac{\log (d)}{n \delta}}
% \right),
% \]
% where
\[
\left\|\mathbb{E}\left[ \left( \Bar{D}_{\delta, h}^{i}(\boldsymbol{\beta}) \Bar{D}_{\delta, h}^{i}(\boldsymbol{\beta})^T\right) \right]\right\|_{\max}
= \mathcal{O}\left(\frac{1}{\delta}\right),
\]
\[
\left\|\mathbb{E}\left(\Bar{D}_{\delta, h}^{i}(\boldsymbol{\beta})\right) \right\|_{\infty}
= \mathcal{O}\left( \delta^\ell\right),
\]
\[
 \left\| \Sigma^n_{\delta,h}(\boldsymbol{\beta}) + \Tilde{\Sigma}^n_{\delta,h}(\boldsymbol{\beta})
 - \mathbb{E}\left( \Bar{D}_{\delta, h}^{i}(\boldsymbol{\beta}) \Bar{D}_{\delta, h}^{i}(\boldsymbol{\beta})^T\right)
  - \mathbb{E}\left(  \Bar{D}_{\delta, h}^{i}(\boldsymbol{\beta})\right) \mathbb{E}\left(  \Bar{D}_{\delta, h}^{i}(\boldsymbol{\beta})\right)^T
 \right\|_{\max} = \mathcal{O}_{\mathbb{P}}\left(\sqrt{\frac{\log (d)}{n \delta^{3}}}\right).
\]
\end{lemma}
\begin{proof}
% We prove the result for $ \left\| \Sigma^n_{\delta,h}(\boldsymbol{\beta}) - \mathbb{E}\left( \Bar{D}_{\delta, h}^{i}(\boldsymbol{\beta}) \Bar{D}_{\delta, h}^{i}(\boldsymbol{\beta})^T\right) \right\|_{\max}$ and the proof for the other one is very similar.
Denote
% \[
%  D_{\delta, h}^{n}(\boldsymbol{\beta}, \boldsymbol{v})
% = \frac{1}{n} \sum_{i=1}^{n} w\left(y_{i}\right) y_{i} \frac{ \boldsymbol{z}_{i}}{\delta} M\left(x_{i}, \boldsymbol{z}_{i} ; \boldsymbol{\beta}\right)
% :=  \frac{1}{n} \sum_{i=1}^{n} \Bar{D}_{\delta, h}^{i}(\boldsymbol{\beta}),
% \]
% \[
% M\left(x_{i}, \boldsymbol{z}_{i} ; \boldsymbol{\beta}\right)=K\left(\frac{\boldsymbol{\beta}^{T} \boldsymbol{z}_{i}-x_{i}}{\delta}\right)-
% \frac{\delta^{\ell+1}}{h^{\ell+1}} U^{(\ell)}\left(\frac{\boldsymbol{\beta}^{T} \boldsymbol{z}_{i}-x_{i}}{h}\right).
% \]
\begin{align*}
    \Sigma_{ijk} = &\left( \Bar{D}_{\delta, h}^{i}(\boldsymbol{\beta}) \Bar{D}_{\delta, h}^{i}(\boldsymbol{\beta})^T- \mathbb{E}\left( \Bar{D}_{\delta, h}^{i}(\boldsymbol{\beta}) \Bar{D}_{\delta, h}^{i}(\boldsymbol{\beta})^T\right)\right)_{jk}\\
    = &w^2(y_i)\frac{z_{ij}z_{ik}}{\delta^2}M^2\left(x_{i}, \boldsymbol{z}_{i} ; \boldsymbol{\beta}\right) - \left(\mathbb{E}\left( \Bar{D}_{\delta, h}^{i}(\boldsymbol{\beta}) \Bar{D}_{\delta, h}^{i}(\boldsymbol{\beta})^T\right)\right)_{jk}.
\end{align*}
We have $\mathbb{E}(\Sigma_{ijk})=0$. Note that since $\frac{\delta}{h} = o(1)$, we have
\begin{align*}
    &\mathbb{E}\left[ \left( \Bar{D}_{\delta, h}^{i}(\boldsymbol{\beta}) \Bar{D}_{\delta, h}^{i}(\boldsymbol{\beta})^T\right)_{jk}^2 \right]\\
  =&\mathbb{E}\left[w^4(Y)  \frac{z_j^2 z_k^2}{\delta^4} M^4(X, \boldsymbol{Z} ; \boldsymbol{\beta})\right] \\
    % = &
    %  \mathbb{E}\left[w(Y)^2  \frac{\boldsymbol{Z}\boldsymbol{Z}^T}{\delta^2}
    %  K^2\left(\frac{\boldsymbol{\beta}^{T} \boldsymbol{Z}-X}{\delta}\right)
    %  \right]
    %  +\mathbb{E}\left[
    %  w(Y)^2  \frac{\boldsymbol{Z}\boldsymbol{Z}^T}{\delta^2}
    %  \frac{\delta^{2(\ell+1)}}{h^{2(\ell+1)}} \left( U^{(\ell)}\left(\frac{\boldsymbol{\beta}^{T} \boldsymbol{Z}-X}{h}\right) \right)^2
    %  \right]\\
    %  - & 2\mathbb{E}\left[
    %  w(Y)^2  \frac{\boldsymbol{Z}\boldsymbol{Z}^T}{\delta^2}
    %   K\left(\frac{\boldsymbol{\beta}^{T} \boldsymbol{Z}-X}{\delta}\right)
    %   U^{(\ell)}\left(\frac{\boldsymbol{\beta}^{T} \boldsymbol{Z}-X}{h}\right)
    %  \right]\\
     = & \sum_{y} \int w^4(y) \frac{z_{j}^{2} z_{k}^{2}}{\delta^{4}}
    \left( K\left(\frac{\boldsymbol{\beta}^{T} \boldsymbol{z}-x}{\delta}\right)-
\frac{\delta^{\ell+1}}{h^{\ell+1}} U^{(\ell)}\left(\frac{\boldsymbol{\beta}^{T} \boldsymbol{z}-x}{h}\right)\right)^4
     f(x \mid y, \boldsymbol{z}) d x f(y, \boldsymbol{z}) d \boldsymbol{z} \\
     \lesssim &  \sum_{y} \int w^4(y) \frac{z_{j}^{2} z_{k}^{2}}{\delta^{4}}
     K^4\left(\frac{\boldsymbol{\beta}^{T} \boldsymbol{z}-x}{\delta}\right)
     f(x \mid y, \boldsymbol{z}) d x f(y, \boldsymbol{z}) d \boldsymbol{z} \\
     = & \sum_{y} \int w^4(y) \frac{z_{j}^{2} z_{k}^{2}}{\delta^{3}}
     K^4\left(u\right)
     f(u\delta+ \boldsymbol{\beta}^{T} \boldsymbol{z}\mid y, \boldsymbol{z}) d u f(y, \boldsymbol{z}) d \boldsymbol{z}
     = \mathcal{O}\left(\frac{1}{\delta^3}\right),
\end{align*}
where in the third step, we ignore the term involving the fourth moment of $U^{(\ell)}(\frac{\bbeta^T\bz-x}{h})$ because it is of a smaller order. Similarly we can show that $\left(\mathbb{E}\left( \Bar{D}_{\delta, h}^{i}(\boldsymbol{\beta}) \Bar{D}_{\delta, h}^{i}(\boldsymbol{\beta})^T\right)\right)_{jk}
= \mathcal{O}\left(\frac{1}{\delta}\right)$. Since $|\Sigma_{ijk}| \lesssim \frac{M_n^2}{\delta^2}$, by Bernstein inequality we can show that with probability greater than $1-\mathcal{O}\left(d^{-1}\right)$,
\[
 \left\| \Sigma^n_{\delta,h}(\boldsymbol{\beta}) - \mathbb{E}\left( \Bar{D}_{\delta, h}^{i}(\boldsymbol{\beta}) \Bar{D}_{\delta, h}^{i}(\boldsymbol{\beta})^T\right) \right\|_{\max} \lesssim \sqrt{\frac{\log (d)}{n \delta^{3}}}.
\]
% \[
% \Tilde{\Sigma}^n_{\delta,h}(\boldsymbol{\beta})= \left(\frac{1}{n}\sum_{i=1}^{n}  \Bar{D}_{\delta, h}^{i}(\boldsymbol{\beta}) \right)  \left(\frac{1}{n}\sum_{i=1}^{n}  \Bar{D}_{\delta, h}^{i}(\boldsymbol{\beta}) \right)^T =  D_{\delta, h}^{n}(\boldsymbol{\beta})  D_{\delta, h}^{n}(\boldsymbol{\beta})^T.
% \]
Following the same proof of Lemma \ref{lemma_consistent_bias_pilot} we can show that
\begin{align*}
    &\left(\mathbb{E}\left(\Bar{D}_{\delta, h}^{i}(\boldsymbol{\beta})\right) \right)_j \\
    = &
     \sum_{y} \int w(y) \frac{z_{j}}{\delta}
    \left( K\left(\frac{\boldsymbol{\beta}^{T} \boldsymbol{z}-x}{\delta}\right)-
\frac{\delta^{\ell+1}}{h^{\ell+1}} U^{(\ell)}\left(\frac{\boldsymbol{\beta}^{T} \boldsymbol{z}-x}{h}\right)\right)
     f(x \mid y, \boldsymbol{z}) d x f(y, \boldsymbol{z}) d \boldsymbol{z} \\
     = & \mathcal{O}\left( \delta^\ell \right),
\end{align*}
and  with probability greater than $1-\mathcal{O}\left(d^{-1}\right)$,
\[
\left\| \frac{1}{n}\sum_{i=1}^{n}  \Bar{D}_{\delta, h}^{i}(\boldsymbol{\beta}) - \mathbb{E}\left(\Bar{D}_{\delta, h}^{i}(\boldsymbol{\beta})\right) \right\|_{\infty} \lesssim \sqrt{\frac{\log (d)}{n \delta}}.
\]
Therefore,  with probability greater than $1-\mathcal{O}\left(d^{-1}\right)$
\begin{align*}
     &\left\| \Tilde{\Sigma}^n_{\delta,h}(\boldsymbol{\beta}) - \mathbb{E}\left(  \Bar{D}_{\delta, h}^{i}(\boldsymbol{\beta})\right) \mathbb{E}\left(  \Bar{D}_{\delta, h}^{i}(\boldsymbol{\beta})\right)^T\right\|_{\max} \\
    %  = &
    %  \left\| \left(\frac{1}{n}\sum_{i=1}^{n}  \Bar{D}_{\delta, h}^{i}(\boldsymbol{\beta}) \right)  \left(\frac{1}{n}\sum_{i=1}^{n}  \Bar{D}_{\delta, h}^{i}(\boldsymbol{\beta}) \right)^T - \mathbb{E}\left(  D_{\delta, h}^{n}(\boldsymbol{\beta})\right) \mathbb{E}\left(  D_{\delta, h}^{n}(\boldsymbol{\beta})\right)^T \right\|_{\max} \\
    = & \left\| \left( \frac{1}{n}\sum_{i=1}^{n}  \Bar{D}_{\delta, h}^{i}(\boldsymbol{\beta}) - \mathbb{E}\left(  \Bar{D}_{\delta, h}^{i}(\boldsymbol{\beta})\right)\right) \left( \frac{1}{n}\sum_{i=1}^{n}  \Bar{D}_{\delta, h}^{i}(\boldsymbol{\beta}) + \mathbb{E}\left(  \Bar{D}_{\delta, h}^{i}(\boldsymbol{\beta})\right)\right)^T\right\|_{\max}\\
    \leq &
    \left\| \frac{1}{n}\sum_{i=1}^{n}  \Bar{D}_{\delta, h}^{i}(\boldsymbol{\beta}) - \mathbb{E}\left(\Bar{D}_{\delta, h}^{i}(\boldsymbol{\beta})\right) \right\|_{\infty}
    \left( 2 \left(\mathbb{E}\left(  \Bar{D}_{\delta, h}^{i}(\boldsymbol{\beta})\right)\right)_\infty +  \left\| \frac{1}{n}\sum_{i=1}^{n}  \Bar{D}_{\delta, h}^{i}(\boldsymbol{\beta}) - \mathbb{E}\left(\Bar{D}_{\delta, h}^{i}(\boldsymbol{\beta})\right) \right\|_{\infty} \right)\\
    \lesssim & \left( \sqrt{\frac{\log (d)}{n \delta}} \vee \delta^\ell \right)\sqrt{\frac{\log (d)}{n \delta}}.
\end{align*}
Combining this with the result for $\Sigma^n_{\delta,h}(\boldsymbol{\beta})$ we finish the proof.
\end{proof}

\section{Practical Considerations}\label{section_practical}
In practice, in order to apply the smoothed decorrelated score test presented in Section \ref{method}, we need to select the bandwidth and regularization parameter in the initial estimators $\hat\bomega$ and $\hat\bbeta$, two bandwidths for the plug-in asymptotic bias and variance estimators in (\ref{pilot_bias}) and (\ref{eq_est_var}), and the final bandwidth $\delta$ for the score test statistic. In this section, we discuss practical considerations for the proposed test as well as the data-driven selection procedure for $\delta$.
%The difficulty in selecting bandwidth parameter in nonparametric literature has been long-standing in practice and there seems to be no fully automatic and entirely satisfactory approach for its selection. This becomes even harder in our case as multiple bandwidths are required for different purposes and their asymptotic optimal order could vary.

We use a path-following algorithm to compute the initial estimator $\hat\bbeta$ and apply a two way cross-validation approach to choose the tuning parameters $(\delta,\lambda)$ in the optimization (\ref{eq_hatbeta}); see \cite{feng2022nonregular} for details. The initial values of the path-following algorithm used in the simulation are shown in Section \ref{appendix_simulation}.
To compute the Dantzig estimator $\hat\bomega$, we use the ``flare" package in R. The same cross-validation method can be used to select the tuning parameters $(\delta,\lambda')$ in (\ref{eq_def_dantzig}). Meanwhile, we find the empirical performance of the Dantzig estimator $\hat\bomega$ is not very sensitive to the choice of $(\delta,\lambda')$. In the simulation and real data analysis, to ease the computation, we set $\lambda' = 2(\log d/n)^{1/5}$ and $\delta = 1$ in (\ref{eq_def_dantzig}). Notice that in rare cases, the Dantzig solver may become unstable numerically when the Hessian $\nabla^2 R^n_\delta(\wh\bbeta)$ is ill-conditioned. In this case, we will firstly project it onto the cone of positive definite matrices and then plug-in the projected matrix into (\ref{eq_def_dantzig}) to solve $\wh\bomega$.

%For the bandwidth parameters $h$ and $g$ in the plug-in bias and variance estimators, Lemma \ref{lemma_consistent_bias_pilot} and Lemma \ref{prop_var_pilot} imply that the theoretical optimal order for $h$ and $g$ are $(\log(d)/n)^{2\ell + 2\zeta + 1}$ (if $M_n\eta_1(n) = o(h)$ which holds under mild conditions) and $(\log(d)/n)^{2\ell + 1}$, respectively. In practice, we recommend choosing $h = c_1(\log(d)/n)^{2\ell' + 3}$ and $g = c_2(\log(d)/n)^{2\ell' + 1}$ for some  $c_1,c_2 > 0$ (e.g., $c_1=c_2=2$ in our simulation), where $\ell'$ is the order of the kernel function $K$.

For the bandwidth parameter $h$ in the plug-in bias estimator, Lemma \ref{lemma_consistent_bias_pilot} implies that the theoretical optimal order for $h$ is $(\log(d)/n)^{2\ell + 2\zeta + 1}$ (if $M_n\eta_1(n) = o(h)$ which holds under mild conditions). In practice, we recommend choosing $h = (\log(d)/n)^{2\ell' + 3}$, where $\ell'$ is the order of the kernel function $K$. For variance estimator, we use the kernel-free approach, as it is more convenient in practice.

Finally, for the main bandwidth parameter $\delta$ appearing in the score test statistic, if a pre-specified value is preferred, we recommend choosing $\delta = c n^{-1/(2\ell' + 1)}$ for some constant $c >0$ (e.g., $c=1$). When using
our proposed data-driven bandwidth selector, notice that another bandwidth parameter $b$, whose optimal order is $(\log(d)/n)^{1/(2\ell + 2r + 1)}$ (see the discussion after Theorem \ref{theorem_adapt_main}), is required for estimating the squared bias $SB(\delta)$. Similarly, we suggest choosing $b=c_0(\log(d)/n)^{1/(2\ell' + 2r' + 1)}$ for some $c_0$ (e.g., $c_0=0.5$), where $r'$ is the order of kernel $J$. %We will also empirically investigate the influence of different bandwidth $b$ on $\wh M(\delta)$ in the next section.

\section{Additional Numerical Results}\label{appendix_simulation}

\subsection{Computational details of the path-following algorithm}
In Section \ref{simulation}, we apply the path-following algorithm proposed in \citet{feng2022nonregular} to estimate the true coefficient $\bbeta$. We fix the number of stages $N = 25$, $\nu = 0.25$, $\eta = 0.25$ ,$\epsilon_{tgt} = 0.0001$ and set $\Omega = \{\bbeta: \norm{\bbeta}_2 \leq 10^3\}$ for each parameter in Algorithm 2.1 and 2.2 in  \citet{feng2022nonregular}.

\subsection{Normal Q-Q plots in Section \ref{simulation_fixed}}\label{appendix_qqplots_fixed}

Figure \ref{QQ_manski} contains the Normal Q-Q plots for the test statistics obtained from the experiments in Section \ref{simulation_fixed} under both scenarios. We can see that the Q-Q plots from the proposed SDS method appear close to Gaussian, while the Q-Q plots for DS and Honest method deviate from standard Gaussian.

\subsection{Empirical power in Section \ref{simulation_fixed}}\label{appendix_power_fixed}
Figure \ref{fig_Power_2} shows the empirical power under the Heteroskedastic Uniform scenario in Section \ref{simulation_fixed}. Similar to the figures for the Heteroskedastic Gaussian scenario, the empirical power converges to 1 as the magnitude of $\beta^*$ becomes larger. In addition, as the correlation between $\bZ$ becomes larger, the power also decreases.

\subsection{Bandwidth selection in Section \ref{sec_sim_band}}\label{appendix_band}
Finally, we investigate whether the data-driven bandwidth is close to the theoretically optimal bandwidth that minimizes the MSE. We focus on the Heteroskedastic Gaussian scenario and consider the setting with $n = 800, d = 50, s = 10, \rho = 0.2$ and set  $\beta^*_1 = \dotso = \beta^*_{10} = 1/\sqrt{10}$. We compare $M(\delta)$ with $\wh M(\delta)$, where $\wh M(\delta)$ is obtained using the implementation discussed in Section \ref{section_practical}, but with different bandwidth $b = 0.15,0.2,0.25,0.3$ for squared bias estimation. Notice that since directly obtaining $M(\delta)$ is hard, we apply Monte Carlo integration to approximate this quantity. %where we calculate the average of $(\bv^{*T}\nabla R_\delta^n(\bbeta^*))^2$ over 100 simulated samples, each of size 800. Here $\bomega^*$ is obtained by firstly approximating the true Hessian $\nabla^2 R(\bbeta^*)$ with Monte Carlo method using 100,000 samples, and then set $ \bomega^* = \big(\nabla^2_{\bgamma,\bgamma}R(\bbeta^*)\big)^{-1}\nabla^2_{\bgamma,\theta}R(\bbeta^*)$ following the definition.

Figure \ref{fig_MSE_compare} shows the true $M(\delta)$ as well as $\wh M(\delta)$ in each case. The red line is the true MSE $M(\delta)$ approximated by Monte Carlo method, and the black solid line is the average estimated MSE $\wh M(\delta)$ with different bandwidths $b$ over 100 repetitions. The black dashed line is the one standard deviation band around $\wh M(\delta)$ over these repetitions.
As we can see, for all cases, $M(\delta)$ lies within the one standard deviation band around $\wh M(\delta)$ and the minimizer of $M(\delta)$
is very close to the minimizer of the average $\wh M(\delta)$. This experiment confirms that the data-driven bandwidth $\wh\delta$ is close to the optimal $\delta^*$, provided that the bandwidth $b$ for squared-bias estimation lies in a suitable range.

\subsection{Bandwidth selection in high dimension}\label{sec:newsimu_high}

We run the data-driven bandwidth selection procedure under the high dimensional setting. Specifically, we consider the same Heteroskedastic Gaussian and Heteroskedastic Uniform scenario with $n = 300,d = 500, s = 3$, and $\beta^*_0 \in \{0, 0.05,\dotso,0.3\}$. Similar to Section \ref{sec_sim_band}, we seek for the minimizer of the estimated MSE over $\delta \in [0.1,1.2]$, and then plug-in this estimated bandwidth $\hat{\delta}$ into the decorrelated score function. %For simplicity and robustness, we choose $h = 1$ in $\hat T$ and apply the variance estimator discussed in supplement A.4.
We repeat this experiment 250 times for each case. Table \ref{table_adaptive_type1} and Figure \ref{fig_adaptive} show the empirical Type-I error and empirical power, respectively. The performance for both Type-I error and power is similar as in the low-dimensional case.

\subsection{Sensitivity analysis}\label{sec:newsimu_sensi}

We run a set of experiments with different choices of kernels. We consider the experiments similar to Section 6.1 with $n = 300, d \in \{100, 500\}$ and $s =3$. Instead of using Gaussian kernel in the decorrelated score function, we consider the following two choices:
\begin{itemize}
    \item Epanechnikov(parabolic) kernel: $ K(u) = \frac34(1-u^2),|u|\leq 1$,
    \item Quartic (biweight) kernel: $K(u) = \frac{15}{16}(1-u^2)^2,|u|\leq 1$.
\end{itemize}

Table \ref{table_adaptive_type2} shows the empirical Type-I error using these two kernels under both low and high dimensional regions. We can see that the empirical Type-I errors are close to the nominal level 0.05, and seem not largely affected by the choice of kernels in this setting.

In the next set of experiments, we study how the choice of bandwidth affects the empirical Type-I error. Similar to the above experiments, we fix $n = 300,d=100,s=3$, and consider the fixed bandwidth setting with $\delta \in \{0.3,0.5,0.75,1.0\}$ in the decorrelated score function. The results are shown in Table  \ref{table_bandwidth}. We can see that the empirical performance of our testing procedure using these bandwidth parameters is in general close to the nominal level 0.05.

Finally, we also examine how the bandwidth in the bias estimator influences the empirical performance of the score test. We run additional experiments with a range of different bandwidths in the plug-in bias estimator. Specifically, we consider the same Heteroskedastic Gaussian scenario with $n = 300,d = 100,500$ and $s = 3$. To estimate $\hat \mu$, we set the bandwidth $h = 0.5,0.8,1.0,1.2$ in the plug-in estimator and evaluate the empirical Type-I error with 250 repeated experiments for each case. From Table \ref{table_variance_type1}, we can see that although the empirical Type-I errors vary with different bandwidths, they are generally around the nominal level 0.05.

\subsection{Additional results from ChAMP analysis}\label{appendix_champ}

Table \ref{table_champ_DS} lists the five most significant variables using the DS and Honest approach. Notice that the significant variables identified by the proposed approach are also in the list for the DS approach. Honest yields very different set of variables compared to the other two.

\begin{figure}[H]
	\centering
	{	\vspace{-0.8cm}
	\subfigure[]{\includegraphics[width=70mm,height = 60mm]{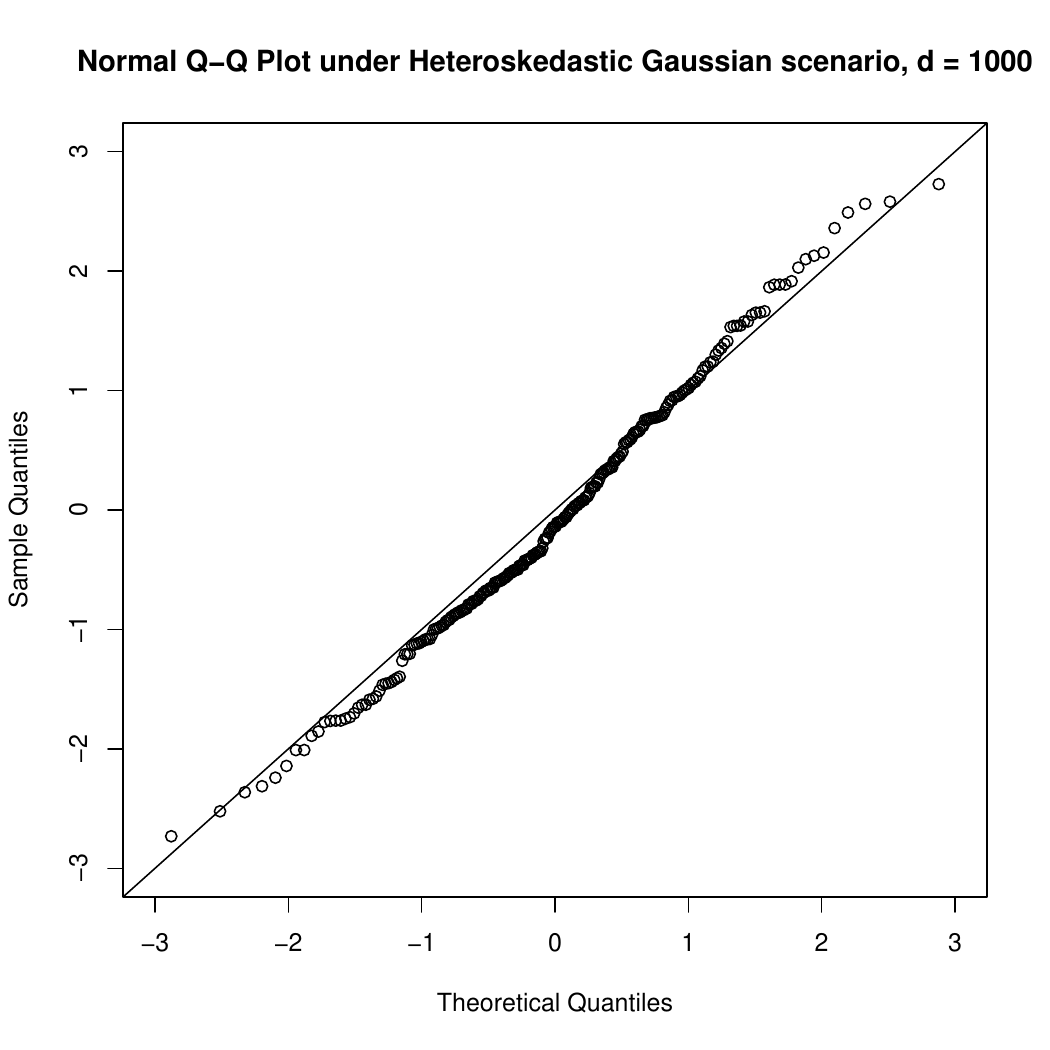}}
	}
	{\subfigure[]{\includegraphics[width=70mm,height = 60mm]{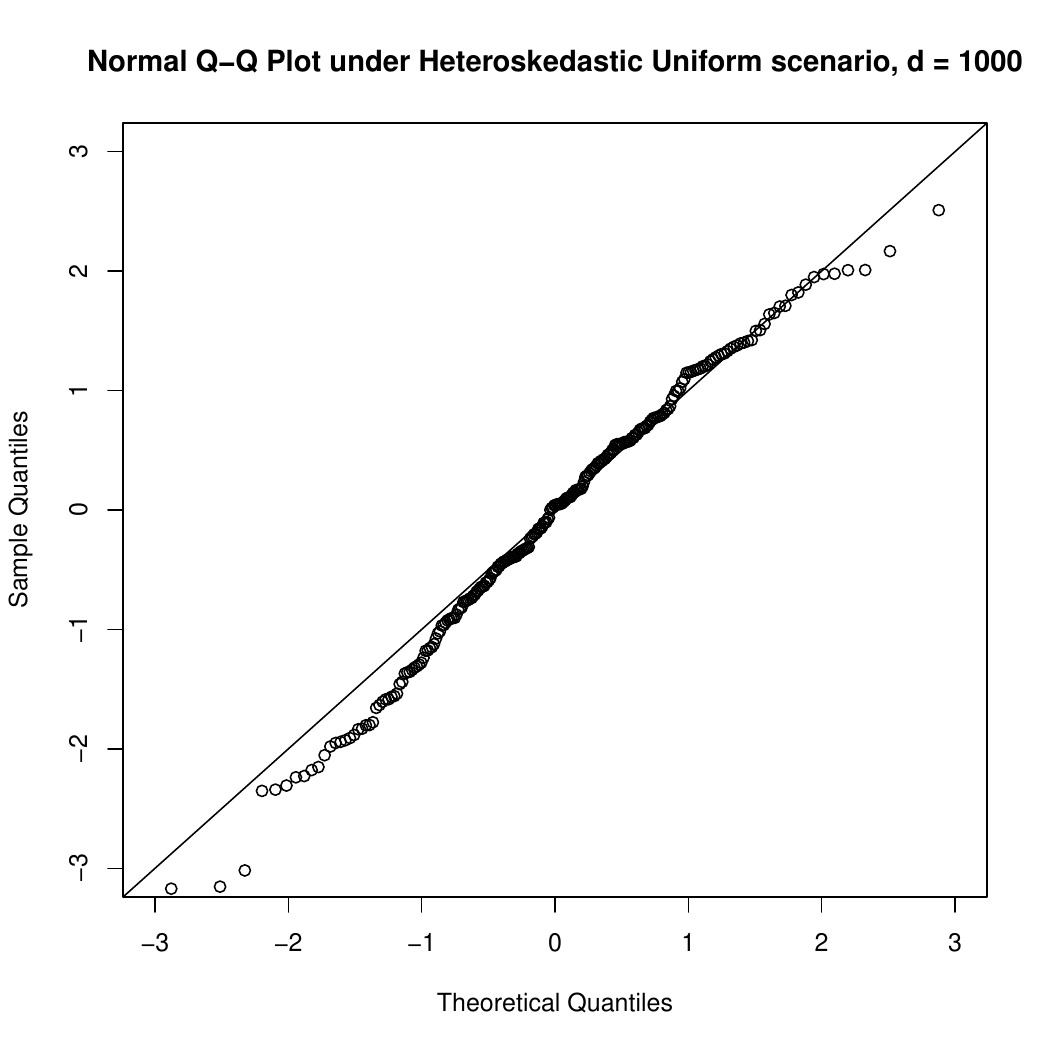}}
	}
	\vspace{-0.8cm}
	{\subfigure[]{\includegraphics[width=70mm,height = 60mm]{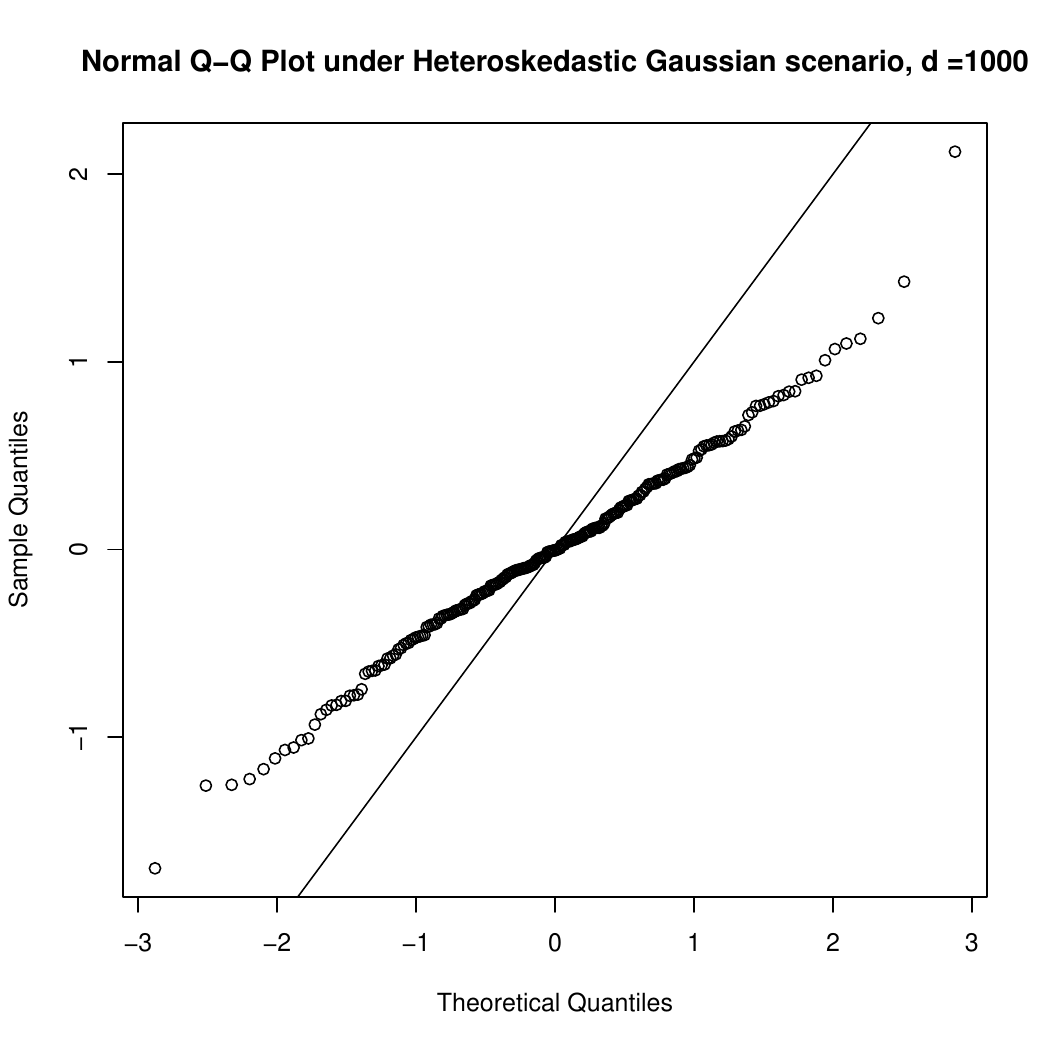}}
	}
	{\subfigure[]{\includegraphics[width=70mm,height = 60mm]{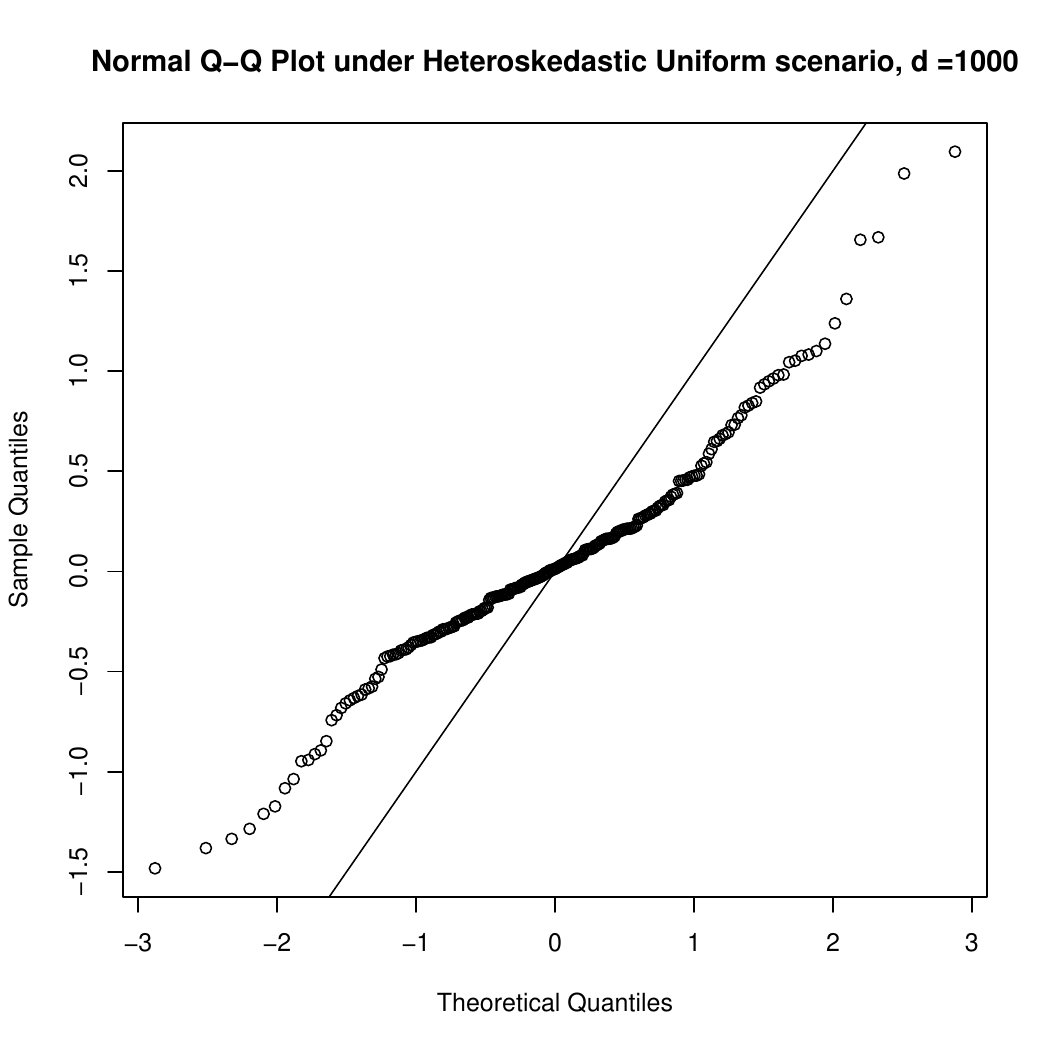}}
	}
\vspace{-0.4cm}
	{\subfigure[]{\includegraphics[width=70mm,height = 60mm]{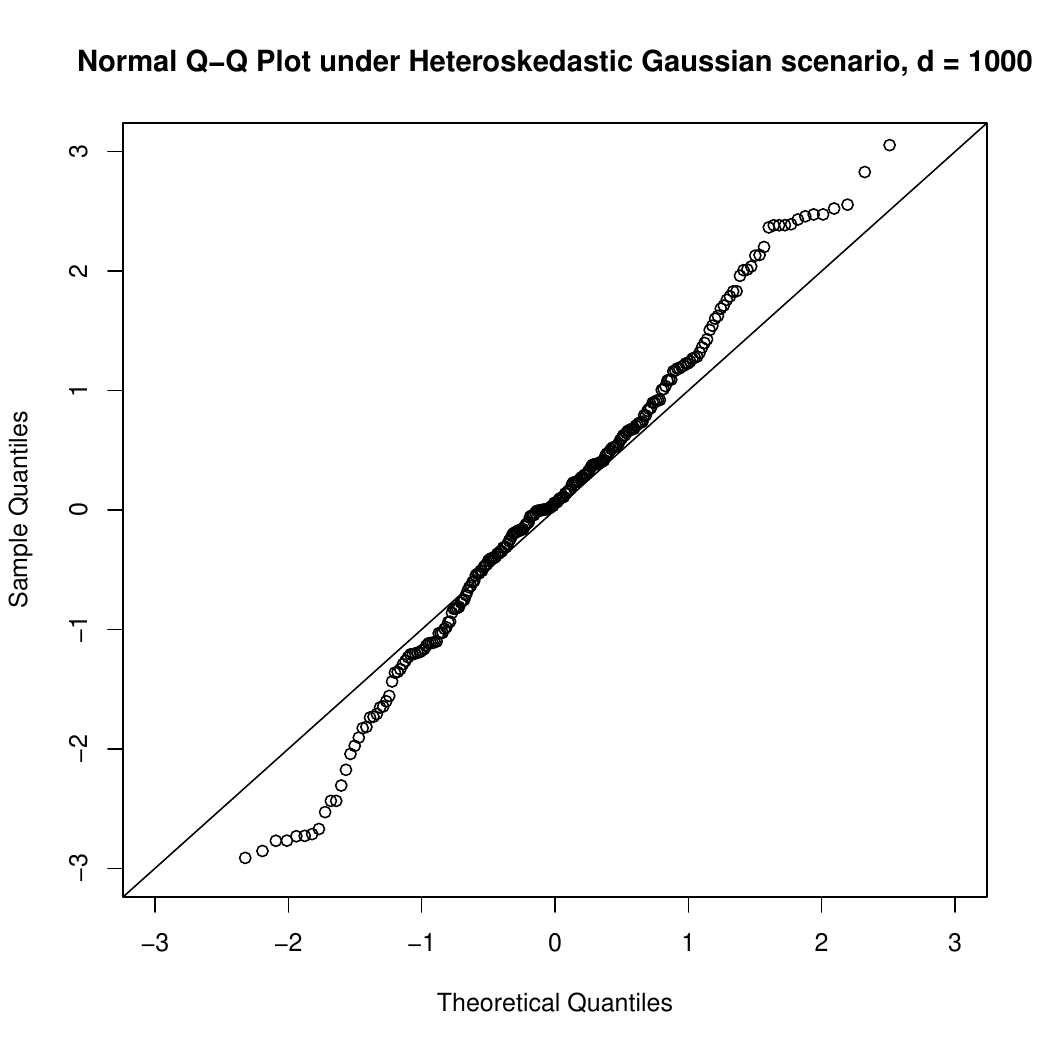}}
	}
	{\subfigure[]{\includegraphics[width=70mm,height = 60mm]{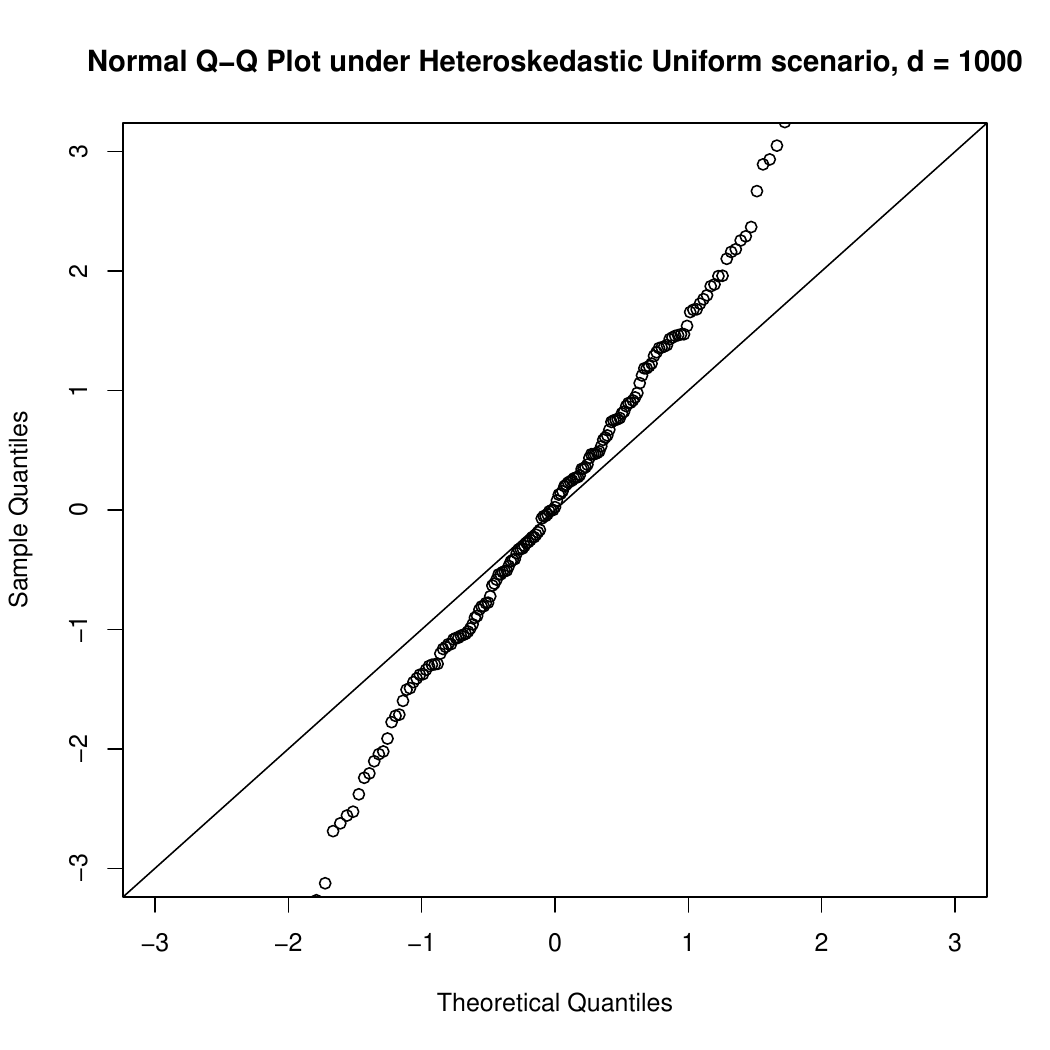}}
	}
	\caption{Gaussian Q-Q plot of the test statistics under the setting $d = 1000, s = 10,\rho = 0.5$ from: (a, b) SDS method, (c, d) DS method, and (e, f) Honest method.}   \label{QQ_manski}
\end{figure}

\begin{figure}[H]
	\centering
	{\subfigure[]{\includegraphics[width=80mm,height = 80mm]{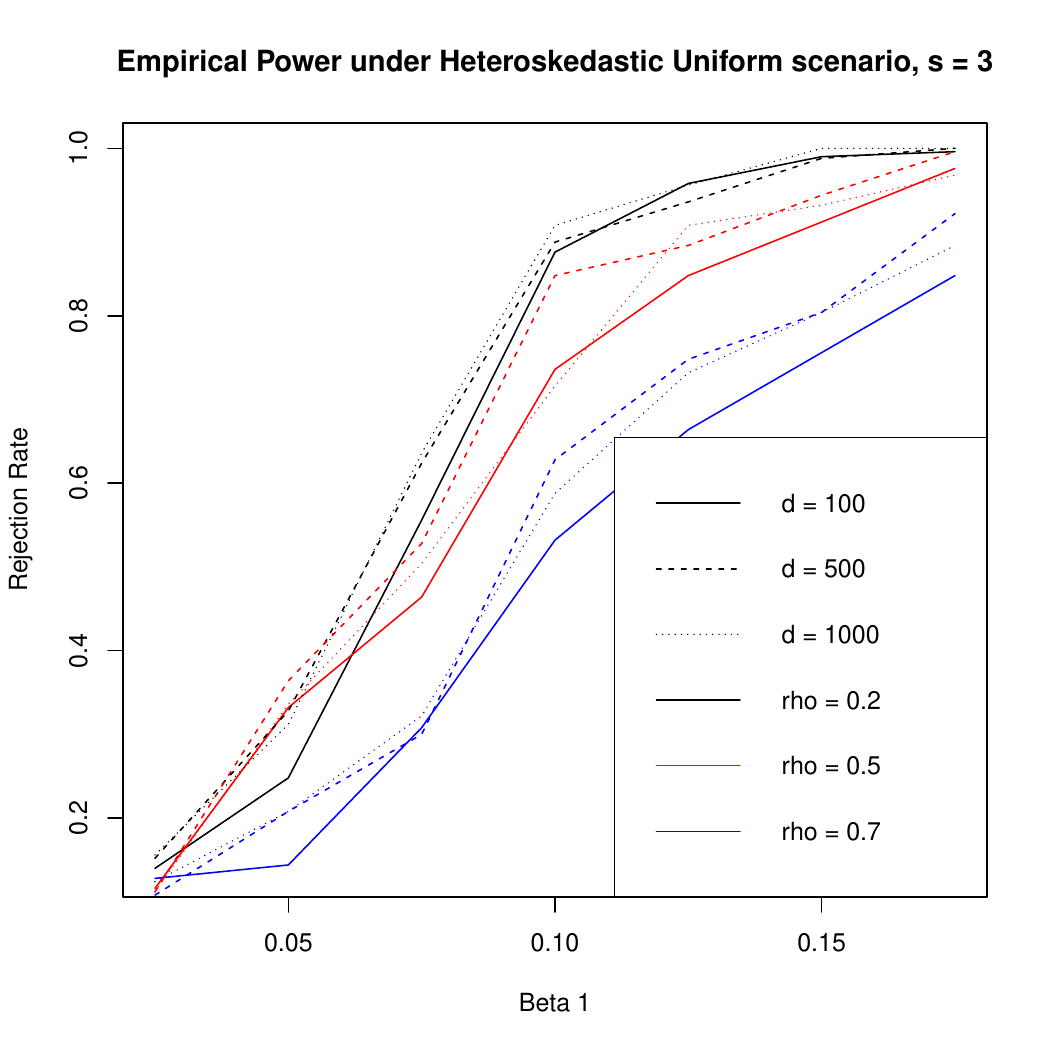}}
	}
	{\subfigure[]{\includegraphics[width=80mm,height = 80mm]{{plots/Power_Uniform_10}.pdf}}
	}
	\caption{Empirical rejection rate of the proposed test under both scenarios with $s = 3$, $d = 100, 500, 1000$ and $\rho = 0.2,0.5,0.7$.}   \label{fig_Power_2}
\end{figure}

\begin{figure}[H]
	\centering
	{\subfigure[]{\includegraphics[width=80mm,height = 70mm]{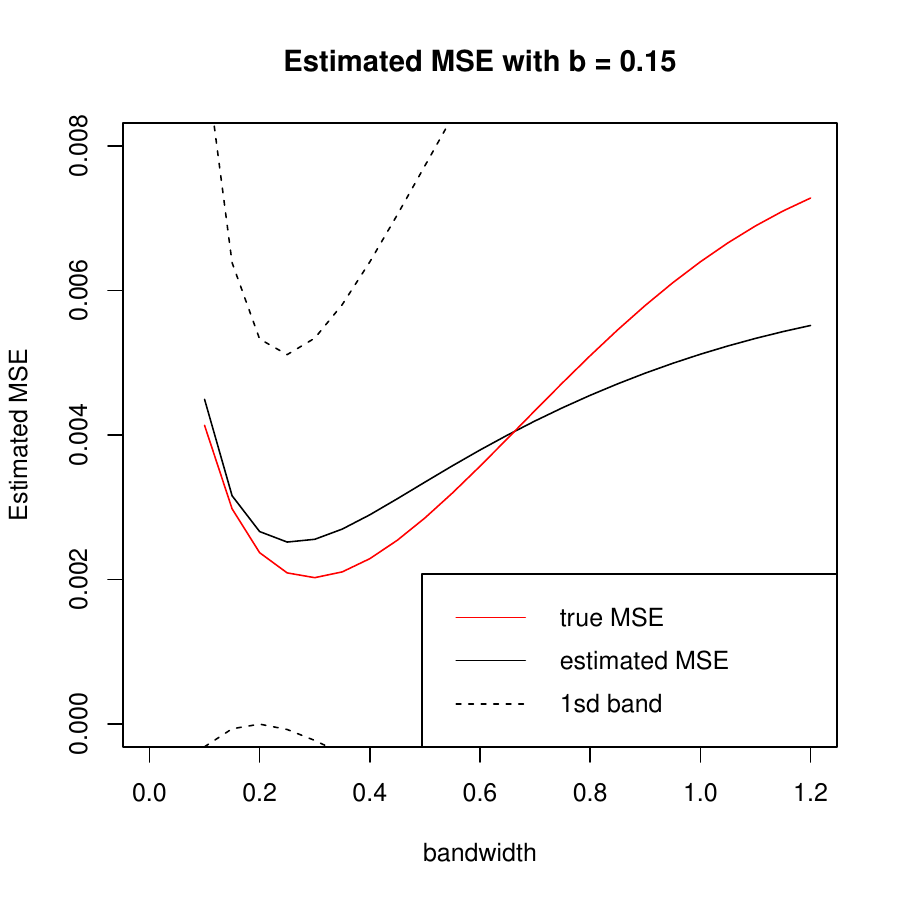}}
	}
	%\vskip \baselineskip
	{\subfigure[]{\includegraphics[width=80mm,height = 70mm]{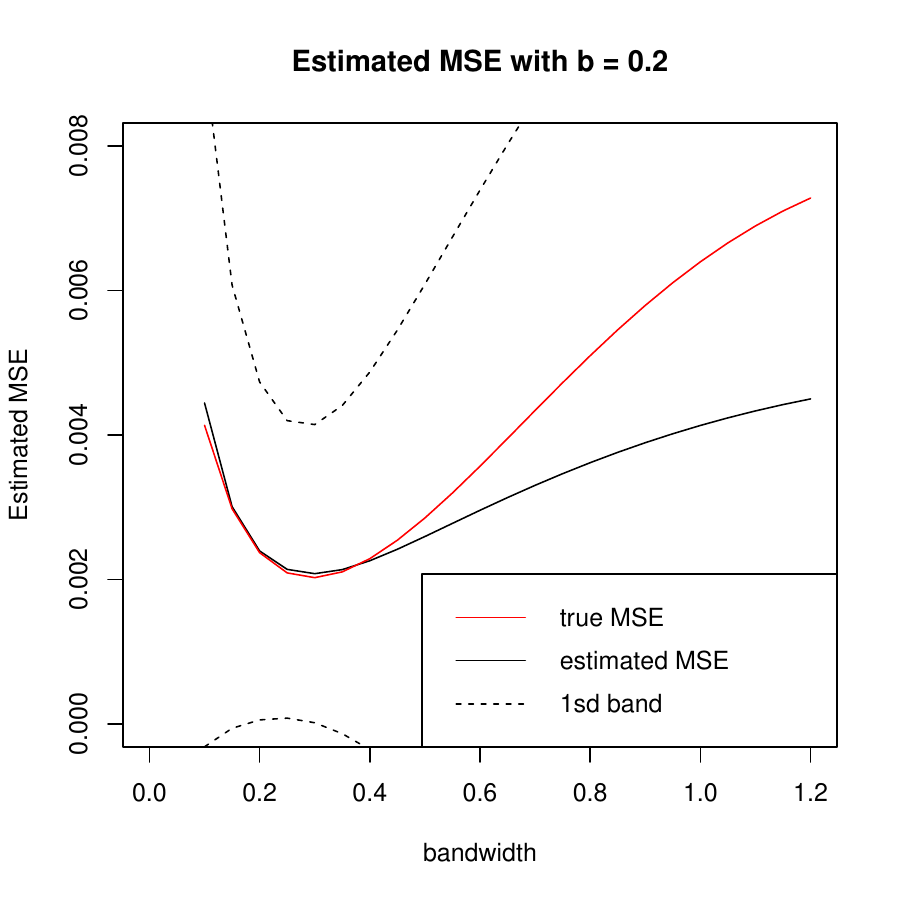}}
	}
\vskip \baselineskip
{\subfigure[]{\includegraphics[width=80mm,height = 70mm]{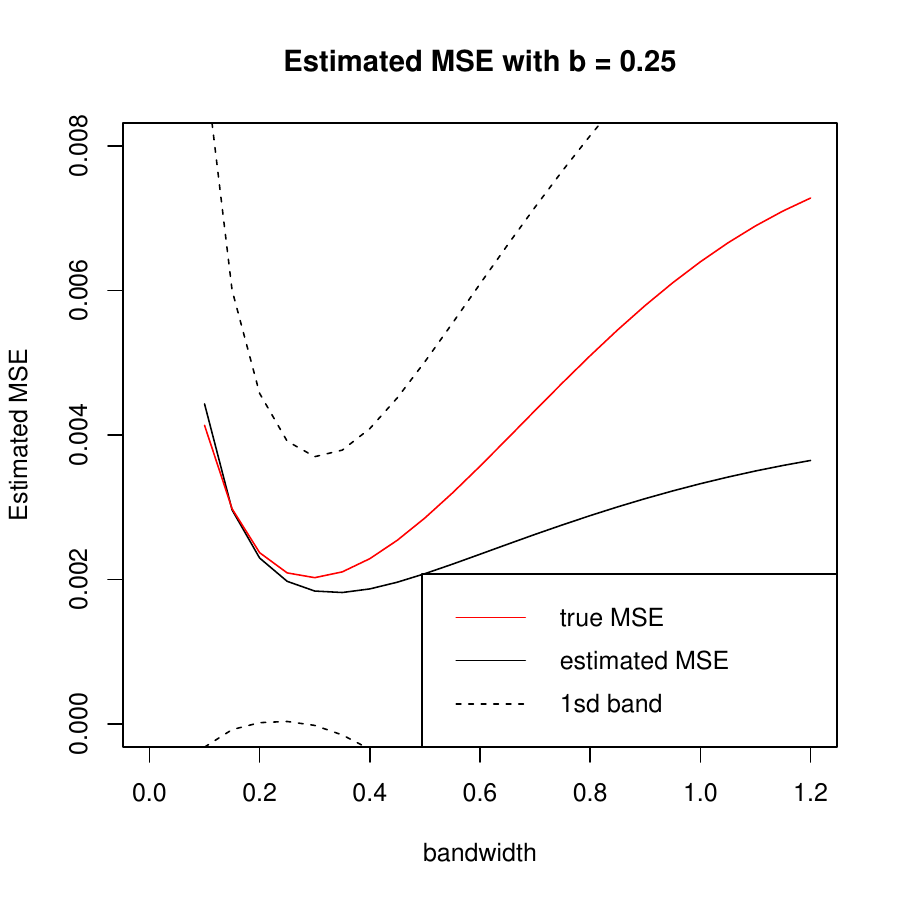}}
}
{\subfigure[]{\includegraphics[width=80mm,height = 70mm]{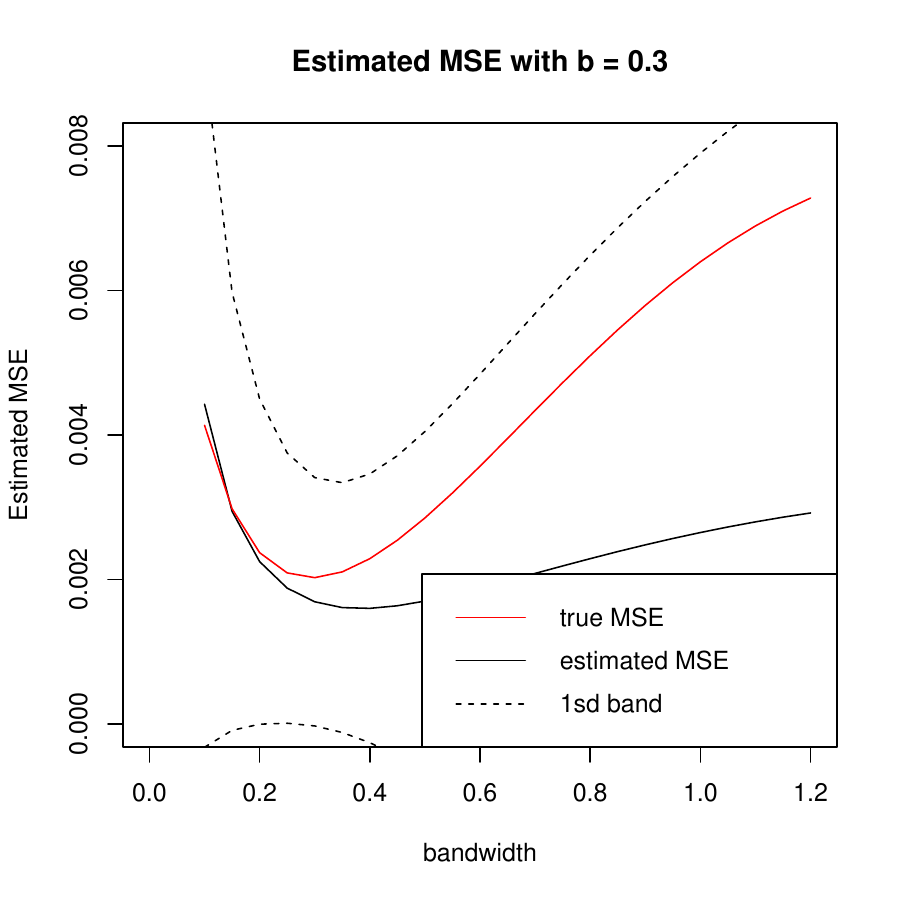}}
}
	\caption{The true $M(\delta)$ approximated by Monte Carlo Method (red solid line) and the average estimated $\wh M(\delta)$ (black solid line) with bandwidth $b = 0.15,0.20,0.25,0.30$ for squared-bias estimation over 100 repetitions. The black dashed line is one standard deviation bands around $\wh M(\delta)$.}   \label{fig_MSE_compare}
\end{figure}

\begin{table}[H]
	\caption{The empirical Type I error rate of the tests under the Heteroskedastic Gaussian and Uniform scenarios with data-driven bandwidth $\hat\delta$.}\label{table_adaptive_type1}
	\setlength\extrarowheight{1pt}
	\begin{center}
	\begin{tabular}{ c | c c c }
			\hline
			& &$s = 3$&\\
			Data generating process&$\rho = 0.2$&$\rho = 0.5$&$\rho = 0.7$\\
			\hline
			Heteroskedastic Gaussian&7.6\%&8.0\%&6.4\%\\
			Heteroskedastic Uniform&7.6\%&6.8\%&8.4\%\\
			\hline
		\end{tabular}
	\end{center}	
\end{table}

\begin{figure}[H]
	\centering
	\includegraphics[width=120mm,height = 120mm]{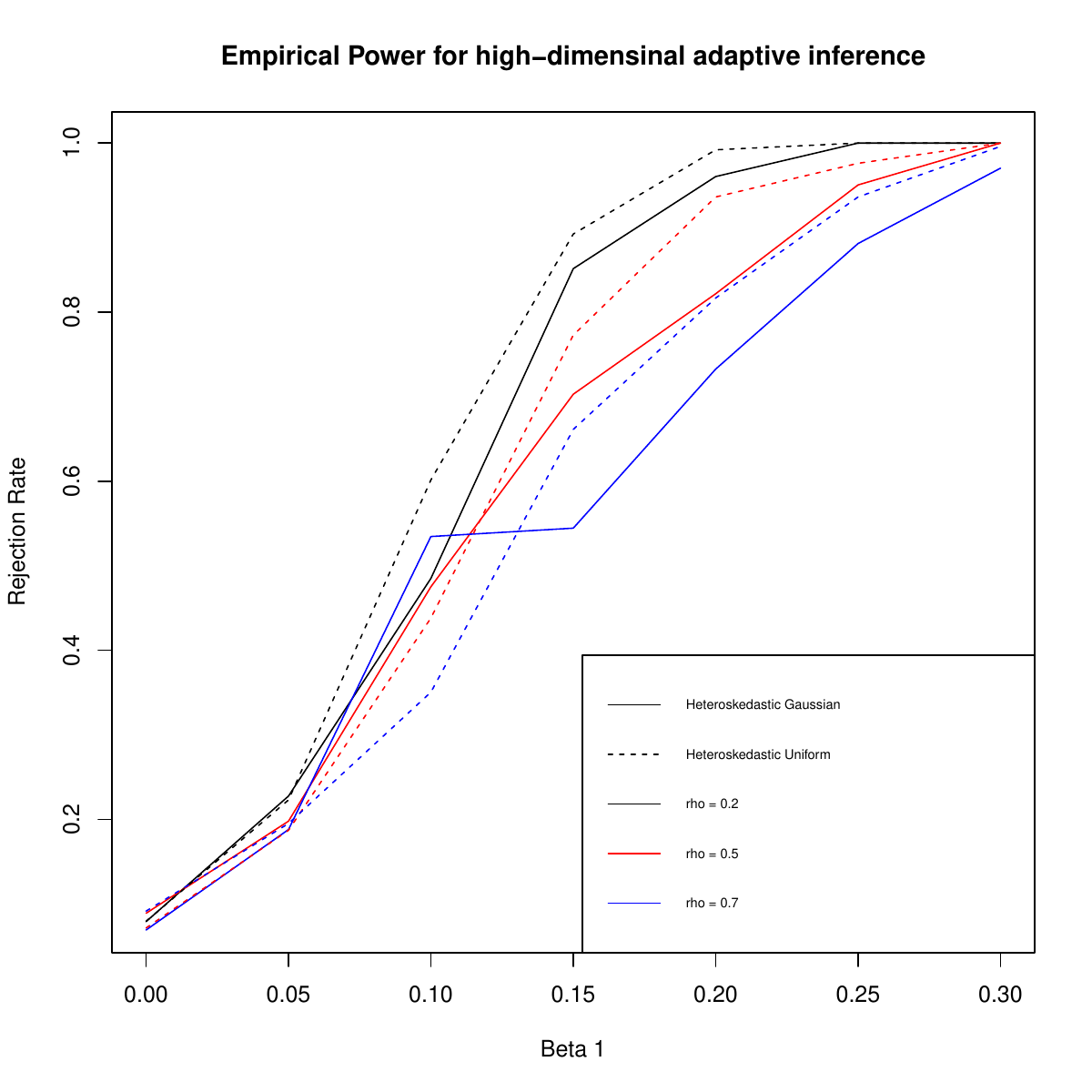}
	\caption{Empirical power of the tests under the Heteroskedastic Gaussian and Uniform scenarios with data-driven bandwidth $\hat\delta$. \label{fig_adaptive}}
\end{figure}

\begin{table}[H]
	\caption{The empirical Type I error rate of the tests under the Heteroskedastic Gaussian and Uniform scenarios with alternative kernel functions.}\label{table_adaptive_type2}
	\setlength\extrarowheight{1pt}
	\begin{center}
	\begin{tabular}{c | c | c c c | c c c}
			\hline
			&& &Epanechnikov&&&Quartic&\\
			d&Data generating process&$\rho = 0.2$&$\rho = 0.5$&$\rho = 0.7$&$\rho = 0.2$&$\rho = 0.5$&$\rho = 0.7$\\
			\hline
			100&Heteroskedastic Gaussian&4.8\%&3.6\%&5.2\%&4.4\%&2.8\%&5.6\%\\
			%Heteroskedastic Uniform&7.6\%&6.6\%&8.5\%\\
			500&Heteroskedastic Uniform&4.4\%&4.0\%&5.6\%&3.6\%&3.2\%&6.0\%\\
			\hline
		\end{tabular}
	\end{center}	
\end{table}

\begin{table}[H]
	\caption{The empirical Type I error rate of the tests under the Heteroskedastic Gaussian and Uniform scenarios with different bandwidth.}\label{table_bandwidth}
	\setlength\extrarowheight{1pt}
	\begin{center}
	\begin{tabular}{c | c | c c c}
			\hline
			$\delta$&Data generating process&$\rho = 0.2$&$\rho = 0.5$&$\rho = 0.7$\\
			\hline
			0.3&Heteroskedastic Gaussian&6.4\%&5.2\%&3.6\%\\
			0.5&Heteroskedastic Gaussian&5.2\%&6.8\%&8.0\%\\
			0.75&Heteroskedastic Gaussian&8.0\%&3.6\%&7.6\%\\
			1.0&Heteroskedastic Gaussian&7.6\%&6.4\%&7.6\%\\
			\hline
			0.3&Heteroskedastic Uniform&5.2\%&3.6\%&8.4\%\\
			0.5&Heteroskedastic Uniform&8.4\%&6.4\%&4.8\%\\
			0.75&Heteroskedastic Uniform&6.0\%&5.6\%&6.4\%\\
			1.0&Heteroskedastic Uniform&9.2\%&6.4\%&7.2\%\\
			\hline
		\end{tabular}
	\end{center}	
\end{table}

\begin{table}[H]
	\caption{The empirical Type I error rate of the tests under the Heteroskedastic Gaussian with different bandwidth $h$ in the plug-in bias estimator.}\label{table_variance_type1}
	\setlength\extrarowheight{1pt}
	\begin{center}
	\begin{tabular}{ c | c c c| c c c }
			\hline
			& &$d = 100$&&&$d = 500$&\\
			h &$\rho = 0.2$&$\rho = 0.5$&$\rho = 0.7$&$\rho = 0.2$&$\rho = 0.5$&$\rho = 0.7$\\
			\hline
			0.5&6.8 \%&8.0 \%&5.6 \%&6.4 \%&5.2 \%&2.8 \%\\
			0.8&7.6 \%&4.8 \%&7.2 \%&8.4 \%&4.8 \%&5.6 \%\\
			1.0&8.0 \%&4.8 \%&3.6 \%&7.2 \%&5.2 \%&3.6 \%\\
			1.2&6.8 \%&4.0 \%&7.6 \%&6.0 \%&3.2 \%&6.0 \%\\
			\hline
		\end{tabular}
	\end{center}	
\end{table}

\begin{table}[H]
	\caption{Five most significant variables using DS and Honest approach, sorted from most significant (left) to less significant (right), with their corresponding p-values }\label{table_champ_DS}
	\setlength\extrarowheight{1pt}
	\begin{center}
		\begin{tabular}{ c |c c c c c }
			\hline
			Method&&&Significant variables&&\\
			\hline
			DS& MMenB & X(WOMAC Pain Score) &WFunc\_6mo & KQOL\_6wk& KSymp\_3mo\\
			&9.754e-05&3.402e-03&1.649e-02&1.685e-02 &2.136e-02\\
			Honest&ActualGrp&MMenTear&TrochDam&TrochCenLes&MMenMgt\\
			&$<$1e-16&$<$1e-16&$<$1e-16&$<$1e-16&$<$1e-16\\
			\hline
		\end{tabular}
	\end{center}	
\end{table}

\end{document}